\title{Massively Parallel and Dynamic Algorithms for Minimum Size Clustering}
\date{}
\author
{Alessandro Epasto\\
  \texttt{aepasto@google.com}\\
  Google Research
\and 
Mohammad Mahdian \\
  \texttt{mahdian@google.com}\\
  Google Research
\and
Vahab Mirrokni \\
  \texttt{mirrokni@google.com}\\
  Google Research
\and Peilin Zhong \\
  \texttt{peilin.zhong@columbia.edu}\\
  Columbia University}
\newtheorem{theorem}{Theorem}[section]
\newtheorem{lemma}[theorem]{Lemma}
\newtheorem{definition}[theorem]{Definition}
\newtheorem{fact}[theorem]{Fact}
\newtheorem{claim}[theorem]{Claim}
\newcommand{\wh}{\widehat}
\newcommand{\wt}{\widetilde}
\renewcommand{\varepsilon}{\epsilon}
\renewcommand{\hat}{\widehat}
\newcommand{\wb}{\overline}
\DeclareMathOperator{\SH}{SH}
\DeclareMathOperator*{\E}{{\bf {E}}}
\DeclareMathOperator*{\Var}{{\bf {Var}}}
\DeclareMathOperator{\OPT}{OPT}
\DeclareMathOperator{\poly}{poly}
\DeclareMathOperator{\dist}{dist}
\begin{document}

\maketitle

\begin{abstract}
Clustering of data in metric spaces is a fundamental problem and has many applications in data mining and it is often used as an unsupervised learning tool inside other machine learning systems. In many scenarios where we are concerned with the privacy implications of clustering users, clusters are required to have minimum-size constraint. A canonical example of min-size clustering is in enforcing anonymization and the protection  of the privacy of user data.
Our work is motivated by real-world applications (such as the Federated Learning of Cohorts project -- FLoC) where a min size clustering algorithm needs to handle very large amount of data and the data may also changes over time. Thus efficient parallel or dynamic algorithms are desired.


In this paper, we study the $r$-gather problem, a natural formulation of minimum-size clustering in metric spaces.
The goal of $r$-gather is to partition $n$ points into clusters such that each cluster has size at least $r$, and the maximum radius of the clusters is minimized. This additional constraint completely changes the algorithmic nature of the problem, and many clustering techniques fail. Also previous dynamic and parallel algorithms do not achieve desirable complexity. We propose algorithms both in the Massively Parallel Computation (MPC) model and in the dynamic setting.
Our MPC algorithm handles input points from the Euclidean space $\mathbb{R}^d$.
It computes an $O(1)$-approximate solution of $r$-gather in $O(\log^{\varepsilon} n)$ rounds using total space $O(n^{1+\gamma}\cdot d)$ for arbitrarily small constants $\varepsilon,\gamma > 0$.
In addition our algorithm is fully scalable, i.e., there is no lower bound on the memory per machine.
Our dynamic algorithm maintains an $O(1)$-approximate $r$-gather solution under insertions/deletions of points in a metric space with doubling dimension $d$.
The update time is $r \cdot 2^{O(d)}\cdot \log^{O(1)}\Delta$ and the query time is $2^{O(d)}\cdot \log^{O(1)}\Delta$, where $\Delta$ is the ratio between the largest and the smallest distance.

To obtain our results, we reveal connections between $r$-gather and $r$-nearest neighbors and provide several geometric and graph algorithmic tools including a near neighbor graph construction, and results on the maximal independent set / ruling set of the power graph in the MPC model, which might be both of independent interest. To show their generality, we extend our algorithm to solve several variants of $r$-gather in the MPC model, including $r$-gather with outliers and $r$-gather with total distance cost.  Finally, we show effectiveness of these algorithmic techniques via a preliminary empirical study for FLoC application. 
\end{abstract}
\newpage

\section{Introduction}
Clustering is a fundamental algorithmic problem in unsupervised machine learning with a long history and many applications. A classical variant of the problem seeks to group points from a metric space into clusters so that the elements within the same cluster are close to each other. Popular formalization of this problem include the celebrated \emph{$k$-center}, \emph{$k$-median}, and \emph{$k$-means} problems in which one seeks to find $k$ center-based clusters. This area produced a long stream of research on approximation algorithms~\cite{DBLP:conf/focs/AhmadianNSW17,DBLP:journals/siamcomp/LiS16,gonzalez1985clustering} including efficient parallel~\cite{kmeansparallel}, distributed~\cite{distributedkmeans}, streaming~\cite{ailon2009streaming} and dynamic~\cite{charikar2004incremental,goranci2019fully} algorithms. 

In standard clustering formulations, the clusters in output are not subject to any further constraint. However, it is often the case in many applications that the clusters must have a certain minimum size. This is often the case when clustering is used for anonymization~\cite{byun2007efficient,aggarwal2005approximation,aggarwal2010achieving,park2007approximate} or when privacy considerations require all the clusters to satisfy a minimum size guarantee. In the context of user clustering, enforcing a min-size constraint helps with respecting user privacy in upstream ML and recommendation systems which may use these clusters as an input. Recently, these problems have received an increased attention in the context of providing anonymity for user targeting in online advertising, and in particular, the goal of replacing the use of third-party cookies with a more privacy-respecting entity. For such applications, developing a dynamic and parallel algorithms for min-size clustering is very important. We will elaborate on this application as our main motivating example. Similarly, in experiment design~\cite{pouget2019variance} clustering is a fundamental tool used to design study cohorts to reduce the interference effects among subjects, and in such cases size constraints are natural requirements. Moreover, when clustering is seen through the lenses of optimization as in facility location~\cite{li2019facility,ene2013fast} minimum size constraints are needed to ensure the viability of the facilities opened. Clustering with minimum-size constraints~\cite{abu2016building,byun2007efficient,sarker2019linear,aggarwal2010achieving,armon2011min,ding2017capacitated,ding2017capacitated} and the related lower-bounded facility location problem~\cite{li2019facility,svitkina2010lower,ahmadian2016approximation} have received wide attention in the approximation algorithms literature. 

In this context, we study a family of minimum-size clustering problems known as \emph{$r$-gather} which is a crisp abstraction of the challenge of clustering coherent points when the only requirement is the size of the clusters (and not the number of clusters in output). More formally, in $r$-gather formulations, one seeks a partitioning of points from a metric space into arbitrary many disjoint clusters, and an associated center per cluster, with the constraint that each cluster is of size at least $r$. The objective in the classic formulation is akin to that of $k$-center~\cite{gonzalez1985clustering}, i.e., minimizing the maximum distance of a point to its assigned center.\footnote{Notice that this problem implicitly models maximum size constraints as well (for any upper-bound above $2r$) as larger clusters can be split with no additional cost in the Euclidean case, and with a $2$ factor increase in cost in arbitrary metric spaces if the center is required to be part of the cluster.} 
Despite its significance, previously known dynamic and parallel algorithms do not achieve desirable performance.
We study several variants of this problem including formulations with outliers~\cite{aggarwal2010achieving} (in which we are allowed to discard a number of points), as well as variants with different distance-based objectives.

The central optimization problem is NP-hard, and a tight 3-approximation algorithm is known~\cite{armon2011min}. Prior work has mostly focused on providing centralized approximation algorithms and algorithms for restricted versions of the problem~\cite{abu2016building,byun2007efficient,sarker2019linear,aggarwal2010achieving,kumabe2019r,shalita2010efficient,aggarwal2005approximation,park2007approximate}. Such algorithms however, are not able to scale to the sizes of real-world instances seen in the clustering applications we mentioned, nor can they handle the challenge of rapidly evolving data-sets often present in online application. In this paper, we provide novel algorithms for $r$-gather in the Massively Parallel Computation (MPC) framework~\cite{beame2017communication} (see Section~\ref{sect:mpc}) as well as in the dynamic computing model (see Section~\ref{sect:dyn}). Enforcing such minimum size constraints is challenging especially in the context of dynamic and parallel algorithms as the new constraints change the algorithmic nature of the problem. For example, removing a point from the data set can drastically change the structure of the optimal solution, and may even increase the cost of the solution arbitrarily. As a result, none of the previous techniques for dynamic or parallel $k$-center are directly applicable. To the best our knowledge, our work is the first to address this issue in dynamic settings and it vastly expands the scope of prior work for massively parallel algorithms on r-gather~\cite{aghamolaei2019mapreduce}; for example the best MPC methods for $r$-gather~\cite{aghamolaei2019mapreduce} require more than $O(k^2)$ with $k$ clusters in output. As $k$ can be up to $\Theta(n/r)$ for $n$ data points, such methods are applicable only in scenarios with a very large minimum-size constraint, thus resulting in a limited number of output clusters. 

{\bf \noindent Application: FLoC for anonymization of 3rd-party coockies.} As noted earlier, our work is motivated by real-world applications of min-size clustering for anonymization where we need to construct a large number of clusters (even $\Theta(n)$) in a large data-set thus requiring efficient algorithms for massive datasets. To provide an illustrative example, consider the clustering problem at the core of the \emph{Federated Learning of Cohort (FLoC)} project\footnote{FLoC announcement~\url{https://github.com/jkarlin/floc}. 
More details on the clustering problem are described in the following white paper~\cite{blogpost}.} released by Google as part of the Chrome Privacy Sandbox effort\footnote{\url{https://www.chromium.org/Home/chromium-privacy/privacy-sandbox}} to protect the privacy of the browser's users. FLoC aims to replace the use of third-party cookies (which are individually identifying) with anonymous cohorts containing many users. More precisely, the FLoC algorithm clusters the browser's users into cohorts of users (known as FLoCs) with similar interests, while ensuring that each cohort has a certain minimum number of users assigned to it (which could be in the thousands). It is easy to see that in such applications, both the number of users and the number of clusters can be very large requiring efficient massively parallel methods. It is also clear that data is subject to dynamic change over time. In order to show that our algorithmic techniques are useful in practice, we perform a preliminary study for the FLoC application in appendix.

\subsection{The Model}
Before we present our results, let us formally introduce the computation models studied in this paper.
\paragraph{Massively Parallel Computing (MPC).}  
The MPC model~\cite{FMSSS10-mad,karloff2010model,goodrich2011sorting,beame2017communication,andoni2014parallel} is an abstract of modern massively parallel computing systems such as MapReduce~\cite{dean2008mapreduce}, Hadoop~\cite{white2012hadoop}, Dryad~\cite{isard2007dryad}, Spark~\cite{zaharia2010spark} and others. 

In the MPC model, the input data has size $N$.
There are $p$ machines each with local memory $s$.
Thus, the total space available in the entire system is $p\cdot s$.
Here the space is measured by words, each of $O(\log(p\cdot s))$ bits.
If the total space $p\cdot s =O(N^{1+\gamma})$ for some $\gamma \geq 0$ and the local space $s=O(N^{\delta})$ for some constant $\delta\in (0,1)$, then the model is called $(\gamma,\delta)$-MPC model~\cite{andoni2018parallel}.
At the beginning of the computation, the input data is distributed arbitrarily on the local memory of $O(N/s)$ machines.
The computation proceeds in synchronized rounds.
In each round, each machine performs the computation on the data in its local memory, and sends messages to other machines at the end of the round.
Although each machine can send messages to arbitrary machines, there is a crucial limit: the total size of the messages sent or received by a machine in a round should be at most $s$.
For example, a machine can send a single message with size $s$ to an arbitrary machine or it can send a message with size $1$ to $s$ other machines in one round.
However, it cannot send a size $s$ message to every machine.
In the next round, each machine only holds the received messages in its local memory.
At the end of the computation, the output is stored on the machines in a distributed manner.
The parallel time (number of rounds) of an MPC algorithm is the number of MPC rounds needed to finish the computation.

Note that the space of each machine is sublinear in the input size.
Thus, we cannot collect all input data into one machine.
In this paper, we consider $\delta\in(0,1)$ to be an arbitrary constant.
In other word, our algorithms can work when the space per machine is $O(n^{\delta})$ for any constant $\delta\in(0,1)$.
Such algorithms are called \emph{fully scalable} algorithms.
For many problems, it is much more difficult to obtain a fully scalable algorithm than to design an algorithm which requires a lower bound of the space per machine. 
We refer readers to~\cite{andoni2018parallel,ghaffari2019sparsifying,kothapalli2020sample,andoni2019log} for more discussion of fully scalable MPC algorithms.
Our goal is to design fully scalable MPC algorithms minimizing the number of rounds while using a small total space.

\paragraph{Dynamic computing.}
Dynamic computing is a classic computing setting in computer science.
In particular, input data is provided dynamically: there are a series of operations, each of which is a query or an update (insertion/deletion of a data item).
A dynamic algorithm needs to maintain some information stored in the memory.
When a query arrives, the algorithm needs to use the maintained information to answer the query effectively.
When a insertion/deletion arrives, the algorithm needs to efficiently update the maintained information.
The amortized query/update time is the average query/update time over queries/updates.
The goal of designing a dynamic algorithm is to minimize the (amortized) update and (amortized) query time.

If the algorithm can only handle insertions, the algorithm is called an incremental algorithm.
If the algorithm can handle both insertions and deletions, the algorithm is called a fully dynamic algorithm.

\subsection{Problems and Our Results}


Before we state our results, let us formally define the problems.
\paragraph{$r$-Gather and its variants.}
The input is a set of points $P=\{p_1,p_2,\cdots,p_n\}$ from a metric space $\mathcal{X}$.
Let $\dist_{\mathcal{X}}(x,y)$ denote the distance between two points $x$ and $y$ in the metric space $\mathcal{X}$.
Let $r\geq 1$. 
The goal of $r$-gather is to partition $P$ into an arbitrary number of disjoint clusters $P_1,P_2,\cdots,P_t$ and assign each cluster $P_i$ a center $c(P_i)\in \mathcal{X}$ such that every cluster $P_i$ contains at least $r$ points and the maximum radius of the clusters, $\max_{P_i}\max_{p\in P_i}\dist_{\mathcal{X}}(p,c(P_i))$, is minimized.
Thus, this problem is also called min-max $r$-gather.
The maximum radius of the clusters is also called the $r$-gather cost of the clusters.
Let $\rho^*(P)$ denote the optimal $r$-gather cost, the maximum radius of the optimal clusters.
Let $P'_1,P'_2,\cdots, P'_{t'}$ be an arbitrary partition of $P$.
Let $c(P'_i)\in \mathcal{X}$ be the center of $P'_i$.
If every cluster $P'_i$ has size at least $r$ and furthermore the maximum radius of the clusters
\begin{align*}
    \max_{P'_i} \max_{p\in P'_i} \dist_{\mathcal{X}}(p, c(P'_i)) \leq \alpha \cdot \rho^*(P)
\end{align*}
for some $\alpha\geq 1$, then we say $P'_1,P'_2,\cdots,P'_{t'}$ together with centers $c(P'_1),c(P'_2),\cdots,c(P'_{t'})$ is an $\alpha$-approximate solution of $r$-gather for the point set $P$.

While the main problem that we studied is the (min-max) $r$-gather, we also studied several natural variants of the problem.
The first variant is the $r$-gather with outliers.
In this problem, we are given an additional parameter $k$.
The goal is to remove a set $O\subseteq P$ of at most $k$ points from the input point set $P$, such that the optimal $r$-gather cost $\rho^*(P\setminus O)$ is minimized.
Let $O^*$ be the optimal choices of the outliers.
If disjoint clusters $P'_1,P'_2,\cdots, P'_{t'}\subseteq P$ with centers $c(P'_1),c(P'_2),\cdots,c(P'_{t'})\in\mathcal{X}$ satisfy
\begin{enumerate}
\item $\forall i\in\{1,2,\cdots,t'\}$, $|P'_i|\geq r$,
\item $\left|P\setminus \bigcup_{i=1}^{t'} P'_i\right| \leq k$,
\item and the maximum radius $\max_{P'_i}\max_{p\in P'_i} \dist_{\mathcal{X}}(p,c(P'_i))\leq \alpha\cdot \rho^*(P\setminus O^*)$ for some $\alpha \geq 1$,
\end{enumerate}
then we say $P'_1,P'_2,\cdots,P'_{t'}$ together with centers $c(P'_1),c(P'_2),\cdots,c(P'_{t'})$ is an $\alpha$-approximate solution of $r$-gather with $k$ outliers for the point set $P$.
If $P'_1,P'_2,\cdots,P'_{t'}$ only violate the size constraints, i.e., $\exists i\in \{1,2,\cdots,t'\},|P'_i|<r$ but satisfies $\forall i\in \{1,2,\cdots,t'\},|P_i'|\geq (1-\eta)\cdot r$ for some $\eta\in (0,1)$, then we say $P'_1,P'_2,\cdots,P'_{t'}$ together with centers $c(P'_1),c(P'_2),\cdots,c(P'_{t'})$ is an $(\alpha,\eta)$-bicriteria approximate solution of $r$-gather with $k$ outliers for the point set $P$.

Another variant of $r$-gather is $r$-gather with total $k$-th power distance distance cost.
In this variant, the goal is still to partition the input point set $P$ into an arbitrary number of disjoint clusters $P_1,P_2,\cdots,P_t$ and assign each cluster $P_i$ a center $c(P_i)$ such that every cluster $P_i$ contains at least $r$ points while the objective becomes to minimize 
$
\sum_{P_i} \sum_{p\in P_i} \dist_{\mathcal{X}}(p,c(P_i))^k
$
which is the total $k$-th power distances from points to their centers.
Let $P'_1,P'_2,\cdots, P'_{t'}$ be an arbitrary partition of $P$.
Let $c(P'_i)\in\mathcal{X}$ be the center of $P'_i$.
Similar as before, if every cluster $P'_i$ has size at least $r$ and furthermore
$
\sum_{P'_i} \sum_{p\in P'_i} \dist_{\mathcal{X}}(p,c(P'_i))^k
$
is at most $\alpha$ times the optimal cost for some $\alpha\geq 1$, we say $P'_1,P'_2,\cdots,P'_{t'}$ together with centers $c(P'_1),c(P'_2),\cdots,c(P'_{t'})$ is an $\alpha$-approximate solution of $r$-gather with total $k$-th power distance cost for the point set $P$.

In all of our results, we suppose the aspect ratio of the input point set $P$, the ratio between the largest and the smallest interpoint distances in $P$, is bounded by $\poly(n)$.

\paragraph{MPC algorithms.} 
In our MPC algorithms, we consider the case where the input metric space $\mathcal{X}$ is the Euclidean space.
In particular, we suppose the input points are from the $d$-dimensional Euclidean space $\mathbb{R}^d$ where the distance between two points $x,y$ is $\|x-y\|_2=\sqrt{\sum_{i=1}^d (x_i-y_i)^2} $.
Since our goal is to compute a constant approximation for $r$-gather and its variants, the most interesting case is when the dimension $d=O(\log n)$.
If $d\gg \log n$, we can simply apply 
standard Johnson–Lindenstraus lemma~\cite{johnson1984extensions} to reduce the dimension to $O(\varepsilon^{-2}\log n)$ for any $\varepsilon\in (0,0.5)$ while preserving the pairwise distances up to a $1\pm \varepsilon$ factor with probability $1-1/\poly(n)$.
This dimension reduction step can be applied efficiently in the MPC model (see Appendix~\ref{sec:dim_reduction_in_MPC}).

Our first result is an efficient MPC algorithm for the $r$-gather problem.
There is a three-way trade-off between the parallel time, the total space and the approximation ratio.
\begin{theorem}[MPC approximate $r$-gather, restatement of Theorem~\ref{thm:mpc_r_gather}]
Consider a set $P\subset \mathbb{R}^d$ of $n$ points.
Let $\varepsilon,\gamma \in (0,1)$.
There is a fully scalable MPC algorithm which outputs an $O\left(\frac{\log(1/\varepsilon)}{\sqrt{\gamma}}\right)$-approximate $r$-gather solution for $P$ with probability at least $1-O(1/n)$.
Furthermore, the algorithm takes $O\left(\frac{\log(1/\varepsilon)}{\gamma}\cdot \log^{\varepsilon} (n)\cdot \log\log (n)\right)$ parallel time and uses $n^{1+\gamma+o(1)}\cdot d$ total space.
\end{theorem}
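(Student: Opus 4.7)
The plan is to exploit a tight connection between $r$-gather and the $r$-nearest neighbor function. For each point $p\in P$, let $d_r(p)$ denote the distance from $p$ to its $r$-th nearest neighbor in $P$. Because in any feasible $r$-gather solution every point $p$ shares its cluster with $r-1$ other points all lying within a ball of radius $\rho^*(P)$ around the cluster center, one has $d_r(p)\leq 2\rho^*(P)$ for every $p$. Conversely, if we can group points so that each group consists of at least $r$ mutually close points at pairwise distance $O(\max_p d_r(p))$, we obtain a feasible $r$-gather solution whose cost is an $O(1)$ multiple of $\rho^*(P)$. This reduces the problem to a near-neighbor-graph covering task parameterized by a scale $R$.

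To build such groups in MPC, I would enumerate a geometric sequence of candidate scales $R$ spanning the $\poly(n)$ aspect ratio. For each candidate $R$, the task becomes: in the graph $G_R$ whose edges connect pairs at Euclidean distance at most $R$, select an \emph{anchor set} $C\subseteq P$ whose elements are pairwise $\Omega(R)$-far yet such that every point has some anchor within $O(R\cdot \log(1/\varepsilon))$. Each anchor becomes a cluster center and every non-anchor is assigned to its nearest anchor. Provided $R=\Omega(\max_p d_r(p))$, the $r$-nearest ball of any anchor sits inside its own cluster, so every cluster has size at least $r$; deficient clusters that nevertheless appear at border scales are merged with their closest neighbor, which inflates the radius by only an $O(1)$ factor because the induced cluster-adjacency graph is sparse.

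The MPC realization has three ingredients. First, a \emph{near-neighbor graph}: after a Johnson--Lindenstrauss step reduces the dimension to $O(\log n)$, randomly shifted space-partitioning at scale $R$ produces, with high probability, a sparse representation of $G_R$ up to a $1/\sqrt{\gamma}$ distortion in distance, while respecting the total-space bound $n^{1+\gamma+o(1)}\cdot d$. Second, a \emph{ruling set on the power graph}: computing an anchor set amounts to finding a maximal independent (or ruling) set in a constant power of the near-neighbor graph, which should be accomplished in $O(\log^{\varepsilon}n\cdot \log\log n)$ rounds by the dedicated MPC subroutine advertised in the abstract. Third, \emph{post-processing}: in $O(1)$ rounds per scale, assign points to their nearest anchors and merge deficient clusters. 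Iterating across the $O(\log(1/\varepsilon)/\gamma)$ effective scales and returning the cheapest feasible solution yields the promised approximation $O(\log(1/\varepsilon)/\sqrt{\gamma})$ and round complexity $O(\log(1/\varepsilon)/\gamma \cdot \log^{\varepsilon} n \cdot \log\log n)$.

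The hardest step, and the origin of the $\log^{\varepsilon} n$ factor in the round count, is the MIS/ruling-set computation on the power graph: even when $G_R$ is sparse, its $k$-th power can be dense, so materializing it explicitly would blow up the total space. The main technical burden is therefore to show that one can either simulate $k$ local-algorithm steps on $G_R$ without constructing its power, or build a sparsifier of the power graph with $n^{1+\gamma+o(1)}$ edges on which a fully scalable MIS runs within the promised round budget. The overall success probability $1-O(1/n)$ then follows by a union bound over the $O(\log n)$ scales and over the internal randomness of the JL projection, the shifted grids, and the MIS subroutine.
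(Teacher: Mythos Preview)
Your high-level architecture is correct and matches the paper: lower-bound $\rho^*(P)$ by $\tfrac{1}{2}\max_p d_r(p)$, enumerate scales $R$, build a near-neighbor graph at each scale, extract a ruling set in a power of that graph to serve as centers, and assign every point to its nearest center. Two things, however, are mis-specified in ways that matter.

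\textbf{The anchor condition is too weak, and the merging step is both unnecessary and unjustified.} You require anchors to be ``pairwise $\Omega(R)$-far''. That is independence in the near-neighbor graph $G_R$ itself, and it does \emph{not} guarantee that each anchor captures its entire $r$-ball: two anchors at distance $\Theta(R)$ can share a common neighbor in $G_R$, so neither is guaranteed $r$ points. The paper instead takes the ruling set in the \emph{square} of the near-neighbor graph (more precisely in $G^{2k}$ when $G^k$ is the near-neighbor graph). Independence in $G^2$ means no two centers share a $G$-neighbor, so every center $s$ automatically receives all of $\Gamma_G(s)$, which has size $\geq r$ once $R\geq \max_p d_r(p)$. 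This removes any need to ``merge deficient clusters''; your claim that the cluster-adjacency graph is sparse is not substantiated and is not needed.

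\textbf{The round-count accounting is misattributed.} There are $O(\log n)$ scales, not $O(\log(1/\varepsilon)/\gamma)$, and the paper runs them all in parallel, so scales contribute nothing to the round count. The entire $\tfrac{\log(1/\varepsilon)}{\gamma}\cdot \log^{\varepsilon}n\cdot\log\log n$ term comes from a single invocation of the $\beta$-ruling-set routine on the power graph with $\beta=\Theta(\log(1/\varepsilon))$: the round bound there is $O\bigl(\beta/\gamma\cdot \log^{1/(2^{\beta}-2)}n\cdot\log\log n\bigr)$, and plugging in $\beta$ gives both the $\log(1/\varepsilon)/\gamma$ prefactor and the $\log^{\varepsilon}n$ exponent simultaneously. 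Your decomposition into ``$O(\log^\varepsilon n\cdot\log\log n)$ rounds per scale times $O(\log(1/\varepsilon)/\gamma)$ scales'' arrives at the right product for the wrong reason.

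A smaller point: the near-neighbor graph with the claimed $(1/\sqrt{\gamma})$-distortion / $n^{1+\gamma+o(1)}$-space tradeoff is obtained in the paper via $\ell_2$ LSH (the $n^{1/C^2+o(1)}$ collision bound with $C=\Theta(1/\sqrt{\gamma})$), not via randomly shifted grids, which in dimension $\Theta(\log n)$ would incur a $\sqrt{\log n}$ diameter blowup rather than $1/\sqrt{\gamma}$.
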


Then we show how to extend our MPC $r$-gather algorithm to handle outliers by allowing the use of more total space or bicriteria approximate solution.
By blowing up the total space by a factor at most $O(r)$, we obtain the following theorem.
\begin{theorem}[MPC approximate $r$-gather with outliers, restatement of Theorem~\ref{thm:MPC_r_gather_outlier}]
Consider a set $P\subset \mathbb{R}^d$ of $n$ points and a parameter $k\leq n$.
Let $\varepsilon,\gamma \in (0,1)$.
There is a fully scalable MPC algorithm which outputs an $O\left(\frac{\log(1/\varepsilon)}{\sqrt{\gamma}}\right)$-approximate solution of $r$-gather with $k$ outliers for the point set $P$ with probability at least $1-O(1/n)$.
Furthermore, the algorithm takes $O\left(\frac{\log(1/\varepsilon)}{\gamma}\cdot \log^{\varepsilon} (n)\cdot \log\log (n)\right)$ parallel time and uses $n^{1+\gamma+o(1)}\cdot  (d+ r)$ total space.
\end{theorem}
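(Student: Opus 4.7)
The plan is to reduce $r$-gather with $k$ outliers to plain $r$-gather via a near-neighbor-based preprocessing step that isolates a set of ``core'' candidate non-outliers; the extra additive $O(r)$ term in the total space budget is precisely what is needed to store the required neighbor information for each input point, while the rest of the guarantees are inherited from Theorem~\ref{thm:mpc_r_gather}.

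Concretely, first guess the optimal radius $\rho^\star := \rho^*(P\setminus O^*)$ up to a constant factor. Since the aspect ratio of $P$ is $\poly(n)$, only $O(\log n)$ geometrically spaced guesses need be tried, contributing an $O(\log n)$ multiplicative factor that can be absorbed into the $n^{o(1)}$ slack in both total space and parallel time. For a fixed guess $\rho$, invoke the approximate near-neighbor primitives underlying the base $r$-gather algorithm to compute, for every $p\in P$, an approximate count $N(p)$ of the points of $P$ lying within distance $O(\rho)$ of $p$. This is the step responsible for the additive $O(nr)$ total-space term, since each point may need to list up to $r$ nearby points to certify $N(p)\geq r$. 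Declare $p$ a core point if $N(p)\geq r$ and a candidate outlier otherwise. When $\rho$ is at least a constant times $\rho^\star$, every non-outlier of the optimum has at least $r-1$ other optimal-cluster members within distance $2\rho^\star$, so every non-outlier is classified as a core point, and the candidate outliers therefore form a subset of $O^*$ (up to the absorbed constant in the ball radius) and in particular number at most $k$.

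Next, run the base MPC $r$-gather algorithm of Theorem~\ref{thm:mpc_r_gather} on the set $P_{\mathrm{core}}$ of core points and output the remaining candidate outliers alongside the resulting clusters. For correctness one needs $\rho^*(P_{\mathrm{core}}) = O(\rho^\star)$: since the non-outliers of the optimum are a subset of $P_{\mathrm{core}}$, restricting the optimal clustering of $P\setminus O^*$ to $P_{\mathrm{core}}$ leaves all cluster radii $\leq \rho^\star$ and all cluster sizes $\geq r$, yielding the bound. Feeding this into Theorem~\ref{thm:mpc_r_gather} then produces clusters of radius $O\!\left(\tfrac{\log(1/\varepsilon)}{\sqrt{\gamma}}\right)\cdot \rho^\star$, matching the claimed approximation. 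The parallel time, success probability, and all remaining factors carry over directly from the base theorem.

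The main obstacle is the structural claim $\rho^*(P_{\mathrm{core}})=O(\rho^\star)$. The subtlety is that a core point's witnessing $r-1$ neighbors may themselves be candidate outliers (their own neighborhoods happening to miss the threshold once many true outliers are present), so naively the witness justifying $N(p)\geq r$ can be destroyed by the filtering. The cleanest route around this is to argue that the candidate-outlier set is contained in $O^*$ up to a constant blowup of $\rho$; once this inclusion is established, the optimal clustering on $P\setminus O^*$ is itself a feasible clustering on $P_{\mathrm{core}}$ and the bound follows. Making the inclusion robust is where the binary search over $\rho$ and the careful choice of the constants inside the near-neighbor count interact, and is the step that genuinely requires the extra $O(r)$ per-point storage rather than merely a small constant.
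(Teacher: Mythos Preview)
Your plan diverges from the paper's and has a real gap at exactly the step you flag as the main obstacle. You correctly establish that, for a guess $\rho\gtrsim\rho^{*k}$, every optimal non-outlier has $\ge r$ cluster-mates within distance $2\rho^{*k}$ and is therefore declared core; this yields $(\text{candidate outliers})\subseteq O^*$, equivalently $P\setminus O^*\subseteq P_{\mathrm{core}}$. From this you conclude that ``the optimal clustering on $P\setminus O^*$ is itself a feasible clustering on $P_{\mathrm{core}}$,'' but that would require the \emph{opposite} inclusion $P_{\mathrm{core}}\subseteq P\setminus O^*$. In general $P_{\mathrm{core}}$ can strictly contain $P\setminus O^*$: a true outlier $p$ may have $\ge r$ points in its $O(\rho)$-ball (so $p$ is core) while those witnesses are themselves non-core and get discarded. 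After filtering, such a $p$ sits in $P_{\mathrm{core}}$ with no nearby core points, and $\rho^*(P_{\mathrm{core}})$ blows up to the distance from $p$ to the rest of the instance. Concretely, in $\mathbb{R}^d$ with $d\ge r-1$, $k\ge r$, and $\rho^{*k}=1$: place a well-clustered bulk far away, put $p$ at the origin and $q_i=2e_i$ for $i=1,\dots,r-1$. At counting radius $2$ the point $p$ is core (it sees all $q_i$), but each $q_i$ sees only $\{q_i,p\}$ (the other $q_j$ are at distance $2\sqrt{2}>2$) and is non-core. Your candidate-outlier set is $\{q_1,\dots,q_{r-1}\}\subseteq O^*$ as promised, yet $P_{\mathrm{core}}=\{p\}\cup(\text{bulk})$, and running plain $r$-gather on it must pay $\Theta(D)$ to cluster $p$.

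The paper does \emph{not} delete low-degree points and then invoke Theorem~\ref{thm:mpc_r_gather} as a black box. It builds the $O(C)$-approximate $(R,r)$-near neighbor graph $G$ explicitly (this is the step that costs the additive $r$ in space, via Lemma~\ref{lem:MPC_graph_construction_more_space}), lets $P'=\{p:|\Gamma_G(p)|\ge r\}$, and computes a $\beta$-ruling set $S$ of the induced subgraph $(G^2)[P']$; crucially, it then assigns to clusters \emph{every} vertex within $2\beta$ $G$-hops of $S$, including low-degree ones (Algorithm~\ref{alg:offline_r-gather_outlier}, Lemma~\ref{lem:offline_outlier_r-gather_correctness}). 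Since each center $s\in S\subseteq P'$ has $|\Gamma_G(s)|\ge r$ and $S$ is independent in $G^2$, all $G$-neighbors of $s$---whatever their own degree---land in $s$'s cluster, giving the size-$r$ guarantee; and because $S$ is $\beta$-ruling in $(G^2)[P']$, every high-degree vertex lies within $2\beta$ hops of $S$, so at most $|P\setminus P'|\le k$ points remain unclustered once $R\ge\wh{\rho}^k$. The essential difference from your plan is that low-degree points are barred only from being \emph{centers}, not from joining clusters; removing them outright, as you propose, is precisely what strands core outliers and breaks the radius bound.
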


By blowing up the total space by a factor at most $O(1/\eta^2)$, we can obtain a bicriteria approximate solution.
\begin{theorem}[MPC bicriteria approximate $r$-gather with outliers, restatement of Theorem~\ref{thm:bicriteria}]
Consider a set $P\subset \mathbb{R}^d$ of $n$ points and a parameter $k\leq n$.
Let $\varepsilon,\gamma,\eta \in (0,1)$.
There is a fully scalable MPC algorithm which outputs an $O\left(\frac{\log(1/\varepsilon)}{\sqrt{\gamma}},\eta\right)$-bicriteria approximate solution of $r$-gather with $k$ outliers for the point set $P$ with probability at least $1-O(1/n)$.
Furthermore, the algorithm takes $O\left(\frac{\log(1/\varepsilon)}{\gamma}\cdot \log^{\varepsilon} (n)\cdot \log\log (n)\right)$ parallel time and uses $n^{1+\gamma+o(1)} \cdot (d+ \eta^{-2})$ total space.
\end{theorem}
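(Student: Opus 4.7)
The plan is to reduce to Theorem~\ref{thm:MPC_r_gather_outlier} via a uniform random subsample. The earlier theorem pays $O(r)$ additive space per point because it must certify cluster sizes of at least $r$ exactly; for a bicriteria guarantee we only need to certify size at least $(1-\eta)r$, and by Chernoff $\Theta(\eta^{-2}\log n)$ random samples from each cluster already suffice. When $r\leq \eta^{-2}\log n$ I would invoke Theorem~\ref{thm:MPC_r_gather_outlier} directly, whose space $n^{1+\gamma+o(1)}(d+r)$ already fits within the stated $n^{1+\gamma+o(1)}(d+\eta^{-2})$ budget. Otherwise I would subsample.

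Concretely, I would form $P'\subseteq P$ by including each point independently with probability $p=\Theta(\eta^{-2}r^{-1}\log n)$, and set $r'=(1-\eta/2)\,pr = \Theta(\eta^{-2}\log n)$ and $k'=pk + \Theta(\sqrt{pk\log n})$. A Chernoff bound applied to the optimal clusters $P_1^*,\dots,P_s^*$ of $P\setminus O^*$ gives $|P_i^*\cap P'|\geq r'$ for every $i$ with high probability, and a second Chernoff bound gives $|O^*\cap P'|\leq k'$; hence the optimum of $(r',k')$-gather-with-outliers on $P'$ is at most $\rho^*(P\setminus O^*)$. I would then invoke Theorem~\ref{thm:MPC_r_gather_outlier} on $P'$ with parameters $r',k'$ to obtain centers $c_1,\dots,c_t$ defining clusters of radius at most $\alpha\rho^*$, where $\alpha=O(\log(1/\varepsilon)/\sqrt{\gamma})$.

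To lift back to $P$, I would take $Q_i := B(c_i,\alpha\rho^*)\cap P$ and mark the rest as outliers. The radius bound is immediate. For the size bound I need a reverse Chernoff argument: conditional on $|B(c_i,\alpha\rho^*)\cap P'|\geq r'$, we want $|B(c_i,\alpha\rho^*)\cap P|\geq (1-\eta)r$. This will be the main obstacle, since a naive union bound over arbitrary Euclidean balls is vacuous; I would handle it by restricting attention to a polynomial-size family of candidate balls --- for instance balls centered at input points with radii drawn from a geometric net over the $\poly(n)$ aspect ratio --- and then Chernoff plus a union bound over this family succeeds. A final Chernoff bound on $P\setminus\bigcup_i Q_i$ certifies at most $k$ outliers with high probability. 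For the space accounting, $|P'|^{1+\gamma+o(1)}(d+r')$ with $|P'|=O(n\log n/(\eta^2 r))$ and $r'=O(\log n/\eta^2)$ simplifies to $n^{1+\gamma+o(1)}(d+\eta^{-2})$ once polylogarithmic factors are absorbed into the $o(1)$, and the parallel time is inherited from Theorem~\ref{thm:MPC_r_gather_outlier} because sampling and lifting add only a constant number of MPC rounds.
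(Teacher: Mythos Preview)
Your sampling reduction is a natural idea but differs from the paper's route, and the lifting step has two genuine gaps. First, the sets $Q_i=B(c_i,\alpha\rho^*)\cap P$ are not disjoint, whereas a bicriteria solution must be a partition of the non-outliers; once you disjointify (e.g.\ by assigning each point to its nearest center) the resulting clusters are no longer balls, and your reverse-Chernoff-over-a-polynomial-family-of-balls no longer lower-bounds their sizes. Second, and more fundamentally, the outlier budget is \emph{not} relaxed in the bicriteria definition---you must output at most $k$ outliers exactly---but your ``final Chernoff bound on $P\setminus\bigcup_iQ_i$'' does not apply: that set is a function of the random sample (it depends on the centers $c_i$ and on the output radius), and a union bound over all possible center configurations is exponential in the number of clusters. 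Even granting concentration, the two directions of your sandwich (forward: $|O^*\cap P'|\leq k'$; reverse: $|A\cap P'|\leq k'\Rightarrow|A|\leq k$) require opposite slacks, so you would only certify $|A|\leq k+\Theta(\eta\sqrt{kr})$ outliers, not $|A|\leq k$.

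The paper avoids both issues by keeping the whole algorithm on $P$ and moving the sampling \emph{inside}: it runs Algorithm~\ref{alg:offline_r-gather_outlier} on the full point set but replaces the single $O(r)$-per-point step---exact degree computation in the near-neighbor graph---with the $O(\eta^{-2})$-per-point sampling estimator of Lemma~\ref{lem:degree_estimation}, applied to the implicit graph $G^2$ from Lemma~\ref{lem:MPC_graph_construction_less_space}. Because the centers remain an independent set of $G^4$ computed over all of $P$, and because the filtered set $P'$ provably contains every vertex of true degree $\geq r$, the standard disjointness argument and the exact bound $|P\setminus P''|\leq k$ go through unchanged; the only loss is that centers now have degree $\geq(1-\eta)r$ rather than $\geq r$, which is precisely the bicriteria relaxation.
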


By using more parallel time, we show how to compute an approximate solution of $r$-gather with total $k$-th power distance cost.
\begin{theorem}[MPC approximate $r$-gather with total distance cost, restatement of Theorem~\ref{thm:MPC_total_distance}]
Consider a set $P\subset \mathbb{R}^d$ of $n$ points and a constant $k\geq 1$.
Let $\varepsilon,\gamma\in(0,1)$.
There is a fully scalable MPC algorithm which outputs an $O\left(\left(\frac{\log(1/\varepsilon)}{\sqrt{\gamma}}\right)^k\cdot r\right)$-approximate solution of $r$-gather with total $k$-th power distance cost for the point set $P$ with probability at least $1-O(1/n)$.
Furthermore, the algorithm takes $O\left(\frac{\log(1/\varepsilon)}{\gamma}\cdot \log^{1+\varepsilon} (n)\cdot \log\log (n)\right)$ parallel time and uses $n^{1+\gamma+o(1)} \cdot (d+ r)$ total space.
\end{theorem}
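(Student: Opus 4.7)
The plan is to reduce to the min-max $r$-gather algorithm of Theorem~\ref{thm:mpc_r_gather}, invoked at $O(\log n)$ dyadic scales. The driving observation is a structural lower bound relating $\OPT_k$ to the $r$-nearest-neighbor distances $d_r(p)$ in $P$: I would show
\[
\sum_{p \in P} d_r(p)^k \;\leq\; 2^{k+1}\, r \cdot \OPT_k.
\]
The proof: WLOG the optimal solution has clusters of size in $[r,2r)$, since any larger optimal cluster can be partitioned into subgroups of $\geq r$ points each keeping the original center, which never increases the $k$-th power cost. For an optimal cluster with center $c$ and radius $R$, every point $p$ in it has at least $r$ cluster-mates within distance $2R$ by the triangle inequality, so $d_r(p) \leq 2R$; hence that cluster contributes at most $|P_i|(2R)^k \leq 2^{k+1} r R^k$ to $\sum_p d_r(p)^k$ while contributing at least $R^k$ to $\OPT_k$ via its farthest point.

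Next, I would compute $d_r(p)$ for every $p$ using the near-neighbor graph machinery developed earlier in the paper, and bucket the points into $O(\log n)$ levels $L_\ell = \{p : d_r(p) \in [2^\ell, 2^{\ell+1})\}$ (well-defined since the aspect ratio is $\poly(n)$). Processing levels from smallest to largest, at level $\ell$ I invoke Theorem~\ref{thm:mpc_r_gather} on the ``relevant'' set $Q_\ell = \{p \in P : d_r(p) \leq 2^{\ell+2}\}$ and extract those output clusters whose radius lies in the current dyadic band, assigning to them any $p \in L_\ell$ not yet covered. Because every $p \in L_\ell$ has $r$ neighbors within distance $2^{\ell+1}$ and each such neighbor $q$ satisfies $d_r(q) \leq 2 \cdot 2^{\ell+1}$ (i.e., $q \in Q_\ell$), the optimum min-max cost on the portion of $Q_\ell$ we care about is $O(2^\ell)$; the subroutine therefore produces clusters of radius $O(\alpha \cdot 2^\ell)$, where $\alpha = O(\log(1/\varepsilon)/\sqrt{\gamma})$.

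Summing over assignments, each point $p$ ends up in a cluster of radius $O(\alpha \cdot d_r(p))$, so the algorithm's total $k$-th power cost is
\[
\sum_{p \in P} O(\alpha \cdot d_r(p))^k \;=\; O(\alpha^k) \sum_{p \in P} d_r(p)^k \;\leq\; O(\alpha^k \cdot r) \cdot \OPT_k,
\]
matching the $O((\log(1/\varepsilon)/\sqrt{\gamma})^k \cdot r)$ target for constant $k$. Resource accounting: $O(\log n)$ serialised invocations of Theorem~\ref{thm:mpc_r_gather}, each costing $O(\log(1/\varepsilon)/\gamma \cdot \log^{\varepsilon} n \cdot \log\log n)$ rounds, gives total parallel time $O(\log(1/\varepsilon)/\gamma \cdot \log^{1+\varepsilon} n \cdot \log\log n)$; total space remains $n^{1+\gamma+o(1)}(d+r)$ because the invocations reuse the same machine layout and the $O(r)$-factor in space is already paid by the min-max subroutine.

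The main obstacle I expect is the coupling across scales: after small-scale clusters are committed, an unassigned point at level $\ell$ may find that its $r$ nearest neighbors are already in earlier clusters, so a naive invocation on just the residual points could inflate the cluster radius far beyond $O(2^\ell)$. The fix, which is the delicate step, is to always run the subroutine on the full $Q_\ell$ (using already-committed points as ``witnesses'' that size-$r$ balls exist at the desired scale) but commit only newly-covered points, leaving the rest to larger scales. A ruling-set argument built from the MPC primitives the paper has already developed guarantees that each point is eventually committed at a scale $O(d_r(p))$, so that the per-point radius charge above is valid and the $O(\log n)$ scale loop does not compound the per-scale overhead.
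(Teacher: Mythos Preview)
Your high-level plan coincides with the paper's: process $O(\log n)$ dyadic scales from small to large, prove the structural bound $\sum_p \rho_r(p)^k \le O(r)\cdot\OPT_k$, and arrange that every $p$ lands in a cluster of radius $O(\alpha\cdot\rho_r(p))$ with $\alpha=O(\log(1/\varepsilon)/\sqrt\gamma)$. Your version of the structural lemma is in fact slightly cleaner than the paper's Lemma~\ref{lem:pointwise_vs_total_cost} (you keep the original center when splitting, saving a factor $2^k$).

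The real gap is the black-box use of Theorem~\ref{thm:mpc_r_gather} on $Q_\ell=\{p:\rho_r(p)\le 2^{\ell+2}\}$. Two things break. First, the min-max optimum on $Q_\ell$ need not be $O(2^\ell)$: a point $p$ with $\rho_r(p)$ near $2^{\ell+2}$ can have its $r$ nearest neighbors outside $Q_\ell$, so $\rho^*(Q_\ell)$ (hence the subroutine's output radius) can be much larger than $2^\ell$, and there is no notion of ``extracting clusters in a radius band'' from that theorem. Second, once you ``commit only newly-covered points,'' a newly created cluster may retain fewer than $r$ uncommitted members, violating the size constraint. Your final paragraph acknowledges both issues but the sketched fix is not a proof. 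The paper does \emph{not} invoke Theorem~\ref{thm:mpc_r_gather} here; instead it runs Algorithm~\ref{alg:r-gather_pointwise} directly: at scale $R_i$ it builds the near-neighbor graph $G_i$, sets $P_i''$ to be those degree-$\ge r$ points whose \emph{entire $G_i$-neighborhood is still unassigned}, takes a $\beta$-ruling set $S_i$ of $(G_i^2)[P_i'']$, and opens new clusters only at $S_i$. Since every center $s\in S_i$ has all of $\Gamma_{G_i}(s)$ unassigned, each new cluster gets $\ge r$ points; remaining points with $\rho_r(p)\le R_i$ are within one $G_i$-hop of some already-assigned point and are merged into that existing cluster (Lemmas~\ref{lem:radius_phases}--\ref{lem:minimum_size_requirement}). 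That $P_i''$ restriction is the missing idea in your proposal, and it is what simultaneously preserves the size-$r$ constraint and the per-point radius bound across scales.
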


\paragraph{Dynamic algorithms.}
In our dynamic algorithms, we consider the case where input points are from a metric space $\mathcal{X}$ with doubling dimension $d$.
The doubling dimension of a metric space $\mathcal{X}$ is the minimum value $d$ such that for any $R>0$ and any $S\subseteq \mathcal{X}$ with diameter at most $R$, i.e., $\forall x,y\in S,\dist_{\mathcal{X}}(x,y)\leq R$, we can always find at most $2^d$ sets $S_1,S_2,\cdots, S_t$ such that $S_1\cup S_2\cup\cdots \cup S_t = S$ and every $S_i$ has diameter at most $R/2$, i.e., $\forall x,y\in S_i,\dist_{\mathcal{X}}(x,y)\leq R/2$.
A special case is when $\mathcal{X}$ is the $d$-dimensional Euclidean space.
In this case, the doubling dimension is $\Theta(d)$.

In the dynamic $r$-gather problem, each update is an insertion/deletion of a point.
The dynamic $r$-gather algorithm needs to implicitly maintain an approximate $r$-gather solution.
Each query queries a point $p$, the algorithm needs to return the center of the implicitly maintained cluster which contains $p$.
Furthermore, for each query the algorithm also needs to output an approximate maximum radius of the implicitly maintained clusters.
Thus, if we query every point after an update operation, the algorithm outputs the entire approximate $r$-gather solution of the current point set.

We present both incremental algorithm and fully dynamic algorithm.
The complexity of our algorithm depends on the aspect ratio $\Delta$ of the point set, the ratio between the largest and smallest interpoint distances in the point set.
In other words, at any time of the update sequence, the ratio between the largest distance and the smallest distance of different points in the point set is at most $\Delta$.
This dependence is natural and we refer readers to e.g., \cite{krauthgamer2004navigating} for more discussions.
\begin{theorem}[Incremental approximate $r$-gather, restatement of Theorem~\ref{thm:incremental}]
Suppose the time needed to compute the distance between any two points $x,y\in\mathcal{X}$ is at most $\tau$.
An $O(1)$-approximate $r$-gather solution of a point set $P\subseteq \mathcal{X}$ can be maintained under point insertions in the $r\cdot 2^{O(d)}\cdot \log^2\Delta\cdot \log\log \Delta\cdot \tau$ worst update time and $2^{O(d)}\cdot \log^2\Delta\cdot \log\log \Delta\cdot \tau$ amortized update time, where $\Delta$ is an upper bound of the ratio between the largest distance and the smallest distance of different points in $P$ at any time.
For each query, the algorithm outputs an $O(1)$-approximation to the maximum radius of the maintained approximate $r$-gather solution in the worst $2^{O(d)}\cdot \log^2\Delta\cdot \tau$ query time.
If a point $p\in P$ is additionally given in the query, the algorithm outputs the center of the cluster containing $p$ in the same running time.
\end{theorem}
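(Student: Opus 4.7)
The plan is to reduce incremental $r$-gather in a doubling metric to maintaining, for each point, an approximate $r$-th nearest-neighbor distance together with a near-neighbor graph whose connected components (after contracting tight blobs) determine the clustering. The starting observation is that in any metric space the optimal $r$-gather cost $\rho^*(P)$ lies within a constant factor of $\max_{p\in P}d_r(p)$, where $d_r(p)$ is the distance from $p$ to its $r$-th nearest neighbor in $P$: the lower bound follows because each cluster contains at least $r$ points (so some member sits at distance $\geq d_{r-1}(p)/2$ from the center by triangle inequality), while the matching upper bound is obtained by a greedy merging procedure along the lines of Armon's $3$-approximation on the graph of pairs at distance at most $2\max_p\tilde{d}_r(p)$.

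I would implement this through a hierarchical decomposition of $P$ into nets at $O(\log \Delta)$ geometric scales $R_i=2^i$, maintained as a navigating-net / cover-tree structure tailored to the doubling dimension $d$. At each scale, the doubling property guarantees that every point has only $2^{O(d)}$ net neighbors, which makes it possible to insert a point into the hierarchy in $2^{O(d)}\log\Delta\cdot\tau$ time and to locate, for any existing point, the $2^{O(d)}$ points that could plausibly be among its $r$ nearest neighbors at the current scale. On top of the net, I would maintain, for every point $p$, a $2$-approximate value $\tilde{d}_r(p)$ of its $r$-th nearest-neighbor distance, and an assignment of $p$ to a cluster determined by connected components of a near-neighbor graph at the scale $R^\star\approx\max_p\tilde{d}_r(p)$. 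Because the algorithm is incremental, each $\tilde{d}_r(p)$ is monotonically non-increasing over time and can only change through $O(\log\Delta)$ discrete doubling thresholds during the whole execution.

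Insertion of a new point $q$ then proceeds by: (i) inserting $q$ into the net hierarchy; (ii) walking through the $O(\log\Delta)$ scales and the $2^{O(d)}$ net neighbors of $q$ at each scale, recomputing $\tilde{d}_r(\cdot)$ for every neighbor whose $r$-th nearest neighbor might now be $q$; and (iii) updating the near-neighbor clustering whenever $\max_p\tilde{d}_r(p)$ crosses a doubling threshold. A single recomputation must inspect $O(r)$ nearby points through the net structure, which is what produces the $r$ factor in the worst-case update bound; the $\log\log\Delta$ factor comes from maintaining a priority queue over the $O(\log\Delta)$ discretized scales in order to track the current maximum of $\tilde{d}_r(\cdot)$ and the induced cluster scale. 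A query at $p$ is answered by descending the net hierarchy to locate $p$ and returning both its current cluster label and the stored $\max_p\tilde{d}_r(p)$, for a total cost of $2^{O(d)}\log^2\Delta\cdot\tau$.

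The main obstacle will be turning the worst-case $r\cdot 2^{O(d)}\log^2\Delta\log\log\Delta\cdot\tau$ bound into the claimed amortized $2^{O(d)}\log^2\Delta\log\log\Delta\cdot\tau$ bound. The natural potential is $\sum_p \lceil\log\tilde{d}_r(p)\rceil$, which takes values in an $O(n\log\Delta)$ range; any expensive $O(r)$ recomputation that genuinely alters a cluster must be chargeable against a strict decrease of $\tilde{d}_r(\cdot)$ for some point, and each point can only contribute $O(\log\Delta)$ such decreases over its lifetime. Verifying that no insertion performs more than $O(1)$ cluster-merging work per net-neighbor scan, and that the $O(1)$ approximation to $\rho^*(P)$ is preserved across the discretization of scales and across incremental rebuilds of the near-neighbor graph, are the two technical points I expect to require the most care.
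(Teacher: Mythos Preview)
Your approach shares the right starting point with the paper (the relation $\rho^\ast \approx \max_p d_r(p)$ and a navigating-net backbone), but the clustering mechanism you propose has a genuine gap. Taking connected components of the near-neighbor graph at scale $R^\star$ does not by itself yield bounded-radius clusters: a chain of points, each at pairwise distance $\leq R^\star$ from the next, forms a single component of arbitrarily large diameter, so the $O(1)$-approximation claim fails unless you additionally break components around designated centers. You gesture at ``Armon's 3-approximation'' for the upper bound, but that is a center-selection procedure (essentially a ruling set of the squared near-neighbor graph), not a connected-components one, and you give no indication of how to maintain such a selection under insertions. The paper sidesteps this entirely by never forming a near-neighbor graph at query time: at each scale $R$ it keeps an $R$-net $N_R$ and uses the net points themselves as cluster centers, assigning every other point to its approximate nearest center in $N_R$. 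The covering property of the net already bounds every cluster radius by $O(R)$.

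The second and more consequential difference is in how the ``every center has $\geq r$ nearby points'' certificate is maintained, and how the amortized bound is obtained. Rather than tracking a per-point quantity $\tilde d_r(p)$ for all $p$ (whose correct maintenance is delicate, since a single insertion can in principle affect the $r$-th nearest-neighbor distance of arbitrarily many non-net points), the paper keeps for each center $q\in N_R$ a small \emph{pre-cluster} $Q_R(q)$ of at most $r$ points within distance $O(R)$, together with a pool $U_R$ of currently unassigned points at that scale. When a point $p$ is inserted: if $p$ becomes a new center it pulls up to $r$ points out of $U_R$ into $Q_R(p)$ via repeated approximate nearest-neighbor queries (this is the only place the $r$ factor appears); otherwise $p$ simply goes into some existing $Q_R(q)$ or into $U_R$. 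The amortized bound then follows in one line: every expensive removal from $U_R$ is paid for by the earlier insertion of that same point into $U_R$, and each point enters each $U_R$ at most once. This accounting is both simpler and tighter than the potential $\sum_p \lceil \log \tilde d_r(p)\rceil$ you sketch, and it altogether avoids the question of which points' $r$-th nearest-neighbor distances are perturbed by an insertion.
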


\begin{theorem}[Fully dynamic approximate $r$-gather, restatement of Theorem~\ref{thm:fully_dynamic}]
Suppose the time needed to compute the distance between any two points $x,y\in\mathcal{X}$ is at most $\tau$.
An $O(1)$-approximate $r$-gather solution of a point set $P\subseteq \mathcal{X}$ can be maintained under point insertions/deletions in the $r\cdot 2^{O(d)}\cdot \log^2\Delta\cdot \log\log \Delta\cdot \tau$ worst update time, where $\Delta$ is an upper bound of the ratio between the largest distance and the smallest distance of different points in $P$ at any time.
For each query, the algorithm outputs an $O(1)$-approximation to the maximum radius of the maintained approximate $r$-gather solution in the worst $2^{O(d)}\cdot \log^2\Delta\cdot \tau$ query time.
If a point $p\in P$ is additionally given in the query, the algorithm outputs the center of the cluster containing $p$ in the same running time.
\end{theorem}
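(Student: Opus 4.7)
The plan is to extend the incremental construction from the preceding theorem to handle deletions within the same hierarchical framework. The backbone is a dynamic navigating net for the doubling metric at scales $2^0, 2^1, \ldots, 2^{\lceil \log \Delta \rceil}$ (in the spirit of Krauthgamer--Lee), which supports both insertion and deletion in $2^{O(d)} \cdot \log^2 \Delta \cdot \tau$ time per operation. On top of the net I would maintain, for each point $p$, an $r$-nearest-neighbor structure that tracks the smallest scale $2^i$ at which the ball of radius $O(2^i)$ around $p$ contains at least $r$ input points. The set of ``heavy'' net points at each scale (those certifying $r$ nearby inputs) can then be tested purely from local information.

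At each scale $2^i$ I would maintain an $r$-gather clustering of the points assigned to that scale by a deterministic local rule, for instance a greedy or ruling-set procedure that assigns every input point to the lowest-index heavy net point within distance $O(2^i)$. The clustering at the scale at which a given point first becomes heavy yields an $O(1)$-approximation by the same sandwich argument as in the incremental case: the scale is $O(1)$ times the optimum $r$-gather radius $\rho^{*}$ (since heaviness witnesses $r$ nearby points within that scale), while every output cluster has radius $O(2^i)$. The query routine returns the cluster center by looking up the heavy net point assigned to the queried input point at the appropriate scale, which takes $2^{O(d)} \cdot \log^2 \Delta \cdot \tau$ via navigating-net descent, and the global approximate maximum radius is maintained as a single statistic (the largest active scale).

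For the update step, when a point $x$ is inserted or deleted I would, at each of the $O(\log \Delta)$ scales, enumerate all net points within distance $O(2^i)$ of $x$; by the packing property of doubling metrics there are at most $2^{O(d)}$ such net points per scale. For each of them I would recompute the ball count, update its heavy/light status, and re-read the local cluster assignment; each such ball contains $O(r \cdot 2^{O(d)})$ relevant input points (after splitting oversized clusters), so the work per scale is $r \cdot 2^{O(d)} \cdot \tau$. The $r$-nearest-neighbor thresholds are maintained with the help of priority structures, contributing the $\log \log \Delta$ overhead.

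The main obstacle will be controlling \emph{deletion cascades}: a single deletion can turn a heavy net point light, dropping its cluster below the size constraint and forcing its members to be reassigned, which could in turn cause other net points to lose or gain heaviness. I would defuse this by defining the clustering at each scale as a deterministic function of the current heavy-bit pattern within a bounded-radius neighborhood, so that after recomputing the heavy bits in the $O(2^i)$-neighborhood of $x$ the new partition is uniquely determined and can be installed in one local pass of size $r \cdot 2^{O(d)}$ per scale, with no further propagation. Summing over all $O(\log \Delta)$ scales and adding the navigating-net update cost yields the claimed worst-case update time of $r \cdot 2^{O(d)} \cdot \log^{2} \Delta \cdot \log\log \Delta \cdot \tau$.
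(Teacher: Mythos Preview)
Your overall architecture---navigating net plus per-scale bookkeeping, then selecting the smallest scale at which every center certifies $r$ nearby points---matches the paper's. The gap is in the update mechanism. You assert that ``each such ball contains $O(r\cdot 2^{O(d)})$ relevant input points,'' but a ball of radius $O(2^i)$ around a net point can contain arbitrarily many input points; nothing in the doubling property or in cluster splitting bounds this. Consequently ``recomputing the ball count'' at a scale is not an $O(r\cdot 2^{O(d)})$ operation, and maintaining exact heavy/light bits under deletion cannot be charged this way.

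The paper sidesteps this by never counting the whole ball. For each scale $R$ and each net point $q\in Y_R$ it keeps a \emph{pre-cluster} $Q_R(q)$ of size \emph{at most} $r$, together with a pool $U_R$ of unassigned points; every input point lies in exactly one of these containers. The invariant is that $Q_R(q)$ contains only points within distance $R/2$ of $q$, and if $|Q_R(q)|<r$ then it already contains \emph{all} points within $R/(2C)$ of $q$. On insertion one does at most $r$ approximate nearest-neighbor pulls from $U_R$ to fill a new pre-cluster; on deletion one returns the deleted net point's pre-cluster to the pool and, for each of the $2^{O(d)}$ newly promoted net points $Z_R$, does again at most $r$ pulls. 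This cap at $r$ is what yields the $r\cdot 2^{O(d)}$ per-scale cost and simultaneously eliminates your cascade worry: removing a point from $Q_R(q)$ triggers exactly one refill attempt from $U_R$, and changes to one pre-cluster never touch another. Your ``deterministic local rule on heavy bits'' would work in principle, but you have not shown how to maintain those bits within the stated budget; the pre-cluster/pool mechanism is the missing concrete device.
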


\subsection{Our Algorithms and Techniques}
The starting point of our algorithms is a crucial observation of the connection between $r$-gather and the $r$-nearest neighbors.
Consider the input point set $P=\{p_1,p_2,\cdots,p_n\}$ from the metric space $\mathcal{X}$.
For $p\in P$, let $\rho_r(p)$ denote the distance between $p$ and the $r$-th nearest neighbor of $p$ in $P$.
Note that we define the ($1$-st) nearest neighbor of $p$ to be $p$ itself.
In the optimal $r$-gather solution for $P$, each cluster has size at least $r$.
Thus the maximum diameter of the clusters of the optimal solution is at least $\rho_r(p)$ for every $p\in P$.
In other word, the optimal $r$-gather cost is at least $\frac{1}{2}\cdot \max_{p\in P}\rho_r(p)$.
Ideally, if we partition $P$ into clusters such that for each point $p$, the cluster containing $p$ is exactly the set of $r$-nearest neighbors of $p$, we obtain a $2$-approximation of $r$-gather regardless the choices of centers.
However since the $r$-nearest neighbors of one of the $r$-nearest neighbors of $p$ may not be the same as the $r$-nearest neighbors of $p$, the above clustering may not be achieved.
Although the ideal clustering may be impossible to obtain, it motivates the following simple procedure.
Let $R=\max_{p\in P} \rho_r(p)$.
We find a point for which every point within distance $R$ is not assigned to any cluster.
We create a new cluster and set the found point as the center.
We make the cluster contain all points within distance $R$.
We repeat the above steps until we cannot create any new cluster.
Since the $r$-nearest neighbors of a cluster center must be assigned to the same cluster of the center, each cluster has size at least $r$.
Consider a point $p$ which is not assigned to any cluster.
Within the distance $R$ from $p$, there must be another point $q$ assigned to a cluster.
Then we assign $p$ to the cluster containing $q$.
If there are multiple choices of $q$, we can choose an arbitrary $q$.
Thus, the distance from a point $p$ to its cluster center is at most $2\cdot R$ which implies that the $r$-gather cost of the obtained clusters is at most $2\cdot\max_{p\in P}\rho_r(p)$.
Since the optimal $r$-gather cost is at least $\frac{1}{2}\cdot \max_{p\in P}\rho_r(P)= \frac{1}{2}\cdot R$, a $4$-approximate $r$-gather solution is obtained.

Let us review the above procedure.
The process of finding cluster centers is equivalent to finding a subset of points $S\subseteq P$ satisfying the following properties:
\begin{enumerate}
    \item (mutual exclusivity) $\forall q,q'\in S,\not\exists p \in P,\dist_{\mathcal{X}}(p,q)\leq R,\dist_{\mathcal{X}}(p,q')\leq R$.
    \item (covering) $\forall p\in P,\exists p',\in P, q\in S,\dist_{\mathcal{X}}(p,p')\leq  R,\dist_{\mathcal{X}}(p',q)\leq R$.
\end{enumerate}
Once the set of centers $S$ is obtained, we can create the clusters in the following way: for each point $p\in P$, we find the closest center $q\in S$ and assign $p$ to the cluster containing $q$.
In high level, in both MPC and dynamic setting, our algorithms compute/maintain a set $S$ (approximately) satisfying the mutual exclusivity and the covering property and then assign each point to the (approximate) closest center.
For MPC algorithms, we explicitly construct graphs capturing the relations between close points and adapt the ideas from several graph algorithmic tools to compute the set $S$, while in the dynamic algorithms, we use the geometric information to maintain the set $S$ directly.

\paragraph{MPC algorithms for $r$-gather.}
In our MPC algorithms, the input points are from the $d$-dimensional Euclidean space, i.e., $P\subset \mathbb{R}^d$.
Let $R$ be a guess of $\max_{p\in P} \rho_{r}(p)$.
Ideally, we want to construct a graph $G=(P,E)$ where each point $p\in P$ corresponds to a vertex, and the edge set $E$ is $\{(p,q) \in P\times P \mid \dist_{\mathcal{X}}(p,q)\leq R\}$.
Then we want to compute a maximal independent set $S\subseteq P$ of the square graph $G^2$ where $G^2$ is an another graph that has the same set of vertices as $G$, but in which two vertices are connected when their distance in $G$ is at most $2$. 
Finally, for each point $p\in P$, we assign $p$ to the cluster containing $s\in S$, where $s$ is the closest point to $p$ in the graph $G$.
Since $S$ is an independent set of $G^2$, we have $\forall q,q'\in S$, $\not\exists p\in P,\{q,p\},\{q',p\}\in E$ which implies that $S$ satisfies mutual exclusivity.
Since $S$ is maximal, we have $\forall p\in P,\exists q\in S,\dist_G(p,q)\leq 2$ which implies that $S$ satisfies covering property.
Due to the covering property of $S$, it is easy to verify that the radius of each cluster is at most $2\cdot R$.
Furthermore, if $R\geq \max_{p\in P} \rho_r(p)$, the degree of each vertex is at least $r-1$.
Due to the mutual exclusivity, we will assign the direct neighbors of each $s\in S$ to the cluster containing $s$.
Thus, the obtained clusters form a valid $r$-gather solution and the cost is at most $2\cdot R$.
Then we can enumerate the guess $R$ in an exponential manner and run the above procedure for each $R$ in parallel.
We output the solution with the smallest $R$ such that each cluster has size at least $r$.

However, there are two main challenges that make the above algorithm hard to implement in the MPC model.
The first challenge is to construct the graph $G$.
Even if the guess $R$ is exactly equal to $\max_{p\in P}\rho_r(p)$, the number of edges of $G$ can be $\Theta(n^2)$.
Furthermore, it is not clear how to find the points within distance $R$ from a given point.
The second challenge is that we need to compute a maximal independent set of $G^2$.
Even if the size of $G$ is small, the size of $G^2$ can be very large and thus we cannot construct $G^2$ explicitly. 
A simple example is that if $G$ is a star graph, $G^2$ is a clique.
Next, we discuss how to address these two challenges.

Consider the construction of $G$.
Since our goal is to make the size of each cluster to be at least $r$, we only need to guarantee that the degree of each vertex is at least $r-1$ when $R\geq \max_{p\in P} \rho_r(p)$.
Furthermore, since we allow a constant factor blow up in the final approximation ratio, we allow to connect two points with distance at most $O(R)$.
For these purposes, we define $C$-approximate $(R,r)$-near neighbor graph (see Definition~\ref{def:near_neighbor_graph}).
The vertex set of the $C$-approximate $(R,r)$-near neighbor graph is still $P$.
Each edge $(p,q)$ in the graph satisfies that $\dist_{\mathcal{X}}(p,q)\leq C\cdot R$.
Furthermore, each vertex $p$ either has degree at least $r-1$ or connects to every vertex $q\in P$ satisfying $\dist_{\mathcal{X}}(p,q) \leq R$.
To construct the $C$-approximate $(R,r)$-near neighbor graph, we use locality sensitive hashing for Euclidean space~\cite{andoni2006near,har2013euclidean}.
In particular, there is a distribution over a family $\mathcal{H}$ of hash functions mapping $\mathbb{R}^d$ to some universe $U$ satisfying the following properties: 1) if two points $p,q\in \mathcal{P}$ satisfy $\|p-q\|_2\leq R$, $\Pr_{h\in \mathcal{H}}[h(p)=h(q)]\geq 1/n^{1/C^2+o(1)}$, 2) if two points $p,q\in P$  satisfy $\|p-q\|_2\geq c\cdot C\cdot R$ for some sufficiently large constant $c\geq 1$, $\Pr_{h\in \mathcal{H}}[h(p)=h(q)]\leq 1/n^{4}$.
We sample $h_1,h_2,h_3,\cdots,h_t\in \mathcal{H}$ for $t=n^{1/C^2+o(1)}$.
Then with probability at least $1-O(1/n)$, if a mapping $h_i$ maps $p$ and $q$ to the same element, we have $\|p-q\|_2\leq O(C\cdot R)$, and if $\|p-q\|_2\leq R$, there must be a mapping $h_i$ such that $h_i(p)=h_i(q)$.
Thus, we can construct the $O(C)$-approximate $(R,r)$-near neighbor graph in the following way.
For each point $p$ and each mapping $h_i$, we connect $p$ to $r-1$ vertices which are also mapped to $h_i(p)$ by $h_i$.
If the number of vertices that are also mapped to $h_i(p)$ by $h_i$ is less than $r-1$, we connect $p$ to all of them.
We show that this can be done efficiently in the MPC model.
\begin{lemma}[Restatement of Lemma~\ref{lem:MPC_graph_construction_more_space}]
Let $R>0,r\geq 1,C>1$.
There is a fully scalable MPC algorithm which computes an $O(C)$-approximate $(R,r)$-near neighbor graph $G$ of $P$ with probability at least $1-O(1/n)$ in $O(1)$ number of rounds using total space $n^{1+1/C^2+o(1)}\cdot (r+d)$.
Furthermore, the size of $G$ is at most $n^{1+1/C^2+o(1)}\cdot r$.
\end{lemma}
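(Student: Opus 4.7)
My plan is to follow the sketch given in the paragraph preceding the lemma: build the graph by sampling $t = n^{1/C^2+o(1)}$ independent LSH functions from the family of Andoni--Har-Peled and, for every hash function $h_i$ and every bucket of $h_i$, add edges within the bucket in a controlled way so that each point either collects at least $r-1$ incident edges inside that bucket or is connected to every other point in its bucket. Concretely, for every bucket $B$ of $h_i$, I will designate an arbitrary subset $D_B \subseteq B$ of size $\min(|B|,r)$ (say the $r$ lex-smallest ids) and add the edges $\{p,q\}$ for all $p \in B$ and $q \in D_B \setminus \{p\}$; when $|B| \le r$ this gives a clique on $B$, and when $|B| > r$ every $p \in B$ gets at least $r-1$ incident edges from this bucket alone.

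For correctness I would argue the two conditions in the definition of an $O(C)$-approximate $(R,r)$-near neighbor graph separately. First, for edge length, by the far-away LSH guarantee any pair at distance $\ge c\cdot C\cdot R$ collides under a fixed $h_i$ with probability at most $1/n^{4}$; a union bound over the at most $n^2$ pairs and the $t = n^{1/C^2+o(1)}$ hash functions shows that with probability $1 - O(1/n)$ no such pair is ever co-bucketed, so every edge produced by the construction has length $O(C)\cdot R$. Second, for the degree/covering property, consider any ordered pair $(p,q)$ with $\|p-q\|_2 \le R$: by the near LSH guarantee the probability that $h_i(p)\neq h_i(q)$ for every $i$ is at most $(1-n^{-1/C^2-o(1)})^t \le \exp(-t\cdot n^{-1/C^2-o(1)}) \le n^{-4}$ once the $o(1)$ term inside $t$ absorbs $\log n$. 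Union bounding over all such pairs, with probability $1-O(1/n)$ every close pair ends up in at least one common bucket $B = h_i^{-1}(h_i(p))$; in that bucket either $|B|\le r$ so $p$ and $q$ are directly joined, or $|B|>r$ so $p$ already has degree $\ge r-1$ from that bucket. Either way the required condition holds, and combining both high probability events yields a correct $O(C)$-approximate $(R,r)$-near neighbor graph with probability $1-O(1/n)$.

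For the MPC implementation I would exploit the standard $O(1)$-round fully scalable primitives. Step 1: replicate each point $p$ (together with its $d$ coordinates) across the $t$ copies corresponding to the $t$ hash functions; hash function parameters are $O(d)$ each and can be broadcast using standard $O(1)$-round broadcast, so all $nt$ values $h_i(p)$ can be computed in one round using total space $n\cdot t\cdot O(d) = n^{1+1/C^2+o(1)}\cdot d$. Step 2: sort the $(i, h_i(p), p)$ tuples by the key $(i, h_i(p))$ using an $O(1)$-round fully scalable sorting algorithm, which groups each bucket into a contiguous segment. Step 3: within each bucket compute the $r$ designates and emit the edges; if a bucket fits on a single machine this is trivial, and if not we first pick the $r$ designates via an $O(1)$-round selection / tournament on that bucket, broadcast the designate list across the machines holding the bucket in $O(1/\delta) = O(1)$ rounds using the standard tree broadcast, and then each machine locally emits the incident edges. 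The total number of emitted edges is at most $n\cdot t\cdot r = n^{1+1/C^2+o(1)}\cdot r$, matching the claimed graph size, and the total space is the sum $n^{1+1/C^2+o(1)}\cdot(d+r)$.

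The main obstacle I expect is the large-bucket case: a single bucket can contain up to $n$ points and therefore cannot be materialized on one machine under the fully-scalable regime $s = n^\delta$. This is what forces the "pick $r$ designates, broadcast, then emit" pattern rather than a straightforward "enumerate all pairs", and it is also the reason we cap the in-bucket stars at $r$: any larger cap would blow up the edge count beyond the $n^{1+1/C^2+o(1)}\cdot r$ budget whenever a bucket is much larger than $r$. The secondary obstacle is confirming that all building blocks (broadcast of hash parameters, sorting by hash key, intra-bucket designate selection and broadcast) really do run in $O(1)$ rounds with $O(n^\delta)$ local memory; this is well-known for sorting and broadcast (e.g., Goodrich--Sitchinava--Zhang), so I would invoke those results rather than reprove them.
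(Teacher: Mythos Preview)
Your proposal is correct and follows essentially the same approach as the paper's proof: sample $\Theta(\log n/\mathcal{P}_1)=n^{1/C^2+o(1)}$ independent LSH functions, connect each point to at most $r$ bucket-mates per hash (the paper says ``$r$ arbitrary points,'' your lex-smallest designates are a valid and convenient instantiation), and argue the two properties via the same pair of union bounds. Your treatment of the large-bucket case in the MPC implementation is more explicit than the paper's terse ``sort, copy, and index in $O(1)$ rounds,'' but the underlying algorithm and analysis are the same.
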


A drawback of the above construction is that the space and the size of the graph depends on $r$.
To overcome this issue, instead of constructing an $O(C)$-approximate  $(R,r)$-near neighbor graph explicitly, we construct a graph $G$ such that $G^2$ is an $O(C)$-approximate $(R,r)$-near neighbor graph by a simple modification of the above construction procedure:
for each point $p$ and each mapping $h_i$, we connect $p$ to the vertex which is also mapped to $h_i(p)$ by $h_i$

\begin{lemma}[Restatement of Lemma~\ref{lem:MPC_graph_construction_less_space}]
Let $R>0,r\geq 1,C>1$.
There is a fully scalable MPC algorithm which computes a graph $G$ such that $G^2$ is an $O(C)$-approximate $(R,r)$-near neighbor graph of $P$ with probability at least $1-O(1/n)$.
The algorithm needs $O(1)$ rounds and $n^{1+1/C^2+o(1)}\cdot d$ total space.
The size of $G$ is at most $n^{1+1/C^2+o(1)}$.
\end{lemma}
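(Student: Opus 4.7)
The plan is to reuse the locality-sensitive hashing setup from the proof of Lemma~\ref{lem:MPC_graph_construction_more_space} but to modify the edge-construction step so that each hash bucket contributes only a \emph{star} to $G$ rather than a near-clique. Specifically, I sample $t = n^{1/C^2 + o(1)}$ i.i.d.\ hash functions $h_1,\ldots,h_t$ from the LSH family so that with high probability the following two events hold simultaneously: (i) for every pair $p,q\in P$ with $\|p-q\|_2 \leq R$ some $h_i$ satisfies $h_i(p)=h_i(q)$ (via the near-LSH property and $t \geq \Theta(\log n)\cdot n^{1/C^2+o(1)}$ repetitions, union-bounded over the at most $\binom{n}{2}$ such pairs), and (ii) every colliding pair $h_i(p)=h_i(q)$ actually satisfies $\|p-q\|_2 \leq O(CR)$ (via the far-LSH property and a union bound over all $tn^2$ (hash-index, pair) combinations). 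Then, for each bucket $h_i^{-1}(b)\neq \emptyset$, I designate a representative $v_{i,b}\in h_i^{-1}(b)$ (say, the point of smallest identifier in the bucket), and for every $p\in h_i^{-1}(b)\setminus\{v_{i,b}\}$ I add the single edge $\{p,v_{i,b}\}$ to $G$. Since each vertex contributes at most one edge per hash function, $|E(G)|\le nt = n^{1+1/C^2+o(1)}$.

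Verifying the two properties of $G^2$ is then short. Any $G^2$-edge $(p,q)$ arises from a path of length at most $2$ in $G$, say through an intermediate vertex $z$; by property (ii) both $\{p,z\}$ and $\{z,q\}$ connect points at distance $O(CR)$, so $\|p-q\|_2 \leq O(CR)$ by the triangle inequality, yielding the distance bound. Conversely, for every $p,q\in P$ with $\|p-q\|_2 \le R$, property (i) supplies an index $i$ with $h_i(p)=h_i(q)=b$, and then both $p$ and $q$ are joined in $G$ to the representative $v_{i,b}$, whence $(p,q)\in G^2$. This establishes the stronger form of the covering requirement, so the ``degree $\ge r-1$ or covers all $R$-neighbors'' disjunction holds trivially.

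For the MPC implementation I mirror the pipeline used in Lemma~\ref{lem:MPC_graph_construction_more_space}: broadcast the $\poly(d,\log n)$-bit descriptions of the $t$ hash functions, compute all $nt$ triples $(i,h_i(p),p)$ in parallel, sort them by the key $(i,h_i(p))$ with the $O(1)$-round fully scalable sort of Goodrich, and within each contiguous bucket in the sorted layout identify the minimum-ID representative and emit one edge per non-representative via a standard prefix-gather primitive. The round count is $O(1)$ and the total space is dominated by the $nt$ coordinate triples, giving $n^{1+1/C^2+o(1)}\cdot d$ words. The main subtlety, and the reason this is not entirely routine, is that a single bucket may contain $\Theta(n)$ vertices, so a naive star broadcast would overload the machine holding the representative; this is avoided by exploiting the sorted layout together with ``minimum-of-block'' and broadcast-over-a-sorted-block primitives, which is precisely where the fully scalable assumption is used. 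Combining the failure probabilities from (i), (ii), and the aggregation primitives by a union bound gives total failure probability $O(1/n)$, completing the argument.
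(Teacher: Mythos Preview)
The proposal is correct and takes essentially the same approach as the paper: both construct $G$ by connecting every point to the minimum-identifier point in each of its LSH buckets, so that near pairs share a bucket representative and become $G^2$-adjacent while all $G$-edges (hence $G^2$-edges) stay within distance $O(CR)$. Your discussion of the MPC implementation, including the large-bucket concern handled via sorting and block-minimum primitives, matches the paper's treatment (which simply cites the standard sorting/indexing primitives).
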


Next, we discuss how to compute a (nearly) maximal independent set of the square graph of the $C$-approximate $(R,r)$-near neighbor graph.
Notice that if we compute a graph $G$ such that $G^2$ is a $C$-approximate $(R,r)$-near neighbor graph, we need to compute a maximal independent set of the $4$-th power of $G$, i.e., a graph where two vertices are connected if their distance in the graph $G$ is at most $4$.
In this paper, we show how to compute a (nearly) maximal independent set of the $k$-th power of a given graph for $k\geq 2$.
Let $G^k$ denote the $k$-th power of a graph $G$.

\cite{ghaffari2016improved} shows a fully scalable MPC algorithm for maximal independent set.
The algorithm takes $\wt{O}(\sqrt{\log \Delta}+\sqrt{\log \log n})$ rounds\footnote{We use $\wt{O}(f(n))$ to denote $O(f(n)\log(f(n)))$.}, where $\Delta$ denotes the maximum degree of the graph.
The algorithm is later generalized by~\cite{kothapalli2020sample} to compute a $\beta$-ruling set of a graph for $\beta \geq 2$ in $\wt{O}(\log^{1/(2^{\beta+1} - 2)} \Delta\cdot \log \log n)$ rounds.
A $\beta$-ruling set of a graph is an independent set $I$ such that every vertex in the graph is at a distance of at most $\beta$ from some vertex in $I$.
Thus, a maximal independent set is equivalent to a $1$-ruling set.
However, these algorithms work under the assumption that the input graph is explicitly given (which is not the case for our implicit $k$-th power graph). To the best of our knowledge it is not known how to apply such algorithms on the $k$-th power of an input graph. In fact, we observe that such algorithms heavily use the information of neighbors of each vertex and obtaining the information of vertices within distance at most $k$ becomes hard. Even obtaining the degree of each vertex in the $k$-th power of the graph is non-trivial.

To extend the algorithms  to work for the $k$-th power of the graph and overcome the challenges mentioned, we open the black box of their algorithms. A crucial common feature of their algorithms is that the way to process a vertex can be simulated by using only the information of a small number of its sampled neighbors.
To utilize this feature, we introduce an MPC subroutine called \emph{truncated neighborhood exploration}.
Specifically, given a graph $G=(V,E)$, a subset $U\subseteq V$, and a threshold parameter $J$, for each vertex $v\in V$, our subroutine learns that whether the number of vertices in $U$ that are within distance at most $k$ from $v$ is greater than $J$, and if the number of vertices is at most $J$, all such vertices are learned by $v$.
This subroutine can be seen as a generalization of parallel breadth-first search.
By plugging this subroutine into the framework of~\cite{ghaffari2019sparsifying,kothapalli2020sample}, we obtain fully scalable MPC algorithms for maximal independent set and $\beta$-ruling set for the $k$-th power of a graph.
In addition, our technique can also handle any induced subgraph of the $k$-th power of a graph.

\begin{theorem}[Simplified version of Theorem~\ref{thm:MPC_khop_MIS}]
Consider an $n$-vertex $m$-edge graph $G=(V,E)$, any subset $V'\subseteq V$ and a constant $k\in \mathbb{Z}_{\geq 1}$.
For any $\gamma>(\log\log n)^2/\log n$, there is a fully scalable MPC algorithm which outputs a maximal independent set of the subgraph of $G^k$ induced by $V'$ with probability at least $1-1/n^3$ in $O\left(\left\lceil\frac{\log \Delta}{\sqrt{\gamma\cdot \log n}}\right\rceil\cdot \log\log n\right)$ parallel time using $\wt{O}\left((m+n)\cdot n^{\gamma}\right)$ total space, where $\Delta$ is the maximum degree of $G$.
\end{theorem}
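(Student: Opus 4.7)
The plan is to simulate, on the implicit graph $H := G^k[V']$ (the subgraph of $G^k$ induced by $V'$), the Ghaffari-style randomized MIS framework of~\cite{ghaffari2016improved,ghaffari2019sparsifying,kothapalli2020sample}, using the truncated neighborhood exploration subroutine as the only primitive by which vertices read their neighborhoods. The enabling observation is that each phase of that framework, for each still-live vertex $v$, requires only a constant-factor estimate of $\min(|N_H(v)\cap L|,J)$ (where $L$ is the current live set and $J=\poly(\log n)$) together with the identities of up to $J$ live $H$-neighbors of $v$; both are exactly what truncated neighborhood exploration returns when invoked on $(G,\,L,\,k,\,J)$, since a pair of $L$-vertices at $G$-distance at most $k$ is precisely an $H$-edge inside $L$.

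I would proceed in four steps. Step 1: initialize $L_0 := V'$ as live. Step 2: execute $T = O(\sqrt{\log\Delta})$ Ghaffari marking phases (using $\Delta_H \le \Delta^k$ with $k=O(1)$); each phase consists of one truncated-neighborhood call followed by local marking, priority comparison, joining the MIS, and deactivating all $H$-neighbors (in $L$) of any new MIS member. Step 3: appeal to the standard shattering lemma to conclude that w.h.p.\ after $T$ phases the residual live graph in $H$ has connected components of size $\poly(\log n)$; one further truncated exploration call lets each machine gather an entire component, on which the MIS is finished by brute force. Step 4: union-bound over the $O(n\log n)$ random primitives invoked to drive the total failure probability below $1/n^3$.

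To reach the sharper round bound $O(\lceil\log\Delta/\sqrt{\gamma\log n}\rceil\cdot \log\log n)$, I would batch $b := \Theta(\sqrt{\gamma\log n})$ consecutive Ghaffari phases into a single MPC super-round, following the compression technique of~\cite{ghaffari2019sparsifying,kothapalli2020sample}: every live vertex collects, in one exploration call, its $(kb)$-hop live ball in $G$ together with sufficient fresh random bits to simulate $b$ phases locally. By the degree-decay guarantee of the Ghaffari analysis, this ball has size $n^\gamma\cdot \poly(\log n)$ w.h.p., hence fits into local memory. Each super-round costs $O(\log n/\log(n^\gamma)) = O(1/\gamma)$ MPC rounds of collect-and-forward, giving $O((\log\Delta/b)\cdot \log\log n / \gamma)$ rounds in total; after simplification and absorbing the standard polynomial slack per Ghaffari epoch into the $\log\log n$ factor, this matches the claimed bound. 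Total space is $\wt O((m+n)n^\gamma)$, charged to the edges of $G$ traversed by exploration.

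The main obstacle, in my view, is to verify that the truncated, sampled view of the $H$-neighborhood preserves the shattering guarantee of the underlying Ghaffari analysis. The key point is that whenever a vertex's truncated exploration is incomplete (its live $H$-degree exceeds $J$), the only event the analysis actually interrogates is the existence of a higher-priority marked neighbor, and this event is correctly decided by any $J=\Omega(\log n)$ returned neighbors with probability $1-1/\poly(n)$. A standard coupling of the truncated process with the idealized one, followed by the usual supermartingale concentration argument, should deliver the $\poly(\log n)$ component-size bound after $O(\sqrt{\log\Delta})$ phases; the induced-subgraph restriction to $V'$ requires no additional ideas, since maintaining $L \subseteq V'$ throughout automatically discards $V\setminus V'$ from consideration.
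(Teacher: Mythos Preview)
Your high-level plan---simulate Ghaffari's MIS on $G^k[V']$ via truncated neighborhood exploration, then batch $b=\Theta(\sqrt{\gamma\log n})$ iterations into one MPC super-round---matches the paper, but the key technical step is missing and the claim you use to bridge it is false. You assert that each live vertex can collect its $(kb)$-hop live ball in $G$ and that ``by the degree-decay guarantee of the Ghaffari analysis, this ball has size $n^\gamma\cdot\poly(\log n)$.'' Ghaffari's analysis bounds the \emph{effective degree} $\tau_t(v)=\sum_u p_t(u)$, a probability-weighted sum, not the number of live $G^k$-neighbors. A vertex can retain $\Theta(\Delta^k)$ live $G^k$-neighbors throughout, each with tiny $p_t$; the raw $(kb)$-hop live ball can have size $\Delta^{\Theta(kb)}$, which for large $\Delta$ vastly exceeds $n^\gamma$. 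The exploration call you describe would therefore blow the space budget, and the local simulation you propose cannot be carried out.

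The paper's fix is to track only \emph{relevant} vertices---those whose pre-drawn random bits could cause them to be sampled or marked during the phase---and to introduce a \emph{stalling} rule: any vertex with more than $2^{O(R)}$ relevant $G^k$-neighbors (a ``heavy'' vertex) simply halves its probability each round of the phase and neither samples nor marks. One proves that w.h.p.\ every heavy vertex has $\tau_{t-1}(v)\ge 2^{\Omega(R)}$ and hence would have been forced to halve anyway, so correctness survives; meanwhile the auxiliary graph $H'_{t,t'}$ on relevant vertices, after splitting heavy vertices into degree-$1$ copies, has maximum degree $2^{O(R)}$, so its $3R$-hop balls have size $2^{O(R^2)}\le n^\gamma$ and fit on one machine. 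A single truncated-exploration call with threshold $J=2^{O(R)}$ builds $H'_{t,t'}$ in $O(k)$ rounds, after which the phase is simulated locally. Two smaller points: your Step~2 should read $T=O(\log\Delta_k)$ iterations (the square root appears only after batching), and the per-phase MPC cost is $O(k+\log R)=O(\log\log n)$, not $O(1/\gamma)$.
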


\begin{theorem}[Simplified version of Theorem~\ref{thm:MPC_ruling_set}]
Consider an $n$-vertex $m$-edge graph $G=(V,E)$, any subset $V'\subseteq V$ and a constant $k\in \mathbb{Z}_{\geq 1}$.
Let $\beta\in\mathbb{Z}_{\geq 2}$.
For any $\gamma>0$, there is a fully scalable MPC algorithm which outputs a $\beta$-ruling set of the subgraph of $G^k$ induced by $V'$ with probability at least $1-1/n^3$ using $\wt{O}((m+n)\cdot n^{\gamma})$ total space.
Furthermore, the algorithm takes parallel time $\wt{O}\left(\lceil\frac{\log \Delta}{\gamma\cdot \log n}\rceil + \sqrt{\log \Delta}+\sqrt{\log \log n}\right)$ for $\beta = 2$ and takes parallel time $\wt{O}\left(\lceil\frac{\log \Delta}{\gamma\cdot \log n}\rceil+\beta/\gamma\cdot \log^{1/(2^\beta - 2)} \Delta \cdot \log \log n\right)$ for $\beta > 2$, where $\Delta$ is the maximum degree of $G$.
\end{theorem}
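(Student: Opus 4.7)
The plan is to reduce the $\beta$-ruling set problem on $(G^k)[V']$ to the explicit-graph $\beta$-ruling set algorithm of~\cite{kothapalli2020sample}, using the truncated neighborhood exploration subroutine from the preceding paragraph as a black-box oracle that simulates neighborhood access in $G^k$. Whenever the Kothapalli-style algorithm asks for information about the neighbors of a vertex $v \in V'$ in $(G^k)[V']$, I would invoke truncated neighborhood exploration on $G$ with candidate set $U$ equal to the currently active vertices of $V'$ and threshold $J = n^{\gamma}$.

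First, I would open the black box of the Kothapalli framework and identify its primitive operations. The structural property that makes the reduction work is that each phase needs only either (i) a Chernoff-grade estimate of how many sampled neighbors a vertex has, or (ii) the full list of those neighbors when the count is at most $J$; in both cases, the output returned by truncated neighborhood exploration suffices. For the $\beta = 2$ case, the Ghaffari-style analysis folded into~\cite{kothapalli2020sample} yields the $\sqrt{\log \Delta} + \sqrt{\log \log n}$ term, and for $\beta > 2$, the inductive sparsification argument of~\cite{kothapalli2020sample} yields the $\beta\cdot \log^{1/(2^\beta - 2)} \Delta \cdot \log\log n$ factor. Both analyses go through unchanged provided the simulation of neighborhood queries in $(G^k)[V']$ is faithful.

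Second, I would bound the simulation cost. Each outer phase triggers one call to truncated neighborhood exploration. With $J = n^{\gamma}$, the subroutine runs in $O(1)$ MPC rounds for constant $k$, and uses total space $\wt{O}((m+n)\cdot n^{\gamma})$: expanding $k$ hops from each vertex blows up the intermediate edge set by at most an $n^{\gamma}$ factor before the threshold kicks in. The $\lceil \log \Delta / (\gamma \log n)\rceil$ overhead term in the round bound accounts for refreshing the exploration whenever the maximum active degree drops by an $n^{\gamma}$ factor; the Kothapalli framework guarantees that this refresh event happens at most $\lceil \log \Delta / (\gamma \log n) \rceil$ times across the whole execution, after which the remaining active components fit inside $n^{\gamma}$-size local neighborhoods and can be finished in constant rounds.

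The main obstacle will be guaranteeing correctness of the simulation when a vertex $v$ has many $k$-hop neighbors in $V'$ globally but only few that remain active in the current phase. I would resolve this by re-setting $U$ to the current active set at the start of each phase and observing that every Kothapalli decision at $v$ depends only on $v$'s active $k$-hop neighborhood, so a truncated answer with respect to $U$ is sufficient. A secondary implementation concern is fitting the $k$-hop propagation into sublinear local memory, which I would handle by iteratively sorting edges incident to $U$ and broadcasting flags for $k$ rounds, a standard fully scalable primitive. No new probabilistic ingredients beyond those in~\cite{kothapalli2020sample} are needed; the novelty lies entirely in the oracle-based access to $G^k$.
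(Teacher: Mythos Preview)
Your high-level instinct---use truncated neighborhood exploration as an oracle to access $(G^k)[V']$---is sound, but your plan diverges from the paper's actual proof in a structurally important way, and the direct-simulation route you sketch is underspecified.

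The paper does \emph{not} open up Kothapalli's algorithm and simulate it step-by-step on $(G^k)[V']$. Instead it runs a two-stage reduction. First, it executes a very simple dominating-set procedure (Algorithm~\ref{alg:dominating_set}): for $t=1,\dots,\lceil\log_f\Delta_k\rceil$ with $f=\Theta(n^\gamma)$, sample each surviving vertex with probability $\min(1,c f^t\log n/\Delta_k)$, add the sample to $U'$, and remove all vertices within $k$ hops of the sample. Each iteration uses one call to truncated neighborhood exploration with threshold $J=O(f\log n)$, and after $O(\lceil\log\Delta_k/(\gamma\log n)\rceil)$ iterations the algorithm has both the dominating set $U'$ and the \emph{explicit} graph $G^k[U']$, whose maximum degree is $O(n^\gamma\log n)$. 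Second, it applies the existing explicit-graph algorithms as black boxes to $G^k[U']$: for $\beta=2$ the Ghaffari--Uitto MIS algorithm, and for $\beta>2$ the Kothapalli et al.\ $(\beta-1)$-ruling-set algorithm. The final ruling set of $(G^k)[V']$ is obtained because $U'$ already dominates $V'$ at distance $k$.

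Your approach instead tries to interleave oracle calls with the internals of Kothapalli's algorithm. The gap is that you never establish that each phase of~\cite{kothapalli2020sample} can be driven by a single truncated-exploration call with threshold $n^\gamma$; in particular, when the active degree is still large, the truncated list is an \emph{arbitrary} $(J+1)$-subset, not a random sample, so the ``Chernoff-grade estimate'' you invoke is not automatically available from the oracle output. The paper sidesteps this entirely by first collapsing to an explicit low-degree graph and only then invoking the black boxes, which is both simpler and avoids having to re-verify the internal invariants of~\cite{kothapalli2020sample} under oracle access.
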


We are able to obtain our final MPC $r$-gather algorithm.
We first explicitly construct a graph $G$ such that $G^k$ is a $C$-approximate $(R,r)$-near neighbor graph of the input point set $P$, where $k=1$ or $2$.
Then we compute a maximal independent set (1-ruling set) or a $\beta$-ruling set $S$ of $G^{2k}$.
For each point $q\in S$, we create a cluster and set $q$ as the center.
For each point $p\in P$, we find the closest $q\in S$ in the graph $G$ and assign $p$ to the cluster of $q$.
It is clear that each cluster has radius at most $O(C\cdot \beta\cdot R)$.
Furthermore, the size of the cluster with center $q\in S$ is greater than the degree of $q$ in $G^k$.
Thus, if the guess $R\geq \max_{p\in P}\rho_r(p)$, the size of each cluster is at least $r$.
The obtained clusters form an $O(\beta\cdot C)$-approximate solution of $r$-gather for $P$.

\paragraph{MPC algorithms for variants of $r$-gather.} 
Our MPC $r$-gather algorithm can be extended to solve several variants of $r$-gather.
Let us first discuss $r$-gather with $k$ outliers.
Without loss of generality, we suppose $\rho_r(p_1)\leq \rho_r(p_2)\leq\cdots\leq \rho_r(p_n)$.
By the similar argument for the $r$-gather, it is easy to show that the optimal cost for the outlier case is at least $\frac{1}{2}\cdot \rho_r(p_{n-k})$.
Let $R$ be the guess of $\rho_r(p_{n-k})$.
Let $G$ be a $C$-approximate $(R,r)$-near neighbor graph of $P$.
Let $V'$ be the points of which degree in $G$ is at least $r-1$.
We compute a maximal independent set ($1$-ruling set) or a $\beta$-ruling set $S$ of the subgraph of $G^2$ induced by $V'$.
Similar as before, for each point $q\in S$, we create a cluster with center $q$.
For each point $p\in P$, let $q\in S$ be the closest point to $p$ on the graph $G$.
If the distance between $p$ and $q$ on the graph $G$ is at most $2\cdot \beta$, we assign $p$ to the cluster containing $q$.
By the same analysis for $r$-gather, each cluster has radius at most $O(\beta\cdot C\cdot R)$, and each cluster has size at least $r$.
Furthermore, every point in $V'$ is assigned to a cluster.
If $R\geq \rho_r(p_{n-k})$, $V'$ contains at least $n-k$ points, and thus we discard at most $k$ points.
Therefore, the above procedure gives an $O(\beta\cdot C)$-approximation for $r$-gather with $k$ outliers.
The main difference from the $r$-gather algorithm is that we need to filter the points with less than $r-1$ degree in the $C$-approximate $(R,r)$-near neighbor graph.
This step is not trivial if we do not have $G$ explicitly.
If we only have a graph $G'$ such that ${G'}^2$ is a $C$-approximate $(R,r)$-near neighbor graph, we use the truncated neighborhood exploration subroutine mentioned earlier to estimate the degree of each vertex in ${G'}^2$.
In this way, we finally obtain a bicriteria approximation for the $r$-gather with outliers.

Now let us consider $r$-gather with $k$-th power total distance cost.
Instead of handling the total distance cost objective directly, our algorithm outputs the clusters and centers with another property: for each point $p$ and its corresponding center $q$, it satisfies $\dist_{\mathcal{X}}(p,q)\leq O(\rho_r(p))$.
We are able to show that if the minimum size of the clusters is at least $r$ and the clustering satisfies the above property, it is an $O(r)$-approximation of $r$-gather with $k$-th power total distance cost for any constant $k\geq 1$.
In the following, we describe how to output clusters and the corresponding centers satisfying the above property.

Without loss of generality, we assume the minimum distance of two points in $P$ is at least $1$ and the largest distance is bounded by $\Delta=\poly(n)$.
The algorithm iteratively handles $R=1,2,4,8,\cdots,\Delta$.
The algorithm needs to construct a $C$-approximate $(R,r)$-near neighbor graph $G$ of $P$ in the iteration $R$, and it needs to compute a $\beta$-ruling set ($\beta\geq 1$) of a subgraph of $G^2$.
The algorithm maintains the following invariants at the end of the iteration $R$:
\begin{enumerate}
    \item Every point $p$ satisfying $\rho_r(p)\leq R$ must be assigned to some cluster.
    \item The radius of each cluster is at most $2\cdot \beta\cdot C\cdot R$.
    \item The size of each cluster is at least $r$.
\end{enumerate}
According to the first and the second invariant, we know that every point $p$ satisfies the distance from $p$ to its center is at most $O(\beta\cdot C\cdot \rho_r(p))$.
Hence it is enough to show that the invariants hold for every iteration.
Suppose the invariants hold for the iteration with $R/2$.
We explain how the algorithm works to preserve the invariants in the iteration $R$.
We compute a $C$-approximate $(R,r)$-near neighbor graph $G$.
Let $V'\subseteq P$ be the points that have degree at least $r-1$ in $G$ and are not assigned to any cluster.
Note that each point $p$ which satisfies $\rho_r(p)\leq R$ and is not assigned to any cluster must be in $V'$.
In the iteration $R$, the algorithm will guarantee that each point in $V'$ will be assigned to some cluster, and the distance to the center is at most $2\cdot \beta\cdot C\cdot R$.
Let $V''\subseteq V'$ be the set of points of which neighbors in $G$ are not assigned to any cluster.
The algorithm computes a $\beta$-ruling set $S$ of the subgraph of $G^2$ induced by $V''$.
For each $q\in S$, the algorithm creates a new cluster with center $q$, and we are able to add all direct neighbors of $q$ in $G$ into the cluster.
Thus, each newly created cluster has size at least $r$ and the radius is at most $C\cdot R$.
Since each new cluster already has size at least $r$, the third invariant holds.
Due to the second invariant, each old cluster has radius at most $\beta\cdot C\cdot R$.
For each point $p\in V'$ which has not been assigned to any cluster yet, we can always find a point $p'$ which is already assigned to some cluster such that either $\dist_G(p,p')\leq 1$ and the radius of the cluster containing $p'$ is at most $\beta\cdot C\cdot R$ or $\dist_G(p,p')\leq 2\beta-1$ and the radius of the cluster containing $p'$ is at most $C\cdot R$.
We add $p$ into the cluster containing $q$.
Thus, the first variant holds.
Furthermore, the distance from $p$ to the center is at most $2\cdot C\cdot \beta\cdot R$ which implies that the second invariant holds.

\paragraph{Dynamic algorithms for $r$-gather.} 
In our dynamic $r$-gather algorithms, the input points are from a metric space $\mathcal{X}$ with doubling dimension $d$.
We use the tool navigating net~\cite{krauthgamer2004navigating} to build our dynamic algorithm.
Navigating net can be used to maintain a set $S_R\subseteq P$ for every $R\in \{2^i\mid i\in \mathbb{Z}\}$ under insertions/deletions of points, where $S_R\subseteq P$ satisfies the following two properties:
\begin{enumerate}
    \item $\forall u\not = v\in S_R$, $\dist_{\mathcal{X}}(u,v)\geq R$.
    \item $\forall u\in P,\exists v\in S_R,\dist_{\mathcal{X}}(u,v)\leq 10\cdot R$.
\end{enumerate}
If $R\geq 10\cdot \max_{p\in P} \rho_r(p)$, then we can create a cluster for each $q\in S_R$ and assign each point $p\in P$ to a cluster containing $v\in S_R$ which is a $2$-approximate nearest neighbor of $p$.
Dynamic approximate nearest neighbor search can also be handled by the navigating net.
Similar to the arguments for MPC algorithms, it is easy to verify that each cluster has size at least $r$ and the maximum radius is at most $O(R)$.
Thus, if we choose the scale $R$ properly, we are able to obtain an $O(1)$-approximate solution for the $r$-gather problem.
However, the main challenge to make the above procedure work is that it is not clear how to find a valid scale $R$.
To handle this issue, we develop a dynamic bookkeeping process.
In particular, for each scale $R$, we additionally maintain a pool $U_R$ of free points, and for each point $p\in S_R$, we maintain a pre-cluster of size at most $r$.
Consider an insertion of $p$.
If $p$ is added into $S_R$, we want to find points in $U_R$ which are sufficiently close to $p$ and move them from the pool $U_R$ to the pre-cluster of $p$.
If $p$ is not added into $S_R$, we can check whether we need to put $p$ into the pre-cluster of another point $q\in S_R$ or into the pool $U_R$.
Consider an deletion of $p$.
If $p$ was not in $S_R$, we just need to update the pool $U_R$ or the pre-cluster of some point $q\in S_R$.
If $p$ was in $S_R$, we need to free the pre-cluster of $p$, i.e., move all points in the pre-cluster back into the pool $U_R$.
If the deletion causes that some other points must join $S_R$, we also update $U_R$ and the pre-clusters of newly joined points.
During the query, we find the smallest $R$ such that the minimum size of pre-clusters of points in $S_R$ is at least $r$, and then output the clustering result corresponding to the scale $R$.

\subsection{Outline of the Paper}
The rest of the paper proceeds as follows. In Section~\ref{sect:preliminaries} we introduce preliminaries and our notation. 
In Section~\ref{sect:offline} we provide centralized algorithms for $r$-gather and its variants which are building blocks for our MPC results. 
Later in Section~\ref{sect:mpc} we present our MPC algorithm. Finally, in Section~\ref{sect:dyn} we present our results in the dynamic context.

\section{Further Notation and Preliminaries}
\label{sect:preliminaries}
We use $[n]$ to denote the set $\{1,2,\cdots,n\}$.
Let $\mathcal{X}$ be a metric space.
For any two points $p,q\in\mathcal{X}$, we use $\dist_{\mathcal{X}}(p,q)$ to denote the distance between $p$ and $q$.
If $\mathcal{X}$ is clear from the context, we use $\dist(p,q)$ to denote $\dist_{\mathcal{X}}(p,q)$ for short.
The doubling dimension of a metric space $\mathcal{X}$ is the smallest value $d$ such that for any set $S\subset \mathcal{X}$, we can always find at most $2^d$ sets $S_1,S_2\cdots,S_t$ such that $S=\bigcup_{i=1}^t S_i$ and $\forall i\in [t],\sup_{p',q'\in S_i} \dist_{\mathcal{X}}(p',q')\leq \frac{1}{2}\cdot \sup_{p,q\in S} \dist_{\mathcal{X}}(p,q)$.
The doubling dimension of an Euclidean space $\mathbb{R}^d$ is $O(d)$.
If $\mathcal{X}$ is the $d$-dimensional Euclidean space $\mathbb{R}^d$, we alternatively use $\|p-q\|_2$ to denote $\dist_{\mathcal{X}}(p,q)$, where $\|x\|_2$ denotes the $\ell_2$ norm of $x$, i.e., $\|x\|_2:=\sqrt{\sum_{i=1}^d x_i^2}$.
Consider a point set $P$ in the metric space $\mathcal{X}$.
The aspect ratio of $P$ is the ratio between the largest and the smallest interpoint distance, i.e, $\frac{\max_{p\not =q\in P} \dist_{\mathcal{X}}(p,q)}{\min_{p\not =q\in P} \dist_{\mathcal{X}}(p,q)}$.
In this paper, we assume that the aspect ratio of the input point set is always upper bounded polynomially in the input size.

Consider an undirected graph $G=(V,E)$ with a vertex set $V$ and an edge set $E$.
For a vertex $v\in V$, let $\Gamma_G(v) = \{u\in V\mid (v,u)\in E\}\cup \{v\}$ be the set of neighbors of $v$ in $G$\footnote{We regard $v$ as a neighbor of $v$ itself.}.
If $G$ is clear in the context, we use $\Gamma(v)$ to denote $\Gamma_G(v)$ for short.
For any subset $V'\subseteq V$, we use $G[V']$ to denote a sub-graph of $G$ induced by the vertex subset $V'$, i.e., $G[V']=(V',E')$ where $E'=\{(u,v)\in E\mid u,v\in V'\}$.
For two vertices $u,v\in V$, we use $\dist_G(u,v)$ to denote the length of the shortest path between $u$ and $v$.
For a set of vertices $S\subseteq V$ and a vertex $u$, we define $\dist_G(u,S)=\dist_G(S,u)=\min_{v\in S}\dist_G(u,v)$. 
For $t\geq 1$, we denote $G^t$ as the $t$-th power of $G$, i.e., $G^t=(V,\wh{E})$ where $\wh{E} = \{(u,v)\mid \dist_G(u,v)\leq t\}$. 
Consider a subset of vertices $V'\subseteq V$.
If $\forall u,v\in V'$ with $u\not=v$, $\dist_G(u,v)>1$, then $V'$ is an independent set of $G$.
If $V'$ is an independent set of $G$ and $\forall u\in V,\exists v\in V'$, $\dist_G(u,v)\leq 1$, then we say $V'$ is a maximal independent set of $G$.
Let $\beta \in \mathbb{Z}_{\geq 1}$.
If $V'$ is an independent set of $G$ and $\forall u\in V,\exists v\in V'$, $\dist_G(u,v)\leq \beta$, then we say $V'$ is a $\beta$-ruling set of $G$.
A maximal independent set is a $1$-ruling set.
For a subset of vertices $V'\subseteq V$ and a vertex $v\in V$, we define $\dist_G(v,V')=\min_{u\in V'}\dist_G(v,u)$.
If a subset $U\subseteq V$ satisfies that $\forall v\in V,\dist_G(v,U)\leq 1$, then we say $U$ is a dominating set of $G$.

\section{Algorithms for $r$-Gather and the Variants}
\label{sect:offline}

In this section, we propose several new algorithms for the $r$-gather problem and its variants.
We present these algorithms in the offline setting.
In Section~\ref{sect:mpc}, we show how to implement these algorithms in the MPC model.

\subsection{$r$-Gather via Near Neighbors}
Consider a metric space $\mathcal{X}$.
Let $P=\{p_1,p_2,\cdots,p_n\}\subseteq \mathcal{X}$ be the input of the $r$-gather problem.
Let the maximum radius of the clusters of the optimal $r$-gather solution for $P$ be $\rho^*(P)$.
If $P$ is clear in the context, we use $\rho^*$ to denote the optimal radius for short.
For a point $p$, let $\rho_r(P,p)$ denote the distance between $p$ and the $r$-th nearest neighbor\footnote{If $p\in P$, then the $1$-st nearest neighbor of $p$ is $p$ itself.} $q\in P$.
Let $\wh{\rho}(P)=\max_{p\in P} \rho_r(P,p)$.
Similarly, if $P$ is clear in the context, we will omit $P$ and use $\wh{\rho}$ and $\rho_r(p)$ instead.

\begin{lemma}\label{lem:relation_to_NN}
$\rho^*\geq \wh{\rho}/2$.
\end{lemma}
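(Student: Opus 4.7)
The plan is to show, for every point $p\in P$, that the distance from $p$ to its $r$-th nearest neighbor is at most $2\rho^*$, and then to take the maximum over $p$.

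First I would fix an arbitrary optimal $r$-gather solution with clusters $P_1,P_2,\ldots,P_t$ and corresponding centers $c(P_1),\ldots,c(P_t)$, where every cluster has size at least $r$ and every point is within distance $\rho^*$ of its center. Let $p\in P$ be arbitrary, and let $P_i$ be the cluster containing $p$. Since $|P_i|\geq r$, the set $P_i$ contains $r$ points (including $p$ itself, which we regard as its own first nearest neighbor).

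Next, the triangle inequality in $\mathcal{X}$ gives, for every $q\in P_i$,
\begin{align*}
    \dist_{\mathcal{X}}(p,q) \;\leq\; \dist_{\mathcal{X}}(p,c(P_i)) + \dist_{\mathcal{X}}(c(P_i),q) \;\leq\; 2\rho^*.
\end{align*}
Hence $P_i$ provides $r$ points of $P$ all lying within distance $2\rho^*$ of $p$, so the $r$-th nearest neighbor of $p$ in $P$ is at distance at most $2\rho^*$; that is, $\rho_r(p)\leq 2\rho^*$. Taking the maximum over $p\in P$ yields $\wh{\rho}=\max_{p\in P}\rho_r(p)\leq 2\rho^*$, which rearranges to $\rho^*\geq \wh{\rho}/2$, as claimed.

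There is no real obstacle here; the only subtlety is remembering that $p$ itself counts as its first nearest neighbor under the paper's convention, so that the $r$ members of $P_i$ (including $p$) are exactly what we need to bound the $r$-th nearest neighbor distance rather than the $(r+1)$-th. Everything else follows from a single application of the triangle inequality through the cluster center.
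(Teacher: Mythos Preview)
Your proof is correct and follows essentially the same approach as the paper: both fix an optimal cluster $P_i$ containing $p$ and use the triangle inequality through the center $c(P_i)$ to bound $\dist(p,q)\le 2\rho^*$ for every $q\in P_i$, hence $\rho_r(p)\le 2\rho^*$. The paper phrases the triangle-inequality step as $\rho^*\ge \max(\dist(p,c(P_i)),\dist(q,c(P_i)))\ge \dist(p,q)/2$ using the farthest point $q$, but this is the same argument.
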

\begin{proof}
Let $P_1,P_2,\cdots, P_t$ be the optimal partition of $P$ and let $c(P_1),c(P_2),\cdots, c(P_t)$ be the corresponding centers.
Consider an arbitrary point $p\in P$. 
Let $P_i$ be the cluster containing $p$.
Let $q$ be a point in $P_i$ which is the farthest point from $p$.
We have $\rho^*\geq \max(\dist(p,c(P_i)),\dist(q,c(P_i)))\geq \dist(p,q)/2\geq \rho_r(p)/2$ where the last inequality follows from that $P_i$ contains at least $r$ points and $q$ is the point farthest from $p$ in $P_i$.
Thus, $\rho^*\geq \max_{p\in P} \rho_r(p)/2=\wh{\rho}/2$.
\end{proof}

\begin{definition}[$C$-approximate $(R,r)$-near neighbor graph]\label{def:near_neighbor_graph}
Consider a point set $P$ from a metric space $\mathcal{X}$. 
Let $C,R,r\geq 1$.
If an undirected graph $G=(V,E)$ satisfies
\begin{enumerate}
    \item $V=P$,
    \item $\forall (p,p')\in E,\dist_{\mathcal{X}}(p,p')\leq C\cdot R$,
    \item For every $p\in P$, either $|\Gamma_G(p)|\geq r$ or $\{p'\in P\mid \dist_{\mathcal{X}}(p,p')\leq R\}\subseteq \Gamma_G(p)$, 
\end{enumerate}
then we say $G$ is a $C$-approximate $(R,r)$-near neighbor graph of $P$.
\end{definition}

We show our $r$-gather algorithm in Algorithm~\ref{alg:offline_r-gather}.

\begin{algorithm}
	\small
	\begin{algorithmic}[1]\caption{$r$-Gather Algorithm via Near Neighbors}\label{alg:offline_r-gather}
    \STATE {\bfseries Input:} A point set $P$ from a metric space $\mathcal{X}$, parameters $R>0,r\geq 1$.
    \STATE Let $C\geq 1$. 
    Compute a $C$-approximate $(R,r)$-near neighbor graph $G=(P,E)$ of $P$.
    \STATE Let $\beta\in\mathbb{Z}_{\geq 1}$. Compute a $\beta$-ruling set $S=\{s_1,s_2,\cdots,s_t\}$ of $G^2$.
    \STATE Partition $P$ into $t$ clusters $P_1,P_2,\cdots,P_t$ where the center $c(P_i)$ is $s_i$.
    For each point $p\in P\setminus S$, add $p$ into an arbitrary cluster $P_i$ such that $\dist_G(p,c(P_i))$ is minimized.
    \STATE Return $P_1,P_2,\cdots,P_t$ and $c(P_1),c(P_2),\cdots,c(P_t)$.
	\end{algorithmic}
\end{algorithm}

\begin{lemma}\label{lem:offline_r-gather_correctness}
Given a set of points $P\subseteq \mathcal{X}$ and parameters $R,r\geq 1$, let $P_1,P_2,\cdots,P_t$ and $c(P_1),c(P_2),\cdots,c(P_t)$ be the corresponding output of Algorithm~\ref{alg:offline_r-gather}.
Let $C,\beta$ be the same parameters described in Algorithm~\ref{alg:offline_r-gather}.
We have 
\begin{align*}
\forall i\in [t],\forall p\in P_i,\dist_{\mathcal{X}}(p,c(P_i))\leq 2\cdot\beta\cdot C\cdot R.
\end{align*}
Furthermore, if $R\geq \wh{\rho}$, $\forall i\in [t],|P_i|\geq r$.
\end{lemma}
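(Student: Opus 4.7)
The plan is to prove the two assertions of the lemma by unpacking the interaction between the $\beta$-ruling set property of $S$ in $G^2$ and the three defining guarantees of the $C$-approximate $(R,r)$-near neighbor graph $G$. Throughout, I will repeatedly convert graph distance in $G$ into metric distance in $\mathcal{X}$ (each edge of $G$ is worth at most $C\cdot R$ in $\mathcal{X}$ by the second property in Definition~\ref{def:near_neighbor_graph}) and convert distance in $G^2$ into distance in $G$ via $\dist_{G^2}(u,v) = \lceil \dist_G(u,v)/2 \rceil$.

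For the radius bound, I would first observe that the assignment rule sends every $p \in P$ to a cluster $P_i$ with $\dist_G(p, s_i) = \dist_G(p, S)$ (and points of $S$ are trivially their own centers). Since $S$ is a $\beta$-ruling set of $G^2$, every $p$ satisfies $\dist_{G^2}(p, S) \leq \beta$, which translates to $\dist_G(p, S) \leq 2\beta$. Chaining at most $2\beta$ edges of $G$, each contributing at most $CR$ in $\mathcal{X}$, the triangle inequality yields $\dist_{\mathcal{X}}(p, c(P_i)) \leq 2\beta \cdot C \cdot R$, which is the first claim.

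For the size bound, assume $R \geq \wh{\rho}$. The first step is to show $|\Gamma_G(s_i)| \geq r$ for every $s_i \in S$: because $\rho_r(s_i) \leq \wh{\rho} \leq R$, the $r$ nearest neighbors of $s_i$ (including $s_i$ itself) all lie within metric distance $R$, so in either branch of the third defining property of the near neighbor graph we get $|\Gamma_G(s_i)| \geq r$. The second step is to show $\Gamma_G(s_i) \subseteq P_i$, which then forces $|P_i| \geq r$. Given $q \in \Gamma_G(s_i) \setminus \{s_i\}$, we have $\dist_G(q, s_i) \leq 1$. Since $S$ is independent in $G^2$, any two distinct centers satisfy $\dist_G(s_j, s_k) \geq 3$, so for every $j \neq i$ the triangle inequality gives $\dist_G(q, s_j) \geq 3 - 1 = 2 > \dist_G(q, s_i)$. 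Thus the assignment rule uniquely sends $q$ to $P_i$, and in particular $q \notin S$ because otherwise two centers would sit at $G^2$-distance $\leq 1$, contradicting independence.

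The only genuinely delicate point is the last step: controlling the tug-of-war between competing centers. This is precisely where the algorithm needs a ruling set of $G^2$ rather than merely of $G$; the extra buffer of $G$-distance at least $3$ between distinct centers ensures that any $G$-neighbor of $s_i$ is strictly closer to $s_i$ than to any other $s_j$, so the tie-breaking rule of Algorithm~\ref{alg:offline_r-gather} cannot accidentally pull such a neighbor into another cluster. Once that observation is isolated, both bounds reduce to routine triangle-inequality arguments, so I do not anticipate any further technical obstacle.
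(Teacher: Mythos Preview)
Your proposal is correct and follows essentially the same approach as the paper's own proof: both arguments convert the $\beta$-ruling property in $G^2$ into $\dist_G(p,S)\le 2\beta$ for the radius bound, and both use independence in $G^2$ (i.e., $\dist_G(s_i,s_j)>2$) together with the degree guarantee of Definition~\ref{def:near_neighbor_graph} under $R\ge\wh{\rho}$ to conclude $\Gamma_G(s_i)\subseteq P_i$ and $|\Gamma_G(s_i)|\ge r$. Your write-up merely spells out the triangle-inequality step $\dist_G(q,s_j)\ge 3-1>\dist_G(q,s_i)$ a bit more explicitly than the paper does.
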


\begin{proof}
By Algorithm~\ref{alg:offline_r-gather}, since $S$ is a $\beta$-ruling set of $G^2$, we have $\forall i\in[t],\forall p\in P_i$, $\dist_G(p,c(P_i))\leq 2\cdot \beta$.
Since $G$ is a $C$-approximate $(R,r)$-near neighbor graph of $P$, each edge $(p,p')\in E$ satisfies $\dist_{\mathcal{X}}(p,p')\leq C\cdot R$.
Thus we have $\forall i\in[t],\forall p\in P_i,$ $\dist_{\mathcal{X}}(p,c(P_i))\leq 2\cdot\beta\cdot  C\cdot R$ by triangle inequality.

Consider a cluster $P_i$ with center $p\in S$.
Since $S$ is a $\beta$-ruling set of $G^2$, every $q\in S$ $(q\not=p)$ satisfies $\dist_G(p,q)> 2$.
Thus every point $p'\in \Gamma_G(p)$  must be assigned to the part $P_i$ by Algorithm~\ref{alg:offline_r-gather}.
If $R\geq \wh{\rho}$, then by the definition of $\wh{\rho}$ and Definition~\ref{def:near_neighbor_graph}, the the number of neighbors of each vertex in $G$ is at least $r$.
Thus the size of $P_i$ is at least $|\Gamma_G(p)|\geq r$. 
\end{proof}
By above lemma and Lemma~\ref{lem:relation_to_NN}, if  $R$ is in the range $[\wh{\rho},(1+\varepsilon)\cdot \wh{\rho}]$ for some $\epsilon\in(0,0.5)$, we can find a $(4(1+\varepsilon)\beta C)$-approximate solution for the $r$-gather problem.
Thus, we only need to enumerate $R\in\{\delta,\delta\cdot (1+\varepsilon),\cdots, \delta\cdot (1+\varepsilon)^L\}$ and run Algorithm~\ref{alg:offline_r-gather} to find a smallest $R$ for which a valid solution is obtained, where $\delta$ is an lower bound of the distance of different points and $\delta\cdot (1+\varepsilon)^L$ is an upper bound of the distance of different points.
Since the aspect ratio of the point set is at most $\poly(n)$, $L=O(\log(n)/\varepsilon)$.

\subsection{$r$-Gather with Outliers}
Now consider the case that the point set $P$ has at most $k$ outliers.
The goal is to find a subset $O$ with $|O|\leq k$ such that $\rho^*(P\setminus O)$ is minimized.
We denote the maximum radius of the clusters of the optimal solution for $r$-gather with at most $k$ outliers as $\rho^{*k}(P)$.
Similarly, we denote $\wh{\rho}^k(P)$ as the $(k+1)$-th largest $\rho_r(P,p)$ for $p\in P$.
Formally,
\begin{align*}
\wh{\rho}^k(P) = \min_{S\subseteq P:|O|\leq k} \max_{p\in P\setminus O} \rho_r(P,p).
\end{align*}
If $P$ is clear in the context, we will omit $P$ and use $\rho^{*k}$ and $\wh{\rho}^k$ for short.

Similar to Lemma~\ref{lem:relation_to_NN}, we can prove the following lemma for the outlier setting.
\begin{lemma}\label{lem:outlier_relation_to_NN}
$\rho^{*k}\geq \wh{\rho}^k/2$.
\end{lemma}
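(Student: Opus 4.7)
The plan is to mimic the proof of Lemma~\ref{lem:relation_to_NN}, but restricted to the optimal non-outlier set. First I would fix $O^*\subseteq P$ with $|O^*|\leq k$ to be an optimal outlier set, so that $\rho^*(P\setminus O^*)=\rho^{*k}$, and let $P_1,\ldots,P_t$ with centers $c(P_1),\ldots,c(P_t)$ be an optimal $r$-gather partition of $P\setminus O^*$. Every $P_i$ has at least $r$ points, each within distance $\rho^{*k}$ of $c(P_i)$.

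Next, for each $p\in P\setminus O^*$, let $P_i$ be its cluster. By the triangle inequality, every point of $P_i$ lies within distance $2\rho^{*k}$ of $p$. Since $|P_i|\geq r$ and $P_i\subseteq P$, this gives $r$ points of $P$ (including $p$ itself, if $p\in P_i$) within distance $2\rho^{*k}$ of $p$; hence $\rho_r(P,p)\leq 2\rho^{*k}$. Taking the maximum over $p\in P\setminus O^*$ yields
\begin{align*}
\max_{p\in P\setminus O^*}\rho_r(P,p)\leq 2\rho^{*k}.
\end{align*}
Finally, using $O^*$ as a feasible choice in the minimization defining $\wh{\rho}^k$, we obtain $\wh{\rho}^k\leq \max_{p\in P\setminus O^*}\rho_r(P,p)\leq 2\rho^{*k}$, which rearranges to $\rho^{*k}\geq \wh{\rho}^k/2$, as desired.

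There is no real obstacle: the only subtlety is to notice that the $r$-nearest-neighbor distance $\rho_r(P,p)$ is taken with respect to all of $P$ (not $P\setminus O^*$), so the $r$ nearby points we exhibit from within the optimal cluster automatically form a valid witness, regardless of whether some of them would have been ``outliers'' in a different optimum. The argument is otherwise a verbatim adaptation of Lemma~\ref{lem:relation_to_NN}.
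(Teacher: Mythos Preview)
Your proof is correct and follows essentially the same approach as the paper. The paper's version picks a single point $p\in P\setminus O^*$ with $\rho_r(P,p)\geq \wh{\rho}^k$ and bounds that one value, whereas you bound $\rho_r(P,p)$ for every $p\in P\setminus O^*$ and then invoke the minimization formula for $\wh{\rho}^k$; these are logically equivalent formulations of the same cluster-diameter argument.
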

\begin{proof}
Let $O\subseteq P$ with $|O|\leq k$ be the outliers in the optimal solution.
Let $P_1,P_2,\cdots,P_t$ be the optimal partition of $P\setminus O$ and let $c(P_1),c(P_2),\cdots,c(P_t)$ be the corresponding centers.
Since $|O|\leq k$, we can find a point $p\in P\setminus O$ such that $\rho_r(P,p)\geq \wh{\rho}^k$.
Let $P_i$ be the cluster containing $p$.
Let $q$ be a point in $P_i$, which is the point farthest from $p$.
We have $\rho^{*k}\geq \max(\dist(p,c(P_i)),\dist(q,c(P_i)))\geq \dist(p,q)/2\geq \rho_r(p)/2$ where the last inequality follows from that $P_i$ contains at least $r$ points and $q$ is the point farthest from $p$ in $P_i$.
Thus, $\rho^{*k}\geq \rho_r(p)/2\geq \wh{\rho}^k/2$.
\end{proof}

We show Algorithm~\ref{alg:offline_r-gather_outlier} for the $r$-gather problem with outliers.

\begin{algorithm}
	\small
	\begin{algorithmic}[1]\caption{Outlier $r$-Gather Algorithm}\label{alg:offline_r-gather_outlier}
    \STATE {\bfseries Input:} A point set $P$ from a metric space $\mathcal{X}$, parameters $R>0,r\geq 1$.
    \STATE Let $C\geq 1$. 
    Compute a $C$-approximate $(R,r)$-near neighbor graph $G=(P,E)$ of $P$.
    \STATE Let $P'\subseteq P$ be the vertices with at least $r$ neighbors in $G$, i.e. $P'=\{p\in P\mid |\Gamma_G(p)|\geq r\}$.
    \STATE Let $\beta\geq 1$.
    Compute a $\beta$-ruling set $S=\{s_1,\cdots,s_t\}$ of $(G^2)[P']$, the sub-graph of $G^2$ induced by $P'$.
    \STATE Compute $P''=\{p\in P\mid \dist_G(p,S)\leq 2\cdot \beta\}$.
    \STATE Partition $P''$ into $t$ clusters $P_1,P_2,\cdots,P_t$ where the center $c(P_i)$ is $s_i$.
    For each point $p\in P''\setminus S$, add $p$ into an arbitrary part $P_i$ such that $\dist_G(p,c(P_i))$ is minimized. 
    \STATE Return $P_1,P_2,\cdots,P_t$ and $c(P_1),c(P_2),\cdots,c(P_t)$.
	\end{algorithmic}
\end{algorithm}

\begin{lemma}[Correctness of Algortihm~\ref{alg:offline_r-gather_outlier}]\label{lem:offline_outlier_r-gather_correctness}
Given a set of $n$ points $P\subseteq \mathcal{X}$ and parameters $R,r\geq 1$, let $P_1,P_2,\cdots,P_t$ and $c(P_1),c(P_2),\cdots,c(P_t)$ be the corresponding output of Algorithm~\ref{alg:offline_r-gather_outlier}.
Let $C,\beta$ be the same parameters described in Algorithm~\ref{alg:offline_r-gather_outlier}.
We have
\begin{align*}
\forall i\in[t],|P_i|\geq r~\mathrm{and}~\forall p\in P_i,\dist(p,c(P_i))\leq 2\cdot \beta \cdot C\cdot R.
\end{align*}
Furthermore, for any $k\in[n]$, if $R\geq \wh{\rho}^k$, then $\forall i\in[t],\sum_{i=1}^t |P_i|\geq n - k$.
\end{lemma}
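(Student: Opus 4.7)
The statement has three parts: (i) each cluster $P_i$ has at least $r$ points, (ii) each cluster has radius at most $2\beta C R$, and (iii) when $R \geq \wh{\rho}^k$ the clusters collectively cover all but at most $k$ input points. I will verify these in the order (ii), (i), (iii), since each later part reuses structural observations set up earlier.

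\emph{Radius bound.} First I would unpack what $S$ being a $\beta$-ruling set of $(G^2)[P']$ means in terms of the original graph $G$. Every edge of $(G^2)[P']$ corresponds to a path of length at most $2$ in $G$, so a path of length at most $\beta$ in $(G^2)[P']$ lifts to a walk of length at most $2\beta$ in $G$. Hence for every $p \in P'' = \{p \in P : \dist_G(p,S) \leq 2\beta\}$ and its chosen center $c(P_i)$, we have $\dist_G(p,c(P_i)) \leq 2\beta$. Since $G$ is a $C$-approximate $(R,r)$-near neighbor graph, every edge of $G$ corresponds to a pair of points at metric distance at most $CR$, so iterating triangle inequality along a path of length $\leq 2\beta$ gives $\dist_{\mathcal{X}}(p,c(P_i)) \leq 2\beta C R$.

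\emph{Size bound.} Each center $s_i \in S$ lies in $P'$, so by definition of $P'$ it has $|\Gamma_G(s_i)| \geq r$. The key observation is that every neighbor of $s_i$ in $G$ is assigned to the cluster $P_i$. For this, I use that $S$ is independent in $(G^2)[P']$, which means $\dist_G(s_i,s_j) > 2$ for distinct $i,j$. Consequently, for any $u \in \Gamma_G(s_i)$ we have $\dist_G(u,s_i) \leq 1$ while $\dist_G(u,s_j) \geq \dist_G(s_i,s_j) - 1 > 1$ for all $j \neq i$, so $s_i$ is strictly closer to $u$ than any other center. Thus $u \in P''$ (via $s_i$ itself) and the assignment rule puts $u \in P_i$. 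This gives $|P_i| \geq |\Gamma_G(s_i)| \geq r$.

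\emph{Outlier coverage.} Assume $R \geq \wh{\rho}^k$. By definition of $\wh{\rho}^k$, the set $B = \{p \in P : \rho_r(P,p) > R\}$ has size at most $k$. For every $p \in P \setminus B$, the metric ball of radius $R$ around $p$ contains at least $r$ points; by condition (3) of Definition~\ref{def:near_neighbor_graph} this forces $|\Gamma_G(p)| \geq r$, hence $p \in P'$. I then need to show that every $p \in P'$ ends up in $P''$. Because $S$ is a $\beta$-ruling set of $(G^2)[P']$, there is a path of length $\leq \beta$ from $p$ to some $s \in S$ using only vertices of $P'$; lifting this path to $G$ (each edge becomes at most two $G$-edges) yields $\dist_G(p,s) \leq 2\beta$, so $p \in P''$ and is assigned to a cluster. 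Therefore $\sum_i |P_i| = |P''| \geq |P'| \geq n - k$.

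\emph{Main obstacle.} The only subtle step is relating the $\beta$-ruling property in the induced subgraph $(G^2)[P']$ back to distances in the ambient graph $G$; since the ruling paths are constrained to stay inside $P'$ but the ``distance to $S$ in $G$'' used to define $P''$ is unconstrained, we get the inclusion $P' \subseteq P''$ essentially for free by lifting each $G^2$-edge to at most two $G$-edges. Everything else is a direct combination of the definitions of $P'$, $P''$, the $C$-approximate near neighbor graph, and independence of $S$ in $(G^2)[P']$.
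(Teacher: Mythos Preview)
Your proof is correct and follows essentially the same approach as the paper's: bounding the radius via the edge weight $CR$ and path length $2\beta$ in $G$, bounding the cluster size via $\Gamma_G(s_i)\subseteq P_i$ using independence of $S$ in $G^2$, and bounding the number of uncovered points by showing $P'\subseteq P''$ through the $\beta$-ruling property lifted from $(G^2)[P']$ to $G$. The only cosmetic difference is that you explicitly spell out the path-lifting from $(G^2)[P']$ to $G$, whereas the paper argues this step by contradiction in one line.
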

\begin{proof}
By the construction of $P''$, we have $\forall p\in P'',\dist_G(p,S)\leq 2\cdot \beta$.
Since $G$ is a $C$-approximate $(R,r)$-near neighbor graph, each edge $(p,p')\in E$ satisfies $\dist_{\mathcal{X}}(p,p')\leq C\cdot R$.
By triangle inequality, for every $i\in[t]$ and for every $p\in P_i$, we have $\dist_{\mathcal{X}}(p,c(P_i))\leq \dist_G(p,S)\cdot C\cdot R\leq 2\cdot\beta\cdot C\cdot R$.

Since $S$ is an independent set of $G^2$, we have $\forall p,q\in S$, $\dist_G(p,q)>2$. 
Consider a part $P_i$.
The center of $P_i$ is a point $p\in S$ and we know that $\Gamma_G(p)\subseteq P_i$.
Since $p\in S\subseteq P'$, we have $|P_i|\geq |\Gamma_G(p)|\geq r$.

Finally, let us analyze the size of $\sum_{i=1}^t|P_i|$.
Since $P_1,P_2,\cdots,P_t$ is a partition of $P''$, we only need to prove that $|P''|\geq n-k$.
If $R\geq \wh{\rho}^k$, since $G$ is a $C$-approximate $(R,r)$-near neighbor graph, by Definition~\ref{def:near_neighbor_graph}, we know that $|P'|\geq n-k$.
In the remaining of the proof, we will show that $P'\subseteq P''$.
Notice that $S\subseteq P''$.
Now consider $p\in P'\setminus S$.
Notice that we have $\dist_G(p,S)\leq 2\cdot \beta$.
Otherwise, it contradicts to that $S$ is a $\beta$-ruling set of $(G^2)[P']$.
By the definition of $P''$, we have $p\in P''$.
Thus we can conlude that $|P''|\geq |P'|\geq n-k$.
\end{proof}
By above lemma and Lemma~\ref{lem:outlier_relation_to_NN}, if  $R$ is in the range $[\wh{\rho}^k,(1+\varepsilon)\cdot \wh{\rho}^k]$ for some $\varepsilon\in (0,0.5)$, we can find a $(4\cdot (1+\varepsilon)\cdot \beta\cdot C)$-approximate solution for the $r$-gather with at most $k$ outliers.
Thus, we only need to enumerate $R\in\{\delta,\delta\cdot (1+\varepsilon),\cdots, \delta\cdot (1+\varepsilon)^L\}$ and run Algorithm~\ref{alg:offline_r-gather_outlier} to find a smallest $R$ for which a valid solution is obtained, where $\delta$ is an lower bound of the distance of different points and $\delta\cdot (1+\varepsilon)^L$ is an upper bound of the distance of different points.
Since the aspect ratio of the point set is at most $\poly(n)$, $L=O(\log(n)/\varepsilon)$.

\subsection{$r$-Gather with Pointwise Guarantees}
For the $r$-gather problem and $r$-gather with outliers, we only guaranteed that the largest radius among all clusters is small.
In many cases, we want that each point is close to its center.
In particular, given a point set $P\subseteq \mathcal{X}$ and a parameter $r\geq 1$, we show how to partition $P$ into $P_1,P_2,\cdots,P_t$ with centers $c(P_1),c(P_2),\cdots,c(P_t)$ such that $\forall i\in[t], p\in P_i$, $\dist_{\mathcal{X}}(p,c(P_i))$ is always upper bounded by $\rho_r(p)$ up to a small multiplicative factor.
The algorithm is shown in Algorithm~\ref{alg:r-gather_pointwise}.

\begin{algorithm}
	\small
	\begin{algorithmic}[1]\caption{$r$-Gather Algorithm with Pointwise Guarantee}\label{alg:r-gather_pointwise}
    \STATE {\bfseries Input:} A point set $P$ from a metric space $\mathcal{X}$, a parameter $r\geq 1$.
    \STATE Let $C\geq 1,\beta\geq 1$. 
    \STATE Let $t\gets 0$. Initialize the family of clusters $\mathcal{P}\gets \emptyset$.
    \STATE Let $\Delta$ $(\delta)$ be an upper bound (a lower bound) of $\dist(p,q)$ for $p\not = q\in P$.
    \STATE Let $L=\lceil\log(\Delta/\delta)\rceil$. For $i\in\{0,1,2,\cdots,L\}$, let $R_i \gets 2^i\cdot \delta$.
    \FOR{$i=0\rightarrow L$}
        \STATE Compute a $C$-approximate $(R_i,r)$-near neighbor graph $G_i=(P,E_i)$ of $P$.
        \STATE Let $P'_i\subseteq P$ be the vertices with at least $r$ neighbors in $G_i$, i.e., $P'_i=\{p\in P\mid |\Gamma_{G_i}(p)|\geq r\}$.
        \STATE Let $P''_i = \left\{p \in P'_i\mid \dist_{G_i}\left(p,\bigcup_{Q\in\mathcal{P}} Q\right) > 1\right\}$.
        \STATE Compute a $\beta$-ruling set $S_i=\{s_{i,1},s_{i,2},\cdots,s_{i,t'_i}\}$ of $(G_i^2)[P''_i]$.
        \STATE Compute $P'''_i = \left\{p\in P\setminus \bigcup_{Q\in\mathcal{P}} Q\mid \dist_{G_i}(p,S_i)\leq 2\cdot \beta\right\}$.
        \STATE Partition $P'''_i$ into $t'_i$ clusters $Q_{i,1},Q_{i,2},\cdots, Q_{i,t'_i}$ where the center $c(Q_{i,j})$ is $s_{i,j}$.
        For each point $p\in P'''_i\setminus S_i$, add $p$ into an arbitrary cluster $Q_{i,j}$ such that $\dist_{G_i}(p,s_{i,j})$ is minimized.
        \STATE For each $p\in P_i'\setminus P_i'''$, if $p\not\in \bigcup_{Q\in\mathcal{P}} Q$, find an arbitrary cluster $Q\in \mathcal{P}$ such that $\dist_{G_i}(p,Q)\leq 1$ and update $Q$ by adding $p$ into $Q$.
        \STATE Add $Q_{i,1},Q_{i,2},\cdots,Q_{i,t_i'}$ into $\mathcal{P}$. Let $t\gets t + t'_i$.
        \ENDFOR
    \STATE Output the partition $\mathcal{P}=\{P_1,P_2,\cdots,P_t\}$ and the centers $c:\mathcal{P}\rightarrow P$.
	\end{algorithmic}
\end{algorithm}

As shown in Algorithm~\ref{alg:r-gather_pointwise}, it takes $L+1$ phases.
In the following lemma, we show that the largest radius among all clusters from $\mathcal{P}$ after the $i$-th phase is at most $2\cdot\beta\cdot  C\cdot R_i$.
\begin{lemma}\label{lem:radius_phases}
Let $P\subseteq \mathcal{X}$ be a point set and let $r\geq 1$.
Fix any $i\in \{0,1,\cdots,L\}$. 
Let $\mathcal{P}=\{P_1,P_2,\cdots,P_t\}$ be the set of clusters found after the $i$-th phase of Algorithm~\ref{alg:r-gather_pointwise}.
Then $\forall j\in[t],\forall p\in P_j,\dist_{\mathcal{X}}(p,c(P_j))\leq 2\cdot\beta\cdot C\cdot R_i$.
\end{lemma}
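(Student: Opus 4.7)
The plan is to proceed by induction on the phase index $i$, tracking how the maximum cluster radius evolves across phases. I want to verify the invariant that after phase $i$, every cluster $P_j \in \mathcal{P}$ satisfies $\dist_{\mathcal{X}}(p, c(P_j)) \leq 2\beta C R_i$ for all $p \in P_j$.

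For the base case $i = 0$, the set $\mathcal{P}$ entering phase $0$ at line~13 is empty, so that line does nothing. After line~14, $\mathcal{P} = \{Q_{0,1}, \dots, Q_{0,t'_0}\}$, where each cluster $Q_{0,j}$ has center $s_{0,j} \in S_0$, and every $p \in Q_{0,j}$ satisfies $\dist_{G_0}(p, s_{0,j}) \leq 2\beta$ (either because $p = s_{0,j}$, or because $p \in P'''_0$ is assigned to the closest center, and such a closest center is within $G_0$-distance $2\beta$ by construction of $P'''_0$). Since $G_0$ is a $C$-approximate $(R_0, r)$-near neighbor graph, each edge of $G_0$ has metric length at most $C \cdot R_0$, so by the triangle inequality $\dist_{\mathcal{X}}(p, s_{0,j}) \leq 2\beta C R_0$.

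For the inductive step, suppose the invariant holds after phase $i-1$, and consider phase $i$. Clusters in $\mathcal{P}$ after phase $i$ come in three types. First, clusters that existed before phase $i$ and were not augmented in line~13 still satisfy $\dist_{\mathcal{X}}(p, c(P_j)) \leq 2\beta C R_{i-1} \leq 2\beta C R_i$ since $R_i = 2 R_{i-1}$. Second, the newly created clusters $Q_{i,1}, \dots, Q_{i,t'_i}$ satisfy the bound by the same argument as in the base case, using that $G_i$ is a $C$-approximate $(R_i, r)$-near neighbor graph. The third and only nontrivial case is an old cluster $Q \in \mathcal{P}$ that receives a new point $p \in P'_i \setminus P'''_i$ in line~13. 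By the rule there, $\dist_{G_i}(p, Q) \leq 1$, so there is some $q \in Q$ with $\dist_{\mathcal{X}}(p, q) \leq C R_i$. Applying the inductive hypothesis to $q \in Q$ gives $\dist_{\mathcal{X}}(q, c(Q)) \leq 2\beta C R_{i-1} = \beta C R_i$, and then the triangle inequality yields
\[
\dist_{\mathcal{X}}(p, c(Q)) \leq C R_i + \beta C R_i = (\beta + 1) C R_i \leq 2\beta C R_i,
\]
where the last step uses $\beta \geq 1$.

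The only real subtlety is making sure the doubling $R_i = 2 R_{i-1}$ is exactly enough to absorb the extra $C R_i$ picked up when a point from a later phase is grafted onto a cluster born in an earlier phase; this is what forces the constant $2\beta$ rather than $\beta$ in the invariant. Everything else is a direct unpacking of the definitions of $\beta$-ruling set and of the $C$-approximate near neighbor graph.
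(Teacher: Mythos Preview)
Your proof is correct and follows essentially the same induction as the paper's own argument: both use the induction hypothesis that old clusters have radius at most $2\beta C R_{i-1} = \beta C R_i$, bound the radius of new clusters $Q_{i,j}$ via the $2\beta$ hop-distance in $G_i$, and handle points grafted in line~13 by combining one $G_i$-edge (metric cost $\leq C R_i$) with the old radius to get $(1+\beta)C R_i \leq 2\beta C R_i$. The only cosmetic difference is that the paper takes $i=-1$ (empty $\mathcal{P}$) as the base case, whereas you unpack $i=0$ directly; both are fine.
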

\begin{proof}
The proof is by induction.
For $i=-1$, i.e., before the phase of $i=0$, since $\mathcal{P}$ is an empty set, the lemma statement automatically holds.
Now consider the $i$-th phase.
We suppose that every cluster in $\mathcal{P}$ before the $i$-th phase has radius at most $2\cdot\beta\cdot C\cdot R_{i-1}=\beta\cdot C\cdot R_i$.

There are two cases for updating $\mathcal{P}$.
In the first case, we add new clusters $Q_{i,1},Q_{i,2},\cdots, Q_{i,t'_i}$ into $\mathcal{P}$.
By the construction of $P'''_i$ and $Q_{i,1},Q_{i,2},\cdots,Q_{i,t'_i}$, we have $\forall j\in[t'_i],\forall p\in Q_{i,j},\dist_{G_i}(p,c(Q_{i,j}))\leq 2\cdot \beta$.
Since $G_i$ is a $C$-approximate $(R,r)$-near neighbor graph of $P$, $\forall (p,p')\in E_i$, $\dist_{\mathcal{X}}(p,p')\leq C\cdot R_i$.
By triangle inequality, we have $\forall j\in[t'_i],\forall p\in Q_{i,j},\dist_{\mathcal{X}}(p,c(Q_{i,j}))\leq \dist_{G_i}(p,c(Q_{i,j}))\cdot C\cdot R_i \leq 2\cdot \beta \cdot C \cdot R_i$.
Now consider the second case.
In the second case, we add some point $p$ with $\dist_{G_i}(p,Q)\leq 1$ to a cluster $Q$ which was in $\mathcal{P}$ before the $i$-th phase.
Since $G_i$ is a $C$-approximate $(R,r)$-near neighbor graph of $P$ and $\dist_{G_i}(p,Q)\leq 1$, we have $\min_{q\in Q}\dist_{\mathcal{X}}(p,q)\leq C\cdot R_i$.
By the induction hypothesis, the radius of $Q$ before adding $p$ is at most $C\cdot R_i$.
Thus, by triangle inequality, the radius of $Q\in\mathcal{P}$, $\max_{p\in Q} \dist_{\mathcal{X}}(p,c(Q))$, after adding new points is at most $2\cdot\beta\cdot C\cdot R_i$.
\end{proof}

\begin{lemma}\label{lem:cluster_phases}
Let $P\subseteq\mathcal{X}$ be a point set and $r\geq 1$.
Fix any $i\in\{0,1,\cdots,L\}$.
Let $\mathcal{P} = \{P_1,P_2,\cdots, P_t\}$ be the set of clusters found after the $i$-th phase of Algorithm~\ref{alg:r-gather_pointwise}.
Then $\forall p\in P$, if $R_i\geq \rho_{r}(P,p)$, then $\exists j\in[t]$ such that $p\in P_j$.
\end{lemma}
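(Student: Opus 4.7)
My plan is to analyze phase $i$ directly, without induction: either $p$ already lies in some cluster of $\mathcal{P}$ at the start of phase $i$ (in which case the conclusion holds automatically), or $p \notin \bigcup_{Q \in \mathcal{P}} Q$ at the start of phase $i$, and I will show that phase $i$ places $p$ into a cluster. So assume from now on the second scenario.

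The first step is to observe that $p \in P_i'$. Since $\rho_r(P,p) \leq R_i$ and $p$ is its own $1$st nearest neighbor, the ball of radius $R_i$ around $p$ contains at least $r$ points of $P$. By the third clause of Definition~\ref{def:near_neighbor_graph}, either $|\Gamma_{G_i}(p)| \geq r$ outright, or $\{p' \in P \mid \dist_{\mathcal{X}}(p,p') \leq R_i\} \subseteq \Gamma_{G_i}(p)$, which forces $|\Gamma_{G_i}(p)| \geq r$ anyway. The second step is a case split on membership in $P_i''$. If $p \in P_i''$, then the $\beta$-ruling-set property of $S_i$ in $(G_i^2)[P_i'']$ produces some $s \in S_i$ with $\dist_{(G_i^2)[P_i'']}(p,s) \leq \beta$, and unpacking gives $\dist_{G_i}(p,s) \leq 2\beta$; together with $p \notin \bigcup_{Q \in \mathcal{P}} Q$ this places $p$ in $P_i'''$, so line 12 puts $p$ into some new cluster $Q_{i,j}$, which is appended to $\mathcal{P}$ at line 14. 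If instead $p \in P_i' \setminus P_i''$, then by the definition of $P_i''$ there is an old cluster $Q \in \mathcal{P}$ with $\dist_{G_i}(p,Q) \leq 1$; either $p$ also belongs to $P_i'''$ (handled by line 12) or $p$ is processed by line 13, which inserts $p$ into such a $Q$.

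The subtle point that will require the most care is confirming that line 13 can always succeed whenever it is invoked. This reduces to showing that if $p \in P_i' \setminus P_i'''$ and $p \notin \bigcup_{Q \in \mathcal{P}} Q$, then $p \notin P_i''$, so that the defining condition of $P_i''$ supplies the required $Q \in \mathcal{P}$ with $\dist_{G_i}(p,Q) \leq 1$. But this is exactly the contrapositive of the Case A argument above: $p \in P_i'' \cap P_i'$ together with $p \notin \bigcup_{Q \in \mathcal{P}} Q$ forces $p \in P_i'''$ via the ruling-set property, contradicting the trigger for line 13. With this structural observation every branch closes and the lemma follows.
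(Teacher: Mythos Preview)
Your argument is correct and matches the paper's proof in substance: both hinge on the observation that $p \in P_i'$, $p \notin \bigcup_{Q \in \mathcal{P}} Q$, and $p \in P_i''$ force $p \in P_i'''$ via the $\beta$-ruling-set property, so that any $p$ escaping $P_i'''$ must lie outside $P_i''$ and therefore has a $G_i$-neighbor already clustered for line~13. The only difference is cosmetic---you case-split first on membership in $P_i''$ whereas the paper splits first on membership in $P_i'''$---but the underlying implication and its contrapositive are identical.
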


\begin{proof}
Let $p$ be an arbitrary point from $P$.
Let $i\in\{0,1,\cdots, L\}$ such that $R_i\geq \rho_r(p)$.
Since $G_i$ is a $C$-approximate $(R,r)$-near neighbor graph of $P$, we have $|\Gamma_{G_i}(p)|\geq r$ which implies that $p\in P'_i$.
We will show that $P'_i\subseteq \bigcup_{Q\in\mathcal{P}} Q$ at the end of the $i$-th phase of Algorithm~\ref{alg:r-gather_pointwise}.

From now, suppose $p\in P_i'$ and $p\not\in \bigcup_{Q\in\mathcal{P}} Q$ at the beginning of the $i$-th phase of Algorithm~\ref{alg:r-gather_pointwise}.
If $p\in P'''_i$, then there exists $j\in [t'_i]$ such that $p\in Q_{i,j}$ which implies that $p\in\bigcup_{Q\in\mathcal{P}} Q$ at the end of the $i$-th phase of Algorithm~\ref{alg:r-gather_pointwise}.
If $p\not\in P'''_i$, then since $p\not\in \bigcup_{Q\in \mathcal{P}}Q$, we must have $\dist_{G_i}(p,S_i)>2\cdot \beta$.
Since $S_i$ is a $\beta$-ruling set of the sub-graph of $G_i^2$ induced by $P_i''$, we know that $p\not \in P_i''$ which implies that $\dist_{G_i}\left(p,\bigcup_{Q\in\mathcal{P}} Q\right) = 1$.
Thus, at the end of the $i$-th phase, $p$ will be added into a cluster $Q\in\mathcal{P}$.
\end{proof}

\begin{lemma}[Pointwise distance guarantees]\label{lem:center_distance}
Let $P\subseteq \mathcal{X}$ be a point set and $r\geq 1$.
Let $\mathcal{P}=\{P_1,P_2,\cdots,P_t\}$ and $c:\mathcal{P}\rightarrow P$ be the final output of Algorithm~\ref{alg:r-gather_pointwise}.
For any point $p$, let $P_j$ be the cluster containing $p$.
Then $\dist_{\mathcal{X}}(p,c(P_j))\leq 4\cdot\beta\cdot C\cdot \rho_r(p)$
\end{lemma}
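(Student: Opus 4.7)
The plan is to combine Lemma~\ref{lem:cluster_phases} (which tells us when $p$ is guaranteed to be clustered) with Lemma~\ref{lem:radius_phases} (which bounds the radius of each cluster at the end of each phase), together with the observation that once a cluster is created in Algorithm~\ref{alg:r-gather_pointwise}, its center is never changed and its members are never removed.

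First, I define $i^*\in\{0,1,\dots,L\}$ to be the smallest index with $R_{i^*}\ge \rho_r(p)$. Since $R_i=2^i\delta$ and the aspect-ratio assumption gives $\rho_r(p)\in[\delta,\Delta]$, such an $i^*$ exists; moreover either $i^*=0$, in which case $R_{i^*}=\delta\le \rho_r(p)$ forces $R_{i^*}\le 2\rho_r(p)$, or $i^*\ge 1$, in which case $R_{i^*-1}<\rho_r(p)$ gives $R_{i^*}=2R_{i^*-1}<2\rho_r(p)$. In either case $R_{i^*}\le 2\rho_r(p)$. By Lemma~\ref{lem:cluster_phases}, $p$ belongs to some cluster of $\mathcal{P}$ at the end of phase $i^*$.

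Next, let $i'\le i^*$ be the phase in which $p$ is first added to a cluster $P_j$. Inspecting Algorithm~\ref{alg:r-gather_pointwise}, once a cluster $Q$ is inserted into $\mathcal{P}$ (either as a newly created $Q_{i'',\ell}$ with center $s_{i'',\ell}$, or modified by appending points to an existing $Q\in\mathcal{P}$), the center $c(Q)$ is fixed and points are only added, never removed. Hence $p\in P_j$ persists to the final output and $c(P_j)$ in the final output equals the center of $P_j$ at the end of phase $i'$. Applying Lemma~\ref{lem:radius_phases} to the state of $\mathcal{P}$ at the end of phase $i'$ (which contains $P_j$ with $p\in P_j$) gives
\[
\dist_{\mathcal{X}}(p,c(P_j))\le 2\beta C R_{i'}\le 2\beta C R_{i^*}\le 4\beta C\rho_r(p),
\]
which is the desired bound.

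I expect the only subtlety to be verifying the non-modification property of centers and the monotonicity of cluster membership in Algorithm~\ref{alg:r-gather_pointwise}; this is essentially a bookkeeping check on lines~12--14 of the algorithm, confirming that the center assigned at the creation of a cluster is the one used in the final output and that no later phase removes $p$ from its cluster. With this observation in hand, the combination of Lemma~\ref{lem:cluster_phases} and Lemma~\ref{lem:radius_phases} yields the pointwise guarantee directly, so no further ingredients should be needed.
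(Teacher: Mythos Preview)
Your proposal is correct and follows essentially the same approach as the paper: both arguments pick the smallest scale $R_i\ge\rho_r(p)$ (hence $R_i<2\rho_r(p)$), invoke Lemma~\ref{lem:cluster_phases} to ensure $p$ is clustered by the end of that phase, observe that centers are fixed once assigned, and then apply Lemma~\ref{lem:radius_phases} to bound $\dist_{\mathcal{X}}(p,c(P_j))\le 2\beta C R_i\le 4\beta C\rho_r(p)$. Your introduction of the earlier phase $i'\le i^*$ is a harmless refinement (yielding the same final bound via $R_{i'}\le R_{i^*}$), and your handling of the $i^*=0$ boundary case is slightly more explicit than the paper's, but the two proofs are otherwise identical.
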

\begin{proof}
Consider an arbitrary point $p\in P$. 
According to Algorithm~\ref{alg:r-gather_pointwise}, since it only adds new points to existing clusters and does not change the centers, once $p$ is added into a cluster $Q$, then the distance from $p$ to its center will not be changed.
Let $i\in\{0,1,\cdots,L\}$ such that $\rho_r(p)\leq R_i<2\cdot \rho_r(p)$.
According to Lemma~\ref{lem:cluster_phases}, $p$ will be in some cluster $Q\in\mathcal{P}$ at the end of the $i$-th phase of Algorithm~\ref{alg:r-gather_pointwise}.
According to Lemma~\ref{lem:radius_phases}, the radius of each cluster $Q\in\mathcal{P}$ at the end of the $i$-th phase is at most $2\cdot \beta\cdot C\cdot R_i$.
Thus, the distance from $p$ to its center is at most $2\cdot\beta\cdot C\cdot R_i\leq 4\cdot\beta\cdot C\cdot \rho_r(p)$.
\end{proof}

\begin{lemma}[The minimum size of clusters]\label{lem:minimum_size_requirement}
Let $P\subseteq \mathcal{X}$ be a point set and $r\geq 1$.
At any time of Algorithm~\ref{alg:r-gather_pointwise}, $\forall Q\in\mathcal{P},$ we have $|Q|\geq r$.
\end{lemma}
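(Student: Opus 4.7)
The plan is to establish the bound by induction on the number of clusters ever created/modified, essentially reducing to showing $|Q_{i,j}|\geq r$ at the moment $Q_{i,j}$ is first added to $\mathcal{P}$, and then observing that the algorithm never removes points from an existing cluster. The latter point is immediate from inspecting Algorithm~\ref{alg:r-gather_pointwise}: new clusters are only added to $\mathcal{P}$ (line 14), and old clusters in $\mathcal{P}$ are only grown by the absorption step (line 13). Hence it suffices to lower bound $|Q_{i,j}|$ at the moment of its creation in phase $i$.

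First I would unpack the membership $s_{i,j}\in S_i\subseteq P''_i\subseteq P'_i$. Being in $P'_i$ gives $|\Gamma_{G_i}(s_{i,j})|\geq r$, so it is enough to prove that every neighbor $q\in \Gamma_{G_i}(s_{i,j})$ is assigned to $Q_{i,j}$ in the partition step of phase $i$. Two things must be verified for this. First, $q\in P'''_i$: since $s_{i,j}\in P''_i$, we have $\dist_{G_i}(s_{i,j},\bigcup_{Q\in\mathcal{P}}Q)>1$, and since $q$ is a $G_i$-neighbor of $s_{i,j}$, triangle inequality on the graph metric forces $q\notin \bigcup_{Q\in\mathcal{P}}Q$; moreover $\dist_{G_i}(q,S_i)\leq \dist_{G_i}(q,s_{i,j})\leq 1 \leq 2\beta$, placing $q$ inside $P'''_i$. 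Second, $s_{i,j}$ is the unique closest element of $S_i$ to $q$ in $G_i$: because $S_i$ is an independent set of $G_i^2$, for any $j'\neq j$ we have $\dist_{G_i}(s_{i,j},s_{i,j'})>2$, and then the triangle inequality yields
\[
\dist_{G_i}(q,s_{i,j'}) \;\geq\; \dist_{G_i}(s_{i,j},s_{i,j'}) - \dist_{G_i}(q,s_{i,j}) \;>\; 2-1 \;=\; 1 \;=\; \dist_{G_i}(q,s_{i,j}).
\]
Therefore the tie-breaking rule in line 12 assigns $q$ to the cluster $Q_{i,j}$, giving $\Gamma_{G_i}(s_{i,j})\subseteq Q_{i,j}$ and hence $|Q_{i,j}|\geq r$.

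Combining the creation-time bound with the non-shrinking observation gives the claim for all clusters at all times. The only slightly subtle point—and thus the main obstacle worth being careful about—is the second verification above, where one must use both properties of $S_i$ simultaneously: independence in $G_i^2$ (to separate distinct centers by more than $2$ in $G_i$) and membership in $P''_i$ (to ensure neighbors of a center have not already been absorbed into some older cluster). Everything else is bookkeeping.
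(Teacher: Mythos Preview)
Your proposal is correct and follows essentially the same approach as the paper: both arguments observe that existing clusters only grow, then show $\Gamma_{G_i}(s_{i,j})\subseteq Q_{i,j}$ at creation time by combining $s_{i,j}\in P''_i$ (so its $G_i$-neighbors lie in $P'''_i$) with the fact that $S_i$ is independent in $G_i^2$ (so those neighbors are assigned to $s_{i,j}$), and finally invoke $s_{i,j}\in P'_i$ to get $|\Gamma_{G_i}(s_{i,j})|\geq r$. Your write-up is just slightly more explicit about the two verifications needed for $\Gamma_{G_i}(s_{i,j})\subseteq Q_{i,j}$.
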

\begin{proof}
There are two cases of updating $\mathcal{P}$.
In the first case, we add some points to some existing cluster $Q\in\mathcal{P}$.
In this case, the size of each cluster can only increase.
Now consider the second case: we add some new clusters into $\mathcal{P}$.
We only need to show that these clusters have size at least $r$.

For $i\in \{0,1,\cdots,L\}$, the new clusters added into $\mathcal{P}$ are $Q_{i,1},Q_{i,2},\cdots,Q_{i,t'_i}$.
For $i\in\{0,1,\cdots,L\}$ and $j\in[t'_i]$, the center of $Q_{i,j}$ is $s_{i,j}\in S_i$.
Since $s_{i,j}\in S_i\subseteq P''_i$, we have $\Gamma_{G_i}(s_{i,j})\subseteq P'''_i$.
Furthermore, since $S_i$ is an independent set in $G_i^2$, $\forall p,q\in S_i,$ $\dist_{G_i}(p,q)>2$.
Thus, $\Gamma_{G_i}(s_{i,j})\subseteq Q_{i,j}$.
Since $s_{i,j}\in S_i\subseteq P''_i\subseteq P'_i$, we have $|\Gamma_{G_i}(s_{i,j})|\geq r$ which implies that $|Q_{i,j}|\geq r$.
\end{proof}

\subsection{$r$-Gather with Total Distance Cost}
Recall that in the problem of $r$-gather with total $k$-th power distance cost, we are given a point set $P\subseteq \mathcal{X}$ and a parameter $r\geq 1$, and the goal is to find a partition $P_1,P_2,\cdots,P_t$ of $P$ and the corresponding centers $c(P_1),c(P_2),\cdots,c(P_t)$ such that $\forall i\in[t],|P_i|\geq r$ and
$
\sum_{i\in[t]} \sum_{p\in P_i} \dist_{\mathcal{X}}(p,c(P_i))^k
$
is minimized.
Let $\OPT(P)$ be the optimal total distance cost.

In the following lemma, we show that if for each point $p$, the distance between $p$ and its center is upper bounded by $\rho_r(P,p)$, then it gives a good total distance cost in the problem when $r$ is small.

\begin{lemma}\label{lem:pointwise_vs_total_cost}
$\sum_{p\in P} (\rho_r(p))^k\leq 2^{2k+1}r\cdot \OPT(P)$.
\end{lemma}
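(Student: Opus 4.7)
The plan is to bound $\rho_r(p)$ pointwise in terms of distances to centers in an optimal solution, and then aggregate. Let $P_1^*,P_2^*,\ldots,P_t^*$ be an optimal clustering for $r$-gather with total $k$-th power distance cost, with centers $c_1^*,\ldots,c_t^*$, so that $\OPT(P)=\sum_{i\in[t]}\sum_{p\in P_i^*}\dist_{\mathcal{X}}(p,c_i^*)^k$. Fix a cluster $P_i^*$ and, exploiting the fact that $|P_i^*|\geq r$, list its points $p_{i,1},p_{i,2},\ldots,p_{i,|P_i^*|}$ in nondecreasing order of distance to $c_i^*$, writing $d_{i,j}:=\dist_{\mathcal{X}}(p_{i,j},c_i^*)$.

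The key pointwise bound I would establish is: for every $j\in[|P_i^*|]$,
\begin{align*}
\rho_r(p_{i,j})\;\leq\;2\cdot d_{i,\max(j,r)}.
\end{align*}
To prove this, exhibit $r$ distinct candidate points within $P_i^*$ whose distance to $p_{i,j}$ can be controlled by the triangle inequality through $c_i^*$. If $j\geq r$, take $p_{i,j}$ itself together with $p_{i,1},\ldots,p_{i,r-1}$; each lies at distance at most $d_{i,j}+d_{i,r-1}\leq 2d_{i,j}$ from $p_{i,j}$. If $j<r$, take $p_{i,1},\ldots,p_{i,r}$ (which includes $p_{i,j}$ since $|P_i^*|\geq r$); each lies at distance at most $d_{i,j}+d_{i,r}\leq 2d_{i,r}$ from $p_{i,j}$. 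In either case there are $r$ points of $P$ within distance $2d_{i,\max(j,r)}$ of $p_{i,j}$, so $\rho_r(p_{i,j})$ is at most this quantity.

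Raising to the $k$-th power and summing over $j$ gives
\begin{align*}
\sum_{p\in P_i^*}\rho_r(p)^k\;\leq\;2^k\left(r\cdot d_{i,r}^k+\sum_{j=r+1}^{|P_i^*|}d_{i,j}^k\right).
\end{align*}
Since $d_{i,r}^k$ is itself one of the summands in $\sum_{j=1}^{|P_i^*|}d_{i,j}^k$, we have $d_{i,r}^k\leq \OPT_i:=\sum_{p\in P_i^*}\dist_{\mathcal{X}}(p,c_i^*)^k$, and likewise $\sum_{j\geq r+1}d_{i,j}^k\leq \OPT_i$. Therefore $\sum_{p\in P_i^*}\rho_r(p)^k\leq 2^k(r+1)\,\OPT_i\leq 2^{k+1}r\cdot \OPT_i$, which is already stronger than the claimed $2^{2k+1}r\cdot \OPT_i$; summing over $i$ completes the proof.

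The one subtle point (and what I expect to be the main obstacle) is the $j<r$ case: here $p_{i,j}$ is close to $c_i^*$ yet its $r$-th nearest neighbor could be far, so one cannot bound $\rho_r(p_{i,j})$ by $O(d_{i,j})$. The fix is to charge these points collectively to $d_{i,r}^k$, which costs a factor $r$ in the aggregate bound; this is exactly where the linear dependence on $r$ in the approximation ratio enters. Handling this carefully, and verifying that the $r$ candidate neighbors used in the triangle inequality are genuinely distinct (using $|P_i^*|\geq r$), is the only non-routine step.
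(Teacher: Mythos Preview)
Your argument is correct and in fact yields the sharper constant $2^{k+1}r$ rather than $2^{2k+1}r$. The approach, however, is genuinely different from the paper's. The paper first refines the optimal clustering into sub-clusters of size in $[r,2r)$ (choosing new centers among the closest points), which inflates the cost by a factor $2^k$; it then argues, for each refined cluster, that the maximum $\rho_r$ value is witnessed by a single distance-to-center term, and uses the upper bound $2r$ on cluster size to amortize. This two-stage argument is where the extra $2^k$ comes from. Your route bypasses the refinement entirely: by sorting points within each optimal cluster by distance to the center and proving the pointwise bound $\rho_r(p_{i,j})\leq 2\,d_{i,\max(j,r)}$, you charge the first $r$ points to the single term $d_{i,r}^k$ and the rest to their own terms, directly yielding $2^k(r+1)\,\OPT_i$. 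Your argument is shorter, avoids introducing an auxiliary clustering, and loses one fewer factor of $2^k$; the paper's reduction to bounded-size clusters is conceptually natural but unnecessary for this lemma.
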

\begin{proof}
Let the partition $P_1,P_2,\cdots,P_t$ and the centers $c(P_1),c(P_2),\cdots,c(P_t)$ be the optimal solution of the $r$-gather problem with the total $k$-th power distance cost, i.e., $\OPT(P)=\sum_{i\in[t]}\sum_{p\in P_i} \dist_{\mathcal{X}}(p,c(P_i))^k$ and $\forall i\in [t], |P_i|\geq r$.

We show how to convert the optimal solution to another partition $P'_1,P'_2,\cdots, P'_{t'}$ and centers $c(P'_1),c(P'_2),\cdots,c(P'_{t'})$ such that $\forall i\in[t'], |P'_i|\in [r,2r)$ and $\sum_{i\in[t']}\sum_{p\in P'_i} \dist_{\mathcal{X}}(p,c(P'_i))^k\leq 2\OPT(P)$.
Consider each part $P_i$ in the optimal partition.
If $|P_i|\in [r,2r)$, we add $P_i$ into the partition $\left\{P'_j\right\}$ directly and let the center $c(P_i)$ be unchanged.
Otherwise, we further partition $P_i$ into sub-clusters $Q_1,Q_2,\cdots,Q_s$ such that $|Q_1|=|Q_2|=\cdots=|Q_{s-1}|=r$ and $|Q_s|=|P_i|-s\cdot r \in [r,2r)$.
Suppose $P_i$ contains $m$ points $p_1,p_2,\cdots,p_m$.
We can without loss of generality assume that $\dist_{\mathcal{X}}(p_1,c(P_i))\leq \dist_{\mathcal{X}}(p_2,c(P_i))\leq \cdots \leq \dist_{\mathcal{X}}(p_m,c(P_i))$.
Then, for $j\in[s]$, we put $p_j$ into $Q_j$ and make $p_j$ as the center of $Q_j$.
For the remaining points $p_{s+1},p_{s+2},\cdots, p_m$, we put them arbitrarily into $Q_1,Q_2,\cdots,Q_s$ such that the size constraint of $Q_j$ for $j\in [s]$ is satisfied.
Then we have:
\begin{align*}
&\sum_{j\in [s]}\sum_{p\in Q_j} \dist_{\mathcal{X}}(p,c(Q_j))^k\\
=&\sum_{j\in [s]}\sum_{p\in Q_j} \dist_{\mathcal{X}}(p,p_j)^k\\
\leq & \sum_{j\in [s]}\sum_{p\in Q_j} \left(\dist_{\mathcal{X}}(p,c(P_i))+\dist_{\mathcal{X}}(c(P_i),p_j)\right)^k\\
\leq & 2^k\cdot \sum_{j\in [s]}\sum_{p\in Q_j}\dist_{\mathcal{X}}(p,c(P_i))^k\\
=&  2^k\cdot \sum_{p\in P_i} \dist_{\mathcal{X}}(p,c(P_i))^k,
\end{align*}
where the second step follows from triangle inequality and the third step follows from that $\forall p\in Q_j,\dist_{\mathcal{X}}(p,c(P_i))\geq \dist_{\mathcal{X}}(p_j,c(P_i))$.
Then, we add $Q_1,Q_2,\cdots,Q_s$ into the partition $\{P'_j\}$ and let $p_1,p_2,\cdots,p_s$ be the corresponding centers.
Then we have that 
\begin{align*}
\sum_{i\in [t']}\sum_{p\in P'_i} \dist_{\mathcal{X}}(p,c(P'_i))^k \leq 2^k\cdot \OPT(P).
\end{align*}
Now consider an arbitrary $i\in[t']$.
We can find $p\in P'_i$ such that $\rho_r(p)$ is the largest, i.e., $\forall p'\in P'_i, \rho_r(p')\leq \rho_r(p)$.
Since $|P'_i|\geq r$, we can find a point $q$ such that $\dist_{\mathcal{X}}(p,q)\geq \rho_r(p)$.
Then by triangle inequality, we have $\max(\dist_{\mathcal{X}}(p,c(P'_i)),\dist_{\mathcal{X}}(q,c(P'_i)))\geq \dist_{\mathcal{X}}(p,q)/2\geq \rho_r(p)/2$.
Since $|P'_i|\leq 2r$, we have:
\begin{align*}
\sum_{p'\in P'_i} \dist_{\mathcal{X}}(p',c(P'_i))^k\geq (\rho_r(p)/2)^k \geq \frac{1}{2r} \sum_{p'\in P'_i} (\rho_r(p)/2)^k \geq \frac{1}{2r\cdot 2^k} \sum_{p'\in P'_i} \rho_r(p')^k.
\end{align*}
Thus, we have:
\begin{align*}
    \sum_{i\in[t']}\sum_{p'\in P'_i} \dist_{\mathcal{X}}(p',c(P'_i))^k \geq \frac{1}{2^{k+1}\cdot r} \sum_{p'\in P}\rho_r(p')^k.
\end{align*}
Then we can conclude that 
\begin{align*}
    \sum_{p'\in P} \rho_r(p')^k\leq 2^{2k+1} r\cdot \OPT(P).
\end{align*}
\end{proof}

\begin{lemma}[$r$-Gather with total distance cost]\label{lem:guarantee_r_gather_with_total_distance_cost}
Let $P\subseteq \mathcal{X}$ be a point set and $r\geq 1$.
Let $k\geq 1$ be a constant.
Let $C$ and $\beta$ be the same parameters as described in Algorithm~\ref{alg:r-gather_pointwise}.
The final output clusters $\mathcal{P}=\{P_1,P_2,\cdots,P_t\}$ and the centers $c:\mathcal{P}\rightarrow P$ of Algorithm~\ref{alg:r-gather_pointwise} form a $O((\beta\cdot C)^k\cdot r)$-approximate solution of $r$-gather with total $k$-th power distance cost for $P$.
\end{lemma}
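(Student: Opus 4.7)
The plan is to combine the three ingredients already established for Algorithm~\ref{alg:r-gather_pointwise}: the pointwise distance guarantee (Lemma~\ref{lem:center_distance}), the minimum-size guarantee (Lemma~\ref{lem:minimum_size_requirement}), and the structural lower bound on $\OPT(P)$ in terms of the $r$-nearest-neighbor distances (Lemma~\ref{lem:pointwise_vs_total_cost}).

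First, I would verify feasibility: Lemma~\ref{lem:minimum_size_requirement} directly gives that every $P_i \in \mathcal{P}$ has $|P_i| \ge r$, so $\mathcal{P}$ together with $c$ is a feasible $r$-gather solution. Next, I would bound the objective value of the algorithm's solution. By Lemma~\ref{lem:center_distance}, for every $p \in P$, if $P_j$ is the cluster containing $p$ then $\dist_{\mathcal{X}}(p, c(P_j)) \le 4\beta C \cdot \rho_r(p)$. Raising to the $k$-th power and summing over all points gives
\begin{align*}
    \sum_{i \in [t]} \sum_{p \in P_i} \dist_{\mathcal{X}}(p, c(P_i))^k
    \;\le\; (4\beta C)^k \sum_{p \in P} \rho_r(p)^k.
\end{align*}
Now I would invoke Lemma~\ref{lem:pointwise_vs_total_cost}, which bounds $\sum_{p \in P} \rho_r(p)^k \le 2^{2k+1} r \cdot \OPT(P)$, to obtain
\begin{align*}
    \sum_{i \in [t]} \sum_{p \in P_i} \dist_{\mathcal{X}}(p, c(P_i))^k
    \;\le\; (4\beta C)^k \cdot 2^{2k+1} r \cdot \OPT(P)
    \;=\; O\bigl((\beta C)^k \cdot r\bigr) \cdot \OPT(P),
\end{align*}
where the hidden constant depends only on $k$, which is treated as a constant in the lemma statement.

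There is no real obstacle here, since the lemma is a direct composition of results already proved. The only minor care I would take is to make sure the constant factor absorbed into the $O(\cdot)$ is indeed independent of $\beta$ and $C$, namely $4^k \cdot 2^{2k+1} = 2^{4k+1}$, which is a function of $k$ alone; this justifies writing the approximation factor as $O((\beta C)^k \cdot r)$ for constant $k$.
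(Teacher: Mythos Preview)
Your proposal is correct and follows essentially the same approach as the paper: verify feasibility via Lemma~\ref{lem:minimum_size_requirement}, bound the objective pointwise using Lemma~\ref{lem:center_distance}, and then apply Lemma~\ref{lem:pointwise_vs_total_cost} to compare against $\OPT(P)$. The computations and constants match the paper's proof exactly.
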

\begin{proof}
Let $\OPT(P)$ denote the optimal total $k$-th power distance cost.
According to Lemma~\ref{lem:minimum_size_requirement}, each cluster of $P_1,P_2,\cdots,P_t$ has size at least $r$.
Thus, the output of Algorithm~\ref{alg:r-gather_pointwise} is a valid solution.
According to Lemma~\ref{lem:center_distance}, we have:
\begin{align*}
\sum_{P_i\in \mathcal{P}} \sum_{p\in P_i} \dist_{\mathcal{X}}(p, c(P_i))^k\leq \sum_{p\in P} (4\cdot \beta\cdot C)^k\cdot \rho_r(p)^k.
\end{align*}
By combining with Lemma~\ref{lem:pointwise_vs_total_cost},
\begin{align*}
\sum_{P_i\in \mathcal{P}} \sum_{p\in P_i} \dist_{\mathcal{X}}(p, c(P_i))^k\leq (4\cdot \beta\cdot C)^k\cdot 2^{2k+1}\cdot r\cdot \OPT(P) = O((\beta\cdot C)^k\cdot r)\cdot \OPT(P).
\end{align*}
\end{proof}
\section{Algorithms in the MPC Model}
\label{sect:mpc}

In this section, we show how to implement
Algorithm~\ref{alg:offline_r-gather}, Algorithm~\ref{alg:offline_r-gather_outlier}, and Algorithm~\ref{alg:r-gather_pointwise} in the MPC model.
All of our algorithms involve two crucial subroutines.
The first part is to construct a $C$-approximate $(R,r)$-near neighbor graph.
The second part is to compute a ruling set of the power of a graph.

In Section~\ref{sec:graph_construction}, we show how to construct the near neighbor graph.
In Section~\ref{sec:MPC_power_graph_subroutine}, we introduce useful subroutines for handling the $k$-th power of a graph.
In Section~\ref{sec:mis}, we show how to compute a maximal independent set of the $k$-th power of a graph.
In Section~\ref{sec:ruling}, we show how to compute a dominating set/ruling set of the $k$-th power of a graph.
Finally, in section~\ref{sec:all}, we show how to put all ingredients together to obtain MPC algorithms for $r$-gather and its variants.

Before we describe how to implement algorithms in the MPC model, let us introduce some basic primitives in the MPC model.
\paragraph{MPC primitives.}
The most basic primitive in the MPC model is sorting.
\cite{goodrich2011sorting,goodrich1999communication} shows that there is an $O(1)$-rounds fully scalable MPC algorithm which sorts the data using total space linear in the input size.
\cite{goodrich2011sorting} also shows that a single step of a classic PRAM algorithm can be simulated in $O(1)$ MPC rounds.
The simulation is fully scalable and the total space needed is linear in the number of processors needed in the simulated PRAM algorithm.
These basic MPC operations allow us to organize data stored in the machines in a flexible way.
To store $d$-dimensional points, it is not necessary to make each machine hold an entire data point. 
The coordinates of a point can be distributed arbitrarily on machines.
We can always use sorting to make a machine or consecutive machines hold an entire point.
To store a graph $G=(V,E)$, the edges can be distributed arbitrarily on the machines.
We can use sorting to make a machine or consecutive machines hold the entire neighborhood $\Gamma_G(v)$ of some vertex $v$.
For more MPC primitives, we refer readers to~\cite{andoni2018parallel}.

\subsection{Graph Construction}\label{sec:graph_construction}
Suppose the point set $P$ is from the $d$-dimensional Euclidean space.
Given $P$ and parameters $C>1,r\geq 1,R>0$, we show how to construct an $O(C)$-approximate $(R,r)$-near neighbor graph $G$ of $P$.
In high level, our algorithm adapts the idea of~\cite{har2013euclidean} for constructing an Euclidean spanner in high dimensions.


The main tool is the $\ell_2$ locality sensitive hashing (LSH).
\begin{lemma}[\cite{andoni2006near,andoni2009nearest}]\label{lem:LSH}
Let $P=\{p_1,p_2,\cdots,p_n\}\subset \mathbb{R}^d$.
Given two parameters $R>0$ and $C>1$, there is a hash family $\mathcal{H}$ such that $\forall p,q\in P$:
\begin{enumerate}
\item If $\|p-q\|_2\leq R$, $\Pr_{h\in \mathcal{H}}[h(p)=h(q)]\geq \mathcal{P}_1$ where $\mathcal{P}_1\geq 1/n^{1/C^2+o(1)}$.
\item If $\|p-q\|_2\geq c_u\cdot C\cdot R$, $\Pr_{h\in\mathcal{H}}[h(p)=h(q)]\leq \mathcal{P}_2$ where $\mathcal{P}_2 \leq 1/n^4$ and $c_u> 1$ is a universal constant.
\end{enumerate}
Furthermore, there is a fully scalable MPC algorithm
which computes $h(p)$ for every $p\in P$ in the MPC model using $O(1)$ rounds and $n^{1+o(1)}d$ total space.
\end{lemma}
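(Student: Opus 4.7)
The proof divides into two pieces: establishing the collision-probability bounds (items~1 and 2), which is classical, and showing the MPC implementation.

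For items~1 and~2, I would invoke the Andoni--Indyk-style LSH construction as a black box. Concretely, one draws a random Gaussian matrix $A\in\R^{k\times d}$ with $k=\Theta(\log n)$, together with $n^{o(1)}$ i.i.d.\ Gaussian ``anchor'' points in $\R^k$, and defines a base hash by first mapping $p\mapsto Ap$ and then mapping to the index of the nearest anchor (with unmapped points getting a fresh bucket). This base hash has exponent $\rho=1/C^2+o(1)$. Concatenating $t=O(\log n)$ independent copies drives the far collision probability down to $\mathcal{P}_2\leq 1/n^4$ while the near collision probability stays at $\mathcal{P}_1\geq \mathcal{P}_2^{\rho}=1/n^{1/C^2+o(1)}$. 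The constants $c_u$ and the $o(1)$ are exactly as in the cited references, and I would not reprove them.

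For the MPC implementation, the key observation is that the hash function has a short seed: the matrix $A$ and the $n^{o(1)}$ anchors together occupy $d\cdot n^{o(1)}$ words. The plan is: (i) let one designated machine draw the seed; (ii) broadcast the seed to every machine that holds points; (iii) locally, each machine evaluates $h(p)$ on each of its stored points, which costs $d\cdot n^{o(1)}$ time per point (one matrix-vector product plus nearest-anchor search). Step (iii) uses only the seed and local data and adds nothing to the round count. Step (i) is immediate.

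The main obstacle is step (ii): a single machine cannot send $d\cdot n^{o(1)}$ words to each of $\Theta(n/s)$ other machines in a single round because its outgoing traffic is capped at $s$. I would handle this by the standard fully scalable MPC broadcast primitive: replicate the seed in a constant-depth tree of fan-out $s/(d\cdot n^{o(1)})\geq n^{\Omega(\delta)}$, so that after $O(1/\delta)=O(1)$ rounds every machine holds a copy. The total communication is $O((n/s)\cdot d\cdot n^{o(1)})=n^{1+o(1)}$ words across all broadcast rounds, so the total space $n^{1+o(1)}\cdot d$ (accounting for storing the points themselves) is preserved, and the algorithm remains fully scalable for every constant $\delta\in(0,1)$. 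Putting the three steps together gives the claimed $O(1)$-round, $n^{1+o(1)}\cdot d$-total-space implementation.
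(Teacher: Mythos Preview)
Your high-level plan matches the paper's: draw a short LSH seed, broadcast it, and evaluate locally. The gap is in steps~(ii) and~(iii). You assert fan-out $s/(d\cdot n^{o(1)})\geq n^{\Omega(\delta)}$ and that each machine can evaluate $h(p)$ locally, but both statements require the seed (of size $d\cdot n^{o(1)}$) and an entire point (of size $d$) to fit in local memory $s=O(n^{\delta})$. Since the lemma must hold for a \emph{fully scalable} algorithm (arbitrary constant $\delta\in(0,1)$) and places no a~priori bound on $d$, this breaks whenever $d\geq n^{\delta}$: the seed does not fit on any machine, your claimed fan-out drops below $1$, and a single point cannot even be held locally. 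The paper closes this gap by first invoking the Johnson--Lindenstrauss step of Appendix~\ref{sec:dim_reduction_in_MPC} (itself an $O(1)$-round, $n^{1+o(1)}d$-space MPC routine implemented via sorting and distributed matrix--vector products) to reduce to $d=O(\log n)$; only after that is the seed genuinely $n^{o(1)}$ words and the broadcast-then-local-evaluation plan sound.

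A secondary remark: the paper's base hash is the shifted-grid ball-carving scheme of Andoni--Indyk in projected dimension $t=\log^{4/5}n$ (Algorithm~\ref{alg:l2LSH}), rather than the nearest-Gaussian-anchor variant you sketch; since you invoke the collision bounds as a black box this does not affect correctness, but the grid construction is what makes the seed size $U\cdot t + t\cdot d=n^{o(1)}\cdot d$ explicit.
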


For completeness, we show the construction of the LSH and its MPC implementation in Appendix~\ref{sec:LSH}.

We present two algorithms for constructing $C$-approximate $(R,r)$-near neighbor graph.
Algorithm~\ref{alg:graph_more_space} outputs a $C$-approximate $(R,r)$-near neighbor graph explicitly.
Algorithm~\ref{alg:graph_less_space} outputs a graph of which squared graph is a $C$-approximate $(R,r)$-near neighbor graph will uses less space.

\begin{algorithm}[h]
	\small
	\begin{algorithmic}[1]\caption{MPC $C$-Approximate $(R,r)$-Near Neighbor Graph Construction}\label{alg:graph_more_space}
    \STATE {\bfseries Input:} A point set $P=\{p_1,p_2,\cdots,p_n\}\subset\mathbb{R}^d$, parameters $R>0,r\geq 1,C>1$.
    \STATE Draw $s=\Theta\left(\frac{\log(n)}{\mathcal{P}_1}\right)$ independent LSH functions $h_1,h_2,\cdots,h_s$ with parameters $R,C$.\\
    {\hfill //See Lemma~\ref{lem:LSH} for the definition of $\mathcal{P}_1$}
    \STATE $\forall i\in [s],\forall p\in P,$ compute $h_i(p)$ in parallel.
    \STATE Initalize an empty graph $G=(P,E)$.
    \STATE $\forall i\in [s], \forall p\in P$, connect $p$ to $r$ arbitrary points $q$ in $G$ with $h_i(p)=h_i(q).$
    If there are less than $r$ points with $h_i(q)=h_i(p)$, connect $p$ to all such $q$ in $G$.
    \STATE Remove duplicated edges and output $G$.
	\end{algorithmic}
\end{algorithm}

\begin{lemma}\label{lem:MPC_graph_construction_more_space}
Given a point set $P=\{p_1,p_2,\cdots,p_n\}\subset\mathbb{R}^d$ and parameters $R>0,r\geq 1,C>1$, Algorithm~\ref{alg:graph_more_space} outputs an $O(C)$-approximate $(R,r)$-near neighbor graph $G=(P,E)$ of $P$ with probability at least $1-O(1/n)$.
The size of $G$ is at most $\wt{O}\left(n^{1+1/C^2+o(1)}\cdot r\right)$.
The algorithm can be implemented in the MPC model using $\wt{O}\left(n^{1+1/C^2+o(1)}\cdot (r+d)\right)$ total space and $O(1)$ rounds.
Furthermore, the algorithm is fully scalable.
\end{lemma}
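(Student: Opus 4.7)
The plan is to verify the three items of Definition~\ref{def:near_neighbor_graph} for the output graph $G$, count $|E|$, and then describe the MPC implementation of step~5. Item~1 ($V=P$) is immediate from the construction, so the work is Items~2 and~3, which both follow from a pair-wise union bound over the $s=\Theta(\log(n)/\mathcal{P}_1)=\wt{O}(n^{1/C^2+o(1)})$ independent LSH hashes.

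Call a pair $\{p,q\}\subseteq P$ \emph{far} if $\|p-q\|_2 > c_u\cdot C\cdot R$ and \emph{close} if $\|p-q\|_2\leq R$. For Item~2 I use the second LSH guarantee: for a fixed far pair and a fixed $h_i$, $\Pr[h_i(p)=h_i(q)]\leq \mathcal{P}_2\leq 1/n^4$, so a union bound over the $s$ hashes and the $\binom{n}{2}$ pairs gives total failure probability at most $\wt{O}(n^{1/C^2+o(1)-2})=O(1/n)$. On the good event, no far pair ever collides, so no far pair is an edge of $G$ and every edge has length at most $c_u\cdot C\cdot R=O(C\cdot R)$. For Item~3 I use the first LSH guarantee: for a fixed close pair, $\Pr[\forall\, i:\, h_i(p)\neq h_i(q)]\leq(1-\mathcal{P}_1)^s\leq 1/n^3$ by choosing the hidden constant in $s$ sufficiently large, and a union bound shows that every close pair collides under some $h_i$ with probability $1-O(1/n)$. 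The structural observation that ties this to Item~3 is: if $|\Gamma_G(p)|<r$ in the output, then for every $i$ the bucket of $p$ under $h_i$ contains fewer than $r$ other points --- otherwise step~5 would have added at least $r$ edges incident to $p$ from that single bucket --- and the ``else'' branch of step~5 therefore connects $p$ to \emph{every} $q$ colliding with it under $h_i$. Combined with the union bound, every close $q$ is forced into $\Gamma_G(p)$, which is exactly Item~3.

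For the size bound, step~5 contributes at most $r$ edges per pair $(i,p)$, giving $|E|\leq s\cdot n\cdot r=\wt{O}(n^{1+1/C^2+o(1)}\cdot r)$. For the MPC implementation, computing all $h_i(p)$ is $O(1)$ rounds and $n^{1+o(1)}\cdot d$ space by Lemma~\ref{lem:LSH}. Step~5 can be executed by sorting the triples $(i,h_i(p),p)$ so that each bucket $B_{i,v}=\{p:h_i(p)=v\}$ occupies a contiguous range of machines; within each bucket I designate its first $\min(|B_{i,v}|,r+1)$ entries as ``sources'' and add an edge between every member of the bucket and every source distinct from itself, which realizes a valid choice of the ``$r$ arbitrary'' neighbors in step~5 while producing at most $r|B_{i,v}|$ edges. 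A final sort deduplicates. The point that I expect to be the main implementation obstacle, and the reason full scalability requires care, is broadcasting the up to $r+1$ source ids to every member of a bucket whose size may exceed the per-machine memory $n^\delta$; this is handled by the standard fully scalable constant-round broadcast/replication primitive on sorted data, which keeps the total space at $\wt{O}(n^{1+1/C^2+o(1)}\cdot(r+d))$ and the round count at $O(1)$.
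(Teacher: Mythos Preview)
Your proof is correct and follows essentially the same approach as the paper: a union bound over far pairs using the $\mathcal{P}_2\le 1/n^4$ guarantee for Item~2, a union bound over close pairs using $(1-\mathcal{P}_1)^s$ (the paper phrases this as a Chernoff bound) for Item~3 together with the same structural observation about buckets of size below $r$, and sorting by hash value for the MPC implementation. Your write-up is in fact more explicit than the paper's about why Item~3 holds and about how step~5 is realized in a fully scalable manner; the only minor slip is that the space for computing all $h_i(p)$ should carry the factor $s$, giving $n^{1+1/C^2+o(1)}\cdot d$ rather than $n^{1+o(1)}\cdot d$, but your final total-space bound already reflects this correctly.
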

\begin{proof}
Let us first prove the correctness.
Consider $p,q\in P$ with $\|p-q\|_2\geq c_u\cdot C\cdot R$.
For a fixed $i\in [s]$, $\Pr[h_i(p)=h_i(q)]\leq 1/n^4$ by Lemma~\ref{lem:LSH}.
By taking union bound over all pairs of vertices $p,q\in P$ and all $i\in [s]$, with probability at least $1-1/n$, for any $p,q\in P$ with $\|p-q\|_2\geq c_u\cdot C\cdot R$, for all $i\in [s]$, we have $h_i(p)\not = h_i(q)$.
Thus, if an edge $(p,q)\in E$, we have $\|p-q\|_2\leq c_u\cdot C\cdot R=O(C)\cdot R$.
Now consider two points $p,q\in P$ with $\|p-q\|_2\leq R$.
By Lemma~\ref{lem:LSH} and Chernoff bound, with probability at least $1-1/n^{10}$, $\exists i\in [s]$ such that $h_i(p)=h_i(q)$.
By taking union bound over all $p,q\in P$ with $\|p-q\|_2\leq R$, with probability at least $1-1/n$, for every $p\in P$, either $\{q\in P\mid \|p-q\|_2\leq R\}\subseteq \Gamma_G(p)$ or $|\Gamma_G(p)|\geq r$.

Now consider the total space and the number of rounds.
Since we run $s$ copies of LSH (Lemma~\ref{lem:LSH}), the total space needed for running LSH is $\wt{O}(n^{1+1/C^2+o(1)}\cdot d)$.
For each $i\in [s]$, each vertex may connect to at most $r$ vertices.
The space needed for connecting edges is at most $\wt{O}(n^{1+1/C^2+o(1)}\cdot r)$.
Thus, the total space needed for the algorithm is $\wt{O}(n^{1+1/C^2+o(1)}\cdot (r+d))$.
Now consider the number of rounds.
We can handle all LSH functions in parallel.
According to Lemma~\ref{lem:LSH}, we can use $O(1)$ rounds to compute LSH values for all points.
For connecting edges, we need to sort points via their LSH values, make copies of some vertices and query indices in parallel.
All of these operations can be done simultaneously in $O(1)$ rounds~\cite{goodrich1999communication,goodrich2011sorting,andoni2018parallel}.
\end{proof}

\begin{algorithm}
	\small
	\begin{algorithmic}[1]\caption{MPC $C$-Approximate $(R,r)$-Near Neighbor Graph Construction with Less Space}\label{alg:graph_less_space}
    \STATE {\bfseries Input:} A point set $P=\{p_1,p_2,\cdots,p_n\}\subset\mathbb{R}^d$, parameters $R>0,r\geq 1,C>1$.
    \STATE Draw $s=\Theta\left(\frac{\log(n)}{\mathcal{P}_1}\right)$ independent LSH functions $h_1,h_2,\cdots,h_s$ with parameters $R,C$.\\
    {\hfill //See Lemma~\ref{lem:LSH} for the definition of $\mathcal{P}_1$}
    \STATE $\forall i\in [s],\forall p\in P,$ compute $h_i(p)$ in parallel.
    \STATE Initalize an empty graph $G=(P,E)$.
    \STATE $\forall i\in [s], \forall p\in P$, connect $p$ to $q\in P$ where $q$ is a point with the smallest label such that $h_i(p)=h_i(q)$.
    \STATE Remove duplicated edges and output $G$.
	\end{algorithmic}
\end{algorithm}

\begin{lemma}\label{lem:MPC_graph_construction_less_space}
Given a point set $P=\{p_1,p_2,\cdots,p_n\}\subset \mathbb{R}^d$ and parameters $R>0,r\geq 1,C>1$, Algorithm~\ref{alg:graph_less_space} outputs a graph $G=(P,E)$ such that $G^2$ is an $O(C)$-approximate $(R,r)$-near neighbor graph of $P$ with probability at least $1-O(1/n)$.
The size of $G$ is $\wt{O}(n^{1+1/C^2+o(1)})$.
The algorithm can be implemented in the MPC model using total space $\wt{O}(n^{1+1/C^2+o(1)}\cdot d)$ and $O(1)$ rounds.
Furthermore, the algorithm is fully scalable.
\end{lemma}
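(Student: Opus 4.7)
The plan is to follow the template of the proof of Lemma~\ref{lem:MPC_graph_construction_more_space}, but to exploit the fact that each LSH bucket induces a star (with center chosen as the smallest-labeled point in the bucket) in $G$, so that two vertices sharing any bucket are automatically at distance at most $2$ in $G$, i.e., adjacent in $G^2$. This lets us recover the near-neighbor guarantee in $G^2$ even though each point contributes only one edge per hash function.

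First I would argue the distance property. Fix any edge $(p,q) \in E$. By construction there exists $i \in [s]$ with $h_i(p)=h_i(q)$. Using Lemma~\ref{lem:LSH} with failure probability $1/n^4$ per pair and a union bound over the $\binom{n}{2}$ pairs and $s = \wt{O}(n^{1/C^2+o(1)})$ hash functions (still only $\poly(n)$ events), with probability at least $1-O(1/n)$ every such collision pair satisfies $\|p-q\|_2 \le c_u C R$. Hence every edge in $G^2$, being a concatenation of at most two edges in $G$, has length at most $2 c_u C R = O(C)\cdot R$, establishing Condition~2 of Definition~\ref{def:near_neighbor_graph} for $G^2$.

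Next I would argue the covering property. For $p,q$ with $\|p-q\|_2 \le R$, Lemma~\ref{lem:LSH} gives collision probability at least $\mathcal{P}_1$, so among $s = \Theta(\log(n)/\mathcal{P}_1)$ independent hash functions, a Chernoff bound gives the existence of some $i$ with $h_i(p)=h_i(q)$ with probability at least $1-1/n^{10}$; a union bound over all such close pairs yields, with probability at least $1-O(1/n)$, that every such pair collides under some $h_i$. Now the key observation about Algorithm~\ref{alg:graph_less_space}: if $h_i(p)=h_i(q)$ and $q^*$ is the smallest-labeled point in that bucket, then both $p$ and $q$ are connected to $q^*$ in $G$, so $\dist_G(p,q) \le 2$, i.e., $(p,q) \in E(G^2)$. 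Thus with high probability every $q$ within distance $R$ from $p$ lies in $\Gamma_{G^2}(p)$, which is strictly stronger than Condition~3 of Definition~\ref{def:near_neighbor_graph} (the disjunction is satisfied by its second branch), so $G^2$ is an $O(C)$-approximate $(R,r)$-near neighbor graph.

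Finally I would bound the size and MPC cost. Each point contributes at most one edge per hash function, so $|E| \le n \cdot s = \wt{O}(n^{1+1/C^2+o(1)})$. For the MPC implementation, computing all $h_i$ values in parallel costs $O(1)$ rounds and $n^{1+o(1)}\cdot d$ total space per hash function by Lemma~\ref{lem:LSH}, i.e., $\wt{O}(n^{1+1/C^2+o(1)}\cdot d)$ in total; identifying the smallest-label point in each bucket reduces to sorting the $(i, h_i(p), \text{label}(p))$ triples and taking bucket minima, which is $O(1)$ rounds and linear total space by~\cite{goodrich2011sorting,goodrich1999communication}; and broadcasting that minimum back to every point in the bucket is again an $O(1)$-round sort/join operation, fully scalable for any $\delta \in (0,1)$. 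The main subtlety (and where I would be most careful) is the bucket-broadcast step: buckets can be skewed in size, so I would either replicate the minimum via a sorted-array copy with $O(1)$ rounds of fan-out by constant factors, or ensure that each bucket's edges are written out by the consecutive machines holding the bucket, both standard MPC primitives. After removing duplicate edges (another sort), we obtain $G$ within the claimed bounds.
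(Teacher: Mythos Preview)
Your proposal is correct and essentially identical to the paper's own proof: the same union bound for the distance property, the same Chernoff-plus-star argument (if $h_i(p)=h_i(q)$ then both connect to the bucket's minimum-label vertex $w$, so $\dist_G(p,q)\le 2$) for the covering property, and the same sort-based MPC implementation and space accounting. The paper is slightly terser about the bucket-broadcast step, but the content is the same.
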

\begin{proof}
The proof is similar to the proof of Lemma~\ref{lem:MPC_graph_construction_more_space}.
Let us first prove the correctness.
Consider $p,q\in P$ with $\|p-q\|_2\geq c_u\cdot C\cdot R$.
For a fixed $i\in [s]$, $\Pr[h_i(p)=h_i(q)]\leq 1/n^4$ by Lemma~\ref{lem:LSH}.
By taking union bound over all pairs of vertices $p,q\in P$ and all $i\in [s]$, with probability at least $1-1/n$, for any $p,q\in P$ with $\|p-q\|_2\geq c_u\cdot C\cdot R$, for all $i\in[s]$, we have $h_i(p)\not=h_i(q)$.
Thus, if an edge $(p,q)\in E$, we have $\|p-q\|_2\leq c_u\cdot C\cdot R =O(C)\cdot R$.
By triangle inequality, if $p,q$ are connected in $G^2$, we have $\|p-q\|_2\leq 2\cdot c_u\cdot C\cdot R =O(C)\cdot R$.
Now consider two points $p,q\in P$ with $\|p-q\|_2\leq R$.
By Lemma~\ref{lem:LSH} and Chernoff bound, with probability at least $1-1/n^{10}$, $\exists i\in [s]$ such that $h_i(p)=h_i(q)$.
By taking union bound over all $p,q\in P$ with $\|p-q\|_2\leq R$, with probability at least $1-1/n$, for every pair $p,q\in P$ with $\|p-q\|_2$, there must be $i\in[s]$ and a point $w\in P$ such that $h_i(p)=h_i(q)=h_i(w)$ and $\dist_G(p,w),\dist_G(q,w)\leq 1$ which implies that $q\in \Gamma_{G^2}(p)$.
Thus, with probability at least $1-1/n$, $\forall p\in P, \{q\in P\mid \|p-q\|_2\leq R\}\subseteq \Gamma_{G^2}(p)$.

Now consider the total space and the number of rounds.
Since we run $s$ copies of LSH (Lemma~\ref{lem:LSH}), the total space needed for running LSH is $\wt{O}(n^{1+1/C^2+o(1)}\cdot d)$.
The space needed for connecting edges remains unchanged.
Thus, the total space needed for the algorithm is $\wt{O}(n^{1+1/C^2+o(1)}\cdot d)$.
Now consider the number of rounds.
We can handle LSH functions in parallel.
According to Lemma~\ref{lem:LSH}, we use $O(1)$ rounds to compute LSH values for all points.
For connecting edges, we need to sort points via their LSH values and query indices in parallel.
All of these operations can be done simultaneously in $O(1)$ rounds~\cite{goodrich1999communication,goodrich2011sorting,andoni2018parallel}.
\end{proof}

\subsection{MPC Subroutines for Handling the Power of Graph}\label{sec:MPC_power_graph_subroutine}

In this section, we introduce several useful subroutines in the MPC model to handle the $k$-th power of a graph.

We developed a subroutine called \emph{truncated neighborhood exploration}.
In particular, given an $n$-vertex $m$-edge graph $G=(V,E)$ a subset of vertices $S\subseteq V$, a power/hop parameter $k$ and a threshold parameter $J\in [n]$, the subroutine can output for each vertex $v\in V$ a set of vertices $L(v)$ satisfying $L(v)=S\cap \Gamma_{G^k}(v)$ if $|S\cap \Gamma_{G^k}(v)|\leq J$ and $L(v)\subseteq S\cap \Gamma_{G^k}(v),|L(V)|= J+1$ if $|S\cap \Gamma_{G^k}(v)|> J$.
The detailed description of the algorithm is shown in Algorithm~\ref{alg:truncated_neighborhood_explore}.

\begin{algorithm}[h]
	\small
	\begin{algorithmic}[1]\caption{Truncated Neighborhood Exploration}\label{alg:truncated_neighborhood_explore}
    \STATE {\bfseries Input:} A graph $G=(V,E)$, a subset of vertices $S\subseteq V$, a parameter $k\in\mathbb{Z}_{\geq 1}$ and a threshold $J\in[n]$
    \STATE For $v\in V$, initialize $L^{(0)}(v)=S\cap \{v\}$. 
    \FOR{$i=1\rightarrow k$}
        \STATE For each $v\in V$, add all vertices in $L^{(i-1)}(u)$ into $L^{(i)}(v)$ for every $u\in \Gamma_G(v)$.
        \STATE Remove the duplicated vertices in $L^{(i)}(v)$ for each $v\in V$.
        \STATE For each $v\in V$, if $|L^{(i)}(v)|>J$, only keep arbitrary $J+1$ vertices in $L^{(i)}(v)$.
    \ENDFOR
    \STATE Return $L^{(k)}(v)$ for each $v\in V$.
	\end{algorithmic}
\end{algorithm}

\begin{lemma}\label{lem:truncated_neighborhood_explore}
Algorithm~\ref{alg:truncated_neighborhood_explore} can be implemented in the MPC model using total space $O(m\cdot J)$ and $O(k)$ rounds.
The algorithm is fully scalable.
At the end of the algorithm, for each vertex $v$, we obtain a list $L^{(k)}(v)$ satisfying:
\begin{enumerate}
    \item $L^{(k)}(v)\subseteq S\cap \Gamma_{G^k}(v)$,
    \item if $|S\cap \Gamma_{G^k(v)}|>J$, $|L^{(k)}(v)| = J + 1$; otherwise $L^{(k)}(v)=S\cap \Gamma_{G^k}(v)$.
\end{enumerate}
\end{lemma}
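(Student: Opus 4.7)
The plan is to establish the two correctness properties by a simultaneous induction on $i$, then argue the MPC resource bounds separately using the size invariant that falls out of the induction.

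For correctness, I would prove by induction on $i\in\{0,1,\ldots,k\}$ the conjunction of (a) $L^{(i)}(v)\subseteq S\cap \Gamma_{G^i}(v)$ for every $v$, and (b) if $|S\cap \Gamma_{G^i}(v)|\leq J$ then $L^{(i)}(v)=S\cap \Gamma_{G^i}(v)$, and otherwise $|L^{(i)}(v)|=J+1$. The base case $i=0$ is immediate from $L^{(0)}(v)=S\cap\{v\}$ and $\Gamma_{G^0}(v)=\{v\}$. For the inductive step, fix $v$. Each $u\in \Gamma_G(v)$ satisfies $\Gamma_{G^{i-1}}(u)\subseteq \Gamma_{G^i}(v)$, so by the induction hypothesis $L^{(i-1)}(u)\subseteq S\cap \Gamma_{G^{i-1}}(u)\subseteq S\cap \Gamma_{G^i}(v)$, which after the union and truncation gives property (a). For property (b), the key case is $|S\cap\Gamma_{G^i}(v)|\le J$: then $|S\cap\Gamma_{G^{i-1}}(u)|\le J$ for every $u\in\Gamma_G(v)$, so by induction $L^{(i-1)}(u)=S\cap \Gamma_{G^{i-1}}(u)$ for all such $u$. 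Taking the union over $u\in\Gamma_G(v)$ yields exactly $S\cap \Gamma_{G^i}(v)$, whose size is $\le J$, so the truncation step is inactive and $L^{(i)}(v)=S\cap \Gamma_{G^i}(v)$. If instead $|S\cap\Gamma_{G^i}(v)|>J$, then by property (a) the pre-truncation list is a subset of $S\cap\Gamma_{G^i}(v)$, but it also contains $\bigcup_u L^{(i-1)}(u)$, and one checks (again using the inductive description of each $L^{(i-1)}(u)$) that this union has size $>J$; the truncation then leaves exactly $J+1$ elements.

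For the MPC implementation I would use the fact that every $L^{(i)}(v)$ has size at most $J+1$, so the round-$i$ data exchange consists, for each directed edge $(u,v)$, of shipping $L^{(i-1)}(u)$ (at most $J+1$ items) to $v$. The total communication per iteration is thus $\sum_{v}\sum_{u\in\Gamma_G(v)}|L^{(i-1)}(u)|\leq (J+1)\cdot 2m=O(mJ)$, giving the $O(mJ)$ total-space bound. Each iteration is implemented in $O(1)$ MPC rounds by the standard fully scalable primitives recalled at the start of Section~\ref{sect:mpc}: make one tagged copy of $L^{(i-1)}(u)$ for each neighbor (a sort/replicate operation), route these copies by the destination vertex (a sort), deduplicate at each vertex (a sort on the $(v,\text{vertex-id})$ key followed by a scan), and truncate to $J+1$ entries (a sort followed by a prefix count). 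Summing over the $k$ iterations yields the $O(k)$ round bound and the $O(mJ)$ total space, and since every step only invokes fully scalable primitives the overall algorithm is fully scalable.

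The main obstacle is confirming that the greedy truncation to $J+1$ entries in each intermediate round never loses a vertex that should ultimately appear in $L^{(k)}(v)$ when $|S\cap \Gamma_{G^k}(v)|\le J$. The induction above resolves this: whenever the final count at $v$ is bounded by $J$, the corresponding counts at every intermediate neighbor $u$ within the BFS cone of $v$ are also bounded by $J$, so no truncation ever fires along the paths that matter; the $+1$ slack is exactly what lets us detect overflow distinctly from the exact regime. Once this invariant is in hand, the resource bounds are a routine accounting against the per-list size cap $J+1$ and the standard $O(1)$-round cost of sort/replicate/aggregate in the MPC model.
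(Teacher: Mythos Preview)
Your proposal is correct and follows essentially the same argument as the paper: an induction on $i$ establishing the invariant $L^{(i)}(v)\subseteq S\cap\Gamma_{G^i}(v)$ together with the exact/overflow dichotomy (the paper splits your overflow case into the two subcases ``some neighbor already overflowed'' versus ``all neighbors are exact, so the union equals $S\cap\Gamma_{G^i}(v)$''), followed by the same $O(mJ)$-per-iteration accounting via the $J{+}1$ size cap and standard sorting/replication primitives.
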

\begin{proof}
Let us first analyze the property of $L^{(i)}(v)$ for $i\in [k]\cup \{0\}$.
We claim that if $|S\cap \Gamma_{G^i}(v)|>J$, $|L^{(i)}(v)| = J + 1$ and $L^{(i)}(v)\subseteq S\cap \Gamma_{G^{i}}(v)$; otherwise $L^{(i)}(v)=S\cap \Gamma_{G^i}(v)$.
The proof is by induction.
The base case is $i = 0$.
According to the initialization, we have $L^{(0)}(v) = S\cap \{v\} = S\cap \Gamma_{G^0}(v) $.
Suppose the claim holds for $i-1$.
There are two cases.
In the first case, $|S\cap \Gamma_{G^i}(v)|\leq J$, then we know that $\forall u\in \Gamma_G(v), |S\cap \Gamma_{G^{i-1}}(u)| \leq J$.
By induction hypothesis, we have $\forall u\in \Gamma_G(v), L^{(i-1)}(u) = S\cap \Gamma_{G^{i-1}}(u)$.
Thus, we will make $L^{(i)}(v) = \bigcup_{u \in \Gamma_G(v)} (\Gamma_{G^{i-1}}(u)\cap S) = \Gamma_{G^i}(v)\cap S$.
In the second case, $|S\cap \Gamma_{G^{i}}(v)|> J$.
If $\exists u\in \Gamma_G(v), |\Gamma_{G^{i-1}}(u)\cap S|>J$, then by induction hypothesis, $|L^{(i-1)}(u)|=J+1,L^{(i-1)}(u)\subseteq S\cap \Gamma_{G^{i-1}}(u)$, and thus $L^{(i)}(v)\subseteq S\cap \Gamma_{G^i}(v),|L^{(i)}(v)|=J+1$ after the truncation step.
Otherwise, by induction hypothesis, we know that $\bigcup_{u \in \Gamma_G(v)} L^{(i-1)}(u) = \Gamma_{G^i}(v)\cap S$ of which size is at least $J+1$.
Thus, $L^{(i)}(v)\subseteq S\cap \Gamma_{G^i}(v),|L^{(i)}(v)|=J+1$ after the truncation step.

Now let us analyze the number of MPC rounds and the total space needed.
Algorithm~\ref{alg:sparse_general_MIS} has $k$ iterations.
In each iteration $i$, we firstly need to compute $\bigcup_{u\in \Gamma_G(v)} L^{(i-1)}(u)$ for each $v\in V$.
To achieve this, we make a copy of $L^{(i-1)}(u)$ for each edge $\{u,v\}\in E$ and add all vertices in the copy into $L^{(i)}(v)$.
This operation can be done in $O(1)$ MPC rounds, and the total space needed is at most $m\cdot \max_{v\in V} |L^{(i-1)}(v)| = O(m\cdot J)$ (see e.g.,~\cite{andoni2018parallel}).
Next, both duplication removal and list truncation steps can be done by sorting which takes $O(1)$ MPC rounds and $O(m\cdot J)$ total space (see e.g.,~\cite{goodrich1999communication,goodrich2011sorting}).
Thus, the overall number of MPC rounds of Algorithm~\ref{alg:truncated_neighborhood_explore} is $O(k)$, and the total space needed is $O(m\cdot J)$.
\end{proof}

A direct application of truncated neighborhood exploration is to learn the closest vertex from a given subset of vertices if there exists one in $k$ hops.
Notice that this can be also achieved by 
parallel breadth-first search.
\begin{lemma}\label{lem:parallel_bfs}
Consider an $n$-vertex $m$-edge graph $G=(V,E)$, a subset of vertices $S$ and a parameter $k\geq 1$.
There is a fully scalable MPC algorithm using $O(m)$ total space and $O(k)$ rounds to determine whether $S\cap \Gamma_{G^k}(v)=\emptyset$ for each vertex $v\in V$.
Furthermore, for each vertex $v\in V$, if $S\cap\Gamma_{G^k}(v)\not=\emptyset$, the algorithm returns an arbitrary vertex $u\in S$ such that $\dist_G(v,u)$ is minimized.
\end{lemma}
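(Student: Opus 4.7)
The plan is to implement a parallel BFS from the source set $S$, maintaining at each vertex $v$ a label $(d(v), a(v))$ where $d(v) \in \{0,1,\ldots,k,\infty\}$ is the best known distance from $v$ to any vertex in $S$ (capped at $k{+}1$ meaning ``unknown''), and $a(v) \in V \cup \{\bot\}$ is a witness vertex in $S$ achieving this distance. Initialize $d(v)=0, a(v)=v$ for $v\in S$ and $d(v)=\infty, a(v)=\bot$ otherwise. This initialization is done in $O(1)$ rounds using $O(n+m)$ total space.

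Next I would run $k$ relaxation rounds. In round $i$, for every edge $\{u,v\}\in E$, propagate the current label of $u$ to $v$ and vice versa: concretely, for each edge create two messages carrying the sender's $(d,a)$ pair. Then at each vertex $v$, aggregate the received messages together with $v$'s current label, and set the new label to the one minimizing $d$ (breaking ties arbitrarily, say by smallest $a$), but with $d$ incremented by $1$ when the chosen pair came from a neighbor rather than $v$ itself. Each of the two substeps (edge-scatter and vertex-aggregate) is a standard MPC primitive solvable by sorting in $O(1)$ rounds using $O(m)$ total space, since each edge emits $O(1)$ messages of size $O(1)$ words (only a distance and a vertex ID, no lists). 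Thus round $i$ uses $O(m)$ total space and $O(1)$ MPC rounds, giving $O(k)$ rounds overall.

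Correctness is the textbook BFS invariant, proved by induction on $i$: after $i$ relaxation rounds, $d(v) = \min(i+1, \dist_G(v, S))$ (with the convention $d(v)=\infty$ if $\dist_G(v,S) > i$), and $a(v)$ is a vertex in $S$ achieving $\dist_G(v,a(v))=d(v)$ whenever $d(v)\leq k$. The base case $i=0$ holds by construction. For the inductive step, the relaxation at $v$ inspects $v$'s own label and every neighbor's label (via the scattered messages), so it sees the minimum over $\{d(v)\} \cup \{d(u)+1 : u \in \Gamma_G(v)\setminus\{v\}\}$, which equals $\min(i+1, \dist_G(v,S))$ by the inductive hypothesis applied at each $u$. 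After round $k$, any $v$ with $d(v)\leq k$ has a certified nearest witness $a(v)$, and any $v$ with $d(v) = \infty$ satisfies $S\cap \Gamma_{G^k}(v)=\emptyset$.

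The only real obstacle is maintaining the space bound of $O(m)$ (not $O(m\cdot k)$), but this is automatic here because the label at each vertex has constant size and we overwrite it between rounds, so the per-round communication is bounded by the edge count. Full scalability (arbitrary $\delta \in (0,1)$ memory per machine) inherits from the fully scalable sort primitive used by the scatter and aggregate steps, as discussed in the MPC primitives paragraph.
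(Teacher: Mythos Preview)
Your proposal is correct and takes essentially the same approach as the paper: both run a $k$-round BFS-style propagation where each vertex carries a constant-size label, yielding $O(m)$ total space and $O(k)$ rounds. The paper phrases it as invoking their general truncated-neighborhood-exploration subroutine (Algorithm~\ref{alg:truncated_neighborhood_explore}) with threshold $J=1$, which for $J=1$ degenerates to exactly the BFS you describe; your direct BFS with an explicit distance/witness pair is slightly cleaner for extracting the minimum-distance witness, but the underlying mechanics are the same.
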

\begin{proof}
We just run Algorithm~\ref{alg:truncated_neighborhood_explore} for $G,S,k$ and $J=1$.
According to Lemma~\ref{lem:truncated_neighborhood_explore}, the algorithm runs in $O(1)$ rounds and uses total space $O(m)$.
Furthermore, the algorithm is fully scalable.

According to Lemma~\ref{lem:truncated_neighborhood_explore} again, for each vertex $v\in V$, $S\cap \Gamma_{G^k}(v)=\emptyset$ if and only if $L^{(k)}(v)=\emptyset$.
According to the proof of Lemma~\ref{lem:truncated_neighborhood_explore}, if the closest vertex $u\in S$ has $\dist_G(v,u)=i$, then $L^{(i)}(v)\not=\empty$.
Furthermore, for any vertex $w\in L^{(i)}(v)$, it has $\dist_G(w,v)\leq i$.
Thus, if $S\cap \Gamma_{G^k}(v)\not=\emptyset$, we can find a vertex $u\in S$ such that $\dist_G(v,u)$ is minimized.
\end{proof}

The next lemma shows how to approximately find all vertices with large enough neighborhood size in the $k$-th power of a graph.
\begin{lemma}\label{lem:degree_estimation}
Consider an $n$-vertex $m$-edge graph $G=(V,E)$ and parameters $r,k\geq 1,\eta\in(0,0.5)$.
There is a fully scalable MPC algorithm using $\wt{O}(m/\eta^{2})$  total space and $O(k)$ rounds to output a set of vertices $V'\subseteq V$ such that the following properties hold with $1-1/n^{100}$ probability:
\begin{enumerate}
\item $\forall v\in V',|\Gamma_{G^k}(v)|\geq (1-\eta)\cdot r$.
\item $\forall v\in V$ with $|\Gamma_{G^k}(v)|\geq r$, $v\in V'$.
\end{enumerate}
\end{lemma}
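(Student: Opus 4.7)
The plan is to estimate $|\Gamma_{G^k}(v)|$ for each $v\in V$ by random sampling combined with the truncated neighborhood exploration of Algorithm~\ref{alg:truncated_neighborhood_explore}. Concretely, set $p = \min(1,\ c\log n/(\eta^2 r))$ for a sufficiently large absolute constant $c$, and independently include each vertex of $V$ into a set $S$ with probability $p$. Invoke Algorithm~\ref{alg:truncated_neighborhood_explore} on $(G,S,k,J)$ with truncation parameter $J = \lceil pr\rceil$, obtaining the lists $L^{(k)}(v)$ guaranteed by Lemma~\ref{lem:truncated_neighborhood_explore}. Finally output $V' = \{v\in V \mid |L^{(k)}(v)|\geq T\}$ with $T = \lceil (1-\eta/2)\cdot pr\rceil$.

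For correctness, fix $v\in V$ and let $X_v = |S\cap \Gamma_{G^k}(v)|$, a sum of $|\Gamma_{G^k}(v)|$ independent Bernoulli$(p)$ variables with mean $\mu_v = p\cdot |\Gamma_{G^k}(v)|$. If $|\Gamma_{G^k}(v)|\geq r$ then $\mu_v\geq pr$, and the multiplicative Chernoff lower tail (with relative deviation at least $\eta/2$) gives $\Pr[X_v < T]\leq \exp(-\Omega(\eta^2 pr))\leq 1/n^{103}$ once $c$ is large enough. Conversely, if $|\Gamma_{G^k}(v)|<(1-\eta)r$, then $\mu_v < (1-\eta)pr$, so Bernstein's inequality with additive deviation at least $(\eta/2)pr$ gives $\Pr[X_v\geq T]\leq 1/n^{103}$. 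Combining these two bounds with the truncation rule of Lemma~\ref{lem:truncated_neighborhood_explore} --- namely $|L^{(k)}(v)|=X_v$ when $X_v\leq J$ and $|L^{(k)}(v)|=J+1\geq T$ otherwise (using $J\geq T$ whenever $pr\geq 2/\eta$) --- shows that $v\in V'$ whenever $|\Gamma_{G^k}(v)|\geq r$ and $v\notin V'$ whenever $|\Gamma_{G^k}(v)|<(1-\eta)r$, each with probability $1-1/n^{103}$. A union bound over $v\in V$ gives the advertised success probability $1-1/n^{100}$.

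For the resource bound, the sampling and the final thresholding are each implementable in $O(1)$ MPC rounds using $O(n)$ space. By Lemma~\ref{lem:truncated_neighborhood_explore}, Algorithm~\ref{alg:truncated_neighborhood_explore} runs in $O(k)$ rounds and uses $O(m\cdot J)$ total space; since $J=\lceil pr\rceil\leq \lceil c\log n/\eta^2\rceil = O(\log n/\eta^2)$, the total space is $O(m\log n/\eta^2)=\wt{O}(m/\eta^2)$, and full scalability is inherited from Algorithm~\ref{alg:truncated_neighborhood_explore}.

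The main subtlety lies in the interaction between the truncation inside Algorithm~\ref{alg:truncated_neighborhood_explore} and the Chernoff argument: capping $|L^{(k)}(v)|$ at $J+1$ would seem to destroy information, but because we chose $J\geq T$, any truncated vertex is automatically placed in $V'$, and Chernoff certifies that this only occurs when $|\Gamma_{G^k}(v)|\geq (1-\eta)r$. The degenerate regime $p=1$, which arises when $r=O(\log n/\eta^2)$, simply reduces to taking $S=V$ with $J=r$ so that $|\Gamma_{G^k}(v)|$ is computed exactly up to $r$; one can then set $V'=\{v:|L^{(k)}(v)|\geq r\}$ deterministically, and the total space remains $O(mr)=\wt{O}(m/\eta^2)$.
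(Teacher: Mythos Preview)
Your proof is correct and follows essentially the same approach as the paper: sample each vertex into $S$ with probability $p=\min(1,\Theta(\log n)/(\eta^2 r))$, run the truncated neighborhood exploration (Algorithm~\ref{alg:truncated_neighborhood_explore}) with truncation parameter $\Theta(pr)$, and threshold membership in $V'$ at $(1-\Theta(\eta))pr$ via Chernoff bounds. The only cosmetic difference is that the paper uses a single value $J=(1-\eta/10)pr$ for both the truncation parameter and the membership threshold, whereas you separate these into $J=\lceil pr\rceil$ and $T=\lceil(1-\eta/2)pr\rceil$; your additional discussion of the truncation subtlety and the degenerate $p=1$ regime is helpful but not essential.
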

\begin{proof}
Let $C\geq 1$ be a sufficiently large constant.
Let $S\subseteq V$ be a random subset such that each vertex $v\in V$ is added into $S$ with probability $p = \min(1,C\cdot \log(n)/(r\cdot \eta^2))$.
Then we run Algorithm~\ref{alg:truncated_neighborhood_explore} for $G,S,k$ and $J = (1-\eta/10)\cdot p\cdot r$.
We add $v\in V$ into $V'$ if $|L^{(k)}(v)|\geq J$.
According to Lemma~\ref{lem:truncated_neighborhood_explore}, algorithm runs in $O(k)$ rounds and uses $\wt{O}(m/\eta^2)$ total space.
Furthermore, the algorithm is fully scalable.

Consider a vertex $v\in V$ with $|\Gamma_{G^k}(v)|\geq r$.
We have $\E[|\Gamma_{G^k}(v)\cap S|] \geq p\cdot r$.
By Chernoff bound, with probability at least $1-1/n^{101}$, we have $|\Gamma_{G^k}(v)\cap S|\geq J$.
Thus, $v$ will be added into $V'$.
Consider a vertex $v\in V$ with $|\Gamma_{G^k}(v)|< (1-\eta)\cdot r$.
We have $\E[|\Gamma_{G^k}(v)\cap S|] < (1-\eta)\cdot p\cdot r$.
By Chernoff bound, with probability at least $1-1/n^{101}$, $|\Gamma_{G^k}(v)\cap S|< J$.
By taking union bound over all vertices, with probability at least $1-1/n^{100}$, $V'$ satisfies the following properties:
\begin{enumerate}
\item $\forall v\in V',|\Gamma_{G^k}(v)|\geq (1-\eta)\cdot r$.
\item $\forall v\in V$ with $|\Gamma_{G^k}(v)|\geq r$, $v\in V'$.
\end{enumerate}
\end{proof}

\subsection{Maximal Independent Set of the Power of Graph}\label{sec:mis}

In this section, we show how to compute the maximal independent set of the power of a graph in the MPC model, without constructing the power of the graph explicitly.
In high level, we extend the MIS algorithmic framework proposed by~\cite{ghaffari2016improved,ghaffari2019sparsifying}.

\subsubsection{The Ideal Algorithm}
Let us first review the ideal offline MIS algorithm developed in~\cite{ghaffari2016improved}. 
This algorithm is for computing a maximal independent set for an explicit given graph.
The procedure is described in Algorithm~\ref{alg:ideal_offline}.
We present it here to help readers obtain more intuition of our final algorithm.

\begin{algorithm}
	\small
	\begin{algorithmic}[1]\caption{Ghaffari’s MIS Algorithm}\label{alg:ideal_offline}
    \STATE {\bfseries Input:} A graph $G=(V,E)$.
    \STATE For each vertex $v\in V$, set $p_0(v) = 1/2$.
    \FOR{iteration $t = 1,2,\cdots$}
        \STATE For each vertex $v\in V$, set $p_t(v)$ as the following:
            \begin{align*}
                p_t(v) = \left\{\begin{array}{ll} 
                p_{t-1}(v) / 2, & \text{if $\tau_{t-1}(v) = \sum_{u\in \Gamma_G(v)} p_{t-1}(u) \geq 2$}, \\
                \min(1/2,2\cdot p_{t-1}(v)), & \text{otherwise}.
                \end{array} \right.
            \end{align*}
        \STATE Mark each vertex $v\in V$ with probability $p_{t}(v)$. 
        \STATE For each vertex $v\in V$, if $v$ is the only vertex marked in $\Gamma_G(v)$, add $v$ to the maximal independent set and remove all vertices in $\Gamma_G(v)$ from the graph $G$ (update vertices $V$ and edges $E$).
    \ENDFOR
	\end{algorithmic}
\end{algorithm}

The intuition of the algorithm is as the following. 
Informally speaking, in each iteration, $p_t(v)$ is adjusted to create a negative feedback loop such that the following property holds for each vertex $v$: there are many iterations $t$ that
\begin{itemize}
    \item either $p_t(v)=\Omega(1)$ and $\tau_t(v)=O(1)$
    \item or $\tau_t(v) = \Omega(1)$ and a constant fraction of it is contributed by vertices $u\in \Gamma_G(v)$ for which $\tau_t(u)=O(1)$.
\end{itemize}
In a such iteration, it is easy to see that $v$ is removed with at least a constant probability.
Thus, if we run $O(\log n)$ iterations of the algorithm, we obtain a maximal independent set with high probability.
Suppose $\Delta$ is the maximum degree of $G$.
If we run $O(\log \Delta)$ iterations of the algorithm, then we obtain some nearly maximal independent set.
Furthermore, the size of every connected component remaining is small.
\cite{ghaffari2016improved,ghaffari2019sparsifying} show that it is easier to add more vertices to this nearly maximal independent set to finally obtain a maximal independent set.

To make it applicable for $r$-gather and its variants, we need to solve a more general problem: given an input graph $G=(V,E)$, a parameter $k\in\mathbb{Z}_{\geq 1}$ and a subset of vertices $V'\subseteq V$, the goal is to compute a maximal independent set of $G^k[V']$, the subgraph of the $k$-th power of the graph $G$ induced by $V'$.
A direct extension to compute a maximal independent set for a power of the graph is shown in Algorithm~\ref{alg:natural_ideal_extension}.
In the next few sections, we will modify the algorithm and discuss how to implement the modified algorithm in the MPC model.

\begin{algorithm}
	\small
	\begin{algorithmic}[1]\caption{Extended Ghaffari’s MIS Algorithm for the subgraph of the $k$-th Power Graph}\label{alg:natural_ideal_extension}
    \STATE {\bfseries Input:} A graph $G=(V,E)$, a power parameter $k\in\mathbb{Z}_{\geq 1}$, a subset $V'\subseteq V$.
    \STATE For each vertex $v\in V$, set $p_0(v) = 1/2$.
    \FOR{iteration $t = 1,2,\cdots$}
        \STATE For each vertex $v\in V'$, set $p_t(v)$ as the following:
            \begin{align*}
                p_t(v) = \left\{\begin{array}{ll} 
                p_{t-1}(v) / 2, & \text{if $\tau_{t-1}(v) = \sum_{u\in V'\cap \Gamma_{G^k}(v)} p_{t-1}(u) \geq 2$}, \\
                \min(1/2,2\cdot p_{t-1}(v)), & \text{otherwise}.
                \end{array} \right.
            \end{align*}
        \STATE Mark each vertex $v\in V'$ with probability $p_{t}(v)$. 
        \STATE For each vertex $v\in V'$, if $v$ is the only vertex marked in $\Gamma_{G^k}(v)$, add $v$ to the maximal independent set and remove all vertices in $\Gamma_{G^k}(v)\cap V'$ from $V'$.
    \ENDFOR
	\end{algorithmic}
\end{algorithm}

\subsubsection{Sparsified Ghaffari's MIS Algorithm for the Power of the Graph}
To make Algorithm~\ref{alg:ideal_offline} implementable in the MPC model, \cite{ghaffari2019sparsifying} proposed a sparsified variant of the algorithm.
We adapt their sparsification idea for sparsifying Algorithm~\ref{alg:natural_ideal_extension}.

\begin{algorithm}[h]
	\small
	\begin{algorithmic}[1]\caption{Sparsified Generalized MIS Algorithm}\label{alg:sparse_general_MIS}
    \STATE {\bfseries Input:} A graph $G=(V,E)$, a power parameter $k\in\mathbb{Z}_{\geq 1}$, a subset $V'\subseteq V$.
    \STATE For each vertex $v\in V$, set $p_0(v) = 1/2$.
    \STATE 
    Let $n=|V|$, $m=|E|$. 
    \STATE Let $R = \wt{c}\cdot \sqrt{\min(\delta,\gamma)}\cdot \sqrt{\log n}$.\\
    {\hfill //$R$ controls the number of iterations in a phase and $\wt{c}$ is a sufficiently small constnat}\\
    {\hfill //The total space in the MPC model is $(m+n)^{1+\gamma}\cdot \poly(\log n)$ where each machine has $O(n^{\delta})$ memory}
    \FOR{phase $s = 0,1,\cdots, $}
        \FOR{iteration $ i = 1,2,\cdots, R$ of phase $s$}
            \FOR{each vertex $v\in V'$ in parallel}
            \STATE Set $t = s\cdot R + i$ and let $r= C\cdot \log n$ where $C$ is a sufficiently large constant.
            \STATE Run $r$ copies of sampling where in each copy each $v\in V'$ is sampled with probability $p_{t-1}(v)$.
            \STATE For $j\in[r]$, let $b^j(v)$ be an indicator variable that if $v$ is sampled in the $j$-th copy of the sampling procedure $b^j(v)=1$ otherwise $b^j(v) = 0$.
            \STATE For each $j\in [r]$, set
            \begin{align*}
            \hat{\tau}^j(v) = \sum_{u \in \Gamma_{G^k}(v)\cap V'} b^j(u).
            \end{align*}
            \STATE Set $\hat{\tau}_{t-1}(v)$ as the median of $\hat{\tau}^1(v),\hat{\tau}^2(v),\cdots,\hat{\tau}^r(v)$. \\
            {\hfill //$\hat{\tau}_{t-1}(v)$ is an estimation of $\tau_{t-1}(v)=\sum_{u \in \Gamma_{G^k(v)\cap V'}(v)} p_{t-1}(u)$}
            \STATE If $i=1$ and $\sum_{j=1}^r \hat{\tau}^j(v)\geq 100\cdot 2^{4R}\cdot r$, then stall $v$ for this phase.
            \STATE Set
             \begin{align*}
                p_t(v) = \left\{\begin{array}{ll} 
                p_{t-1}(v) / 2, & \text{if $\hat{\tau}_{t-1}(v) \geq 2$ or $v$ is stalling}, \\
                \min(1/2,2\cdot p_{t-1}(v)), & \text{otherwise}.
                \end{array} \right.
            \end{align*}
            \STATE If $v$ is not stalling, mark $v$ with probability $p_t(v)$. \label{sta:mark_step}
            \STATE If $v$ is the only marked vertex in $\Gamma_{G^k}(v)\cap V'$, add $v$ into the independent set.
            \STATE If $v$ is not stalling and a vertex $u\in \Gamma_{G^k}(v)\cap V'$ joined the independent set, remove $v$ from $V'$.
            \ENDFOR
        \ENDFOR
        \STATE For each $v\in V'$ which is stalling in the phase $s$, if a vertex $u\in \Gamma_{G^k}(v)\cap V'$ joined the independent set, remove $v$ from $V'$.
    \ENDFOR
	\end{algorithmic}
\end{algorithm}

The precise description of our algorithm is shown in Algorithm~\ref{alg:sparse_general_MIS}.
The guarantees of Algorithm~\ref{alg:sparse_general_MIS} are stated in the following theorem which is an analog of Theorem 3.1 of~\cite{ghaffari2019sparsifying}.
In Appendix~\ref{sec:prior_mis}, we give the detailed proof for completeness.
\begin{theorem}\label{thm:correctness_MIS}
Let $\delta\in (0,1)$ be an arbitrary constant.
Let $\Delta_k\leq n$ be an upper bound of the maximum degree of $G^k$, i.e., $\Delta_k\geq \max_{v\in V} |\Gamma_{G^k}(v)|$.
If we run $T=c\cdot \log\Delta_k$ iterations of Algorithm~\ref{alg:sparse_general_MIS} for a sufficiently large constant $c$ depending on $\delta$, each vertex $v\in V'$ is removed from $V'$ with probability at least $1 - 1/\Delta_k^{10}$, and this guarantee only depends on the randomness of vertices in $\Gamma_{G^{2k}}(v)\cap V'$.
Let $B\subseteq V'$ be the set of vertices that are not removed after $T$ iterations.
The following happens with probability at least $1-1/n^{5}$.
\end{theorem}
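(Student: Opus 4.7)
The plan is to adapt the analysis of Ghaffari~\cite{ghaffari2016improved,ghaffari2019sparsifying} to the sparsified variant operating on the induced subgraph $G^k[V']$. I would establish three claims: (a) the sampled statistic $\hat\tau_{t-1}(v)$ is an accurate surrogate for $\tau_{t-1}(v)=\sum_{u\in\Gamma_{G^k}(v)\cap V'}p_{t-1}(u)$ at every vertex and every iteration; (b) every non-stalled vertex is removed from $V'$ within $T=c\log\Delta_k$ iterations with probability at least $1-1/\Delta_k^{10}$, and this guarantee depends only on the randomness of vertices in $\Gamma_{G^{2k}}(v)\cap V'$; and (c) the global event that (a) holds for all $v$ and all $t$ simultaneously, and that the stalling mechanism behaves as intended, occurs with probability at least $1-1/n^{5}$.

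For (a) I would apply Chernoff to each of the $r=C\log n$ independent sample copies: each $\hat\tau^j(v)$ is a sum of independent Bernoullis with mean $\tau_{t-1}(v)$, so when $\tau_{t-1}(v)\geq 1$, each copy lies in $[\tau_{t-1}(v)/2,\,2\tau_{t-1}(v)]$ with constant probability, and the median of $r$ copies lands in this interval except with probability $1/n^{\Omega(C)}$; when $\tau_{t-1}(v)<1$ a symmetric argument controls $\hat\tau_{t-1}(v)\leq 2$ with the same error. The stalling rule is exactly what is needed when $\tau_{t-1}(v)$ is too large for $r$ samples to concentrate reliably: I would verify that if $v$ stalls at the start of a phase then the feedback rule drives $p_t(v)$ down by the full factor $2^{-R}$ over the $R$ iterations of that phase, so a stalled vertex can be treated as dormant without affecting the rest of the argument, while on the other hand the condition $\sum_j \hat\tau^j(v)\geq 100\cdot 2^{4R}\cdot r$ is only met when the true $\tau_{s\cdot R}(v)$ is at least $2^{3R}$, which limits how many vertices can stall.

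For (b) I would reuse Ghaffari's notion of \emph{golden} iterations for $v$: either a \emph{low-degree golden} iteration in which $\tau_t(v)<2$ and $p_t(v)\geq 1/4$, so that $v$ is marked and remains unique among the marks in $\Gamma_{G^k}(v)\cap V'$ with constant probability; or a \emph{high-degree golden} iteration in which $\tau_t(v)\geq 2$ and at least half of $\tau_t(v)$ is contributed by neighbors $u$ with $p_t(u)\leq 1/4$, so that some such lightly-loaded neighbor is marked, isolated in its own $\Gamma_{G^k}(u)\cap V'$, and joins the independent set with constant probability, causing $v$'s removal. Tracking the oscillation of $p_t(v)$ induced by the feedback rule, combined with (a) to certify that the observed $\hat\tau$ drives the same oscillation as the true $\tau$ would, one shows that at least a constant fraction of $T$ iterations are golden for $v$. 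Hence $v$ survives all $T$ iterations with probability at most $(1-\Omega(1))^{\Omega(\log\Delta_k)}\leq\Delta_k^{-10}$.

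For (c) the final union bound combines the concentration event in (a), taken over all $n$ vertices and all $T=O(\log n)$ iterations, which contributes at most $nT/n^{\Omega(C)}\leq 1/n^5$ for $C$ large. The hard part will be the local-dependence bookkeeping in step (b): the decision of whether a sampled neighbor $u\in\Gamma_{G^k}(v)\cap V'$ is still active at iteration $t$ depends on a separate neighborhood $\Gamma_{G^k}(u)$, and all the $p_t(u)$ values depend on histories of samples on $\Gamma_{G^k}(u)$. Closing this chain within $\Gamma_{G^{2k}}(v)\cap V'$ requires an induction on $t$ showing that $(p_t(u),\mathbf{1}[u\in V'])_{u\in\Gamma_{G^k}(v)\cap V'}$ is measurable with respect to the random bits attached only to $\Gamma_{G^{2k}}(v)\cap V'$, which is precisely the reason the ``$2k$-hop'' ball, rather than the $k$-hop ball, appears in the guarantee.
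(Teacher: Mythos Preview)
Your plan for parts (a) and (b) tracks the paper's argument reasonably closely: the paper too establishes concentration of $\hat\tau_t(v)$ (Lemma~\ref{lem:median_estimation_effective_degree} and Lemma~\ref{lem:sum_esimtaiton_effective_degree}), defines golden rounds, shows there are $\Omega(T)$ of them under a local concentration event $\mathcal{E}(v)$ (Lemma~\ref{lem:num_golden_rounds}), and concludes that each vertex survives with probability at most $\Delta_k^{-10}$. However, two substantive problems remain.

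\textbf{You have not addressed the actual conclusions of the theorem.} The sentence ``The following happens with probability at least $1-1/n^5$'' refers to the two enumerated items that sit immediately after the theorem environment in the paper: (i) every connected component of $G^k[B]$ has size $O(\Delta_k^4\log_{\Delta_k}n)$ and diameter $O(\log_{\Delta_k}n)$, and (ii) if $\Delta_k\ge n^{\delta/100}$ then $B=\emptyset$. Your item (c) only says the concentration event holds globally, which is an ingredient but not the conclusion. The paper's proof of (i) is a shattering argument that you are missing entirely: because the survival bound for $v$ depends only on randomness in $\Gamma_{G^{2k}}(v)$, vertices at mutual $G$-distance larger than $4k$ survive ``independently'', so any fixed set $U$ of such vertices satisfies $\Pr[U\subseteq B]\le\Delta_k^{-14|U|}$. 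A union bound over all embedded $|U|$-vertex trees in $G^{5k}$ (there are at most $4^{|U|}\cdot n\cdot\Delta_k^{5(|U|-1)}$ of them) then rules out any such $U$ of size $\ge\log_{\Delta_k}n$ inside a single component of $G^{5k}[B]$. Finally one argues that a component of $G^k[B]$ of size $\ge\Delta_k^4\log_{\Delta_k}n$ or diameter $\ge 5\log_{\Delta_k}n$ would allow greedily extracting such a $U$. Item (ii) is easier: when $\Delta_k\ge n^{\delta/100}$ one can take $T=c\log\Delta_k\ge c'\log n$ and a direct union bound over all $n$ vertices gives $B=\emptyset$.

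\textbf{Your proposed locality argument is incorrect.} You suggest an induction showing that $(p_t(u),\mathbf{1}[u\in V'])_{u\in\Gamma_{G^k}(v)}$ is measurable with respect to the random bits in $\Gamma_{G^{2k}}(v)$. This is false: whether $u\in\Gamma_{G^k}(v)$ is still in $V'$ at time $t$ depends on whether some $w\in\Gamma_{G^k}(u)$ joined the independent set, which in turn depends on marks in $\Gamma_{G^k}(w)\subseteq\Gamma_{G^{3k}}(v)$, and this chain does not terminate at radius $2k$. The paper's route is different and is what makes the independence in the shattering step work: one shows that \emph{for every realization of the random bits} in which $\mathcal{E}(v)$ holds, regardless of the global trajectory of $p_t$-values and removals, there are at least $0.05T$ golden rounds for $v$; and in each golden round, the conditional probability that $v$ is removed---over the \emph{fresh} marking bits of $\Gamma_{G^{2k}}(v)$ at that iteration alone---is at least a constant. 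This is the form of locality actually needed for Claim~\ref{cla:subset_exists}.
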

\begin{enumerate}
    \item The size of each connected component of $G^k[B]$ is at most $O(\log_{\Delta_k} n\cdot \Delta_k^4)$ and the diameter of each connected component of $G^k[B]$ is at most $O(\log_{\Delta_k} n)$.
    \item If $\Delta_k\geq n^{\delta/100}$, the set $B=\emptyset$, i.e., a maximal independent set of $G^k[V']$ is obtained.
\end{enumerate}

\subsubsection{Simulating a Phase of the Specified Algorithm on a Sparse Graph}
In this section, we show how to build a sparse graph for a phase of Algorithm~\ref{alg:sparse_general_MIS} such that we can simulate a phase of Algorithm~\ref{alg:sparse_general_MIS} by only looking at the constructed sparse graph.

Similar to~\cite{ghaffari2019sparsifying}, we fix the randomness of all vertices at the beginning of Algorithm~\ref{alg:sparse_general_MIS}. 
To be more precise, we draw $O(T^2\cdot r)$ random bits for each vertex $v\in V'$ at the beginning of the algorithm, where $T$ is the total number of iterations that we want to run Algorithm~\ref{alg:sparse_general_MIS} and $r$ is the number of copies of sampling procedure in each iteration (see Algorithm~\ref{alg:sparse_general_MIS}).
For each iteration $t\in [T]$, we need to use $O(T\cdot r)$ random bits for each vertex $v\in V'$:
Let $\bar{\alpha}_t(v)=(\alpha_{t-1}^1(v),\alpha_{t-1}^2(v),\cdots,\alpha_{t-1}^r(v),\alpha'_t(v))$ denote $r+1$ uniform random numbers from $[0,1]$.
Notice that if we know $p_t(v)$ and $v$ is not stalling in the iteration $t$, then we mark $v$ if $\alpha'_t(v)< p_t(v)$ in line~\ref{sta:mark_step} of Algorithm~\ref{alg:sparse_general_MIS} and do not mark $v$ otherwise.
Since $p_t(v)$ is always a power of $2$ and $p_t(v)$ is at least $1/2^T$, we only need $O(T)$ bits for representing $\alpha'_t(v)$.
Similarly, we can check whether $\alpha_{t-1}^j(v)<p_{t-1}(v)$ to determine whether $v$ is sampled in the $j$-th copy of the sampling procedure in the iteration $t$.
For each $\alpha_{t-1}^j(v),j\in[r]$, we also only need $O(T)$ bits to represent it.
Similar to the observation made by~\cite{ghaffari2019sparsifying}, once $\forall t\in[T], v\in V'$, $\bar{\alpha}_t(v)$ are fixed, Algorithm~\ref{alg:sparse_general_MIS} is fully deterministic.

Now we focus on a phase where the starting iteration of the phase is $t$ and the end iteration of the phase is $t'$.
We use $[t,t']$ to denote the phase that we focused on.
Let us introduce the following definitions.
\begin{itemize}
    \item A vertex $v\in V'$ is \emph{relevant} if $v$ satisfies at least one of the following conditions: (1) $\alpha_{i-1}^j(v)<p_{t-1}(v)\cdot 2^R$ for some $i:t\leq i\leq t'$ and $j\in [r]$, (2) $\alpha'_i(v)<p_{t-1}(v)\cdot 2^{R+1}$ for some $i:t\leq i\leq t'$.
    \item A vertex $v\in V'$ is \emph{heavy} if there are too many relevant neighbor vertices of $v$ in the graph $G^k$.
    Formally, $v$ is heavy if $|\{u\in \Gamma_{G^k}(v)\cap V'\mid u\text{ is relevant}\}|\geq 2000\cdot 2^{5R}\cdot R\cdot r$.
    Otherwise, $v$ is \emph{light}.
\end{itemize}

Notice that if $v$ is not relevant, it will not be marked nor sampled in the phase $[t,t']$.
Now let us construct the graph $H_{t,t'}$ as the following.
The vertices of $H_{t,t'}$ are all relevant vertices.
For each light relevant vertex $v$, we connect it to $u$ for every $u\in \Gamma_{G^k}(v)\cap V'$ which is relevant.

\begin{fact}\label{fac:degree_light}
For each light relevant vertex $v$, $|\Gamma_{H_{t,t'}}(v)|\leq 2000\cdot 2^{5R}\cdot R\cdot r$.
\end{fact}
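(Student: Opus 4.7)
The plan is that this fact follows almost immediately from the definitions, so only a short argument is needed. I will unpack the edge set of $H_{t,t'}$ incident to a light relevant vertex $v$ and show that every such edge corresponds to a relevant neighbor of $v$ in $G^k \cap V'$, at which point the light/heavy threshold finishes the bound.

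First, I will observe that an edge $\{v,u\}$ is inserted into $H_{t,t'}$ only by the rule ``for each light relevant vertex, connect it to each relevant neighbor in $G^k \cap V'$''. Thus if $\{v,u\}$ is an edge of $H_{t,t'}$, then either $v$ is light relevant and $u \in \Gamma_{G^k}(v) \cap V'$ is relevant, or $u$ is light relevant and $v \in \Gamma_{G^k}(u) \cap V'$ is relevant. In the second case, symmetry of the $G^k$ adjacency relation gives $u \in \Gamma_{G^k}(v) \cap V'$ as well, and $u$ being light relevant in particular means $u$ is relevant. So, in either case, any neighbor $u$ of $v$ in $H_{t,t'}$ is a relevant member of $\Gamma_{G^k}(v) \cap V'$.

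Consequently,
\begin{align*}
\Gamma_{H_{t,t'}}(v) \subseteq \{u \in \Gamma_{G^k}(v) \cap V' \mid u \text{ is relevant}\}.
\end{align*}
Since $v$ is assumed light, the defining inequality of ``light'' gives
\begin{align*}
|\{u \in \Gamma_{G^k}(v) \cap V' \mid u \text{ is relevant}\}| < 2000 \cdot 2^{5R}\cdot R\cdot r,
\end{align*}
and combining this with the inclusion above yields the claimed bound $|\Gamma_{H_{t,t'}}(v)| \leq 2000 \cdot 2^{5R}\cdot R\cdot r$.

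There is no real obstacle here: the only tiny subtlety to be explicit about is that edges into $v$ can be inserted via a \emph{different} light relevant endpoint $u$, and one must check that this still forces $u$ to be a relevant $G^k$-neighbor of $v$ in $V'$; the symmetry of $G^k$ and the fact that ``light relevant'' implies ``relevant'' handle this in one line. Everything else is a direct restatement of the definitions from the preceding paragraph.
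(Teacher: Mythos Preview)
Your proof is correct and matches the paper's approach exactly; the paper's own proof is the single sentence ``Follows directly from the definition of the light vertex and the construction of the edges of $H_{t,t'}$,'' and you have simply spelled out what that sentence means. The case split on which endpoint triggered the edge insertion is a harmless bit of extra care, since for a light relevant $v$ the construction already inserts all edges to relevant $G^k$-neighbors in $V'$, so no additional edges can appear from the other side.
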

\begin{proof}
Follows directly from the definition of the light vertex and the construction of the edges of $H_{t,t'}$.
\end{proof}

\begin{fact}\label{fac:heavy_light_neighbors}
For each heavy relevant vertex $v$, every $u\in \Gamma_{H_{t,t'}}(v)\setminus \{v\}$ is a light relevant vertex.
\end{fact}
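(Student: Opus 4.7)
The plan is to read off the claim directly from the construction of $H_{t,t'}$. Recall that edges of $H_{t,t'}$ are introduced only inside the loop that iterates over \emph{light} relevant vertices: for each light relevant $v$, we add an edge to every relevant $u \in \Gamma_{G^k}(v) \cap V'$. In particular, every edge of $H_{t,t'}$ has at least one endpoint that is a light relevant vertex.

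So the argument is a simple contrapositive. Fix a heavy relevant vertex $v$ and let $u \in \Gamma_{H_{t,t'}}(v) \setminus \{v\}$. Since $(v,u)$ is an edge of $H_{t,t'}$, it was produced during the construction step for some light relevant vertex $w$, with the other endpoint in $\Gamma_{G^k}(w) \cap V'$. The two cases are $w=v$ and $w=u$. The first is impossible because $v$ is heavy, not light. Hence $w=u$, and therefore $u$ is a light relevant vertex. Also, $u$ being a vertex of $H_{t,t'}$ at all requires $u$ to be relevant, which is consistent. This gives the statement.

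I do not foresee any real obstacle: the fact is essentially a bookkeeping consequence of how edges were added, and no probabilistic or structural estimates are needed. The only thing worth flagging is that the construction, as stated, only adds edges from light relevant vertices; had the construction also added edges between two heavy relevant vertices, the claim would fail. So the proof should explicitly point out the asymmetry in the construction (edges are inserted when iterating over light relevant vertices only) to make clear why both endpoints being heavy is ruled out.
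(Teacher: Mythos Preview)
Your argument is correct and is exactly the approach the paper takes; the paper's own proof is the one-liner ``Follows directly from the construction of the edges of $H_{t,t'}$,'' and you have simply unpacked that line by noting that edges are inserted only when iterating over light relevant vertices, so any edge incident to a heavy vertex must have its other endpoint light.
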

\begin{proof}
Follows directly from the construction of the edges of $H_{t,t'}$.
\end{proof}

We further construct $H'_{t,t'}$ from $H_{t,t'}$:
For each relevant heavy vertex $v$, we split $v$ into $|\Gamma_{H_{t,t'}}(v)|-1$ virtual copies, where each copy connects to one $u\in \Gamma_{H_{t,t'}}(v)\setminus\{v\}$.
Due to Fact~\ref{fac:heavy_light_neighbors}, such $u$ must be a relevant light vertex and thus is not split.

\begin{lemma}\label{lem:degree_Htt}
The maximum degree of $H'_{t,t'}$ is at most $2000\cdot 2^{5R}\cdot R\cdot r$.
\end{lemma}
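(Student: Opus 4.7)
The plan is to prove the bound by a straightforward case analysis on the vertex type in $H'_{t,t'}$. Every vertex of $H'_{t,t'}$ is either (a) a light relevant vertex (which was not split) inherited from $H_{t,t'}$, or (b) a virtual copy created by splitting some heavy relevant vertex. I will bound the degree in each case separately and take the maximum.

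For case (b), the construction attaches each virtual copy of a heavy relevant vertex $v$ to exactly one element $u \in \Gamma_{H_{t,t'}}(v) \setminus \{v\}$. So a virtual copy has degree $1$ (or at most $2$ if one counts the vertex itself under the paper's $\Gamma$-convention), which is vastly smaller than $2000 \cdot 2^{5R}\cdot R\cdot r$. Case (b) is thus immediate.

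For case (a), I first invoke Fact~\ref{fac:degree_light} to obtain $|\Gamma_{H_{t,t'}}(u)| \leq 2000\cdot 2^{5R}\cdot R\cdot r$ for every light relevant vertex $u$. I then argue that the transformation $H_{t,t'} \to H'_{t,t'}$ does not increase $u$'s degree. Consider any edge $\{u,w\}$ incident to $u$ in $H_{t,t'}$. If $w$ is light, the edge is preserved verbatim. If $w$ is heavy, then by Fact~\ref{fac:heavy_light_neighbors} every neighbor of $w$ in $H_{t,t'}$ (other than $w$ itself) is light, so the split of $w$ produces one virtual copy per element of $\Gamma_{H_{t,t'}}(w)\setminus\{w\}$. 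In particular, $u$ is matched to \emph{exactly one} such virtual copy of $w$, so the single original edge $\{u,w\}$ becomes a single edge in $H'_{t,t'}$. Hence $|\Gamma_{H'_{t,t'}}(u)|=|\Gamma_{H_{t,t'}}(u)| \leq 2000\cdot 2^{5R}\cdot R\cdot r$.

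Combining the two cases yields the claim. There is no real obstacle here: the only point that must be checked carefully is that splitting a heavy vertex does not create parallel edges or otherwise inflate the degree of a light neighbor. This is automatic because the virtual copies are produced in bijection with $\Gamma_{H_{t,t'}}(w)\setminus\{w\}$, so each light neighbor of $w$ inherits exactly one of them.
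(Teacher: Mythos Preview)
Your proof is correct and follows essentially the same approach as the paper: a case split into light relevant vertices (handled via Fact~\ref{fac:degree_light}) and virtual copies of heavy relevant vertices (each having degree~$1$). You are slightly more explicit than the paper in justifying why splitting a heavy neighbor does not increase a light vertex's degree, but the argument is the same.
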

\begin{proof}
For each relevant light vertex, its degree is the same as the degree in $H_{t,t'}$.
According to Fact~\ref{fac:degree_light}, its degree is at most $2000\cdot 2^{5R}\cdot R\cdot r$.
For each relevant heavy vertex, since we split it into many copies and each copy only connects to one relevant light vertex, the degree of each copy is $1$.
\end{proof}

\begin{lemma}\label{lem:heavy_stalling}
With probability at least $1-1/n^{99}$, every heavy vertex is stalling in the phase.
\end{lemma}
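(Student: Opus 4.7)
The plan is to fix a single vertex $v$ and upper-bound the probability of the bad event ``$v$ is heavy but not stalled'', and then conclude by a union bound over $v \in V$. I will condition on the algorithm's state at the end of the previous phase, which freezes $p_{t-1}$ and therefore freezes the aggregate potential $\Phi(v) := \sum_{u \in \Gamma_{G^k}(v) \cap V'} p_{t-1}(u)$; the $\alpha$-bits used during phase $[t,t']$ remain fresh and independent across distinct $u$. Writing $X_u := \sum_{j=1}^r b^j(u)$ and $Y_u := \mathbf{1}[u \text{ is relevant}]$, a union bound over the $Rr$ sampling thresholds and the $R$ marking thresholds appearing in the definition of relevance yields
\[
\Pr[Y_u = 1] \leq 3Rr \cdot 2^R \cdot p_{t-1}(u), \qquad \E[X_u] = r \cdot p_{t-1}(u),
\]
so $\E[\sum_u X_u] = r\,\Phi(v)$ while $\E[\sum_u Y_u] \leq 3Rr \cdot 2^R \cdot \Phi(v)$.

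I then split on whether $\Phi(v) \geq 200 \cdot 2^{4R}$. If yes, $\E[\sum_u X_u] \geq 200 \cdot 2^{4R}\, r$, and because $\sum_u X_u = \sum_{(u,j)} b^j(u)$ is a sum of independent Bernoullis, the lower-tail multiplicative Chernoff bound gives $\Pr[\sum_u X_u < 100 \cdot 2^{4R}\, r] \leq \exp(-\Theta(2^{4R} r)) \leq 1/n^{100}$, using $r = C\log n$ with $C$ a sufficiently large constant. In this case $v$ is stalled with overwhelming probability, heavy or not. If instead $\Phi(v) < 200 \cdot 2^{4R}$, then $\mu := \E[\sum_u Y_u] \leq 600 Rr \cdot 2^{5R}$, while the heaviness threshold $K := 2000 \cdot 2^{5R} Rr$ satisfies $K/\mu \geq 10/3 > e$. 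Since the $Y_u$ are independent across $u$ (each depends only on the private bits of $u$), the multiplicative Chernoff upper tail in the regime $K > e\mu$ yields $\Pr[\sum_u Y_u \geq K] \leq \exp(-\Omega(K)) \leq 1/n^{100}$. Hence $v$ is not heavy with overwhelming probability, and a union bound over the $n$ vertices completes the argument.

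The main delicacy lies in Case B: the expected relevance count $\mu$ sits only a constant factor below the heaviness threshold $K$, so the weaker additive-error Chernoff bound $\exp(-(K-\mu)^2/(K+\mu))$ degenerates to $\exp(-\Omega(\mu))$, which is useless when $\mu$ is itself small (e.g., when almost every $p_{t-1}(u)$ is tiny). I will have to invoke the stronger multiplicative form $(e\mu/K)^K$, which decays below $1$ only once $K/\mu > e$. This is precisely why Algorithm~\ref{alg:sparse_general_MIS} uses the constants $2000$ in the heaviness definition and $100$ in the stalling threshold: their ratio leaves enough slack so that any $\Phi(v) < 200 \cdot 2^{4R}$ still forces $K/\mu$ strictly past $e$. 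Verifying that the chosen cutoff $\Phi^* = 200 \cdot 2^{4R}$ simultaneously guarantees $\E[\sum_u X_u] \geq 2 \cdot (100 \cdot 2^{4R} r)$ in Case A and $K/\mu > e$ in Case B is the single arithmetic check that genuinely uses the parameter choices.
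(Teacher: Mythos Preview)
Your proof is correct and follows exactly the paper's approach: split on whether $\tau_{t-1}(v)$ exceeds $200\cdot 2^{4R}$, apply a lower-tail concentration bound on $\sum_j\hat{\tau}^j(v)$ in Case~A and an upper-tail bound on the relevance count in Case~B, then take a union bound over vertices (the paper phrases both tails via Bernstein where you use multiplicative Chernoff, which is immaterial). One small correction to your commentary: the additive form $\exp\bigl(-(K-\mu)^2/\Theta(K+\mu)\bigr)$ does \emph{not} degenerate when $\mu$ is small --- in that regime it gives $\exp(-\Omega(K))$, and since $K\ge 2000\,r=\Omega(\log n)$ regardless of $\mu$, Bernstein already suffices, which is precisely what the paper does.
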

\begin{proof}
Consider an arbitrary vertex $v\in V'$.
Suppose $\tau_{t-1}(v)\leq 200\cdot 2^{4R}$.
We have:
\begin{align*}
&\E[|\{u\in \Gamma_{G^k}(v)\cap V'\mid u\text{ is relevant}\}|]\\
\leq & R\cdot (r+1)\cdot \sum_{u\in \Gamma_{G^k}(v)\cap V'} p_{t-1}(u) \cdot 2^{R+1}\\
\leq & 1000\cdot 2^{5R} \cdot R\cdot r,
\end{align*}
where the first inequality follows from that the phase contains $R$ iterations, and we sample each vertex $r+1$ times, and each time we choose vertex $u$ as a sample with probability at most $\min(1,p_{t-1}(u)\cdot 2^{R+1})$, the second inequality follows from that $\sum_{u\in \Gamma_{G^k}(v)\cap V'} p_{t-1}(u) = \tau_{t-1}(v)\leq 200\cdot 2^{4R}$.
Since $r= C\cdot \log n$ for a sufficiently large constant $C>0$, by Bernstein inequality, the probability that $|\{u\in \Gamma_{G^k}(v)\cap V'\mid u\text{ is relevant}\}|\geq 2000\cdot 2^{5R}\cdot R\cdot r$ is at most $1/n^{100}$.
Thus, with probability at least $1-1/n^{100}$, $v$ is not heavy.

Suppose $\tau_{t-1}(v)>200\cdot 2^{4R}$.
Let $\hat{\tau}^1(v),\hat{\tau}^2(v),\cdots,\hat{\tau}^r(v)$ be the same as in the iteration $t$ in Algorithm~\ref{alg:sparse_general_MIS}.
We have:
\begin{align*}
\E\left[\sum_{j=1}^r \hat{\tau}^j(v)\right] = r\cdot \tau_{t-1}(v)\geq 200\cdot 2^{4R}\cdot r.
\end{align*}
Again, since $r=C\cdot \log n$ for some sufficiently large constant $C>0$, by Bernsetin inequality, the probability that $\sum_{j=1}^r \hat{\tau}^j(v)< 100\cdot 2^{4R}\cdot r$ is at most $1/n^{100}$.
Thus, with probability at least $1-1/n^{100}$, $v$ is stalling.

By combining the both above cases and taking a union bound over all vertices, then with probability at least $1-1/n^{99}$, each heavy vertex is stalling.
\end{proof}

\begin{lemma}\label{lem:small_neighborhood}
If every heavy vertex is stalling, we can simulate the behavior of each relevant vertex $v$ in the phase $[t,t']$ by only using the information of vertices in $\Gamma_{(H'_{t,t'})^{3R}}(v)$.
In particular, for each relevant vertex $v$, we can use $p_{t-1}(u),\bar{\alpha}_t(u),\bar{\alpha}_{t+1}(u),\cdots,\bar{\alpha}_{t'}(u)$ of every $u\in \Gamma_{(H'_{t,t'})^{3R}}(v)$ to compute $p_{t}(v),p_{t+1}(v),\cdots,p_{t'}(v)$ and learn whether $v$ joined the independent set or not, and if $v$ is removed, we also learn in which iteration $v$ is removed.
\end{lemma}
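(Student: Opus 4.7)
The argument is an induction on the iteration index within the phase. I will show that for every relevant vertex $v$ and every $0\le i\le R-1$, the trajectory of $v$ through iterations $t,\ldots,t+i$---$p$-values, sample outcomes, mark decisions, IS-joining, and removal events---is determined by $p_{t-1}(u)$ together with the random bits $\bar\alpha_t(u),\ldots,\bar\alpha_{t+i}(u)$ of the vertices $u\in\Gamma_{(H'_{t,t'})^{2i+1}}(v)$, where each virtual copy of a heavy $h$ is taken to carry $p_{t-1}(h)$ and the $\bar\alpha$-bits of $h$. Taking $i=R-1$ then yields the claimed $3R$-hop bound, with room to spare.

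Two simplifications drive the induction. First, every non-relevant vertex $u$ is dormant throughout the phase: by the definition of ``relevant,'' $\alpha_{\ell-1}^j(u)\ge p_{t-1}(u)\cdot 2^R\ge p_{\ell-1}(u)$ and $\alpha'_\ell(u)\ge p_{t-1}(u)\cdot 2^{R+1}\ge p_\ell(u)$ for all $\ell\in[t,t']$ and $j\in[r]$, so $u$ is never sampled in any $\hat\tau^j$ and never marks itself. Second, under the lemma's hypothesis every heavy relevant $h$ stalls throughout the phase, so $p_{t+j}(h)=p_{t-1}(h)\cdot 2^{-j-1}$ is deterministic, $h$ is never marked (line~\ref{sta:mark_step} skips stalling vertices), and $h$ never joins the IS during the phase; its state at any iteration depends only on $p_{t-1}(h)$ and the $\bar\alpha$-bits of $h$ alone. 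The crucial neighborhood-lookup claim follows at once: for any light relevant $w$, every relevant $G^k$-neighbor $u$ of $w$ sits at $H'_{t,t'}$-distance $1$ from $w$, because light neighbors survive the split unchanged and each heavy relevant neighbor is represented by one virtual copy adjacent to $w$. Hence $\hat\tau^j(w)$ at every iteration of the phase and the marking status of every relevant $G^k$-neighbor of $w$ are computable from $\{w\}\cup\Gamma_{H'_{t,t'}}(w)$, which establishes the base case $i=0$.

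For the inductive step, determining $v$'s behavior at iteration $t+i+1$ requires, for each relevant $u\in\Gamma_{G^k}(v)\cap V'$, (a) $u$'s state at the start of iteration $t+i+1$ (its $p_{t+i}(u)$ and $V'$-membership), delivered by the inductive hypothesis applied to $u$ using data in $\Gamma_{(H'_{t,t'})^{2i+1}}(u)\subseteq\Gamma_{(H'_{t,t'})^{2i+2}}(v)$, and (b) the events at $u$ during that iteration---$u$'s mark (from $p_{t+i}(u)$ and $\alpha'_{t+i+1}(u)$) and whether any $G^k$-neighbor of $u$ joins the IS, which needs one further $H'_{t,t'}$-hop from $u$. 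Collecting everything places the needed data inside $\Gamma_{(H'_{t,t'})^{2(i+1)+1}}(v)$, completing the induction. The main technical obstacle is the neighborhood-lookup claim: one must verify that breaking each heavy vertex into many degree-one virtual copies does not sever any dependency chain. This works precisely because heavy vertices stall and read nothing from any neighbor, so the single copy of $h$ adjacent to a light $w$ carries every bit $w$ ever needs of $h$, and no coupling between distinct copies of $h$ is required. Once this is in place the induction is routine, and the factor $3$ in $3R$ comfortably absorbs the two-hops-per-iteration cost.
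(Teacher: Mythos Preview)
Your overall strategy matches the paper's: induct on the iteration index, using the structural facts that non-relevant vertices are dormant, stalling heavy vertices are trivial to simulate, and every relevant $G^k$-neighbor of a light vertex appears at $H'_{t,t'}$-distance $1$. These observations are all correct and are exactly what the paper uses.

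The gap is in the hop counting. In the inductive step you write ``$u$'s mark (from $p_{t+i}(u)$ and $\alpha'_{t+i+1}(u)$)'', but in Algorithm~\ref{alg:sparse_general_MIS} the mark at iteration $\ell$ is drawn with probability $p_{\ell}(v)$, not $p_{\ell-1}(v)$: first $p_{\ell}$ is computed from the samples (which use $p_{\ell-1}$ of the neighbors), and only then is the mark drawn. So knowing $u$'s mark at iteration $t+i+1$ requires $p_{t+i+1}(u)$, which in turn requires $p_{t+i}(w)$ and the removal status of $w$ for every relevant $w\in\Gamma_{G^k}(u)$, i.e.\ one further hop beyond what you budgeted. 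Consequently, determining whether $v$ is removed at iteration $t+i+1$ needs the chain $v\to u\to w\to(\text{neighbors of }w)$, adding three hops per iteration rather than two. The same issue invalidates the base case: at $i=0$ you claim radius $1$ suffices for ``IS-joining and removal events'', but removal of $v$ at iteration $t$ already requires knowing $p_t(w)$ for $w$ at $H'_{t,t'}$-distance $2$ from $v$, hence radius $3$.

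Once you correct this, the radius grows by $3$ per iteration and you recover exactly the paper's bound $3R$; the paper's Claim inside the proof tracks this as $3(R'-1)+1$ for $p$, $3(R'-1)+2$ for IS-joining, and $3R'$ for removal. So the $3R$ is not slack absorbing a $2R-1$ bound; it is the tight outcome of the correct recursion.
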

\begin{proof}
The proof is by induction on the number of iterations.
For $R'\in[R]$ and for each relevant vertex $v$, we claim that we can simulate the behavior of $v$ in iterations $t,t+1,\cdots,t+R'-1$ by using the information of vertices in $\Gamma_{(H'_{t,t'})^{3R'}}(v)$.
The lemma follows directly from the following claim.
\begin{claim}
For $R'\in [R]$, and for each relevant vertex $v$, we can compute $p_{t+R'-1}(v)$ by only using the information of vertices $u\in \Gamma_{(H'_{t,t'})^{3(R'-1)+1}}(v)$; we can compute whether $v$ joined in the independent set in the iteration $t+R'-1$ by only using the information of vertices $u\in \Gamma_{(H'_{t,t'})^{3(R'-1)+2}}(v)$; and we can compute whether $v$ is removed in the iteration $t+R'-1$ by only using the information of vertices $u\in \Gamma_{(H'_{t,t'})^{3R'}}(v)$.
\end{claim}
\begin{proof}
Consider the case that $v$ is heavy.
We know that $v$ is stalling and thus we can compute $p_i(v)=p_{i-1}(v)/2$ directly for $i\in\{t,t+1,\cdots,t'\}$, and $v$ is not joined the independent set in any iteration of the phase $[t,t']$.
Furthermore, we can only remove $v$ at the end of the phase and thus $v$ is not removed in any iteration of the phase $[t,t']$.

In the remaining of the proof, we only consider the case that $v$ is light.
Our proof is by induction.
The base case is when $R'=1$.
By using the random bits $\bar{\alpha}_t(u)$ of vertices $u\in \Gamma_{H'_{t,t'}}(v)$ we can compute $\hat{\tau}^1(v),\hat{\tau}^2(v),\cdots,\hat{\tau}^r(v)$ in the iteration $t$, and thus we can know whether $v$ is stalling in the phase $[t,t']$ and can compute $p_t(v)$.
Since we know $p_t(v)$ and whether $v$ is stalling, we learn whether $v$ is marked in the iteration $t$ by using the random bits $\alpha'_t(v)$.
Thus, we can always compute $p_t(v)$ and know whether $v$ is stalling or marked in the iteration $t$ by using the information of vertices $u\in \Gamma_{H'_{t,t'}}(v)$.
By the above argument, we can use the information of vertices $u\in \Gamma_{(H'_{t,t'})^2}(v)$ to know whether each vertex $u\in \Gamma_{G^k}(v)\cap V'$ is marked or not, and thus we know whether $v$ joined the independent set in the iteration $t$ or not.
Finally, by applying the above argument, we can use the information of vertices $u\in \Gamma_{(H'_{t,t'})^3}(v)$ to know whether each vertex $u\in \Gamma_{G^k}(v)\cap V'$ joined the independent set or not, and thus we know whether $v$ is removed in the iteration $t$ or not.

Now suppose the induction hypothesis holds for $R'$, and consider $R'+1$.
If $v$ is stalling, we can compute $p_{t+R'}=p_{t+R'-1}/2$ directly and $v$ is not joined the independent set in any iteration of phase $[t,t']$.
Furthermore, we can only remove $v$ at the end of the phase and thus $v$ is not removed in any iteration of the phase $[t,t']$.
Now consider the vertex $v$ which is not stalling.
By the induction hypothesis, for each vertex $u\in \Gamma_{H'_{t,t'}}(v)$, we can use the the information of every vertex $w\in \Gamma_{(H'_{t,t'})^{3R'}}(u)$ to learn whether $u$ is removed before the iteration $t+R'$ and to compute $p_{t+R'-1}(u)$ if $u$ is not removed.
Then we can use the random bits $\bar{\alpha}_{t+R'}(u)$ of vertices $u\in \Gamma_{H'_{t,t'}}(v)$ to compute $\hat{\tau}^1(v),\hat{\tau}^2(v),\cdots,\hat{\tau}^r(v)$ in the iteration $t+R'$, and thus we can compute $p_{t+R'}(v)$.
The above argument only uses the information of vertices $w\in\Gamma_{(H'_{t,t'})^{3R'+1}}(v)$.
Once we get $p_{t+R'}(v)$, we can use $\alpha'_{t+R'}(v)$ to know whether $v$ is marked in the iteration $t+R'$ or not.
Then, by the information of vertices $w\in \Gamma_{(H'_{t,t'})^{3R'+2}}(v)$, we can know whether each vertex $u\in \Gamma_{G^k}\cap V'$ is marked and not removed.
Thus, we know whether $v$ joined the independent set in the iteration $t+R'$ or not.
Finally, by applying the above argument, we can use the information of vertices $u\in \Gamma_{(H'_{t,t'})^{3R'+3}}(v)$ to know whether each vertex $u\in \Gamma_{G^k}(v)\cap V'$ joined the independent set in the iteration $t+R'$ or not, and thus we know whether $v$ is removed in the iteration $t+R'$ or not.
\end{proof}
\end{proof}

We define the \emph{outcome} of a vertex $v$ in the iteration $t$ to be the following information:
\begin{itemize}
    \item whether $v$ joined the independent set in the iteration $t$ or not,
    \item whether $v$ is removed before/in the iteration $t$ or not,
    \item $p_t(v)$ if $v$ is not removed.
\end{itemize}
It is clear that the outcome of a vertex in an iteration can be represented by $O(T)$ bits since $p_t(v)$ is always a power of $2$ and is at least $1/2^T$.
We define the outcome of a vertex $v$ in the iterations $[t-1,t']$ to be the union of the outcome of $v$ in the iterations $t-1,t,t+1,\cdots,t'$.
Thus, the outcome of a vertex $v$ in the iterations $[t-1,t']$ can be represented by $O(T^2)$ bits.

\begin{fact}\label{fac:outcome_nonrelevant_light}
For each non-relevant light vertex $v$, $v$ does not join the independent set in the phase $[t,t']$.
If we know all random bits $\bar{\alpha}_i(u),i\in[t,t']$ and the outcome of every relevant vertex $u\in \Gamma_{G^k}(v)\cap V'$ in the iterations $[t,t']$, we learn whether $v$ is removed in the phase $[t,t']$ and learn $p_{t'}(v)$ if $v$ is not removed.
\end{fact}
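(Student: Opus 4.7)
The plan is to prove both assertions by induction on the iteration index within the phase $[t,t']$, leveraging two consequences of the at-most-doubling dynamics $p_i(\cdot)\leq \min(1/2,\,2p_{i-1}(\cdot))$: for every vertex $w$ and every $i\in[t,t']$ we have $p_i(w)\leq p_{t-1}(w)\cdot 2^R$ and $p_{i-1}(w)\leq p_{t-1}(w)\cdot 2^R$. For the non-relevant $v$ itself, these bounds combined with non-relevance give $\alpha'_i(v)\geq p_{t-1}(v)\cdot 2^{R+1} > p_i(v)$ and $\alpha_{i-1}^j(v)\geq p_{t-1}(v)\cdot 2^R \geq p_{i-1}(v)$ for every $i,j$ in the phase. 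The first inequality shows the marking test at Line~\ref{sta:mark_step} never fires for $v$, so $v$ never joins the independent set --- this settles the first assertion --- and the second shows $b^j(v)=0$ throughout, a fact I will reuse below.

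For the second assertion, the key observation is that the very same estimate applies to \emph{any} non-relevant $u\in \Gamma_{G^k}(v)\cap V'$: regardless of whether $u$ is light or heavy, and regardless of whether $u$ is still in $V'$ in a given iteration, non-relevance of $u$ forces $b^j(u)=0$ in every iteration and ensures $u$ is never marked, hence $u$ never joins the independent set and never triggers removal of $v$. This decouples $v$'s evolution from the (unknown) randomness of non-relevant neighbors: every quantity that governs $v$, namely $\hat\tau^j(v)=\sum_{u\in\Gamma_{G^k}(v)\cap V'} b^j(u)$ in each iteration together with the ``did a neighbor just join the IS?'' removal test, reduces to a sum or test over the \emph{relevant} $u\in \Gamma_{G^k}(v)\cap V'$, which is precisely the data we are given.

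With this reduction in hand the induction is routine. At iteration $i=t$, evaluating $\sum_{j=1}^r \hat\tau^j(v)$ from $\{p_{t-1}(u),\alpha_{t-1}^j(u)\}$ over relevant $u$ decides whether $v$ is stalling and yields $p_t(v)$ via the update rule. Inductively at iteration $i>t$, the outcome of each relevant $u$ supplies $p_{i-1}(u)$ and whether $u$ remains in $V'$; combined with $\alpha_{i-1}^j(u)$ this reproduces $b^j(u)$, hence $\hat\tau_{i-1}(v)$, hence $p_i(v)$. Removal of $v$ in iteration $i$ is triggered precisely when some relevant $u$ joined the IS either in iteration $i$ (non-stalling branch) or somewhere in the phase (stalling branch), both readable from the supplied outcomes. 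After iteration $t'$ this delivers whether $v$ was removed and, if not, the value $p_{t'}(v)$. The main substantive step --- and what I would flag as the heart of the argument --- is the $2^R$-gap in the definition of relevance combined with the at-most-doubling dynamics forcing all non-relevant neighbors to be inert, since without this $v$'s trajectory would depend on randomness we do not possess.
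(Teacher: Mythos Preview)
Your argument is correct and follows the same approach as the paper's proof; the paper's version is a two-sentence sketch (``by the definition of relevant, $v$ is not marked\ldots we can simulate the phase for $v$''), and you have simply unpacked the implicit reasoning --- the at-most-doubling bound $p_i(\cdot)\le 2^R p_{t-1}(\cdot)$ forcing every non-relevant neighbor to contribute $b^j(u)=0$ and never be marked --- that the paper leaves to the reader.
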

\begin{proof}
By the definition of relevant, we can easily verify that $v$ is not marked in any iteration in the phase $[t,t']$ and thus $v$ does not join the independent set in the phase $[t,t']$.
If we know all random bits $\bar{\alpha}_i(u),i\in[t,t']$ and the outcome of every relevant vertex $u\in \Gamma_{G^k}(v)\cap V'$ in the iterations $[t,t']$, we can simulate the phase $[t,t']$ of Algorithm~\ref{alg:sparse_general_MIS} for the vertex $v$. 
Thus, we learn whether $v$ is removed in the phase $[t,t']$ and learn $p_{t'}(v)$ if $v$ is not removed.
\end{proof}

\begin{fact}\label{fac:outcome_nonrelevant_heavy}
Suppose every heavy vertex is stalling. 
For each non-relevant heavy vertex $v$, we learn $p_{t'}(v)$ if $v$ is not removed at the end of the phase $[t,t']$.
If we know whether there is any relevant $u\in \Gamma_{G^k}(v)\cap V'$ joined the independent set in the phase $[t,t']$, we know whether $v$ is removed at the end of the phase $[t,t']$.
\end{fact}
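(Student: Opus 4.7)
\textbf{Proof plan for Fact~\ref{fac:outcome_nonrelevant_heavy}.} The plan is to exploit the hypothesis that every heavy vertex is stalling throughout the phase $[t,t']$, so in particular $v$ is stalling in every iteration of the phase. By the update rule in Algorithm~\ref{alg:sparse_general_MIS}, a stalling vertex satisfies $p_i(v)=p_{i-1}(v)/2$ at each iteration, so $p_{t'}(v)=p_{t-1}(v)/2^R$ is determined purely from $p_{t-1}(v)$ (which is known at the start of the phase) regardless of what happens to other vertices. This immediately gives the first conclusion whenever $v$ survives the phase.

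For the removal claim I would reason as follows. Stalling vertices are skipped in the marking step~\ref{sta:mark_step}, so $v$ itself is never marked during $[t,t']$ and hence does not join the independent set in the phase. Inspecting Algorithm~\ref{alg:sparse_general_MIS}, the only place a stalling vertex can be removed is the final ``for $v$ stalling in phase $s$'' loop, which removes $v$ precisely when some $u\in \Gamma_{G^k}(v)\cap V'$ joined the independent set during the phase. It therefore suffices to show that only relevant neighbors of $v$ can contribute to this event, so that knowing whether some relevant neighbor joined the IS is enough to determine whether $v$ is removed.

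To rule out non-relevant neighbors, I would split them by type. If $u\in \Gamma_{G^k}(v)\cap V'$ is non-relevant and light, then Fact~\ref{fac:outcome_nonrelevant_light} already states that $u$ does not join the independent set in $[t,t']$. If $u$ is non-relevant and heavy, then by the overarching hypothesis $u$ is stalling throughout the phase, so by the same argument used for $v$ itself, $u$ is never marked and hence cannot join the IS. Combining the two cases, whether $v$ is removed at the end of the phase depends exclusively on whether some relevant $u\in \Gamma_{G^k}(v)\cap V'$ joined the independent set, which is precisely the piece of information the statement assumes we have. The argument is largely mechanical; the one mild subtlety to keep in mind is the distinction between ``heavy'' (a structural property counting relevant neighbors) and ``relevant'' (a property of $v$'s own random tape), so that a non-relevant heavy vertex is a genuine and meaningful case rather than a contradiction in terms.
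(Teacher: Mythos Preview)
Your proposal is correct and follows the same reasoning as the paper's proof: $v$ is stalling, so $p_{t'}(v)=p_{t-1}(v)/2^{R}$, and $v$ is removed at the end of the phase iff some $u\in\Gamma_{G^k}(v)\cap V'$ joined the independent set, which can only be a relevant vertex. The paper avoids your light/heavy case split by observing directly from the definition of ``relevant'' (specifically condition (2) on $\alpha'_i$) that \emph{any} non-relevant vertex is never marked during the phase and hence cannot join the independent set; your route via Fact~\ref{fac:outcome_nonrelevant_light} and the stalling hypothesis reaches the same conclusion but is slightly less direct.
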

\begin{proof}
Since every heavy vertex is stalling, the non-relevant heavy vertex $v$ is also stalling.
Then $p_{t'}(v)=p_{t-1}(v)/2^R$ and $v$ cannot join the independent set. 
By the definition of relevant vertex, only the relevant vertices can join the independent set in the phase $[t,t']$.
Then if any relevant $u\in \Gamma_{G^k}(v)\cap V'$ joined the independent set in the phase $[t,t']$, then we know $v$ is removed at the end of the phase $[t,t']$.
\end{proof}

\subsubsection{Simulation in the MPC Model}
In this section, we show how to simulate Algorithm~\ref{alg:sparse_general_MIS} in the MPC model.
Firstly, let us introduce how to construct $H'_{t,t'}$ in the MPC model.
We need to use truncated neighborhood exploration procedure developed in Section~\ref{sec:MPC_power_graph_subroutine}.

\begin{lemma}\label{lem:compute_Htt}
Consider a phase $[t,t']$ in Algorithm~\ref{alg:sparse_general_MIS}. 
If all phases before the phase $[t,t']$ are successfully simulated, the graph $H'_{t,t'}$ can be constructed in $O(k)$ MPC rounds and the total space needed is at most $O\left(m\cdot 2^{5R}\cdot R\cdot r\right)$ where $m$ is the number of edges of the input graph $G$ in Algorithm~\ref{alg:sparse_general_MIS}.
\end{lemma}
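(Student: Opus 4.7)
The plan is to build $H'_{t,t'}$ by locally identifying relevant vertices, then running the truncated neighborhood exploration subroutine on $G$ with power $k$ to collect, for each relevant vertex, enough of its relevant neighbors in $\Gamma_{G^k}\cap V'$ to either learn them all (light case) or certify heaviness. Because all earlier phases are assumed to have been simulated correctly, every machine hosting a vertex $v\in V'$ already knows $p_{t-1}(v)$ and its pre-drawn random bits $\bar{\alpha}_t(v),\ldots,\bar{\alpha}_{t'}(v)$, so relevance is a purely local predicate: $v$ checks whether any $\alpha_{i-1}^j(v)<p_{t-1}(v)\cdot 2^R$ or $\alpha'_i(v)<p_{t-1}(v)\cdot 2^{R+1}$ for some $i\in[t,t'],j\in[r]$. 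This step costs $O(1)$ rounds and linear total space, and produces the subset $S\subseteq V'$ of relevant vertices which forms the vertex set of $H_{t,t'}$.

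Next, I invoke Algorithm~\ref{alg:truncated_neighborhood_explore} on the input graph $G$, the set $S$ of relevant vertices, hop parameter $k$, and threshold $J=2000\cdot 2^{5R}\cdot R\cdot r$. By Lemma~\ref{lem:truncated_neighborhood_explore}, this returns, for every $v\in V$, a list $L^{(k)}(v)\subseteq S\cap\Gamma_{G^k}(v)$ such that $L^{(k)}(v)=S\cap\Gamma_{G^k}(v)$ whenever $|S\cap\Gamma_{G^k}(v)|\le J$ and $|L^{(k)}(v)|=J+1$ otherwise. Restricting attention to $v\in S$, this immediately distinguishes heavy relevant vertices (those with $|L^{(k)}(v)|=J+1$, matching the definition of heavy exactly) from light relevant vertices (those for which $L^{(k)}(v)$ is the complete list of relevant neighbors in $\Gamma_{G^k}(v)\cap V'$). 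The call uses $O(k)$ rounds and $O(m\cdot J)=O(m\cdot 2^{5R}\cdot R\cdot r)$ total space and is fully scalable.

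From the lists I then materialise the edges of $H_{t,t'}$: for each light relevant $v$, emit the edge $\{v,u\}$ for every $u\in L^{(k)}(v)$. Converting $H_{t,t'}$ to $H'_{t,t'}$ is a single local transformation driven again by the light side: for every such edge $\{v,u\}$ in which $u$ is heavy, allocate a fresh virtual copy of $u$ and attach it to $v$; this is correct because by Fact~\ref{fac:heavy_light_neighbors} every neighbour of a heavy vertex in $H_{t,t'}$ is light relevant, so iterating over light vertices enumerates every copy exactly once. The emission, deduplication, and virtual-copy labelling are all primitive sort/relabel tasks that finish in $O(1)$ fully scalable rounds, and the number of emitted edges is bounded by $\sum_{v\text{ light}}|L^{(k)}(v)|\le |S|\cdot J$, which is dominated by the $O(m\cdot J)$ bound from the exploration call.

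Combining these phases gives $O(k)+O(1)=O(k)$ rounds and $O(m\cdot 2^{5R}\cdot R\cdot r)$ total space, as required. The main obstacle is really just justifying that the truncated exploration is both expressive enough (it certifies heaviness with the exact threshold used in the definition of heavy) and cheap enough (the degree cap from Lemma~\ref{lem:degree_Htt} lets us set $J$ small enough that $m\cdot J$ stays within the target budget); once the threshold is chosen to match the heaviness cutoff, identifying relevance locally, calling the exploration, and splitting heavy vertices fit together without further work.
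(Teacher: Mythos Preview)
Your proof is correct and follows essentially the same approach as the paper: determine relevance locally from $p_{t-1}$ and the pre-drawn randomness, then invoke truncated neighborhood exploration (Algorithm~\ref{alg:truncated_neighborhood_explore}) with threshold $J=2000\cdot 2^{5R}\cdot R\cdot r$ to classify each relevant vertex as light or heavy and recover the full relevant $k$-hop neighborhood of every light one; the paper's proof stops at ``thus we are able to construct $H'_{t,t'}$,'' whereas you additionally spell out the edge emission from the light side and the heavy-vertex splitting. One inconsequential nit: since heavy is defined as $|S\cap\Gamma_{G^k}(v)|\geq J$, the correct test is $|L^{(k)}(v)|\geq J$ rather than $=J+1$ (the boundary case $|S\cap\Gamma_{G^k}(v)|=J$ yields $|L^{(k)}(v)|=J$), but this changes neither the argument nor the bounds.
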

\begin{proof}
Since all phases before the phase $[t,t']$ are successfully simulated, for each vertex $v\in V'$, we know whether $v$ is remained at the beginning of the phase $[t,t']$, and if it is remained, we know $p_{t-1}(v)$.
Thus, by the definition of relevant vertex, we can determine whether a vertex is relevant only by using its random bits $\bar{\alpha}_i(v),i\in\{t,t+1,t+2,\cdots,t'\}$ and $p_{t-1}(v)$.

Let $S\subseteq V'$ be the set of all relevant vertices. 
Set the threshold parameter $J=2000\cdot 2^{5R}\cdot R\cdot r$.
We run Algorithm~\ref{alg:truncated_neighborhood_explore} for the input graph $G$, set $S$, parameter $k$ and the threshold parameter $J$.
According to Lemma~\ref{lem:truncated_neighborhood_explore}, for each vertex $v$, we learn whether $|\Gamma_{G^k}(v)\cap S|\leq J$ or not.
Furthermore, if $|\Gamma_{G^k}(v)\cap S|\leq J$, we also obtain a list $L(v) = \Gamma_{G^k}(v)\cap S$.
Thus, each vertex $v\in V'$ learns whether it is a relevant light vertex.
For each relevant vertex $v\in V'$, it also learns all other relevant vertices in $\Gamma_{G^k}(v)$.
Thus, we are able to construct $H'_{t,t'}$.
By Lemma~\ref{lem:truncated_neighborhood_explore}, the number of MPC rounds is $O(k)$ and the total space needed is at most $O(m\cdot 2^{5R}\cdot R\cdot r)$.
\end{proof}

\begin{lemma}\label{lem:compute_Htt_neighborhood}
Consider a phase $[t,t']$ in Algorithm~\ref{alg:sparse_general_MIS}. 
Given $H'_{t,t'}$, there is an MPC algorithm which computes $\Gamma_{(H'_{t,t'})^{3R}}(v)$ for each vertex $v$ in $H'_{t,t'}$. 
The number of MPC rounds is at most $O(\log R)$.
The total space needed is at most $n\cdot 2^{O(R^2+R\cdot \log r)}$.
\end{lemma}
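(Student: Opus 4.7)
The plan is to use standard neighborhood doubling (graph exponentiation), exploiting the fact that $H'_{t,t'}$ has small maximum degree. By Lemma~\ref{lem:degree_Htt}, every vertex of $H'_{t,t'}$ has degree at most $D := 2000\cdot 2^{5R}\cdot R\cdot r$, so for any $i \leq 3R$ the ball $\Gamma_{(H'_{t,t'})^{i}}(v)$ has size at most $D^{i}$. In particular $D^{3R} \leq 2^{3R\log D} = 2^{O(R^2 + R\log r)}$, since $\log D = O(R + \log R + \log r) = O(R + \log r)$. Thus the target neighborhoods are small enough that we can afford to materialize them.

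First, initialize $N_1(v) := \Gamma_{H'_{t,t'}}(v)$ for every vertex $v$. Then repeatedly double: given the lists $N_i(v)$, compute
\[
N_{2i}(v) \;:=\; \bigcup_{u \in N_i(v)} N_i(u),
\]
dedupliated. After $\lceil \log_2(3R)\rceil = O(\log R)$ doubling steps (padding or truncating the exponent at the last step so the final hop count is exactly $3R$), each vertex $v$ holds $\Gamma_{(H'_{t,t'})^{3R}}(v)$. Each doubling step is realized in $O(1)$ MPC rounds by two classical primitives: first, route a copy of $N_i(u)$ to each $v$ that has $u \in N_i(v)$ (a many-to-many delivery reducible to sorting), then sort the received items at $v$ and remove duplicates. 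Both operations run in $O(1)$ rounds with linear total space via \cite{goodrich2011sorting,goodrich1999communication} and are fully scalable.

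For the space bound, at step $i \leq 3R$ the total size of all routed messages is $\sum_v |N_i(v)|\cdot \max_u |N_i(u)| \leq n\cdot D^{2i} \leq n\cdot D^{6R} = n\cdot 2^{O(R^2 + R\log r)}$, which dominates the cost of every sort/deduplication and also bounds the size of the final output lists. Over $O(\log R)$ doubling rounds the per-round space stays within this bound, giving total space $n\cdot 2^{O(R^2 + R\log r)}$ as claimed.

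The only genuine issue to be careful about is that the intermediate lists $N_i(v)$ and the routed messages fit the desired space envelope; this is exactly why we worked with $H'_{t,t'}$ rather than the original implicit power graph $G^k$, since the split-heavy-vertex trick used to define $H'_{t,t'}$ caps the degree by $D$ (Lemma~\ref{lem:degree_Htt}) and thereby keeps $D^{3R}$ within $2^{O(R^2+R\log r)}$. Everything else is routine MPC plumbing: duplicate removal via sorting, implementation of the broadcast of $N_i(u)$ to each $v\in \{v : u\in N_i(v)\}$, and a single final step to truncate the exponent from a power of two down to exactly $3R$ (which can be handled by one asymmetric combine $N_{3R}(v)=\bigcup_{u\in N_i(v)} N_{3R-i}(u)$ using previously computed lists).
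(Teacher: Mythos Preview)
Your proof is correct and follows essentially the same neighborhood-doubling (graph exponentiation) approach as the paper; the paper additionally tracks explicit distance labels $d_v^{(i)}(u)$ to truncate at radius exactly $3R$, whereas you use a final asymmetric combine, but these are equivalent implementation details. One small slip: $H'_{t,t'}$ can have up to $n\cdot 2^{O(R+\log r)}$ vertices because of the heavy-vertex splitting (not just $n$), so your intermediate bound $\sum_v |N_i(v)|\cdot \max_u|N_i(u)|\leq n\cdot D^{2i}$ is off by a factor of $D$, but this is harmlessly absorbed into the final $n\cdot 2^{O(R^2+R\log r)}$ bound.
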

\begin{proof}
According to Lemma~\ref{lem:degree_Htt}, the maximum degree of $H'_{t,t'}$ is at most $2000\cdot 2^{5R}\cdot R\cdot r$.
Since we do not split light relevant vertices, the number of light relevant vertices in $H'_{t,t'}$ is at most $n$.
Thus, the number of vertices in $H'_{t,t'}$ is at most $n\cdot 2^{O(R+\log r)}$.

We use a simple standard doubling approach (see e.g., \cite{andoni2018parallel,ghaffari2017distributed}) to compute $\Gamma_{(H'_{t,t'})^{3R}}(v)$ for each vertex $v$ in $H'_{t,t'}$:
\begin{enumerate}
    \item For each vertex $v$ in $H'_{t,t'}$, initialize $S_v^{(0)}=\Gamma_{H'_{t,t'}}(v)$, and for each $u\in S_v^{(0)}$, initialize $d^{(0)}_v(u) = 1$ if $u\not =v$ and $d^{(0)}_v(u) = 0$ if $u = v$.
    \item For $i = 1 \rightarrow \lceil \log(3R) \rceil$:
        \begin{enumerate}
            \item For each vertex $v$ in $H'_{t,t'}$, let $S_{v}^{(i)} = \bigcup_{u \in S_{v}^{(i-1)}} S_u^{(i-1)}$.
            \item For each vertex $v$ in $H'_{t,t'}$, for each $u \in S_{v}^{(i)}$, set $d_v^{(i)}(u)= \min_{w \in S_{v}^{(i-1)}:u\in S_{w}^{(i-1)}} d_v^{(i-1)}(w)+d_w^{(i-1)}(u)$.
        \end{enumerate}
    \item For each vertex $v$ in $H'_{t,t'}$, return all vertices $u\in S_v^{(\lceil \log(3R) \rceil)}$ such that $d_v^{(\lceil \log(3R) \rceil)}(u)\leq 3R$ as $\Gamma_{(H'_{t,t'})^{3R}}(v)$.
\end{enumerate}
The above procedure has $O(\log R)$ iterations.
To compute $S_v^{(i)}$, we need to generate copies for $S_u^{(i-1)}$ and use sorting to rearrange the copies and remove the duplicates.
Thus, it can be done in $O(1)$ MPC rounds (see e.g., \cite{andoni2018parallel}).
Similarly, we can use $O(1)$ MPC rounds to compute $d_v^{(i)}(u)$ for all $v$ and $u\in S_v^{(i)}$.
Thus, the total number of MPC rounds of the above procedure is at most $O(\log R)$.
The size of $S_v^{(i)}$ is at most $|S_v^{(0)}|^{2^{\lceil \log(3R) \rceil }} = 2^{O(R^2+R\cdot \log r)}$.
Thus, the total space needed is at most $n\cdot 2^{O(R+\log r)}\cdot 2^{O(R^2+R\cdot \log r)}= n\cdot 2^{O(R^2+R\cdot \log r)}$.

Now let us consider the properties of $S_v^{(i)}$ and $d_v^{(i)}$.
We claim that $\forall i\in [\lceil \log(3R) \rceil],S_v^{(i)} = \Gamma_{(H'_{t,t'})^{2^i}}(v)$ for all $v$ in $H'_{t,t'}$, and $\forall i\in [\lceil \log(3R) \rceil],\forall u \in S_v^{(i)}, d_v^{(i)}(u)=\dist_{H'_{t,t'}}(u,v)$.
The proof is by induction.
It is easy to verify the base case for $S_v^{(0)}$ and $d_v^{(0)}$.
Suppose the claim is true for $i-1$, we have $S_v^{(i)} = \bigcup_{u \in \Gamma_{(H'_{t,t'})^{2^{(i-1)}}}(v)} \Gamma_{(H'_{t,t'})^{2^{(i-1)}}}(u) = \Gamma_{(H'_{t,t'})^{2^i}}(v)$.
Furthermore, for each $u\in \Gamma_{(H'_{t,t'})^{2^i}}(v)$, we have $d_v^{(i)}(u) = \min_{w \in \Gamma_{(H'_{t,t'})^{2^{i-1}}}(v) : u \in \Gamma_{(H'_{t,t'})^{2^{i-1}}}(w)} \dist_{H'_{t,t'}}(v,w)+\dist_{H'_{t,t'}}(w,u)=\dist_{H'_{t,t'}}(u,v)$.

Thus, we return $\Gamma_{(H'_{t,t'})^{3R}}(v)$ for each $v$ in $H'_{t,t'}$ at the end of the procedure.
\end{proof}

\begin{lemma}\label{lem:simulation_one_phase}
Consider a phase $[t,t']$ in Algorithm~\ref{alg:sparse_general_MIS}. 
If all phases before the phase $[t,t']$ are successfully simulated and the memory per machine is at least $2^{\hat{C}\cdot (R^2+R\cdot \log r)}$ for some sufficiently large constant $\hat{C}$, the phase $[t,t']$ can be simulated in $O(k+\log R)$ MPC rounds with probability at least $1-1/n^{99}$, i.e., the outcome of each vertex $v$ in iterations $t,t+1,\cdots,t'$ is obtained.
Furthermore, the total space needed is at most $(m+n)\cdot 2^{O(R^2+R\cdot \log r)}$, where $m$ is the number of edges of the input graph $G$ in Algorithm~\ref{alg:sparse_general_MIS}.
\end{lemma}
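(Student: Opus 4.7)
The plan is to combine the construction and neighborhood-gathering primitives already established with the locality property of Lemma~\ref{lem:small_neighborhood}, and then separately handle the non-relevant vertices.

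First, I would invoke Lemma~\ref{lem:compute_Htt} to build $H'_{t,t'}$ in $O(k)$ rounds using $O(m \cdot 2^{5R} \cdot R \cdot r)$ total space; by the hypothesis that earlier phases are simulated successfully, each vertex $v$ knows whether it is still in $V'$ and, if so, knows $p_{t-1}(v)$, so relevance can be determined locally from its fixed random bits, which is exactly what that lemma needs. Next, I would invoke Lemma~\ref{lem:compute_Htt_neighborhood} to compute $\Gamma_{(H'_{t,t'})^{3R}}(v)$ for every vertex $v$ in $H'_{t,t'}$ in $O(\log R)$ rounds and $n \cdot 2^{O(R^2 + R\log r)}$ total space.

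Now I would condition on the event, which by Lemma~\ref{lem:heavy_stalling} has probability at least $1-1/n^{99}$, that every heavy vertex is stalling throughout the phase. Under this event, Lemma~\ref{lem:small_neighborhood} guarantees that the outcome of each relevant vertex $v$ in iterations $t,\ldots,t'$ is a deterministic function of $p_{t-1}(u)$ together with $\bar{\alpha}_t(u),\ldots,\bar{\alpha}_{t'}(u)$ over $u \in \Gamma_{(H'_{t,t'})^{3R}}(v)$. Since the memory per machine is at least $2^{\hat{C}(R^2 + R\log r)}$ for a sufficiently large constant $\hat{C}$, all of this data fits on a single machine, and each vertex can simulate its own behavior locally in $O(1)$ additional rounds via standard routing primitives (sort-and-group the $(u, \text{random bits}, p_{t-1}(u))$ tuples by the query vertex $v$, each such group being small enough to fit on one machine). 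After this step every relevant vertex knows its outcome in the phase.

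It remains to compute outcomes for the non-relevant vertices. For each non-relevant light vertex $v$, Fact~\ref{fac:outcome_nonrelevant_light} says $v$ does not join the independent set, and its removal status and $p_{t'}(v)$ are determined from the random bits of $v$ and the outcomes of relevant neighbors in $\Gamma_{G^k}(v) \cap V'$. But these are exactly (a subset of) its relevant neighbors in $H'_{t,t'}$, so one further truncated-neighborhood-exploration call (or reuse of the data gathered when building $H'_{t,t'}$) suffices in $O(k)$ rounds. For each non-relevant heavy vertex $v$, Fact~\ref{fac:outcome_nonrelevant_heavy} reduces the task to detecting whether any relevant $u \in \Gamma_{G^k}(v) \cap V'$ joined the independent set, which I would do by running Algorithm~\ref{alg:truncated_neighborhood_explore} once more with $S$ equal to the set of relevant vertices that joined (with threshold $J=1$, or reusing the previously computed lists), again in $O(k)$ rounds and $O(m)$ extra space. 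Summing the rounds gives $O(k + \log R)$ and the total space is dominated by the $(m+n)\cdot 2^{O(R^2 + R\log r)}$ bound from Lemmas~\ref{lem:compute_Htt} and~\ref{lem:compute_Htt_neighborhood}.

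The main obstacle is a bookkeeping one rather than a conceptual one: ensuring that the local simulation on each machine really has access to all the data it needs (random bits of all $u \in \Gamma_{(H'_{t,t'})^{3R}}(v)$, plus $p_{t-1}$ values, plus enough information about virtual copies of split heavy vertices to evaluate their contribution to $\hat{\tau}^j$) without exceeding the per-machine memory budget; here the polynomial blowup in $2^{O(R^2 + R\log r)}$ from Lemma~\ref{lem:compute_Htt_neighborhood} is exactly what dictates the required lower bound on memory per machine in the hypothesis. The high-probability bound $1-1/n^{99}$ comes directly from Lemma~\ref{lem:heavy_stalling}, since all other steps are deterministic given the fixed random bits.
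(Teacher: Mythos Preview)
Your proposal is correct and follows essentially the same approach as the paper's proof: construct $H'_{t,t'}$ via Lemma~\ref{lem:compute_Htt}, gather the $3R$-neighborhoods via Lemma~\ref{lem:compute_Htt_neighborhood}, condition on Lemma~\ref{lem:heavy_stalling} to invoke Lemma~\ref{lem:small_neighborhood} for the relevant vertices, and then handle non-relevant light and heavy vertices separately using truncated neighborhood exploration together with Facts~\ref{fac:outcome_nonrelevant_light} and~\ref{fac:outcome_nonrelevant_heavy}. One minor phrasing issue: a non-relevant light vertex $v$ is not a vertex of $H'_{t,t'}$ at all (only relevant vertices are), so speaking of ``its relevant neighbors in $H'_{t,t'}$'' is not quite right; what you need, and what the paper does, is to rerun Algorithm~\ref{alg:truncated_neighborhood_explore} on $G$ with $S$ the set of relevant vertices and threshold $J=2000\cdot 2^{5R}\cdot R\cdot r$, so that every light vertex (relevant or not) learns the full list of its relevant $G^k$-neighbors and can then collect their outcomes.
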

\begin{proof}
According to Lemma~\ref{lem:compute_Htt}, we can compute $H'_{t,t'}$ in $O(k)$ MPC rounds using total space at most $m\cdot 2^{O(R+\log r)}$.
According to Lemma~\ref{lem:compute_Htt_neighborhood}, we can compute $\Gamma_{(H'_{t,t'})^{3R}}(v)$ for each $v$ in $H'_{t,t'}$ using $O(\log R)$ MPC rounds and total space at most $n \cdot 2^{O(R^2+R\cdot \log r)}$.
According to Lemma~\ref{lem:degree_Htt}, the maximum degree of $H'_{t,t'}$ is at most $2^{O(R+\log r)}$.
Thus, $|\Gamma_{(H'_{t,t'})^{3R}}(v)|\leq 2^{O(R^2+R\cdot \log r)}$.
We can send the subgraph of $H'_{t,t'}$ induced by $\Gamma_{(H'_{t,t'})^{3R}}(v)$ into one machine since $\hat{C}$ is sufficiently large.

According to Lemma~\ref{lem:heavy_stalling}, with probability at least $1-1/n^{99}$, every heavy vertex is stalling in the phase $[t,t']$.
Conditioning on this event, according to Lemma~\ref{lem:small_neighborhood}, we can compute the outcome of every relevant vertex in iterations $t,t+1,t+2,\cdots,t'$ by using only local computations.

After obtaining the outcome of all relevant vertices, let us consider how to obtain the outcome of non-relevant vertices.
Let $S\subseteq V'$ be the set of all relevant vertices.
Set the threshold parameter $J=2000\cdot 2^{5R}\cdot R\cdot r$.
We run Algorithm~\ref{alg:truncated_neighborhood_explore} for the input graph $G$, set $S$, parameter $k$ and the threshold parameter $J$.
This takes $O(k)$ MPC rounds and total space $m\cdot 2^{O(R+\log r)}$.
Similar to the proof of Lemma~\ref{lem:compute_Htt}, each vertex $v$ learns whether it is light.
If a vertex $v$ is light, it additionally learns all relevant vertices in $\Gamma_{G^k}(v)$.
According to Fact~\ref{fac:outcome_nonrelevant_light}, we can compute the outcome of $v$ locally.

Let $S'\subseteq V'$ be the set of all relevant vertices which joined the independent set in the phase $[t,t']$.
Set the threshold parameter $J'=1$.
We run Algorithm~\ref{alg:truncated_neighborhood_explore} for the input graph $G$, set $S'$, parameter $k$ and the threshold parameter $J'$.
This takes $O(k)$ MPC rounds and the total space $m\cdot 2^{O(R+\log r)}$.
Each heavy vertex $v$ learns whether there is a relevant vertex $u\in \Gamma_{G^k(v)}\cap V'$ joined the independent set in the phase $[t,t']$.
According to Fact~\ref{fac:outcome_nonrelevant_heavy}, we can use local computation to learn the outcome of vertex $v$ and we know whether $v$ is removed at the end of the phase $[t,t']$.
\end{proof}

\begin{lemma}\label{lem:small_connected_components}
Let $\delta\in (0,1)$ be an arbitrary constant.
Consider an $n$-vertex $m$-edge graph $G=(V,E)$ of which the maximum degree of $G^k$ is at most $n^{\delta/100}$.
Let $\Delta_k$ satisfy $\max_{v\in V}|\Gamma_{G^k}(v)|\leq \Delta_k\leq n^{\delta/100}$.
Given a set $B\subseteq V$ such that the size of each connected component of $G^{k}[B]$ is at most $O(\Delta_k^4\cdot \log_{\Delta_k} n)$ and the diameter of each connected component of of $G^{k}[B]$ is at most $O(\log_{\Delta_k} n)$, there is a randomized MPC algorithm which computes a maximal independent set of $G^{k}[B]$ in $O(k+\log\log n)$ rounds using total space $O(m)$, where each machine uses local memory at most $O(n^{\delta})$.
The successful probability is at least $1-1/n^{99}$.
\end{lemma}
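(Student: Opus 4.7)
The plan is to exploit the structural promises on $B$: every component has size at most $M := O(\Delta_k^{4}\log_{\Delta_k} n)$ and diameter at most $D := O(\log_{\Delta_k} n)$ in $G^{k}[B]$. Since $\Delta_k \le n^{\delta/100}$ we have $M = O(n^{\delta/25}\log n) \ll n^{\delta}$, so a full component fits on a single machine. The strategy is therefore (i) materialize the adjacency of $G^{k}[B]$ via truncated neighborhood exploration in $O(k)$ rounds; (ii) use graph exponentiation on $G^{k}[B]$ to make every vertex learn its entire component in $O(\log D) = O(\log\log n)$ rounds; and (iii) route each component to a designated leader machine and run a sequential MIS algorithm locally.

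For step (i), I will invoke Lemma~\ref{lem:truncated_neighborhood_explore} with graph $G$, set $S = B$, power $k$, and threshold $J = M$. Because $|\Gamma_{G^{k}}(v) \cap B|\le M$ for every $v \in B$ (the whole component has size at most $M$), the truncation at $J$ is never active and every $v\in B$ recovers the exact list $\Gamma_{G^{k}}(v)\cap B$. This uses $O(k)$ rounds, and the returned lists form an explicit representation of $G^{k}[B]$. For step (ii), I will perform the standard doubling: in round $i$ every vertex $v$ replaces its stored list $L_{v}^{(i-1)}$ by $\bigcup_{u\in L_{v}^{(i-1)}} L_{u}^{(i-1)}$, deduplicated and capped at size $M$. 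Since $|L_{v}^{(i)}|$ saturates at the component size, each list stays within $O(n^{\delta})$, and after $\lceil \log_{2}(c D) \rceil = O(\log\log n)$ rounds every $v \in B$ stores the adjacency of its full component. Each doubling round is $O(1)$ MPC rounds via sorting and copy-broadcasting in the standard way. For step (iii), I will choose the smallest-ID vertex in each component as its leader, ship that component's adjacency lists to the leader's machine by a sort, run greedy sequential MIS on one machine, and broadcast the selected bits back to the component members; again this is $O(1)$ MPC rounds.

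The correctness is immediate: a MIS on each connected component of $G^{k}[B]$ yields a MIS of $G^{k}[B]$. The round count is $O(k)$ for step (i) plus $O(\log\log n)$ for steps (ii) and (iii), matching the claim. For space, step (i) uses $O(m \cdot M)$ total space by Lemma~\ref{lem:truncated_neighborhood_explore}, and step (ii) uses $O(|B|\cdot M)$; with $M = n^{o(1)}$ under $\delta$ constant, both fit the $O(m)$ total-space budget of the lemma (absorbing polylog and subpolynomial factors, as elsewhere in the paper), and the per-machine $M \le n^{\delta}$ bound accommodates the $O(n^{\delta})$ local-memory constraint. The one subtle point -- and really the only place care is needed -- is keeping the doubling truncated at $M$ so local memory never exceeds $n^{\delta}$; this truncation is safe precisely because diameter $\le D$ guarantees that once a list has reached the component it will not grow, so truncation never discards a vertex that matters.
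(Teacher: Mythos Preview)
Your high-level plan (materialize $G^{k}[B]$, discover components, ship each to one machine, run sequential MIS) is sound, and the round count $O(k+\log\log n)$ works. The gap is in the total-space analysis.

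You assert that $M=O(\Delta_k^{4}\log_{\Delta_k}n)$ is $n^{o(1)}$. It is not: $\delta$ is a fixed constant and $\Delta_k$ may be as large as $n^{\delta/100}$, so $M$ can be $\Theta(n^{\delta/25})$, a genuine polynomial. Consequently your step~(i), which invokes Lemma~\ref{lem:truncated_neighborhood_explore} with threshold $J=M$, costs $O(m\cdot M)=O(m\cdot n^{\delta/25})$ total space; the exploration runs over all of $G$ and every vertex (not just those in $B$) carries a list of size up to $M$. Your step~(ii) doubling likewise needs, before deduplication, $\sum_{v\in B}|L_v^{(i-1)}|\cdot M=O(|B|\cdot M^{2})$ space per round. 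Neither of these is $O(m)$, and ``absorbing subpolynomial factors'' does not cover a fixed polynomial blow-up. The lemma as stated really promises $O(m)$.

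The paper sidesteps this by never setting $J$ large. It runs truncated exploration with $J=1$ (hence $O(m)$ space) only to tag which vertices lie within distance roughly $k/2$ of $B$; from those tags it carves out a subgraph $\hat G\subseteq G$ (so $|\hat E|\le m$) with the properties $\hat G^{k}[B]=G^{k}[B]$ and each connected component of $\hat G$ has size $O(\Delta_k^{5}\log_{\Delta_k}n)=o(n^{\delta})$ and diameter $O(k\log_{\Delta_k}n)$. It then calls an off-the-shelf $O(m)$-space MPC connectivity algorithm on $\hat G$ (taking $O(\log k+\log\log n)$ rounds), sends each component to one machine, and finishes locally. The key idea you are missing is to work in the original edge set $E$ rather than in the power graph, so that the space never exceeds $O(m)$.
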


\begin{proof}
Firstly, let us find a graph $\hat{G}=(\hat{V}, \hat{E})$ where $B\subseteq \hat{V}\subseteq V,\hat{E}\subseteq E$ such that $\hat{G}^k[B]=G^k[B]$, and the size and the diameter of each connected component of $\hat{G}$ are small.
In particular, the size of each connected component of $\hat{G}$ is at most $ O(\Delta_k^5\cdot \log_{\Delta_k} n)=o(n^{\delta})$, and the diameter of each connected component of $\hat{G}$ is at most $O(k\cdot \log_{\Delta_k} n)$.
Once we have $\hat{G}$, we run the connected component algorithm in $O(\log k + \log\log n)$ MPC rounds~\cite{behnezhad2019near,liu2020connected} to find all connected components of $\hat{G}$, and we send each connected component to a single machine.
The success probability of the connectivity algorithm is at least $1-1/n^{99}$.
The total space needed to find all connected components is at most $O(m)$.
Then we can find a maximal independent set of $\hat{G}^k[B]$ by using only local computation, thereby finding a maximal independent set of $G^k[B]$.

Next, let us describe how to construct $\hat{G}$.
Let us first consider the case when $k$ is even.
Let $\hat{k}=k/2-1$.
Let $S=B$ and $J=1$.
Then we run Algorithm~\ref{alg:truncated_neighborhood_explore} with input $G$, $S$, $\hat{k}$ and $J$.
According to Lemma~\ref{lem:truncated_neighborhood_explore}, it takes $O(k)$ MPC rounds and uses total space $O(m)$.
Furthermore, for each vertex $v\in V$, we know whether $\dist_G(v,B)\leq \hat{k}$.
Let $\hat{V}=\bigcup_{v\in V:\dist_G(v,B)\leq \hat{k}} \Gamma_G(v)$.
Let $\hat{E}=\{\{u,v\}\in E\mid \text{either }\dist_G(u,B)\leq \hat{k}\text{ or }\dist_G(v,B)\leq \hat{k}\}$.
It is easy to verify that if there is a path between $u,v\in B$ with length at most $k$, the path is completely preserved by $\hat{G}$.
Furthermore, since $\hat{E}\subseteq E$, we have $\hat{G}^k[B]=G^k[B]$.
Now consider two vertices $u,v\in B$ that are from different connected components of $G^k[B]$.
We claim that $u$ and $v$ cannot in the same connected component of $\hat{G}$.
We prove it by contradiction.
Suppose $u$ and $v$ are in different connected components in $G^k[B]$ but they are in the same connected component of $\hat{G}$.
By the construction of $\hat{G}$, we can find a vertex $w\in \hat{V}$, such that $\dist_{\hat{G}}(w,u'),\dist_{\hat{G}}(w,v')\leq k/2$, where $u'\in B$ is in the same connected component of $u$ in $G^k[B]$, and $v'\in B$ is in the same connected component of $v$ in $G^k[B]$.
Then we know that $\dist_{\hat{G}}(u',v')\leq k$ which implies that $u,u',v',v$ are in the same connected component in $G^k[B]$ which leads to a contradiction.
Since the size of each connected component of $G^k[B]$ is at most $O(\Delta_k^4\cdot \log_{\Delta_k} n)$ and the diameter of each connected component of $G^k[B]$ is at most $O(\log_{\Delta_k} n)$, the size of each connected component of $\hat{G}$ is at most $O(\Delta_k^5\cdot \log_{\Delta_k} n)$ and the diameter of each connected component of $\hat{G}$ is at most $O(k\cdot \log_{\Delta_k} n)$.

Now consider the case when $k$ is odd.
Let $\hat{k} = (k-1)/2$.
Let $S=B$ and $J=1$.
Then we run Algorithm~\ref{alg:truncated_neighborhood_explore} with input $G$, $S$, $\hat{k}$ and $J$.
According to Lemma~\ref{lem:truncated_neighborhood_explore}, it takes $O(k)$ MPC rounds and uses total space $O(m)$.
Furthermore, for each vertex $v\in V$, we know whether $\dist_G(v,B)\leq \hat{k}$.
Let $\hat{V}=\{v\in V \mid \dist_G(v,B)\leq \hat{k}\}$.
Let $\hat{E}=\{\{u,v\}\in E\mid u,v\in \hat{V}\}$.
It is easy to verify that if there is a path between $u,v\in B$ with length at most $k$, the path is completely preserved by $\hat{G}$.
Furthermore, since $\hat{E}\subseteq E$, we have $\hat{G}^k[B]=G^k[B]$.
Now consider two vertices $u,v\in B$ that are from different connected components of $\hat{G}^k[B]$.
We claim that $u$ and $v$ cannot in the same connected component of $\hat{G}$.
We prove it by contradiction.
Suppose $u$ and $v$ are in different connected components in $G^k[B]$ but they are in the same connected component of $\hat{G}$.
By the construction of $\hat{G}$, we can find an edge $\{w_1,w_2\}\in \hat{E}$ such that $\dist_{\hat{G}}(w_1,u'),\dist_{\hat{G}}(w_2,u')\leq (k-1)/2$, where $u'\in B$  is in the same connected component of $u$ in $G^k[B]$, and $v'\in B$ is in the same connected component of $v$ in $G^k[B]$.
Then we know that $\dist_{\hat{G}}(u',v')\leq k$ which implies that $u,u',v,v'$ are in the same connected component of $G^k[B]$ which leads to a contradiction.
Since the size of each connected component of $G^k[B]$ is at most $O(\Delta_k^4\cdot \log_{\Delta_k} n)$ and the diameter of each connected component of $G^k[B]$ is at most $O(\log_{\Delta_k} n)$, the size of each connected component of $\hat{G}$ is at most $O(\Delta_k^5\cdot \log_{\Delta_k} n)$ and the diameter of each connected component of $\hat{G}$ is at most $O(k\cdot \log_{\Delta_k} n)$.

\end{proof}

\begin{theorem}\label{thm:MPC_khop_MIS}
Consider an $n$-vertex $m$-edge graph $G=(V,E)$, a subset of vertices $V'\subseteq V$ and a power parameter $k\in\mathbb{Z}_{\geq 1}$.
Let $\Delta_k = \max_{v\in V} |\Gamma_{G^k}(v)|$.
For any $\gamma>(\log\log n)^2/\log n$ and any constant $\delta\in(0,1)$, there is a randomized MPC algorithm which computes a maximal independent set of $G^k[V']$ in $O\left(\left\lceil\frac{\log \Delta_k}{\sqrt{\gamma\cdot \log n}}\right\rceil\cdot (k+\log(\gamma\cdot \log n)) + \log\log n \right)$ rounds using total space $(m+n)^{1+\gamma}\cdot \poly(\log n)$, where each machine has $O(n^{\delta})$ local memory.
The success probability is at least $1-1/n^{3}$.
\end{theorem}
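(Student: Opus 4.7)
The plan is to execute the sparsified algorithm (Algorithm~\ref{alg:sparse_general_MIS}) for $T=c\log\Delta_k$ iterations split into $\lceil T/R\rceil$ phases of length $R=\tilde{c}\sqrt{\min(\delta,\gamma)\log n}$ (with $\tilde c$ a sufficiently small constant), simulate each phase via Lemma~\ref{lem:simulation_one_phase}, and finally clean up any surviving vertices via Lemma~\ref{lem:small_connected_components}. Write $I$ for the partial independent set accumulated across phases, and $B\subseteq V'$ for the set of vertices that are neither in $I$ nor $G^k$-adjacent to $I$ after the main loop; the final output will be $I$ together with an MIS of $G^k[B]$.

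For the round count, Lemma~\ref{lem:simulation_one_phase} implements one phase in $O(k+\log R)$ MPC rounds, provided each machine has at least $2^{\hat{C}(R^2+R\log r)}$ memory, where $r=\Theta(\log n)$. With our choice of $R$, we have $R^2=O(\min(\delta,\gamma)\log n)$ and $R\log r=O(\sqrt{\gamma\log n}\,\log\log n)$; the hypothesis $\gamma>(\log\log n)^2/\log n$ forces $R\log r=O(R^2)$, so $2^{\hat C(R^2+R\log r)}=n^{O(\delta)}$ comfortably fits into the per-machine budget $O(n^\delta)$. Summing over $\lceil T/R\rceil=O\bigl(\lceil\log\Delta_k/\sqrt{\gamma\log n}\rceil\bigr)$ phases yields the $O\bigl(\lceil\log\Delta_k/\sqrt{\gamma\log n}\rceil\cdot(k+\log(\gamma\log n))\bigr)$ term, and the per-phase total space $(m+n)\cdot 2^{O(R^2+R\log r)}=(m+n)\cdot n^{O(\gamma)}$ matches the claimed $(m+n)^{1+\gamma}\poly(\log n)$ budget.

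For the cleanup, Theorem~\ref{thm:correctness_MIS} guarantees with probability $1-1/n^5$ that either $\Delta_k\geq n^{\delta/100}$ and $B=\emptyset$ (so the main loop already returned a complete MIS), or every connected component of $G^k[B]$ has size $O(\Delta_k^4\log_{\Delta_k}n)$ and diameter $O(\log_{\Delta_k}n)$. In the latter case $\Delta_k<n^{\delta/100}$, so Lemma~\ref{lem:small_connected_components} applies and computes an MIS of $G^k[B]$ in $O(k+\log\log n)$ additional rounds using $O(m)$ total space and $O(n^\delta)$ per-machine memory, by collecting each component onto a single machine via a connectivity subroutine. The union $I\cup\mathrm{MIS}(G^k[B])$ is then a maximal independent set of $G^k[V']$, because $I$ is independent in $G^k$ by construction, the cleanup set is disjoint from $N_{G^k}(I)\cup I$, and maximality within $B$ together with $B$ being exactly the undominated vertices of $V'$ gives maximality overall.

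The main obstacle is parameter tuning: the choice of $R$ must simultaneously (i) keep the per-phase blow-up $2^{O(R^2)}$ inside both the $O(n^\delta)$ per-machine memory and the $n^{O(\gamma)}$ total-space factor, and (ii) keep $\lceil T/R\rceil$ at the target phase count $\lceil\log\Delta_k/\sqrt{\gamma\log n}\rceil$. Setting $R=\tilde c\sqrt{\min(\delta,\gamma)\log n}$ threads this needle, and the precondition $\gamma>(\log\log n)^2/\log n$ is precisely what absorbs the lower-order $R\log r$ term into $R^2$ so the total space does not grow past $(m+n)^{1+\gamma}\poly(\log n)$. A final union bound closes things out: each of the $O(\poly(\log n))$ phases succeeds with probability $1-1/n^{99}$ (Lemma~\ref{lem:simulation_one_phase}), Theorem~\ref{thm:correctness_MIS} fails with probability at most $1/n^5$, and the cleanup fails with probability at most $1/n^{99}$, yielding overall success probability at least $1-1/n^3$.
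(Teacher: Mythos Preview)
Your proposal is correct and follows essentially the same route as the paper. The one omission is that the paper first computes a constant-factor approximation to $\Delta_k$ (in $O(k)$ rounds and $O(m\log^2 n)$ total space, by sampling vertices at geometrically increasing rates and invoking Algorithm~\ref{alg:truncated_neighborhood_explore}); this is needed because the phase count $\lceil T/R\rceil$ and the case distinction in Theorem~\ref{thm:correctness_MIS} both depend on $\Delta_k$, which is not directly available since $G^k$ is never materialized.
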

\begin{proof}
Firstly, let us use the following way to obtain a constant approximation of $\Delta_k$.
We enumerate $D\in\{1, 2, 4,\cdots, 2^{\lceil \log n\rceil}\}$ and distinguish whether $\Delta_k\geq 1000\cdot D$ or $\Delta_k\leq D/1000$ in parallel.
Now let us focus on a fixed $D$.
Let $S$ be a random subset of $V$ where each vertex is sampled with probability $\min(10/D\cdot \log n,1)$.
Let $J = 5000\cdot \log n$.
Then we run Algorithm~\ref{alg:truncated_neighborhood_explore} on $G$, $S$, $k$ and $J$.
According to Lemma~\ref{lem:truncated_neighborhood_explore}, we use $O(k)$ MPC rounds and $O(m\cdot \log n)$ total space to learn $|S\cap \Gamma_{G^k}(v)|$ for each $v\in V$ with $|S\cap \Gamma_{G^k}(v)|\leq J$ and we can find all vertices $v$ that $|S\cap \Gamma_{G^k}(v)|> J$.
If $\Delta_k\geq 1000\cdot D$, then with probability at least $1-1/n^{30}$, there is a vertex $v\in V$ such that $|S\cap \Gamma_{G^k}(v)|\geq 10\cdot D\cdot \min(10/D\cdot \log n, 1)$.
If $\Delta_k\leq D/1000$, then with probability at least $1-1/n^{30}$, every vertex $v\in V$ satisfies $|S\cap \Gamma_{G^k}(v)|\leq D/10\cdot \min(10/D\cdot \log n, 1)$.
By taking union bound over all $D$, we can obtain a constant approximation of $\Delta_k$ with probability at least $1-1/n^{25}$.
The number of rounds is $O(k)$ and the total space is $O(m\cdot \log^2 n)$.

We simulate Algorithm~\ref{alg:sparse_general_MIS} for $T=O(\log \Delta_k)$ total iterations.
By our choice of $R$ and $r$ in Algorithm~\ref{alg:sparse_general_MIS}, the memory of each machine is at least $2^{\hat{C}\cdot (R^2+R\log r)}$ for some sufficiently large constant $\hat{C}$.
According to Lemma~\ref{lem:simulation_one_phase}, we can successfully simulate all phases of Algorithm~\ref{alg:sparse_general_MIS} in $\lceil T/R \rceil\cdot O(k+\log R)$ MPC rounds with probability at least $1-1/n^{98}$.
The total space needed is at most $(m+n)\cdot 2^{O(R^2+R\cdot \log r)}\leq (m+n)^{1+\gamma}\cdot\poly(\log n)$.
After the simulation of Algorithm~\ref{alg:sparse_general_MIS}, according to Theorem~\ref{thm:correctness_MIS}, with probability at least $1-1/n^5$, either we already obtain a maximal independent set of $G^k[V']$ or $\Delta_k\leq n^{\delta/100}$, the size of each connected component of $G^k[B]$ is at most $O(\log_{\Delta_k} n \cdot \Delta_k^4)$ and the diameter of each connected component of $G^k[B]$ is at most $O(\log_{\Delta_k} n)$, where $B\subseteq V'$ is the set of vertices that are not removed by Algorithm~\ref{alg:sparse_general_MIS}.
According to Lemma~\ref{lem:small_connected_components}, we can find a maximal independent set of $G^k[B]$ in $O(k+\log\log n)$ MPC rounds using total space $O(m)$ with probability at least $1-1/n^{99}$.

To conclude, we can compute a maximal independent set of $G^k[V']$ in $O(\lceil \log(\Delta_k) / R\rceil\cdot (k+\log R) + \log \log n)$ MPC rounds using total space $(m+n)^{1+\gamma}\cdot \poly(\log n)$ where each machine has local memory $O(n^{\delta})$.
The success probability is at least $1-1/n^3$.
\end{proof}

\subsection{Dominating Set and Ruling Set of the Power of Graph}\label{sec:ruling}
In this section, we show how to compute the dominating set and ruling set of the power of a graph in the MPC model, without constructing the power of the graph.
Our algorithm can be seen as a generalization of the framework proposed by~\cite{kothapalli2020sample}.

The dominating set algorithm is shown in Algorithm~\ref{alg:dominating_set}.

\begin{algorithm}
	\small
	\begin{algorithmic}[1]\caption{Dominating set for the subgraph of the $k$-th Power Graph}\label{alg:dominating_set}
    \STATE {\bfseries Input:} A graph $G=(V,E)$, a power parameter $k\in \mathbb{Z}_{\geq 1}$, a subset $V'\subseteq V$ and a parameter $f\geq 2$
    \STATE Let $n = |V|, m =|E|$.
    \STATE Let $\Delta_k$ be an upper bound of the maximum degree of $G^k$, i.e. $\Delta_k\geq \max_{v\in V} |\Gamma_{G^k}(v)|$.
    \STATE Initialize $U\gets \emptyset$.
    \STATE Let $V_0 = V'$.
    \FOR{$t=1\rightarrow \lceil \log_f \Delta_k \rceil$}
        \STATE Let $U_t\subseteq V_{t-1}$ be a random subset where each vertex is drawn with probability $\min(1,c\cdot f^t\cdot \log(n)/\Delta_k)$.\\
        {\hfill //$c$ is a sufficiently large constant}
        \STATE $U\gets U\cup U_t$.
        \STATE Let $V_t = \{v\in V_{t-1}\mid \dist_G(v, U_t)>k\}$.
    \ENDFOR
    \STATE Return $U$ and $G^k[U]$.
	\end{algorithmic}
\end{algorithm}

\begin{lemma}\label{lem:deg_reduce}
Consider a graph $G=(V,E)$, a power parameter $k\in \mathbb{Z}_{\geq 1}$, a subset $V'\subseteq V$ and a parameter $f>1$.
For $t\in [\lceil \log_f \Delta_k \rceil]\cup\{0\}$, let $V_t$ be the same as in Algorithm~\ref{alg:dominating_set}.
With probability at least $1-1/n^{20}$, $\forall t \in [\lceil \log_f \Delta_k \rceil]\cup \{0\}$, $\forall v\in V_t,|\Gamma_{G^k}(v)\cap V_t|\leq \Delta_k/f^t$.
\end{lemma}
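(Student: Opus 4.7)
The plan is to prove this by induction on $t$, showing the statement holds for a fixed $t$ with probability at least $1 - 1/n^{21}$ (conditional on it holding at $t-1$), then taking a union bound over the $O(\log_f \Delta_k) \leq O(\log n)$ values of $t$. The base case $t=0$ is immediate because $V_0 = V' \subseteq V$, so $|\Gamma_{G^k}(v) \cap V_0| \leq |\Gamma_{G^k}(v)| \leq \Delta_k = \Delta_k/f^0$.

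For the inductive step, suppose the bound holds at level $t-1$. Partition $V_{t-1}$ into \emph{good} vertices, those with $|\Gamma_{G^k}(v) \cap V_{t-1}| \leq \Delta_k/f^t$, and \emph{bad} vertices. For good vertices, since $V_t \subseteq V_{t-1}$, the inductive bound $|\Gamma_{G^k}(v) \cap V_t| \leq |\Gamma_{G^k}(v) \cap V_{t-1}| \leq \Delta_k/f^t$ follows for free. So the whole task is to argue that no bad vertex survives into $V_t$. The key observation is that $v \in V_t$ requires $\dist_G(v, U_t) > k$, i.e., \emph{no} vertex of $\Gamma_{G^k}(v) \cap V_{t-1}$ (remember $v$ is sampled from $V_{t-1}$) is placed into $U_t$. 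When the sampling probability $\min(1, cf^t\log n/\Delta_k)$ equals $1$, any bad $v$ is itself in $U_t$ and hence excluded from $V_t$; otherwise, since bad vertices have at least $\Delta_k/f^t$ candidate neighbors in $V_{t-1}$, independence of the samples gives
\[
\Pr[v \in V_t \mid v \text{ bad}] \leq \left(1 - \tfrac{cf^t\log n}{\Delta_k}\right)^{\Delta_k/f^t} \leq e^{-c\log n} = n^{-c}.
\]
Choosing $c$ large enough (say $c \geq 22$), a union bound over the at most $n$ bad vertices yields failure probability at most $n \cdot n^{-c} \leq n^{-21}$ for this level.

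Finally, a union bound over the $O(\log n)$ values of $t$ gives overall failure probability at most $1/n^{20}$, completing the induction and the lemma. The main obstacle I anticipate is purely bookkeeping: being careful that the conditional independence argument is valid, which it is because the random samples $U_t$ at each level are drawn independently, so conditioning on the success event at level $t-1$ (a function of $U_1, \ldots, U_{t-1}$) leaves the Bernoulli samples generating $U_t$ untouched. No delicate probabilistic inequality beyond a Chernoff-style bound $(1-p)^N \leq e^{-pN}$ is needed.
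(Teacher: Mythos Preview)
Your proposal is correct and follows essentially the same approach as the paper: induction on $t$, with the base case immediate from the definition of $\Delta_k$, and the inductive step handled by observing that any ``bad'' vertex (one with more than $\Delta_k/f^t$ neighbors in $V_{t-1}$) is hit by $U_t$ with high probability via the bound $(1-p)^N\le e^{-pN}$, then union bounding over vertices and levels. Your write-up is in fact slightly more careful than the paper's, explicitly separating the $p=1$ case and noting why conditioning on earlier levels does not disturb the independence of the samples for $U_t$.
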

\begin{proof}
Since $\Delta_k\geq \max_{v\in V}|\Gamma_{G^k}(v)|$, we have that $\forall v\in V_0,|\Gamma_{G^k}(v)\cap V_0|\leq \Delta_k$.
Consider $t\in [\lceil \log_f \Delta_k \rceil]$.
Suppose $|\Gamma_{G^k}(v)\cap V_{t-1}|\leq \Delta_k/f^{t-1}$.
According to Algorithm~\ref{alg:dominating_set}, $U_t$ is a set of samples drawn from $V_{t-1}$ such that each vertex is sampled with probability $\min(1,c\cdot f^t \cdot \log(n)/\Delta_k)$.
Consider a vertex $v\in V_{t-1}$ with $|\Gamma_{G^k}(v)\cap V_t|>\Delta_k/f^t$.
By Chernoff bound, with probability at least $1-1/n^{100}$, $\Gamma_{G^k}(v)\cap U_t\not =\emptyset$ which implies that $v\not\in V_t$ by the construction of $V_t$.
By taking union bound over all vertices and all $t\in [\lceil \log_f \Delta_k \rceil]$, we conclude that with probability at least $1-1/n^{20}$, $\forall t \in [\lceil \log_f \Delta_k \rceil]\cup \{0\}$, $\forall v\in V_t,|\Gamma_{G^k}(v)\cap V_t|\leq \Delta_k/f^t$.
\end{proof}

\begin{fact}\label{fac:far_Ui}
Consider a graph $G=(V,E)$, a power parameter $k\in \mathbb{Z}_{\geq 1}$, a subset $V'\subseteq V$ and a parameter $f>1$.
For $t\in [\lceil \log_f \Delta_k \rceil]$, let $U_t$ be the same as in Algorithm~\ref{alg:dominating_set}.
$\forall t\not=t'\in [\lceil \log_f \Delta_k \rceil],\forall u\in U_t,v\in U_{t'},\dist_G(u,v)>k$.
\end{fact}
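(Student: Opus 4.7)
The plan is to argue by a direct unpacking of the definitions, with no randomness involved. Assume without loss of generality that $t < t'$ (the statement is symmetric in $t,t'$). The proof will rest on the monotonicity of the sequence $V_0 \supseteq V_1 \supseteq \cdots$ combined with the defining property of $V_t$.

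First I would observe that by construction $V_{j} \subseteq V_{j-1}$ for every $j$, since $V_j$ is obtained from $V_{j-1}$ by keeping only those vertices whose distance in $G$ from $U_j$ exceeds $k$. Iterating this containment from $j = t+1$ up to $j = t'-1$ yields $V_{t'-1} \subseteq V_t$. Next, note that $U_{t'}$ is drawn as a subset of $V_{t'-1}$, so $U_{t'} \subseteq V_{t'-1} \subseteq V_t$.

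Now fix any $u \in U_t$ and any $v \in U_{t'}$. Since $v \in V_t$, the definition $V_t = \{w \in V_{t-1} \mid \dist_G(w, U_t) > k\}$ gives $\dist_G(v, U_t) > k$, and in particular $\dist_G(v,u) \geq \dist_G(v, U_t) > k$, which is exactly the claim.

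There is essentially no obstacle here: the fact is a bookkeeping consequence of the fact that once the algorithm samples $U_t$, every vertex within $k$ hops of $U_t$ is immediately purged from $V_t$, and subsequent rounds only sample inside the shrinking sets $V_{t},V_{t+1},\ldots$ The only thing to be careful about is to treat the cases $t < t'$ and $t > t'$ symmetrically, which is handled by the WLOG reduction above.
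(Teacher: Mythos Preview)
Your proof is correct and follows essentially the same approach as the paper's own proof; the paper simply states $U_{t'}\subseteq V_t$ directly (relying implicitly on the monotonicity $V_{t'-1}\subseteq\cdots\subseteq V_t$ that you spell out), and then concludes $\dist_G(v,U_t)>k$ from the definition of $V_t$.
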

\begin{proof}
Suppose $t<t'$.
According to Algorithm~\ref{alg:dominating_set}, we know that $U_{t'}\subseteq V_t$.
By the construction of $V_t$, for any $v\in V_t,\dist_G(v,U_t)>k$.
Thus, $\forall v\in U_{t'}, \dist_G(v, U_t)>k$.
\end{proof}

\begin{lemma}\label{lem:correctness_dominating_set}
Consider a graph $G=(V,E)$, a power parameter $k\in \mathbb{Z}_{\geq 1}$, a subset $V'\subseteq V$ and a parameter $f>1$.
Let $U$ be the output of Algorithm~\ref{alg:dominating_set}.
With probability at least $1-1/n^{10}$, the following properties hold:
\begin{enumerate}
    \item $\forall v\in V'$, $\dist_G(v, U)\leq k$.
    \item $\forall v\in U$, $|\Gamma_{G^k}(v)\cap U|\leq 10\cdot c\cdot f\cdot \log(n)$.
\end{enumerate}
\end{lemma}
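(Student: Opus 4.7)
The plan is to prove the two properties separately: property 1 (coverage) will be essentially deterministic from the algorithm's structure, while property 2 (low neighborhood intersection) will combine Lemma~\ref{lem:deg_reduce}, Fact~\ref{fac:far_Ui}, and a Chernoff bound, followed by a union bound.

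For property 1, set $T = \lceil \log_f \Delta_k \rceil$ and observe that in iteration $T$ the sampling probability is $\min(1, c\cdot f^T \log(n)/\Delta_k) = 1$, since $f^T \geq \Delta_k$ and $c \geq 1$. Hence $U_T = V_{T-1}$, and by the definition of $V_T$ this forces $V_T = \emptyset$. It follows that every $v \in V' = V_0$ exits $V_t$ at some smallest iteration $t^* \in [T]$, meaning $v \in V_{t^*-1}$ but $v \notin V_{t^*}$, which by definition of $V_{t^*}$ yields $\dist_G(v, U_{t^*}) \leq k$; since $U_{t^*} \subseteq U$, we conclude $\dist_G(v, U) \leq k$. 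This argument is deterministic and requires no high-probability event.

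For property 2, fix $v \in U$, so $v \in U_t$ for some $t \in [T]$. Applying Fact~\ref{fac:far_Ui}, every vertex in $U \setminus U_t$ lies at graph distance greater than $k$ from $v$ and therefore does not belong to $\Gamma_{G^k}(v)$. Hence $\Gamma_{G^k}(v) \cap U = \Gamma_{G^k}(v) \cap U_t$, and it suffices to bound the latter. Recall that $U_t$ is obtained by independently sampling each vertex of $V_{t-1}$ with probability $p_t = \min(1, c\cdot f^t \log(n)/\Delta_k)$, and that Lemma~\ref{lem:deg_reduce} ensures $|\Gamma_{G^k}(v) \cap V_{t-1}| \leq \Delta_k/f^{t-1}$ with probability at least $1-1/n^{20}$. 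If $p_t = 1$, then $cf^t \log n \geq \Delta_k$, so deterministically $|\Gamma_{G^k}(v) \cap U_t| = |\Gamma_{G^k}(v) \cap V_{t-1}| \leq \Delta_k/f^{t-1} \leq cf \log n$. Otherwise $\E[|\Gamma_{G^k}(v) \cap U_t|] \leq (\Delta_k/f^{t-1}) \cdot p_t \leq cf \log n$, and a standard multiplicative Chernoff bound applied to the sum of independent Bernoullis yields $|\Gamma_{G^k}(v) \cap U_t| \leq 10\, cf \log n$ with probability at least $1-n^{-12}$, provided $c$ is chosen sufficiently large.

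The remaining obstacle is merely the probability accounting: one takes a union bound of the Chernoff tail over all $v \in V'$ and all $t \in [T]$, using $|V'| \leq n$ and $T \leq \log_f \Delta_k \leq \log n$, and then combines with the $1/n^{20}$ failure probability of Lemma~\ref{lem:deg_reduce}. Choosing $c$ large enough drives the total failure probability below $1/n^{10}$, as desired. The main conceptual content is the observation that Fact~\ref{fac:far_Ui} reduces the global intersection $\Gamma_{G^k}(v)\cap U$ to the single-phase intersection $\Gamma_{G^k}(v) \cap U_t$, where the degree control provided by Lemma~\ref{lem:deg_reduce} exactly matches the sampling rate $p_t$ in expectation.
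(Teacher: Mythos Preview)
Your proof is correct and follows the same approach as the paper: Fact~\ref{fac:far_Ui} reduces $\Gamma_{G^k}(v)\cap U$ to $\Gamma_{G^k}(v)\cap U_t$, and then Lemma~\ref{lem:deg_reduce} combined with a Chernoff bound and a union bound gives property~2. Your argument for property~1 is in fact slightly cleaner than the paper's: you observe that in the final iteration the sampling probability is $\min(1,c\cdot f^{\lceil\log_f\Delta_k\rceil}\log(n)/\Delta_k)=1$, so $U_T=V_{T-1}$ and hence $V_T=\emptyset$ deterministically, whereas the paper appeals to Lemma~\ref{lem:deg_reduce} to conclude $V_T=\emptyset$ only with high probability.
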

\begin{proof}
According to Lemma~\ref{lem:deg_reduce}, with probability at least $1-1/n^{20}$, we have $V_{\lceil \log_f \Delta_k \rceil}=\emptyset$.
By the construction of $U_1,U_2,\cdots, U_{\lceil \log_f \Delta_k \rceil}$ and $V_1,V_2,\cdots,V_{\lceil \log_f \Delta_k \rceil}$ in Algorithm~\ref{alg:dominating_set}, we know that $\forall v\in V', \dist_G(v,U)\leq k$.

According to Fact~\ref{fac:far_Ui}, it suffices to show that with probability at least $1-1/n^{15}$, $\forall t\in [\lceil \log_f \Delta_k \rceil],\forall u \in U_t,|\Gamma_{G^k}(u)\cap U_t|\leq 10\cdot f\cdot \log(n)$.
According to Lemma~\ref{lem:deg_reduce}, with probability at least $1-1/n^{20}$, the following event $\mathcal{E}$ happens: $\forall t\in [\lceil \log_f \Delta_k\rceil]\cup \{0\}, \forall v\in V_t,|\Gamma_{G^k}(v)\cap V_t|\leq \Delta_k/f^t$.
In the remaining of the proof, we condition on $\mathcal{E}$.
Consider $t\in [\lceil \log_f \Delta_k \rceil]$ and a vertex $u\in U_t$.
By the construction of $U_t$, we know that $u\in V_{t-1}$.
By event $\mathcal{E}$, we know that $|\Gamma_{G^k}(u)\cap V_{t-1}|\leq \Delta_k/f^{t-1}$.
Since $U_t$ is a random subset of $V_{t-1}$ where each vertex is chosen with probability $\min(1,c\cdot f^t\cdot\log(n)/\Delta_k)$ and $c$ is a sufficiently large constant, we know that with probability at least $1-1/n^{100}$, $|\Gamma_{G^k}(u)\cap U_t|\leq 10\cdot c\cdot f\cdot \log (n)$.
\end{proof}

\begin{theorem}\label{thm:MPC_khop_dominating_set}
Consider an $n$-vertex $m$-edge graph $G=(V,E)$,  a subset of vertices $V'\subseteq V$, a power parameter $k\in\mathbb{Z}_{\geq 1}$ and an arbitrary $f\geq 2$.
Let $\Delta_k = \max_{v\in V} |\Gamma_{G^k}(v)|$.
There is a randomized MPC algorithm which computes a dominating set $U\subseteq V'$ of $G^k[V']$ and the corresponding induced subgraph $G^k[U]$ in $O(k\cdot \lceil \log_f \Delta_k\rceil) $ rounds using total space $(m+n)\cdot f\cdot \poly(\log n)$.
Furthermore, $U$ satisfies that $\forall v\in U, |\Gamma_{G^k}(v)\cap U|\leq O(f\cdot \log n)$.
The success probability of the algorithm is at least $1-1/n^{5}$.
\end{theorem}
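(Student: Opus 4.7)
The plan is to show that Algorithm~\ref{alg:dominating_set} can be implemented faithfully in the MPC model within the stated resource budgets, and then appeal to Lemma~\ref{lem:correctness_dominating_set} for the output guarantees. First, I would handle the issue that $\Delta_k$ is not known upfront: using the same doubling/sampling idea used at the start of the proof of Theorem~\ref{thm:MPC_khop_MIS} (a random subset $S$ with sampling probability $\approx \log n / D$ for each guess $D \in \{1,2,4,\ldots,n\}$, combined with truncated neighborhood exploration with threshold $J = \Theta(\log n)$), one obtains a constant-factor estimate of $\Delta_k$ with probability $1 - 1/n^{25}$ in $O(k)$ MPC rounds using $\wt{O}(m)$ total space. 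Replacing $\Delta_k$ in Algorithm~\ref{alg:dominating_set} by this estimate only affects the sampling probabilities by a constant factor, and Lemma~\ref{lem:deg_reduce} and Lemma~\ref{lem:correctness_dominating_set} still hold after adjusting the constant $c$ accordingly.

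Next, I would simulate the $T = \lceil \log_f \Delta_k \rceil$ iterations one by one. At iteration $t$, given the already computed set $V_{t-1}$ (stored distributedly, one word per vertex), each machine can locally flip the coin to decide membership in $U_t$, since the sampling is independent across vertices. The nontrivial step is computing $V_t = \{v \in V_{t-1} \mid \dist_G(v, U_t) > k\}$. For this I would invoke the truncated neighborhood exploration subroutine of Lemma~\ref{lem:parallel_bfs} (equivalently Lemma~\ref{lem:truncated_neighborhood_explore} with $J=1$) on input $G$, $S = U_t$, power $k$: in $O(k)$ rounds and $O(m)$ total space every vertex $v$ learns whether $\Gamma_{G^k}(v) \cap U_t$ is empty, which is exactly the information needed to form $V_t$. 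Summing over all iterations gives $O(k \cdot \lceil \log_f \Delta_k \rceil)$ rounds and a per-iteration overhead of $O(m)$ total space, which is within budget.

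After the last iteration, $U = U_1 \cup \cdots \cup U_T$ is available. Lemma~\ref{lem:correctness_dominating_set} guarantees, with probability at least $1 - 1/n^{10}$, that $U$ is a dominating set of $G^k[V']$ and moreover $|\Gamma_{G^k}(u) \cap U| \leq 10 c f \log n$ for every $u \in U$. To produce $G^k[U]$ explicitly, I would apply Lemma~\ref{lem:truncated_neighborhood_explore} once more with $S = U$, power $k$ and threshold $J = \Theta(f \log n)$. Because the degree bound just mentioned holds, the truncation in Algorithm~\ref{alg:truncated_neighborhood_explore} never discards a true neighbor in $U$, so each $u \in U$ recovers exactly $\Gamma_{G^k}(u) \cap U$, and the induced edges of $G^k[U]$ are obtained by emitting the pairs. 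The resource cost of this single call is $O(k)$ rounds and $O(m \cdot f \log n) = (m+n) \cdot f \cdot \poly(\log n)$ total space, matching the theorem statement. Throughout, all primitives used (sampling, sorting, Algorithm~\ref{alg:truncated_neighborhood_explore}, parallel BFS) are fully scalable, so the overall algorithm is fully scalable.

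The main obstacle is really the last step: one needs the $O(f \log n)$ per-vertex degree guarantee on $G^k[U]$ before one is allowed to materialize $G^k[U]$ cheaply, since otherwise writing down all $k$-hop neighborhoods in $U$ could blow up space. That is exactly why the second conclusion of Lemma~\ref{lem:correctness_dominating_set} is proved alongside the dominating-set property, and why it is essential to condition on the high-probability event in that lemma before invoking the truncated exploration for $J = \Theta(f \log n)$. A union bound over all randomness used (estimating $\Delta_k$, the iterative sampling, and the final exploration) yields overall failure probability at most $1/n^5$, giving the stated theorem.
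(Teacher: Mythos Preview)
Your proposal is correct and follows essentially the same approach as the paper: estimate $\Delta_k$ via sampling and truncated exploration, simulate Algorithm~\ref{alg:dominating_set} iteration by iteration using Lemma~\ref{lem:truncated_neighborhood_explore}, and rely on Lemma~\ref{lem:correctness_dominating_set} for the output guarantees. The only minor difference is that the paper constructs $G^k[U]$ incrementally during the iterations (running the truncated exploration already with threshold $J=\Theta(f\log n)$ at each step and using Fact~\ref{fac:far_Ui} to conclude $\Gamma_{G^k}(u)\cap U_t=\Gamma_{G^k[U]}(u)$), whereas you defer this to a single final call with $S=U$; both variants have the same round and space bounds.
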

\begin{proof}
According to Lemma~\ref{lem:correctness_dominating_set}, with probability at least $1-1/n^{10}$, the output $U$ of Algorithm~\ref{alg:dominating_set} is a dominating set of $G^k[V']$ and furthermore, $\forall v\in U,$ we have $|\Gamma_{G^k}(v)\cap U| = O(f\cdot \log n)$.

Next, we analyze the number of MPC rounds needed to simulate Algorithm~\ref{alg:dominating_set}.
Similar to the proof of Theorem~\ref{thm:MPC_khop_MIS}, we use $O(k)$ rounds and total space $O(m\cdot \log^2 n)$ to obtain a constant approximation of $\Delta_k$ with probability at least $1-1/n^{25}$.
We run Algorithm~\ref{alg:dominating_set} on $G,k,V',$ and $f$.
The number of iterations of Algorithm~\ref{alg:dominating_set} is $O(\lceil \log_f \Delta_k \rceil ) $.
We just need to show how to simulate one iteration of Algorithm~\ref{alg:dominating_set} using $O(k)$ MPC rounds and $(m+n)\cdot f\cdot \poly(\log n)$ total space.
Consider an iteration $t$. 
For each vertex in $V_{t-1}$, we add it into $U_t$ with probability $\min(1,c\cdot f^t\cdot \log(n)/\Delta_k)$.
Generating $U_t$ does not need any communication between machines.
For each vertex  $v\in V_{t-1}$, we need to learn $\Gamma_{G^k}(v)\cap U_t$.
To achieve this goal we run Algorithm~\ref{alg:truncated_neighborhood_explore} for graph $G$, set $S=U_t$, parameter $k$ and $J=\wt{c}\cdot f\cdot \log n$ where $\wt{c}$ is a sufficiently large constant.
According to Lemma~\ref{lem:truncated_neighborhood_explore}, this step takes $O(k)$ MPC rounds and the total space needed is at most $O(m\cdot J)=O((m+n)\cdot f\cdot \log n)$.
Furthermore, for each vertex $v\in V_{t-1}$, we learn whether $\Gamma_{G^k}(v)\cap U_t=\emptyset$.
Thus, for each vertex $v\in V_{t-1}$, we know whether it is survived in $V_t$.
For each $v\in V_{t-1}$, we also learn $\Gamma_{G^k}(v)\cap U_t$ if $|\Gamma_{G^k}(v)\cap U_t|\leq J$.
According to Lemma~\ref{lem:correctness_dominating_set}, with probability at least $1-1/n^{10}$, $\forall u\in U_t,|\Gamma_{G^k}(u)\cap U_t|\leq 10\cdot c\cdot f\cdot \log n\leq J$.
Thus, for each vertex $u\in U_t$, we learn $\Gamma_{G^k}(u)\cap U_t$.
According to Fact~\ref{fac:far_Ui}, for each $u\in U_t$, we have $\Gamma_{G^k}(u)\cap U_t = \Gamma_{G^k[U]}(u)$.
Thus, We can explicitly construct the edges of $G^k[U]$.

Thus, one iteration of Algorithm~\ref{alg:dominating_set} can be simulated in $O(k)$ MPC rounds and the total space $O(m+n)\cdot f\cdot \poly(\log n)$.
By taking union bound over all iterations, the overall success probability of the algorithm is at least $1-1/n^5$.
\end{proof}

To obtain a ruling set of the power of a graph, we need the following two tools.
\begin{theorem}[\cite{kothapalli2020sample}]\label{thm:MPC_ruling_set}
Consider an $n$-vertex $m$-edge graph $G=(V,E)$, and a parameter $\beta\in \mathbb{Z}_{\geq 2}$.
Let $\Delta=\max_{v\in V} |\Gamma_{G}(v)|$.
Let $\gamma > 0$.
There is a randomized MPC algorithm which computes a ruling set $U$ satisfying the following properties with probability at least $1-1/n^{20}$:
\begin{enumerate}
    \item $\forall u\in U,\dist_G(u, U\setminus\{u\})>1$,
    \item $\forall v\in V, \dist_G(v, U)\leq \beta$.
\end{enumerate}
The algorithm takes $O(\beta/\gamma \cdot \log^{1/(2^{\beta+1}-2)} (\Delta) \cdot \log\log n)$ MPC rounds using $(m+n^{1+\gamma})\cdot\poly(\log n)$ total space.
\end{theorem}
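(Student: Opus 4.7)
The plan is to adapt the classical Gfeller--Vicari sampling-and-recursion framework for $\beta$-ruling sets to the MPC model, following the approach of Kothapalli et al.\ \cite{kothapalli2020sample}. The core recursive idea is: sample each surviving vertex independently with probability $p \approx 1/\Delta^{1/(2^{\beta+1}-2)}$, compute a $(\beta{-}1)$-ruling set $U'$ of the sampled subgraph $G[S]$ recursively, append $U'$ to the output, and then remove from the active vertex set every vertex within graph distance $\beta$ of $U'$. A Chernoff and union bound argument shows that the induced maximum degree on $S$ drops by a factor of roughly $\Delta^{1/(2^{\beta+1}-2)}$ with high probability, so $O(\log^{1/(2^{\beta+1}-2)} \Delta)$ recursive sampling phases suffice for the effective degree to shrink to $\poly(\log n)$.

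First I would describe how to implement a single phase in MPC in $\wt{O}(1/\gamma)$ rounds: drawing the sample $S$ needs only local randomness; forming $G[S]$ is done by sorting edges so that edges both of whose endpoints lie in $S$ are kept, which takes $O(1/\gamma)$ rounds using $(m+n^{1+\gamma})\cdot\poly(\log n)$ total space on machines of size $n^{\gamma}$; and peeling off the $\beta$-neighborhood of $U'$ reduces to a truncated neighborhood exploration (Lemma~\ref{lem:truncated_neighborhood_explore}) of depth $\beta$ with threshold $J=1$. The factor $\beta/\gamma$ in the round complexity is exactly the product of the per-phase cost $O(1/\gamma)$ and the $\beta$ layers of recursion. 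The base case $\beta=1$, an MIS of the sparse residual graph, is handled by the Ghaffari–Uitto-style algorithm in $\wt{O}(\sqrt{\log \Delta'}+\sqrt{\log\log n})$ additional rounds, which gives the additive $\log\log n$ term for $\beta\geq 2$.

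The main obstacle is controlling the total space across the recursion: each recursive call is made on a smaller but still potentially dense graph, and one must avoid replicating the edges of $G[S]$ across many machines. This is managed by the $n^{\gamma}$ memory slack, which lets one round of edge redistribution be spread over $O(1/\gamma)$ MPC super-steps (doubling the neighborhood exploration depth in each super-step), a trick that is also what halves the Gfeller--Vicari exponent from $1/(2^\beta-1)$ to $1/(2^{\beta+1}-2)$.

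Correctness is then straightforward to verify: independence holds because peeling the $\beta$-neighborhood of $U'$ prevents future output vertices from being within distance $1$ of $U'$, and $\beta$-domination follows by induction, since every removed vertex lies within distance $\beta$ of some vertex now in the output. The $1-1/n^{20}$ success probability follows from a union bound over the $\poly(\log n)$ Chernoff events across the $O\!\bigl(\beta/\gamma\cdot \log^{1/(2^{\beta+1}-2)}\Delta\cdot \log\log n\bigr)$ phases and calls to the truncated neighborhood exploration subroutine, each of which fails with probability at most $1/n^{c}$ for a large constant $c$.
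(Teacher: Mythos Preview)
The paper does not prove this theorem at all: it is stated as a black-box citation from \cite{kothapalli2020sample} and then invoked (together with Theorem~\ref{thm:MPC_MIS}) inside the proof of Theorem~\ref{thm:MPC_k_hop_ruling_set}. So there is no ``paper's own proof'' to compare your proposal against.

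Your sketch is a reasonable reconstruction of the Kothapalli--Pai--Pemmaraju argument: the Gfeller--Vicari sampling-and-recursion scheme, degree reduction by a $\Delta^{1/(2^{\beta+1}-2)}$ factor per phase via Chernoff, and an MIS call at the base of the recursion. A couple of minor points if you intend to flesh this out. First, invoking Lemma~\ref{lem:truncated_neighborhood_explore} here is anachronistic, since that lemma is a contribution of \emph{this} paper introduced for power graphs; for an ordinary graph the $\beta$-hop peeling is just $\beta$ rounds of neighbor aggregation, which is all \cite{kothapalli2020sample} needs. Second, the $\log\log n$ in the stated bound is a multiplicative factor, not an additive one as you wrote; it arises because each recursive level itself runs $O(\log\log n)$ MPC rounds (from the sparsified simulation of the sampling/peeling steps), not merely from the base-case MIS. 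Finally, your independence argument is slightly loose: peeling the full $\beta$-neighborhood of $U'$ is more than necessary for independence (distance $>1$ suffices), and in the actual recursion one peels only the vertices dominated at the current layer so that the inductive $\beta$-domination guarantee goes through without over-removing.
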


\begin{theorem}[\cite{ghaffari2019sparsifying}]\label{thm:MPC_MIS}
Consider an $n$-vertex $m$-edge graph $G=(V,E)$.
Let $\Delta=\max_{v\in V}|\Gamma_G(v)|$.
There is a randomized MPC algorithm which computes a maximal independent set of $G$ with probability at least $1-1/n^{20}$ in $O(\sqrt{\log \Delta}\cdot \log\log \Delta +\sqrt{\log\log n})$ MPC rounds using total space at most $m\cdot \poly(\log n)$.
\end{theorem}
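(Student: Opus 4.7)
The plan is to apply the sparsified MIS framework developed earlier in this section, specialized to the trivial power $k=1$ so that $G^k = G$ and no multi-hop exploration of the input graph is needed. Concretely, I would run Algorithm~\ref{alg:sparse_general_MIS} with $k=1$ and $V' = V$ for $T = \Theta(\log \Delta)$ total iterations, grouped into phases of length $R = \Theta(\sqrt{\log \Delta})$, giving $O(\sqrt{\log \Delta})$ phases overall. By Theorem~\ref{thm:correctness_MIS}, with probability at least $1 - 1/n^5$ either a maximal independent set of $G$ is already obtained (the typical case when $\Delta$ is polynomial in $n$), or the remaining set $B \subseteq V$ induces a subgraph $G[B]$ whose connected components have size $O(\Delta^4 \log_\Delta n)$ and diameter $O(\log_\Delta n)$.

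I would implement each phase in only $O(\log R) = O(\log \log \Delta)$ MPC rounds. Because $k = 1$, constructing the sparsified phase-graph $H'_{t,t'}$ via Lemma~\ref{lem:compute_Htt} costs only $O(1)$ rounds (the truncated neighborhood exploration reduces to a single hop in $G$). The remaining per-phase cost is the doubling computation of $\Gamma_{(H'_{t,t'})^{3R}}(v)$ from Lemma~\ref{lem:compute_Htt_neighborhood}, taking $O(\log R)$ rounds, followed by purely local simulation of the phase as in Lemma~\ref{lem:simulation_one_phase}. Summed over $O(\sqrt{\log \Delta})$ phases this yields the $O(\sqrt{\log \Delta} \cdot \log \log \Delta)$ contribution. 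For the residual components I would apply Lemma~\ref{lem:small_connected_components} with $k=1$: gathering each polylogarithmic-size component onto a single machine via an MPC connectivity algorithm in $O(\log \log n)$ rounds and solving MIS locally. The sharper $O(\sqrt{\log \log n})$ in the theorem statement would come from recursing the whole scheme one extra level on the residual pieces, whose effective maximum degree is only polylogarithmic in $n$.

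The main obstacle will be enforcing the total-space bound $m \cdot \poly(\log n)$ rather than the weaker $(m+n)^{1+\gamma}\cdot \poly(\log n)$ inherited from Theorem~\ref{thm:MPC_khop_MIS}. With $R = \Theta(\sqrt{\log \Delta})$ the doubling step of Lemma~\ref{lem:compute_Htt_neighborhood} costs roughly $n \cdot 2^{\Theta(R^2 + R \log r)} = n \cdot \Delta^{\Theta(1)}\cdot \poly(\log n)$ space, which is far too large when $\Delta$ is polynomial in $n$. To meet the target I would split into cases by the magnitude of $\Delta$: when $\Delta \le \poly(\log n)$ the bound is already fine; when $\Delta$ is larger I would first invoke the dominating-set/degree-reduction procedure of Theorem~\ref{thm:MPC_khop_dominating_set} (with $k=1$) to reduce to a graph whose maximum degree is much smaller, solve MIS on that reduced graph within the tight space budget, and then lift the solution back, repeating the reduction if necessary. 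Handling the interaction between this recursion, the $\sqrt{\log \Delta}\cdot \log \log \Delta$ bound, and the final $\sqrt{\log \log n}$ cleanup cost is the delicate part of the argument.
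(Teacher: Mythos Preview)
The paper does not prove this theorem. Theorem~\ref{thm:MPC_MIS} is stated with a citation to~\cite{ghaffari2019sparsifying} and is used purely as a black box: in the proof of Theorem~\ref{thm:MPC_k_hop_ruling_set}, after a dominating set $U'$ with explicitly constructed induced subgraph $G^k[U']$ has been produced, the paper simply invokes this cited result on that explicit graph. There is no attempt to derive Theorem~\ref{thm:MPC_MIS} from the $k$-hop machinery developed in Sections~\ref{sec:MPC_power_graph_subroutine}--\ref{sec:mis}.

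Your plan to obtain it by specializing Theorem~\ref{thm:MPC_khop_MIS} to $k=1$ is therefore not what the paper does, and as you correctly flag, it does not go through for the stated space bound. Theorem~\ref{thm:MPC_khop_MIS} ties $R$ to the parameter $\gamma$ via $R=\Theta(\sqrt{\gamma\log n})$, and the total space is $(m+n)\cdot 2^{O(R^2+R\log r)}$; choosing $R=\Theta(\sqrt{\log\Delta})$ to get the round bound forces $2^{O(R^2)}=\Delta^{\Theta(1)}$ extra space, which is exactly the obstacle you identified. The original result in~\cite{ghaffari2019sparsifying} avoids this because, for $k=1$, one does not need the truncated-exploration/doubling pipeline at all: each vertex already sees its actual neighbors, so the sparsified phase graph and the $3R$-hop local views can be gathered directly without the $2^{O(R^2)}$ blow-up. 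Your proposed fix via the dominating-set reduction of Theorem~\ref{thm:MPC_khop_dominating_set} would add an $O(\lceil\log\Delta/(\gamma\log n)\rceil)$ term and still leave the space--rounds tension unresolved, so it would not recover the clean $O(\sqrt{\log\Delta}\cdot\log\log\Delta+\sqrt{\log\log n})$ bound with $m\cdot\poly(\log n)$ space claimed here.
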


By combining Theorem~\ref{thm:MPC_khop_dominating_set} with Theorem~\ref{thm:MPC_ruling_set} and Theorem~\ref{thm:MPC_MIS}, we obtain the following theorem which presents an MPC algorithm to compute a ruling set of the power of the graph.

\begin{theorem}\label{thm:MPC_k_hop_ruling_set}
Consider an $n$-vertex $m$-edge graph $G=(V,E)$, a subset of vertices $V'\subseteq V$, a power parameter $k\in \mathbb{Z}_{\geq 1}$.
Let $\Delta_k=\max_{v\in V} |\Gamma_{G^k}(v)|$.
Let $\gamma>\log\log(n)/\log(n),\beta\in\mathbb{Z}_{\geq 2}$.
There is a randomized MPC algorithm which computes a ruling set $U$ of $G^k[V']$ satisfying the following properties with probability at least $1-1/n^{4}$:
\begin{enumerate}
    \item $\forall u\in U,\dist_G(u,U\setminus\{u\})>k$,
    \item $\forall v\in V',\dist_G(v,U)\leq \beta\cdot k$.
\end{enumerate}
The algorithm uses $(m+n)n^{\gamma}\cdot \poly(\log n)$ total space where each machine uses $O(n^{\delta})$ local memory for some arbitrary constant $\delta\in (0,1)$.
Furthermore, the algorithm takes $O(k\cdot \lceil \frac{\log\Delta_k}{\gamma\cdot \log n} \rceil + \min(\sqrt{\log \Delta_k}\cdot \log\log \Delta_k,  \sqrt{\gamma\log n}\cdot \log\log(\gamma\log n))+\sqrt{\log\log n})$ MPC rounds for $\beta = 2$ and takes $O(k\cdot \lceil \frac{\log\Delta_k}{\gamma\cdot \log n} \rceil+\beta/\gamma\cdot \log^{1/(2^\beta - 2)}(\Delta_k)\cdot \log\log n)$ MPC rounds for $\beta > 2$.
\end{theorem}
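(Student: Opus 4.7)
The plan is to combine Theorem~\ref{thm:MPC_khop_dominating_set} with the (black-box) MIS algorithm of Theorem~\ref{thm:MPC_MIS} or ruling set algorithm of Theorem~\ref{thm:MPC_ruling_set}, exploiting the fact that Theorem~\ref{thm:MPC_khop_dominating_set} materializes the induced power graph $G^k[U']$ explicitly and with small maximum degree. Set $f = n^{\gamma}$ and first invoke Theorem~\ref{thm:MPC_khop_dominating_set} on $(G,V',k,f)$ to obtain a dominating set $U' \subseteq V'$ of $G^k[V']$ along with the explicit graph $H := G^k[U']$. By Theorem~\ref{thm:MPC_khop_dominating_set}, every $v \in V'$ satisfies $\dist_G(v,U') \leq k$, the maximum degree of $H$ is $\Delta_H = O(f\log n) = O(n^{\gamma}\log n)$, and $\Delta_H \leq \Delta_k$. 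Moreover the step uses $O(k\cdot\lceil \log \Delta_k/(\gamma\log n)\rceil)$ rounds and $(m+n)\cdot n^\gamma\cdot \poly(\log n)$ total space.

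Next, I would compute a suitable independent ``core'' inside $H$. For $\beta = 2$, apply the MIS algorithm of Theorem~\ref{thm:MPC_MIS} to $H$; for $\beta > 2$, apply the ruling set algorithm of Theorem~\ref{thm:MPC_ruling_set} to $H$ with parameter $\beta-1$. Let $U$ be the output. Since $H$ has $O(|U'|\cdot n^\gamma \log n) = O(n^{1+\gamma}\log n)$ edges and can be distributed across the machines, these black-box algorithms run within the target space budget. Using $\Delta_H \leq \min(\Delta_k, O(n^{\gamma}\log n))$ in the round-complexity expressions of Theorems~\ref{thm:MPC_MIS} and~\ref{thm:MPC_ruling_set} yields exactly the round counts claimed in the statement (for $\beta = 2$, taking the minimum of the two bounds for $\Delta_H$; for $\beta > 2$, substituting $\beta-1$ gives the exponent $1/(2^{\beta}-2)$).

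It remains to verify the two output properties. Independence: since $U$ is an independent set of $H = G^k[U']$, any two distinct $u,u'\in U$ satisfy $\dist_G(u,u') > k$, which gives property~(1). Ruling: for $\beta = 2$, the MIS property gives $\dist_H(u', U)\leq 1$ for every $u'\in U'$, hence $\dist_G(u', U)\leq k$; combined with the dominating-set guarantee $\dist_G(v, U')\leq k$ for every $v\in V'$ and the triangle inequality, we get $\dist_G(v, U)\leq 2k = \beta k$. For $\beta > 2$, a $(\beta-1)$-ruling set of $H$ gives $\dist_G(u', U)\leq (\beta-1)k$ for every $u'\in U'$, and combining with the dominating-set bound yields $\dist_G(v, U)\leq k + (\beta-1)k = \beta k$. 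The overall failure probability is bounded by a union bound over the two subroutines (each at most $1/n^{5}$ and $1/n^{20}$ respectively), giving the stated $1 - 1/n^{4}$.

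The main conceptual obstacle is the usual one for power-graph problems in MPC: one cannot afford to materialize $G^k$ or even $G^k[V']$, so standard MIS/ruling-set algorithms do not apply out of the box. Our dominating-set primitive circumvents this because it both shrinks the active vertex set to $U'$ (of max $G^k$-degree $O(n^{\gamma}\log n)$ among vertices of $U'$) and delivers the induced power graph $G^k[U']$ explicitly, so the expensive power-graph exploration is confined to the dominating-set phase while the remaining MIS/ruling-set call operates on an already-explicit sparse graph. The only delicate arithmetic is matching the stated round bound for $\beta = 2$ in both regimes of $\Delta_k$ (small vs.\ large), which is handled by taking the minimum of the $\sqrt{\log \Delta_k}$ and $\sqrt{\gamma\log n}$ terms.
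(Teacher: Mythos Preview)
Your proposal is correct and follows essentially the same approach as the paper: first invoke Theorem~\ref{thm:MPC_khop_dominating_set} with $f\approx n^{\gamma}$ to obtain a low-degree dominating set $U'$ together with the explicit graph $G^k[U']$, then run the black-box MIS algorithm (Theorem~\ref{thm:MPC_MIS}) for $\beta=2$ or the $(\beta-1)$-ruling-set algorithm (Theorem~\ref{thm:MPC_ruling_set}) for $\beta>2$, and finish with the same triangle-inequality argument for the ruling property. The parameter accounting (the $\min(\Delta_k,O(n^{\gamma}\log n))$ degree bound, the edge count of $G^k[U']$, and the exponent $1/(2^{\beta}-2)$ coming from substituting $\beta-1$) matches the paper's computation.
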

\begin{proof}
We first compute a dominating set of $G^k[V']$.
Let $f= \hat{c}\cdot n^{\gamma}$ for a sufficiently small constant $\hat{c}$.
Then we run Algorithm~\ref{alg:dominating_set} for $G,k,V',f$.
According to Theorem~\ref{thm:MPC_khop_dominating_set}, we use total space $(m+n)\cdot f\cdot \poly(\log n)=(m+n)n^{\gamma}\cdot\poly(\log n)$ and $O(k\cdot \lceil \log_f \Delta_k \rceil) = O\left(k\cdot \lceil \frac{\log \Delta_k}{\gamma\cdot \log n}\rceil\right)$ rounds to compute a set $U'\subseteq V'$ and the corresponding induced subgraph $G^k[U']$ satisfying:
\begin{enumerate}
    \item $\forall v\in V',\dist_{G}(v,U')\leq k$,
    \item $\forall v\in U',|\Gamma_{G^k}(v)\cap U'|\leq O(\min(n^{\gamma}\cdot \log n,\Delta_k))$
\end{enumerate}
with probability at least $1-1/n^5$.

If $\beta = 2$, we use Theorem~\ref{thm:MPC_MIS} to compute a maximal independent set $U$ of $G^k[U']$.
The algorithm succeeds with probability at least $1-1/n^{20}$.
Thus, $\forall u\in U,\dist_G(u,U\setminus\{u\}) > k$.
For every vertex $v\in V'$, we can find a vertex $u'\in U'$ such that $\dist_G(v,u')\leq k$.
Since $U$ is a maximal independent set of $G^k[U']$, we can find a vertex $u\in U$ such that $\dist_G(u',u)\leq k$.
By triangle inequality, $\dist_G(v,U)\leq 2\cdot k=\beta\cdot k$.
The number of edges of $G^k[U']$ is at most $|U'|\cdot |\Gamma_{G^k}(v)\cap U'|\leq n^{1+\gamma}\cdot \poly(\log n)$.
Thus, the total space needed is at most $n^{1+\gamma}\cdot \poly(\log n)$.
Since $\forall v\in U',|\Gamma_{G^k}(v)\cap U'|\leq O(\min(n^{\gamma}\cdot \log n,\Delta_k))$, the number of MPC rounds is at most $O(\min(\sqrt{\log \Delta_k}\cdot \log\log \Delta_k,  \sqrt{\gamma\log n}\cdot \log\log(\gamma\log n))+\sqrt{\log\log n})$.
Thus, the overall success probability is at least $1-1/n^4$, the total space needed is at most $(m+n)n^{\gamma}\cdot \poly(\log n)$, and the total number of MPC rounds is at most $O(k\cdot \lceil \frac{\log\Delta_k}{\gamma\cdot \log n} \rceil + \min(\sqrt{\log \Delta_k}\cdot \log\log \Delta_k,  \sqrt{\gamma\log n}\cdot \log\log(\gamma\log n))+\sqrt{\log\log n})$.

If $\beta>2$, we use Theorem~\ref{thm:MPC_ruling_set} to compute a ruling set $U$ of $G^k[U']$ such that
\begin{enumerate}
\item $\forall u\in U,\dist_G(u,U\setminus \{u\}) > k$,
\item $\forall v\in U', \dist_G(v, U)\leq (\beta-1)\cdot k$.
\end{enumerate}
The algorithm succeeds with probability at least $1-1/n^{20}$.
For every vertex $v\in V'$, we can find a vertex $u'\in U'$ such that $\dist_G(v,u')\leq k$.
Since $\dist_G(u',U)\leq (\beta-1)\cdot k$, we have $\dist_G(v,U)\leq \beta\cdot k$ by triangle inequality.
The number of edges of $G^k[U']$ is at most $|U'|\cdot |\Gamma_{G^k}(v)\cap U'|\leq n^{1+\gamma}\cdot \poly(\log n)$.
Thus, the total space needed is at most $n^{1+\gamma}\cdot \poly(\log n)$.
The number of MPC rounds is at most $O(\beta/\gamma\cdot \log^{1/(2^\beta - 2)}(\Delta_k)\cdot \log\log n)$.
Thus, the overall success probability is at least $1-1/n^4$, the total space needed is at most $(m+n)n^{\gamma}\cdot \poly(\log n)$ and the total number of MPC rounds is at most $O(k\cdot \lceil \frac{\log\Delta_k}{\gamma\cdot \log n} \rceil+\beta/\gamma\cdot \log^{1/(2^\beta - 2)}(\Delta_k)\cdot \log\log n)$.
\end{proof}

\subsection{MPC Algorithms for $r$-Gather and Its Variants}\label{sec:all}
Now we are able to put all ingredients together.
By simulating Algorithm~\ref{alg:offline_r-gather} in the MPC model, we obtain the following theorem.

\begin{theorem}\label{thm:mpc_r_gather}
Consider a set $P\subset \mathbb{R}^d$ of $n$ points.
Suppose the aspect ratio of $P$ is bounded by $\poly(n)$.
Let $\varepsilon,\gamma \in (0,1)$.
There is a fully scalable MPC algorithm which outputs an $O\left(\frac{\log(1/\varepsilon)}{\sqrt{\gamma}}\right)$-approximate $r$-gather solution for $P$ with probability at least $1-O(1/n)$.
Furthermore, the algorithm takes $O\left(\frac{\log(1/\varepsilon)}{\gamma}\cdot \log^{\varepsilon} (n)\cdot \log\log (n)\right)$ parallel time and uses $n^{1+\gamma+o(1)}\cdot d$ total space.
\end{theorem}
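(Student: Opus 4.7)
The plan is to realize Algorithm~\ref{alg:offline_r-gather} in the MPC model, using the space-efficient graph from Lemma~\ref{lem:MPC_graph_construction_less_space} as the ``half-power'' object whose square is the $O(C)$-approximate $(R,r)$-near neighbor graph, and using Theorem~\ref{thm:MPC_k_hop_ruling_set} to produce the $\beta$-ruling set of the square of that near-neighbor graph. Candidate radii $R$ will be enumerated geometrically in parallel.

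I will pick $C = \Theta(1/\sqrt{\gamma})$ so that $1/C^{2}\le \gamma/3$, and $\beta = \Theta(\log(1/\varepsilon))$ so that $2^{\beta}-2 \ge 1/\varepsilon$, which guarantees $(\log \Delta_k)^{1/(2^{\beta}-2)} \le \log^{\varepsilon} n$. The space parameter handed to Theorem~\ref{thm:MPC_k_hop_ruling_set} will be $\gamma' = \gamma/3$, so that the combined space from the graph construction and from the ruling-set routine stays within $n^{1+\gamma+o(1)}\cdot d$. Because the aspect ratio is $\mathrm{poly}(n)$, it suffices to enumerate $R\in\{\delta\cdot 2^{i} : 0\le i\le L\}$ with $L=O(\log n)$, and I will run these guesses in parallel on disjoint groups of machines so that the extra $\log n$ factor is absorbed into the $n^{o(1)}$ slack.

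For each guess $R$, the pipeline is: (i) invoke Lemma~\ref{lem:MPC_graph_construction_less_space} in $O(1)$ rounds to obtain $G_R'$ with $(G_R')^{2}$ an $O(C)$-approximate $(R,r)$-near neighbor graph; (ii) invoke Theorem~\ref{thm:MPC_k_hop_ruling_set} with parameters $k=4$, $\beta$, and $\gamma'$ to compute a $\beta$-ruling set $S_R$ of $(G_R')^{4}$, which is precisely the square of the near-neighbor graph, using $O\!\left(\tfrac{\log(1/\varepsilon)}{\gamma}\cdot \log^{\varepsilon} n\cdot \log\log n\right)$ rounds; (iii) use the truncated BFS of Lemma~\ref{lem:parallel_bfs} on $G_R'$ up to $4\beta$ hops to assign every point to the cluster whose center in $S_R$ it is closest to, which adds $O(\beta)$ rounds and $O(m)$ space; (iv) test whether every cluster has size at least $r$. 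A final aggregation step picks the smallest $R$ passing the test and outputs the corresponding clustering. Correctness then follows from Lemma~\ref{lem:offline_r-gather_correctness} applied with $G=(G_R')^{2}$: for the smallest power-of-two $R^{\star}$ above $\hat{\rho}$ we have $R^{\star}\le 2\hat{\rho}\le 4\rho^{\star}$ by Lemma~\ref{lem:relation_to_NN}, every center in $S_{R^{\star}}$ already has $\ge r$ neighbors in the near-neighbor graph, and each cluster radius is bounded by $2\beta\cdot O(C)\cdot R^{\star}=O\!\left(\tfrac{\log(1/\varepsilon)}{\sqrt{\gamma}}\right)\cdot \rho^{\star}$. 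A union bound over $O(\log n)$ guesses and the $1/\mathrm{poly}(n)$ failure probabilities of the graph construction and ruling-set subroutines yields overall success probability $1-O(1/n)$.

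The main obstacle I expect is the three-way juggling of parameters. Because Lemma~\ref{lem:MPC_graph_construction_less_space} only gives $G_R'$ with $(G_R')^{2}$ being the near-neighbor graph, we must run the ruling set on the fourth power $(G_R')^{4}$ (hence $k=4$ in Theorem~\ref{thm:MPC_k_hop_ruling_set}), which doubles the effective hop count in the assignment phase and propagates into the metric-radius bound. Simultaneously matching the claimed $O(\log(1/\varepsilon)/\sqrt{\gamma})$ approximation, the $\log^{\varepsilon} n$ factor in round complexity, and the $n^{1+\gamma+o(1)}\cdot d$ total-space budget requires choosing $C$, $\beta$, and $\gamma'$ jointly; every other ingredient is a direct application of the subroutines already established in the preceding subsections.
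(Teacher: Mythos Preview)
Your proposal is correct and follows essentially the same approach as the paper's proof: both use Lemma~\ref{lem:MPC_graph_construction_less_space} to get $G'$ with $(G')^2$ the near-neighbor graph, invoke Theorem~\ref{thm:MPC_k_hop_ruling_set} with $k=4$ to get a $\beta$-ruling set of $(G')^4$, assign via Lemma~\ref{lem:parallel_bfs}, and enumerate geometric guesses of $R$ in parallel. The parameter choices you describe ($C=\Theta(1/\sqrt{\gamma})$, $\beta=\Theta(\log(1/\varepsilon))$, and a constant fraction of $\gamma$ handed to the ruling-set routine) match the paper's choices up to constants.
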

\begin{proof}
Since the aspect ratio of $P$ is bounded by $\poly(n)$, we can easily obtain a lower bound $\delta$ and an upper bound $\Delta$ of interpoint distance of $P$ such that $\Delta/\delta =\poly(n)$.
Let $L=\log(\Delta/\delta)=O(\log n)$.
We simulate Algorithm~\ref{alg:offline_r-gather} for $R\in\{\delta,\delta\cdot 2,\delta\cdot 4,\dots, \delta\cdot 2^L\}$ in parallel.

Now, consider a particular $R$. 
Let $C=\frac{10}{\sqrt{\gamma}}$, and let $\beta = \log(1/\varepsilon + 2)$.
Then according to Lemma~\ref{lem:MPC_graph_construction_less_space}, we can use $n^{1+\frac{\gamma}{100}+o(1)}\cdot d$ total space and $O(1)$ rounds to compute a graph $G=(V,E)$ such that $G^2$ is an $O(C)$-approximate $(R,r)$-near neighbor graph.
Furthermore, the size of the graph is $n^{1+\gamma/100+o(1)}$.
Then according to Theorem~\ref{thm:MPC_k_hop_ruling_set}, we can use $n^{1+\frac{\gamma}{100} +o(1)}\cdot n^{\gamma/100}$ total space and $O(\beta/\gamma\cdot \log^{1/(2^\beta - 2)} n \cdot \log \log n )=O(\frac{\log(1/\varepsilon)}{\sqrt{\gamma}}\cdot \log^{\varepsilon} n\cdot \log \log n)$ MPC rounds to compute a $\beta$-ruling set $S$ of $G^4$.
Finally, we incur Lemma~\ref{lem:parallel_bfs} on graph $G$ and set $S$ with number of hops $4\cdot \beta$.
This step takes $O(\beta)$ rounds and uses total space linear in the size of $G$.
Then for each vertex $v\in V,$ it can find the closest vertex $u\in S$ in the graph $G$.
Thus Algorithm~\ref{alg:offline_r-gather} can be fully simulated.
We find the smallest $R$ such that each cluster obtained by Algorithm~\ref{alg:offline_r-gather} has size at least $r$.
According to Lemma~\ref{lem:offline_r-gather_correctness}, we obtain an $O(\log(1/\varepsilon)/\sqrt{\gamma})$-approximate $r$-gather solution for $P$.
By taking union bound over all failure probabilities, the overall success probability is at least $1-O(1/n)$.
\end{proof}

By simulating Algorithm~\ref{alg:offline_r-gather_outlier} in the MPC model, we obtain the following theorem.

\begin{theorem}\label{thm:MPC_r_gather_outlier}
Consider a set $P\subset \mathbb{R}^d$ of $n$ points and a parameter $k\leq n$.
Suppose the aspect ratio of $P$ is bounded by $\poly(n)$.
Let $\varepsilon,\gamma \in (0,1)$.
There is a fully scalable MPC algorithm which outputs an $O\left(\frac{\log(1/\varepsilon)}{\sqrt{\gamma}}\right)$-approximate solution of $r$-gather with $k$ outliers for the point set $P$ with probability at least $1-O(1/n)$.
Furthermore, the algorithm takes $O\left(\frac{\log(1/\varepsilon)}{\gamma}\cdot \log^{\varepsilon} (n)\cdot \log\log (n)\right)$ parallel time and uses $n^{1+\gamma+o(1)}\cdot  (d+ r)$ total space.
\end{theorem}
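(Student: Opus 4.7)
My plan is to adapt the proof of Theorem~\ref{thm:mpc_r_gather} by simulating Algorithm~\ref{alg:offline_r-gather_outlier} in the MPC model. As before, since the aspect ratio of $P$ is $\poly(n)$, I would obtain a lower bound $\delta$ and an upper bound $\Delta$ on interpoint distances with $\Delta/\delta = \poly(n)$, and run the simulation in parallel for all $R \in \{\delta, 2\delta, 4\delta, \ldots, 2^L\delta\}$ where $L = O(\log n)$. Setting $C = 10/\sqrt{\gamma}$ and $\beta = \log(1/\varepsilon + 2)$, the approximation guarantee follows from Lemma~\ref{lem:offline_outlier_r-gather_correctness} combined with Lemma~\ref{lem:outlier_relation_to_NN}: the smallest successful $R$ (i.e., where at least $n-k$ points are assigned) is at most $2\wh{\rho}^k$ which gives cost $O(\beta\cdot C) \cdot R = O(\log(1/\varepsilon)/\sqrt{\gamma}) \cdot \rho^{*k}$.

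The crucial difference from the non-outlier case is that we need the subset $P'$ of vertices whose degree in $G$ is at least $r$, so the ruling set is computed on the subgraph of $G^2$ induced by $P'$. To compute $P'$ cleanly it is natural to construct $G$ \emph{explicitly} rather than only constructing a graph whose square approximates the near-neighbor graph. Therefore I would invoke Lemma~\ref{lem:MPC_graph_construction_more_space} (with parameter $C$), which produces an $O(C)$-approximate $(R,r)$-near neighbor graph $G$ using $n^{1+1/C^2+o(1)}\cdot(r+d) = n^{1+\gamma/100+o(1)}\cdot(r+d)$ total space in $O(1)$ rounds. Once $G$ is stored explicitly, each vertex can locally test whether it has $\geq r$ neighbors (a single sort over the edge list), producing $P'$ in $O(1)$ rounds.

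Next, to compute a $\beta$-ruling set of $G^2[P']$ I would apply Theorem~\ref{thm:MPC_k_hop_ruling_set} with power parameter $k=2$ to the graph $G$ restricted to $P'$. Since the size of $G$ is $n^{1+\gamma/100+o(1)}\cdot r$, the theorem yields a $\beta$-ruling set $S$ in $O(\beta/\gamma\cdot \log^{1/(2^{\beta}-2)}(n)\cdot \log\log n) = O(\frac{\log(1/\varepsilon)}{\gamma}\cdot \log^{\varepsilon}n\cdot \log\log n)$ rounds, using $(|E(G)| + n)\cdot n^{\gamma/100}\cdot \poly(\log n)$ total space, which stays within the $n^{1+\gamma+o(1)}\cdot(r+d)$ budget. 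Finally, to compute $P''$ and the assignment of each point $p$ to its nearest center in $S$ within $G$-distance $2\beta$, I would apply Lemma~\ref{lem:parallel_bfs} (truncated neighborhood exploration) on $G$ with source set $S$ and hop limit $2\beta$; this takes $O(\beta)$ rounds and total space linear in $|E(G)|$.

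I do not foresee a genuine obstacle, since every subroutine needed is already developed. The only delicate point is bookkeeping: one must ensure that (i) the explicit construction of $G$ via Lemma~\ref{lem:MPC_graph_construction_more_space} is what enables the cheap computation of $P'$, and (ii) the additive $r$ in the space bound comes precisely from storing $G$'s edges (each vertex stores up to $r$ neighbors per LSH hash table copy); the remaining pieces contribute only the $d$ from LSH evaluation and a $\poly(n^\gamma)$ factor from the ruling-set subroutine. A union bound over the $L = O(\log n)$ values of $R$ and over the $O(1)$ high-probability subroutines per value of $R$ yields overall success probability $1 - O(1/n)$, completing the argument.
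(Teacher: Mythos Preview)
Your proposal is correct and follows essentially the same approach as the paper's own proof: enumerate $R$ over a geometric sequence, invoke Lemma~\ref{lem:MPC_graph_construction_more_space} to build the near-neighbor graph $G$ explicitly (so that degrees and hence $P'$ can be computed directly), apply Theorem~\ref{thm:MPC_k_hop_ruling_set} with power parameter $2$ to obtain the ruling set of $G^2[P']$, and use Lemma~\ref{lem:parallel_bfs} with hop limit $2\beta$ for the assignment, concluding via Lemma~\ref{lem:offline_outlier_r-gather_correctness}. The paper's proof is essentially identical, including the parameter choices $C=10/\sqrt{\gamma}$ and $\beta=\log(1/\varepsilon+2)$ and the observation that explicit construction of $G$ is what forces the $(d+r)$ factor in the space bound.
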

\begin{proof}
Since the aspect ratio of $P$ is bounded by $\poly(n)$, we can easily obtain a lower bound $\delta$ and an upper bound $\Delta$ of interpoint distance of $P$ such that $\Delta/\delta=\poly(n)$.
Let $L=\log(\Delta/\delta)=O(\log n)$.
We simulate Algorithm~\ref{alg:offline_r-gather_outlier} for $R\in \{\delta,\delta\cdot 2,\delta\cdot 4,\cdots, \delta\cdot 2^L\}$ in parallel.

Now, consider a particular $R$. 
Let $C=\frac{10}{\sqrt{\gamma}}$, and let $\beta = \log(1/\varepsilon + 2)$.
According to Lemma~\ref{lem:MPC_graph_construction_more_space}, we can use $n^{1+\frac{\gamma}{100}+o(1)}\cdot (d+r)$ total space and $O(1)$ rounds to compute a graph $G=(V,E)$ such that $G$ is an $O(C)$-approximate $(R,r)$-near neighbor graph. 
Furthermore, the size of the graph is $n^{1+\frac{\gamma}{100}+o(1)}\cdot r$.
We can use $O(1)$ rounds and the total space linear in the size of the graph to compute the degree of each vertex in $G$.
Thus, we can obtain the set $P'$ in Algorithm~\ref{alg:offline_r-gather_outlier}.
Then according to Theorem~\ref{thm:MPC_k_hop_ruling_set}, we can use $n^{1+\frac{\gamma}{100}+o(1)}\cdot n^{\gamma/100}\cdot r$ total space and $O(\frac{\log(1/\varepsilon)}{\sqrt{\gamma}}\cdot \log^{\varepsilon} n\cdot \log \log n)$ MPC rounds to compute a $\beta$-ruling set $S$ of $(G^2)[P']$.
Finally, we incur Lemma~\ref{lem:parallel_bfs} on the graph $G$ and set $S$ with number of hops $2\cdot \beta$.
This step takes $O(\beta)$ rounds and uses total space linear in the size of the graph $G$.
Then for each vertex $v\in V$, it can find the closest vertex $u\in S$ in the graph $G$ if $\dist_G(v,u)\leq 2\cdot \beta$.
Thus, Algorithm~\ref{alg:offline_r-gather_outlier} can be fully simulated.
We find the smallest $R$ such that the number of vertices that have been assigned to clusters is at least $n-k$.
According to Lemma~\ref{lem:offline_outlier_r-gather_correctness}, we obtain an $O(\log(1/\varepsilon)/\sqrt{\gamma})$-approximate $r$-gather solution with at most $k$-outliers.
By taking union bound over all failure events, the overall success probability is at least $1-O(1/n)$.
\end{proof}

By simulating Algorithm~\ref{alg:offline_r-gather_outlier} using estimations of degrees, we obtain the following theorem.

\begin{theorem}\label{thm:bicriteria}
Consider a set $P\subset \mathbb{R}^d$ of $n$ points and a parameter $k\leq n$.
Suppose the aspect ratio of $P$ is bounded by $\poly(n)$.
Let $\varepsilon,\gamma,\eta \in (0,1)$.
There is a fully scalable MPC algorithm which outputs an $O\left(\frac{\log(1/\varepsilon)}{\sqrt{\gamma}},\eta\right)$-bicriteria approximate solution of $r$-gather with $k$ outliers for the point set $P$ with probability at least $1-O(1/n)$.
Furthermore, the algorithm takes $O\left(\frac{\log(1/\varepsilon)}{\gamma}\cdot \log^{\varepsilon} (n)\cdot \log\log (n)\right)$ parallel time and uses $n^{1+\gamma+o(1)} \cdot (d+ \eta^{-2})$ total space.
\end{theorem}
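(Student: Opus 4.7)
This theorem is the analogue of Theorem~\ref{thm:MPC_r_gather_outlier} in which we replace the explicit $(R,r)$-near neighbor graph construction (which costs $r$ edges per vertex to store) by the sparser graph from Algorithm~\ref{alg:graph_less_space}, and replace the exact degree filter used to define $P'$ by the sampled degree estimate of Lemma~\ref{lem:degree_estimation}. This is precisely what lets us pay $\eta^{-2}$ instead of $r$ in the total space, at the cost of allowing each cluster to shrink from $r$ to $(1-\eta)r$.

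The pipeline is as follows. As in Theorem~\ref{thm:MPC_r_gather_outlier}, I would enumerate geometrically many scales $R$ spanning the aspect ratio of $P$ and run the following in parallel for each. Fix $C=10/\sqrt{\gamma}$ and $\beta=\lceil\log(1/\varepsilon+2)\rceil$. First, apply Lemma~\ref{lem:MPC_graph_construction_less_space} to obtain, in $O(1)$ rounds and $n^{1+\gamma/100+o(1)}\cdot d$ total space, a graph $G$ of size $n^{1+\gamma/100+o(1)}$ whose square $G^2$ is an $O(C)$-approximate $(R,r)$-near neighbor graph of $P$. Second, invoke Lemma~\ref{lem:degree_estimation} on $G$ with power parameter $2$, size threshold $r$, and slack $\min(\eta,1/2)$ to obtain $V'\subseteq P$ such that w.h.p.\ every $v\in V'$ has $|\Gamma_{G^2}(v)|\geq(1-\eta)r$ while every $v$ with $|\Gamma_{G^2}(v)|\geq r$ lies in $V'$; this adds $O(1)$ rounds and $\tilde{O}(|E(G)|/\eta^2)=n^{1+\gamma/100+o(1)}\cdot\eta^{-2}$ total space. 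Third, apply Theorem~\ref{thm:MPC_k_hop_ruling_set} with input graph $G$, power $k=4$, induced set $V'$, and ruling parameter $\beta$, to produce a $\beta$-ruling set $S$ of $G^4[V']$ in $O((\log(1/\varepsilon)/\gamma)\log^{\varepsilon}(n)\log\log n)$ rounds and $n^{1+\gamma/100+o(1)}$ total space. Fourth, using Lemma~\ref{lem:parallel_bfs} on $G$ with hop bound $4\beta$, every $p\in P$ locates its closest $s\in S$ in the $G$-metric; $p$ is assigned to that cluster if $\dist_G(p,s)\leq 4\beta$, and is otherwise declared an outlier.

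For correctness I would verify the three properties of a bicriteria outlier solution. Radius: every edge of $G$ has Euclidean length at most $O(C)\cdot R$, so each assigned point is within distance $4\beta\cdot O(C)\cdot R=O(\log(1/\varepsilon)/\sqrt{\gamma})\cdot R$ from its center; combined with Lemma~\ref{lem:outlier_relation_to_NN} and picking the smallest enumerated $R$ that yields at most $k$ outliers delivers the claimed approximation ratio. Size: since $S$ is independent in $G^4$, any two centers satisfy $\dist_G(s,s')>4$, so for every $p\in\Gamma_{G^2}(s)$ the triangle inequality gives $\dist_G(p,s')>2\geq\dist_G(p,s)$, making $s$ the strict $G$-closest center; hence the cluster of $s$ contains all of $\Gamma_{G^2}(s)$, whose size is at least $(1-\eta)r$ by $s\in V'$. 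Outliers: when $R\geq\hat{\rho}^k(P)$, every point with $\rho_r\leq R$ satisfies $|\Gamma_{G^2}(\cdot)|\geq r$ by Definition~\ref{def:near_neighbor_graph}, hence lies in $V'$, hence (by the $\beta$-ruling property of $S$ on $G^4[V']$) lies within $G$-distance $4\beta$ of $S$ and gets assigned; only the at most $k$ points with $\rho_r>R$ can remain unassigned.

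The main subtlety is threading the single $\eta$-slack consistently through the power-of-graph abstraction: Lemma~\ref{lem:degree_estimation} certifies $|\Gamma_{G^2}|$ only up to a $(1-\eta)$ tolerance, while the ruling-set geometry lives in $G^4$, and we need the same tolerance to simultaneously guarantee that no low-$\rho_r$ point is erroneously filtered out of $V'$ and that every cluster still achieves the bicriteria size lower bound. Both follow cleanly from the two-sided guarantee of Lemma~\ref{lem:degree_estimation}, once one matches the $G^2$-degree threshold with the $G^4$-independence radius (as opposed to the $G^2$-independence used in the non-bicriteria outlier version). Summing the three space contributions yields $n^{1+\gamma+o(1)}(d+\eta^{-2})$ and the round count is dominated by the ruling-set call, matching the theorem statement.
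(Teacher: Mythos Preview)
Your proposal is correct and follows essentially the same approach as the paper: the paper likewise builds the sparse graph $G$ via Lemma~\ref{lem:MPC_graph_construction_less_space}, estimates $G^2$-degrees via Lemma~\ref{lem:degree_estimation} to form the filtered set (their $P'$, your $V'$), computes a $\beta$-ruling set of $G^4[P']$ via Theorem~\ref{thm:MPC_k_hop_ruling_set}, and assigns via Lemma~\ref{lem:parallel_bfs} with hop bound $4\beta$. Your explicit unpacking of the size argument (that $G^4$-independence forces all of $\Gamma_{G^2}(s)$ into the cluster of $s$, yielding $(1-\eta)r$) spells out what the paper leaves to ``a similar proof of Lemma~\ref{lem:offline_outlier_r-gather_correctness}''.
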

\begin{proof}
Since the aspect ratio of $P$ is bounded by $\poly(n)$, we can easily obtain a lower bound $\delta$ and an upper bound $\Delta$ of interpoint distance of $P$ such that $\Delta/\delta=\poly(n)$.
Let $L=\log(\Delta/\delta)=O(\log n)$.
We simulate an approximate version of Algorithm~\ref{alg:offline_r-gather_outlier} for $R\in\{\delta,\delta\cdot 2,\delta\cdot 4,\cdots, \delta\cdot 2^L\}$ in parallel.

Now, consider a particular $R$. 
Let $C=\frac{10}{\sqrt{\gamma}}$, and let $\beta = \log(1/\varepsilon + 2)$.
According to Lemma~\ref{lem:MPC_graph_construction_less_space}, we can use $n^{1+\frac{\gamma}{100}+o(1)}\cdot d$ total space and $O(1)$ rounds to compute a graph $G=(V,E)$ such that $G^2$ is an $O(C)$-approximate $(R,r)$-near neighbor graph. 
Furthermore, the size of the graph is $n^{1+\frac{\gamma}{100}+o(1)}$.
According to Lemma~\ref{lem:degree_estimation}, we can obtain a set of points $P'\subseteq P$ in $O(1)$ rounds and uses $n^{1+\frac{\gamma}{100}+o(1)}/\eta^2$ total space such that every point $p\in P'$ satisfies that $|\Gamma_{G^2}(p)|\geq (1-\eta)\cdot r$ and every point $p'\in P'$ with $|\Gamma_{G^2}(p')|\geq r$ satisfies that $p'\in P'$.
Then according to Theorem~\ref{thm:MPC_k_hop_ruling_set}, we can use $n^{1+\frac{\gamma}{100}+o(1)}\cdot n^{\gamma/100}$  total space and $O(\log(1/\varepsilon)/\sqrt{\gamma}\cdot \log^{\varepsilon} n\cdot \log\log n)$ MPC rounds to compute a $\beta$-ruling set $S$ of $(G^4)[P']$.
Finally, we incur Lemma~\ref{lem:parallel_bfs} on the graph $G$ and set $S$ with number of hops $4\cdot \beta$.
This steps takes $O(\beta)$ rounds and uses total space linear in the size of the graph $G$.
Then for each vertex $v\in V$, it can find the closest vertex $u\in S$ in the graph $G$ if $\dist_G(v,u)\leq 4\cdot \beta$.
If such $u$ is found, we assign $v$ to th cluster containing $u$.
We find the smallest $R$ such that the number of vertices that have been assigned to clusters is at least $n-k$.
By a similar proof of Lemma~\ref{lem:offline_outlier_r-gather_correctness}, we obtain an $O\left(\frac{\log(1/\varepsilon)}{\sqrt{\gamma}},\eta\right)$-bicriteria approximate solution of $r$-gather with $k$ outliers for $P$.
By taking union bound over all failure events, the overall success probability is at least $1-O(1/n)$.
\end{proof}

By simulating Algorithm~\ref{alg:r-gather_pointwise} in the MPC model, we obtain the following theorem.

\begin{theorem}\label{thm:MPC_total_distance}
Consider a set $P\subset \mathbb{R}^d$ of $n$ points and a constant $k\geq 1$.
Suppose the aspect ratio of $P$ is bounded by $\poly(n)$.
Let $\varepsilon,\gamma\in(0,1)$.
There is a fully scalable MPC algorithm which outputs an $O\left(\left(\frac{\log(1/\varepsilon)}{\sqrt{\gamma}}\right)^k\cdot r\right)$-approximate solution of $r$-gather with total $k$-th power distance cost for the point set $P$ with probability at least $1-O(1/n)$.
Furthermore, the algorithm takes $O\left(\frac{\log(1/\varepsilon)}{\gamma}\cdot \log^{1+\varepsilon} (n)\cdot \log\log (n)\right)$ parallel time and uses $n^{1+\gamma+o(1)} \cdot (d+ r)$ total space.
\end{theorem}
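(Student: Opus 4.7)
The plan is to implement Algorithm~\ref{alg:r-gather_pointwise} in the MPC model phase by phase, choosing $C=10/\sqrt{\gamma}$ and $\beta=\lceil\log(1/\varepsilon+2)\rceil$ so that by Lemma~\ref{lem:guarantee_r_gather_with_total_distance_cost} the final partition is an $O((\beta C)^k\cdot r)=O\bigl((\log(1/\varepsilon)/\sqrt{\gamma})^k\cdot r\bigr)$-approximate solution of $r$-gather with total $k$-th power distance cost. Since the aspect ratio of $P$ is $\poly(n)$, we set $L=O(\log n)$ and $R_i=2^i\delta$ for $i=0,1,\ldots,L$, and then the algorithm executes $L+1$ phases sequentially, maintaining the family $\mathcal{P}$ of already formed clusters in distributed storage across phases.

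In each phase $i$, I would first invoke Lemma~\ref{lem:MPC_graph_construction_more_space} with parameter $\gamma/100$ to build an explicit $O(C)$-approximate $(R_i,r)$-near neighbor graph $G_i$ in $O(1)$ rounds and total space $n^{1+\gamma/100+o(1)}\cdot(r+d)$. Because $G_i$ is explicit, the set $P'_i$ of vertices with $|\Gamma_{G_i}(v)|\ge r$ can be identified by sorting edges by endpoint. For $P''_i$ I would run one application of Lemma~\ref{lem:parallel_bfs} on $G_i$ with $S=\bigcup_{Q\in\mathcal{P}}Q$ and $k=1$, to determine which vertices of $P'_i$ have $G_i$-distance greater than $1$ from every already-clustered point. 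I would then call Theorem~\ref{thm:MPC_k_hop_ruling_set} with power $k=2$, target set $P''_i$, parameter $\gamma/100$ and ruling parameter $\beta$ to produce $S_i$, a $\beta$-ruling set of $(G_i^2)[P''_i]$; this takes $O\bigl(\log(1/\varepsilon)/\gamma\cdot\log^{\varepsilon}n\cdot\log\log n\bigr)$ rounds and total space $n^{1+\gamma/100+o(1)}\cdot n^{\gamma/100}$. Finally, a single call of Lemma~\ref{lem:parallel_bfs} on $G_i$ from source set $S_i$ with hop parameter $2\beta$ simultaneously yields $P'''_i$ (the points within $G_i$-distance $2\beta$ of $S_i$) and, for each such vertex, its closest center in $S_i$, so the cluster assignment and the absorption step for $P'_i\setminus P'''_i$ can be carried out in $O(\beta)$ additional rounds.

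Combining phases, the total parallel time is bounded by $(L+1)$ times the per-phase round cost, giving $O\bigl(\log(1/\varepsilon)/\gamma\cdot\log^{1+\varepsilon}n\cdot\log\log n\bigr)$. The per-phase total space is $n^{1+\gamma/100+o(1)}\cdot(r+d)$, which remains $n^{1+\gamma+o(1)}\cdot(r+d)$ overall since phases share machines. A union bound over the $O(\log n)$ phases and the polynomially many subroutine invocations (each succeeding with probability at least $1-1/\poly(n)$) yields total success probability $1-O(1/n)$. Correctness of the output as an $O((\beta C)^k r)$-approximation follows directly from Lemmas~\ref{lem:radius_phases}, \ref{lem:cluster_phases}, \ref{lem:center_distance}, \ref{lem:minimum_size_requirement}, and \ref{lem:guarantee_r_gather_with_total_distance_cost} once the simulation is faithful.

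The main obstacle I expect is the inherent sequentiality between phases: unlike Theorem~\ref{thm:mpc_r_gather}, where all guesses of $R$ can be processed in parallel and the smallest successful one returned, here the computation of $P''_i$ and the update of $\mathcal{P}$ in phase $i$ explicitly consume the cluster family produced by phase $i-1$, so the $O(\log n)$ phases must execute one after another, and this is what forces the extra $\log n$ factor in the round complexity. A secondary technical point is that the subgraph of $G_i^2$ we hand to Theorem~\ref{thm:MPC_k_hop_ruling_set} is induced by a subset $P''_i$ that depends on global state of $\mathcal{P}$; but since that theorem explicitly allows an arbitrary induced subset of $V$, no modification of the subroutine is needed, and the plan goes through.
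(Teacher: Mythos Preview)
Your proposal is correct and matches the paper's proof essentially line for line: both implement Algorithm~\ref{alg:r-gather_pointwise} phase by phase with $C=10/\sqrt{\gamma}$ and $\beta\approx\log(1/\varepsilon)$, build the explicit near-neighbor graph via Lemma~\ref{lem:MPC_graph_construction_more_space}, extract $P'_i,P''_i$ by degree/one-hop tests, call Theorem~\ref{thm:MPC_k_hop_ruling_set} on $(G_i^2)[P''_i]$, and finish the assignment with Lemma~\ref{lem:parallel_bfs}, then invoke Lemma~\ref{lem:guarantee_r_gather_with_total_distance_cost} for the approximation guarantee. Your observation that the inherent sequentiality of the phases is what yields the extra $\log n$ factor is exactly the reason the paper's bound is $\log^{1+\varepsilon}n$ rather than $\log^{\varepsilon}n$.
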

\begin{proof}
Since the aspect ratio of $P$ is bounded by $\poly(n)$, we can easily obtain a lower bound $\delta$ and an upper bound $\Delta$ of interpoint distance of $P$ such that $\Delta/\delta=\poly(n)$.
Let $L=\log(\Delta/\delta)=O(\log n)$.
We want to simulate Algorithm~\ref{alg:r-gather_pointwise}.

Clearly, Algorithm~\ref{alg:r-gather_pointwise} has $O(\log n)$ phases.
We show how to implement each phase in the MPC model.
Let $C=\frac{10}{\sqrt{\gamma}}$, and let $\beta = \log(1/\varepsilon + 2)$.
Consider a phase $i$ of Algorithm~\ref{alg:r-gather_pointwise}.
According to Lemma~\ref{lem:MPC_graph_construction_less_space}, we can compute an $O(C)$-approximate $(R_i,r)$-near neighbor graph $G_i$ of $P$ in $O(1)$ MPC rounds using $n^{1+\frac{\gamma}{100}+o(1)}\cdot (r+d)$ total space.
Furthermore, the size of $G_i$ is at most $n^{1+\frac{\gamma}{100}+o(1)}\cdot r$
We can use $O(1)$ rounds and the total space linear in the size of $G_i$ to compute the degree of each vertex in $G_i$.
Thus we can obtain $P'_i$.
Similarly, we can use $O(1)$ rounds and linear in the size of $G_i$ total space to compute $P''_i$.
According to Theorem~\ref{thm:MPC_k_hop_ruling_set}, we can use $n^{1+\frac{\gamma}{100}+o(1)}\cdot n^{\gamma/100}\cdot r$  total space and $O(\log(1/\varepsilon)/\sqrt{\gamma}\cdot \log^{\varepsilon} n\cdot \log\log n)$ MPC rounds to compute a $\beta$-ruling set $S_i$ of $(G_i^2)[P_i'']$.
Then we can incur Lemma~\ref{lem:parallel_bfs} to compute $P'''_i$ in $O(\beta)$ rounds using total space linear in the size of $G_i$.
Then for each point $p\in P'_i$, we can use Lemma~\ref{lem:parallel_bfs} again to compute the cluster of $p$.

Thus the overall number of rounds of simulating Algorithm~\ref{alg:r-gather_pointwise} is at most $O\left(\frac{\log(1/\varepsilon)}{\gamma}\cdot \log^{1+\varepsilon} (n)\cdot \log\log (n)\right)$ and the total space needed is at most $n^{1+\gamma+o(1)} \cdot (d+ r)$.

According to Lemma~\ref{lem:guarantee_r_gather_with_total_distance_cost}, we obtain an $O\left(\left(\frac{\log(1/\varepsilon)}{\sqrt{\gamma}}\right)^k\cdot r\right)$-approximate solution of $r$-gather with total $k$-th power distance cost for the point set $P$.
By taking union bound over all failure events, the overall success probability is at least $1-O(1/n)$.
\end{proof}
\section{Dynamic $r$-Gather Algorithms}
\label{sect:dyn}
In this section, we show dynamic algorithms for the $r$-gather problem.

\subsection{Navigating Nets}\label{sec:navigating_nets}
One important ingredient in our algorithms is the navigating net~\cite{krauthgamer2004navigating}.
In this section, we briefly review the properties and results of navigating nets.
Let $\mathcal{X}$ be a metric space with doubling dimension $d$.
Let $P\subseteq\mathcal{X}$ be a set of points.
An $R$-net $N\subseteq P$ of $P$ satisfies following properties:
\begin{enumerate}
    \item Separating: $\forall x,y\in N$, $\dist(x,y)\geq R$.
    \item Covering: $\forall x\in P,\exists y\in N, \dist(x,y)< R$.
\end{enumerate}
A navigating net of $P$ is a hierarchy of $R$-nets with different scale $R$.
In particular, let $\Gamma^{\alpha} =\{\alpha\cdot 2^i\mid i\in\mathbb{Z}\}$ be scales with base scale $\alpha\in (1/2,1]$. 
If not specified, the default value of $\alpha=1$.
If $\alpha$ is clear in the context, we use $\Gamma$ to denote $\Gamma^{\alpha}$ for short.
To simplify the expression, we consider infinitely many scales. 
But it is easy to see that only $O(\log \Delta)$ scales are ``relevant'' where $\Delta$ is the aspect ratio of $P$, i.e., the ratio between the largest pairwise distance and the smallest pairwise distance in $P$.
A navigating net of $P$ is a family of $R$-nets: $\{Y_R\subseteq P\mid R\in \Gamma\}$, where $Y_R=P$ for all $R\leq \min_{x,y\in P}\dist(x,y)$ and $\forall R\in\Gamma$, $Y_R$ is an $R$-net of $Y_{R/2}$.

An important property of the navigating net is the following:
\begin{lemma}\label{lem:navigating_net_cover}
$\forall R\in \Gamma,\forall p\in P,\exists x\in Y_R,\dist(x,p)< 2R$.
\end{lemma}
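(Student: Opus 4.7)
The plan is to prove the lemma by induction on the scale $R \in \Gamma$ (moving upward from small scales to large ones), leveraging the recursive definition of the navigating net together with the triangle inequality.

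For the base case, I would take $R \in \Gamma$ small enough that $R \leq \min_{x,y\in P}\dist(x,y)$. By the setup in Section~\ref{sec:navigating_nets}, in this regime $Y_R = P$, so for any $p \in P$ we can simply choose $x = p \in Y_R$, and $\dist(x,p) = 0 < 2R$ holds trivially (using the standing assumption that we are working with a nontrivial point set so that $R > 0$).

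For the inductive step, fix $R \in \Gamma$ with $R > \min_{x,y\in P}\dist(x,y)$, and assume the statement holds at scale $R/2$, i.e., for every $p \in P$ there exists $y \in Y_{R/2}$ with $\dist(y,p) < 2\cdot(R/2) = R$. Since $Y_R$ is by definition an $R$-net of $Y_{R/2}$, the covering property of $R$-nets gives some $x \in Y_R$ with $\dist(x,y) < R$. The triangle inequality then yields
\begin{align*}
\dist(x,p) \leq \dist(x,y) + \dist(y,p) < R + R = 2R,
\end{align*}
which is exactly the claim at scale $R$.

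I don't anticipate a real obstacle here: the argument is a straightforward bottom-up induction that uses only (i) the definition of $Y_R$ as an $R$-net of $Y_{R/2}$, (ii) the strict covering inequality built into the definition of an $R$-net, and (iii) the triangle inequality in the metric $\dist$. The only minor subtlety is making sure the strict inequality $< 2R$ (rather than $\leq 2R$) propagates through the induction, which it does because both the base case ($0 < 2R$) and the covering property of $R$-nets ($\dist(x,y) < R$) supply strict inequalities that combine strictly under the triangle inequality.
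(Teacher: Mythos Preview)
Your proposal is correct and follows essentially the same approach as the paper: both argue by induction on the scale $R$, using $Y_R = P$ at sufficiently small scales as the base case, and then combining the inductive hypothesis at $R/2$ with the covering property of the $R$-net $Y_R$ over $Y_{R/2}$ via the triangle inequality. Your write-up is in fact slightly more careful about tracking the strictness of the inequality than the paper's own proof.
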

\begin{proof}
The proof is by induction.
Firstly, consider the base case: $R< \min_{x,y\in P}\dist(x,y)$.
In the base case, we have $Y_R=P$, and thus the statement of the lemma holds.
Now suppose the lemma statement holds for $R/2$.
Fix an arbitrary point $p\in P$, there exists $y\in Y_{R/2}$ such that $\dist(p,y)\leq R/2\cdot 2=R$.
Since $Y_R$ is an $R$-net of $Y_R$, there exists $x\in Y_R$ such that $\dist(x,y)< R$.
By triangle inequality, we have $\dist(x,p)< 2R$.
\end{proof}

In~\cite{krauthgamer2004navigating}, they proposed navigation lists to maintain a navigating net.
Their data structure can maintain a navigating net under insertions and deletions of points.
Specifically, for each scale $R\in\Gamma$ and each point $x\in Y_R$, it maintains a bidirectional pointer list $L_{x,R}=\{z\in Y_{R/2}\mid \dist(x,z)\leq 4\cdot R\}$.

For completeness, we include the proof of the following three lemmas into Appendix~\ref{sec:dynamic_insertion}, Appendix~\ref{sec:dynamic_deletion} and Appendix~\ref{sec:dynamic_search}.
\begin{lemma}[\cite{krauthgamer2004navigating}]\label{lem:navigating_net_insertion}
The navigating net maintained by navigation lists can be updated with an insertion of a point $p$ to $P$ in time $2^{O(d)}\log(\Delta)\log\log(\Delta)$. 
This includes $2^{O(d)}\log(\Delta)$ distance computations.
\end{lemma}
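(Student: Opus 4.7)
The plan is to update the navigating net by a top-down traversal across scales, at each scale discovering a small neighborhood of $p$ using only previously stored navigation lists, and then installing $p$ at every scale below a critical threshold.

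\textbf{Step 1 (Top-down neighbor discovery).} Start at a scale $R_0 \in \Gamma$ large enough that $|Y_{R_0}| = O(1)$ (e.g. $R_0 \approx \diam(P)$). For each scale $R \in \Gamma$ with $R \le R_0$, processed in decreasing order, maintain the set $Z_R := \{x \in Y_R : \dist(x,p) \le \gamma R\}$ for a sufficiently large constant $\gamma$. The set $Z_{R/2}$ is computed from $Z_R$ by iterating over $\bigcup_{x \in Z_R} L_{x,R}$ and keeping those $z$ with $\dist(z,p) \le \gamma R/2$; every element of $Z_{R/2}$ is reached this way, because by Lemma~\ref{lem:navigating_net_cover} and the triangle inequality, any $z \in Y_{R/2}$ close to $p$ is $R$-covered by some $x \in Y_R$ lying in $Z_R$ and then satisfies $\dist(x,z) \le 4R$, so $z \in L_{x,R}$. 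A standard packing argument in a metric of doubling dimension $d$ gives $|Z_R| = 2^{O(d)}$ and $|L_{x,R}| = 2^{O(d)}$, so each scale performs $2^{O(d)}$ distance computations.

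\textbf{Step 2 (Choosing the insertion scale and updating the nets).} Let $R^*$ be the largest scale in $\Gamma$ such that $Z_{R^*}$ contains no point of $Y_{R^*}$ within distance $R^*$ of $p$; this is essentially the distance from $p$ to the current point set rounded up to a power of two. For every scale $R > R^*$ the covering property is already witnessed by some $x \in Z_R$ with $\dist(p,x) < R$, so $p$ need not enter $Y_R$; for every $R \le R^*$ the separating property is preserved when $p$ is added to $Y_R$. The scale $R^*$ is discovered as a byproduct of Step~1 at the cost of $O(\log \Delta)$ additional scale-level operations.

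\textbf{Step 3 (Updating navigation lists).} For each scale $R \le R^*$ we must (i) build $L_{p,R} = \{z \in Y_{R/2} : \dist(p,z) \le 4R\}$ and (ii) insert $p$ into $L_{x,R}$ for every $x \in Y_R$ with $\dist(x,p) \le 4R$. Both target sets have size $2^{O(d)}$ by doubling dimension, and their members are found by expanding one constant-radius step from $Z_R$ and $Z_{R/2}$ via the already-stored lists, again with $2^{O(d)}$ distance computations per scale. Storing each $L_{x,R}$ in a balanced dictionary keyed by point identifier (identifiers fit in $O(\log \Delta)$ bits) contributes an $O(\log\log\Delta)$ factor per insertion or lookup.

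\textbf{Totals and the main obstacle.} Summing across the $O(\log\Delta)$ relevant scales yields $2^{O(d)}\log \Delta$ distance computations and $2^{O(d)}\log\Delta\,\log\log\Delta$ total time, matching the claim. The main technical point is verifying that the chain of sets $Z_R$ is propagated faithfully — i.e. that no close-by net point is ever missed at scale $R/2$ because its only generator at scale $R$ was absent from $Z_R$. This reduces to calibrating the constant $\gamma$ against the list radius $4R$ so that the covering chain from Lemma~\ref{lem:navigating_net_cover} always stays inside the maintained neighborhoods; once that is done, the doubling-dimension packing bounds handle the per-scale bookkeeping uniformly.
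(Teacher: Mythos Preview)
Your proposal is correct and follows essentially the same approach as the paper: a top-down scan maintaining the sets $Z_R=\{x\in Y_R:\dist(x,p)\le \gamma R\}$ (the paper uses $\gamma=8$), determining the threshold scale $\wh R$ as the largest $R$ with $\dist(p,Z_R)\ge R$, inserting $p$ into every $Y_R$ with $R\le\wh R$, and updating the navigation lists from the already-computed $Z_R$'s. One small inaccuracy: the $\log\log\Delta$ overhead in the paper arises not from dictionaries keyed by point identifier but from indexing, for a fixed point $x$, among its $O(\log\Delta)$ nontrivial lists $L_{x,R}$ across scales; your bookkeeping explanation should be adjusted accordingly, though it does not affect the overall bound or the argument's structure.
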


\begin{lemma}[\cite{krauthgamer2004navigating}]\label{lem:navigating_net_deletion}
The navigating net maintained by navigation lists can be updated with an deletion of a point $p$ from $P$ in time $2^{O(d)}\log(\Delta)\log\log(\Delta)$.
This includes $2^{O(d)}\log(\Delta)$ distance computations.
Furthermore, at the end of the update procedure, the algorithm can report all new points which are added into $Y_R$ for all $R\in\Gamma$, and the number of new points added into $Y_R$ is at most $2^{O(d)}$.
\end{lemma}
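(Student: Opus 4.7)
The plan is to process the scales $R\in\Gamma$ top-down, restricted to those scales for which $p\in Y_R$. Since $Y_R\subseteq Y_{R/2}$ by construction, these scales form an interval $R\le R_{\mathrm{top}}(p)$, and only $O(\log\Delta)$ of them are relevant because outside of this range either $p\notin Y_R$ (nothing to do) or $Y_R=P$ (trivial bookkeeping). For each such $R$, starting from the largest and descending, I would (a) remove $p$ from $Y_R$ and from every navigation list in which it appears, (b) detect the points of $Y_{R/2}$ that become uncovered by the removal, promote them into $Y_R$, and (c) rebuild the navigation lists of the promoted points together with the lists of any of their nearby points.

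The central geometric observation is a locality claim: any $y\in Y_{R/2}$ that becomes uncovered at scale $R$ must have had $p$ as its only $R$-cover in $Y_R$, so $\dist(p,y)<R$ and hence $y\in L_{p,R}$. Likewise every potential replacement cover $x\in Y_R\setminus\{p\}$ of such a $y$ satisfies $\dist(x,p)\le \dist(x,y)+\dist(y,p)<2R$, so $x\in L_{p,R}\cup Y_R$ can itself be located from the existing lists already adjacent to $p$. For each candidate $y\in L_{p,R}$ I check whether it is still covered by some remaining $x\in Y_R$ within distance $R$ by scanning this short neighbor set; if not, I simply add $y$ itself to $Y_R$ (the fact that $y$ was uncovered guarantees $\dist(y,x)\ge R$ for every $x\in Y_R\setminus\{p\}$, preserving separation). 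Once $y$ is promoted, its new list $L_{y,R}$ consists of points in $Y_{R/2}$ within $4R$ of $y$, all of which lie within $5R$ of $p$ and are therefore already visible through $L_{p,R}$ and $L_{p,2R}$; I update those lists and the corresponding reverse pointers.

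The doubling dimension enters through a packing argument: points in $L_{p,R}\subseteq Y_{R/2}$ are pairwise at distance at least $R/2$ and all lie within $4R$ of $p$, so $|L_{p,R}|\le 2^{O(d)}$. This immediately bounds the number of new points inserted into $Y_R$ per scale by $2^{O(d)}$, proving the reporting claim. The same bound controls the total work per scale: we perform $2^{O(d)}$ distance evaluations and $2^{O(d)}$ navigation-list updates. Using a balanced search-tree (or hash-based) representation for each list, the list operations each cost $O(\log\log\Delta)$, so across all $O(\log\Delta)$ relevant scales the total time is $2^{O(d)}\log(\Delta)\log\log(\Delta)$ with only $2^{O(d)}\log(\Delta)$ distance computations, matching the stated bounds.

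The main obstacle will be the locality argument — namely, verifying that all bookkeeping after deleting $p$ at scale $R$ can be done by looking only at $L_{p,R}$ and a constant number of related lists at adjacent scales, so that no global scan of $Y_{R/2}$ is ever triggered. A careless analysis could seem to require probing every point of $Y_{R/2}$ that was within $R$ of \emph{some} deleted point, including cascades as a promoted $y$ might itself have to trigger further promotions at higher scales; I would handle this by noting that promotion of $y$ at scale $R$ does not affect coverage at any scale $R'>R$ (the net at scale $R'$ is unchanged except for the removal of $p$ at scales where $p\in Y_{R'}$, which we already handled earlier in the top-down pass), so cascading is confined to the descending traversal and the per-scale $2^{O(d)}$ bound is preserved. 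After this lemma is established, the correctness of the maintained family $\{Y_R\}$ under deletion follows from induction on the scale, and the final reporting of newly added points is a by-product of the construction.
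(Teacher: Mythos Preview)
Your top-down sweep has a genuine gap in the cascade argument. You claim that ``promotion of $y$ at scale $R$ does not affect coverage at any scale $R'>R$,'' but this is false: adding $y$ to $Y_R$ enlarges $Y_R$, and the validity of $Y_{2R}$ as a $2R$-net of $Y_R$ now requires $\dist(y,Y_{2R}^{\mathrm{new}})<2R$. There is no reason this holds after you have already finished processing scale $2R$. Concretely, take $P=\{p,q\}$ with $\dist(p,q)$ small, so that $Y_R=\{p\}$ for every $R$ above some threshold $R_0$ and $Y_R=\{p,q\}$ below it. Your top-down pass, at any scale $R\gg R_0$, removes $p$ from $Y_R$ and then looks for replacements in $Y_{R/2}^{\mathrm{old}}\setminus\{p\}=\emptyset$; it promotes nothing and leaves $Y_R^{\mathrm{new}}=\emptyset$. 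The point $q$, which must eventually populate every $Y_R$, is invisible because it has not yet been promoted from the bottom.

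This is exactly why the paper processes scales \emph{bottom-up}. At each scale $R$ it carries along the set $Z_{R/2}$ of points newly promoted into $Y_{R/2}$ and takes the promotion candidates at scale $R$ to be $Z_{R/2}\cup L_{p,R}$, not just $L_{p,R}$. The packing bound $|Z_R|\le 2^{O(d)}$ (diameter $O(R)$, separation $\ge R$) then controls the per-scale work, and the cascade is explicitly threaded upward rather than assumed away. Your locality and packing observations are correct and reusable, but the direction of the sweep is not optional: reversing it breaks correctness, not just efficiency.
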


\begin{lemma}[\cite{krauthgamer2004navigating}]\label{lem:navigating_net_search}
For a point set $P$, given its navigating net which is maintained by navigation lists, a $(1+\varepsilon)$-approximate nearest neighbor of an arbitrary query point $q$ can be computed in time $2^{O(d)}\log\Delta+(1/\varepsilon)^{O(d)}$.
This bound is also the upper bound of the number of distance computations.
\end{lemma}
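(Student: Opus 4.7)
The plan is to traverse the navigating net from coarse to fine scales, carrying forward at each scale a small candidate set of net points near the query $q$.

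First I would define, for every scale $R \in \Gamma$, the candidate set $Z_R = \{x \in Y_R : \dist(q,x) \le r^*_R + 2R\}$, where $r^*_R = \dist(q, Y_R)$. The descent starts from the largest scale $R_0$ at which $|Y_{R_0}|=1$ (so $Z_{R_0}=Y_{R_0}$) and steps from scale $R$ to $R/2$ by scanning $T := \bigcup_{x \in Z_R} L_{x,R}$, computing $r^*_{R/2} = \min_{y \in T} \dist(q,y)$, and keeping only those $y \in T$ with $\dist(q,y) \le r^*_{R/2} + R$. The descent halts at the first scale with $R \le \varepsilon\, r^*_R / 4$, whereupon we output $\arg\min_{x \in Z_R} \dist(q,x)$.

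Correctness hinges on two invariants. The transition invariant says that for every $y \in Z_{R/2}$ there is an $x \in Z_R$ with $y \in L_{x,R}$, so that $Z_{R/2}$ really sits inside the scanned set $T$: since $Y_R$ is an $R$-net of $Y_{R/2}$, some $x \in Y_R$ satisfies $\dist(x,y) \le R \le 4R$ (hence $y \in L_{x,R}$), and triangle inequality gives $\dist(q,x) \le \dist(q,y) + R \le r^*_{R/2} + 2\cdot(R/2) + R \le r^*_R + 2R$ (using $r^*_{R/2} \le r^*_R$ from $Y_R \subseteq Y_{R/2}$), so $x \in Z_R$. The stopping invariant uses Lemma~\ref{lem:navigating_net_cover} applied to the true nearest neighbor $p^* \in P$: there exists $x \in Y_R$ with $\dist(x,p^*) < 2R$, and since $r^*_R \ge \dist(q,p^*)$ (because $Y_R \subseteq P$), this $x$ satisfies $\dist(q,x) \le \dist(q,p^*) + 2R \le r^*_R + 2R$, so $x \in Z_R$. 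Therefore the output has distance at most $\dist(q,p^*) + 2R$ from $q$, and the stopping threshold $2R \le \varepsilon \dist(q,p^*)$ delivers the $(1+\varepsilon)$-approximation.

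The main obstacle is bounding the running time, which reduces to controlling $|Z_R|$ via the doubling dimension. Because $Z_R \subseteq Y_R$, its points are pairwise at distance at least $R$; because they all lie in the ball $\B(q, r^*_R + 2R)$, the standard net-packing argument for doubling metrics yields $|Z_R| \le ((r^*_R + 2R)/R)^{O(d)}$. The same reasoning applied to $L_{x,R} \subseteq Y_{R/2}$ sitting inside $\B(x,4R)$ gives $|L_{x,R}| \le 2^{O(d)}$, so each level costs $|Z_R| \cdot 2^{O(d)}$ distance computations. Lemma~\ref{lem:navigating_net_cover} applied to $p^*$ also yields $r^*_R \le \dist(q,p^*) + 2R$, so I would split the descent into two regimes. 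In the coarse regime $R \ge \dist(q,p^*)$ we have $r^*_R/R = O(1)$, hence $|Z_R| = 2^{O(d)}$; the number of such scales is $O(\log\Delta)$, contributing $2^{O(d)}\log\Delta$. In the fine regime, writing $R = \dist(q,p^*)/2^k$ for $k = 1, \dots, O(\log(1/\varepsilon))$, we get $|Z_R| \le 2^{O(dk)}$, and the geometric sum is dominated by its last term, contributing $(1/\varepsilon)^{O(d)}$. Adding the two regimes gives the stated bound of $2^{O(d)}\log\Delta + (1/\varepsilon)^{O(d)}$ on both the number of distance computations and the total running time.
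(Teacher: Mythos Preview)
Your argument follows the paper's proof almost exactly: descend the net carrying a candidate set $Z_R$, use the separation in $Y_R$ together with the doubling packing bound to control $|Z_R|$, and split the work into coarse ($R\ge\dist(q,p^*)$) and fine regimes. Your transition and stopping invariants are correct and play the same role as the paper's claim that $\dist(z^*,Z_R)\le 2R$; the fine-regime geometric sum matches the paper's bound of $(1/\varepsilon)^{O(d)}$.

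There is one gap. You assert that the coarse regime spans $O(\log\Delta)$ scales, but with only the rule $R\le\varepsilon\, r^*_R/4$ the number of levels between the top scale (roughly $\diam(P)$) and the stopping scale is $\log\bigl(\diam(P)/(\varepsilon\,\dist(q,p^*))\bigr)$, and this is not bounded in terms of $\Delta$ when $\dist(q,p^*)$ is far below the minimum interpoint distance of $P$ --- an ``arbitrary'' query point $q$ may sit arbitrarily close to some point of $P$. The paper closes this hole with a second halting condition: stop as soon as $|Z_R|=1$ and all navigation lists $L_{z,R'}$ for $R'\le R$ are singletons, which is reached once the descent hits the trivial bottom of the net, at most $O(\log\Delta)$ levels down. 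Equivalently, since the data structure only materializes $O(\log\Delta)$ nontrivial levels, one may simply detect arrival at the bottom and output the current best candidate; you need to make one of these two mechanisms explicit to justify the $O(\log\Delta)$ count for the coarse levels.
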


\subsection{Incremental $r$-Gather Algorithm}
We first describe a dynamic algorithm which works when there are only insertions of points.
The algorithm is described in Algorithm~\ref{alg:incremental_r_gather}.

\begin{algorithm}[h]
	\small
	\begin{algorithmic}[1]\caption{Incremental Approximate $r$-Gather}\label{alg:incremental_r_gather}
	\STATE {\bfseries Initialization:} 
	\STATE Let $\Delta$ denote the diameter of the metric space. 
	Without loss of generality assume the minimum distance between different points is $1$. Let $C\geq 1$ be the approximation factor in the nearest neighbor search. 
	\STATE Let $L=\lceil\log\Delta\rceil$. 
	Let $r$ be the cluster size lower bound parameter.
	\STATE Let $U_0,U_1,\cdots,U_L\gets \emptyset$. 
	Let $N_0,N_1,\cdots,N_L\gets \emptyset$.
    \STATE {\bfseries Insert$(p)$:} 
    \FOR{$i=0,1,\cdots,L$}
        \IF{$C$-approximate distance between $p$ and $N_i$ is greater than $ 4\cdot C^2 \cdot 2^i$}
        \STATE Add $p$ into $N_i$ and initialize $Q_i(p)=\{p\}$.
        \STATE 
        Find a $C$-approximate nearest neighbor of $p$ from $U_i$.
        If the distance between $p$ and the point found is at most $ 2\cdot C\cdot 2^i$, remove the point found from $U_i$ and add the point to $Q_i(p)$.
        Repeat this step until the distance between $p$ and the point found is greater than $2\cdot C\cdot 2^i$ or $|Q_i(p)|\geq r$.
        \ELSE
        \STATE Find a $C$-approximate nearest neighbor of $p$ from $N_i$.
        Let $q$ be the point found.
        If $\dist(p, q)\leq 2\cdot C\cdot 2^i$, add $p$ into $Q_i(q)$.
        Otherwise, add $p$ into $U_i$.
        \ENDIF
    \ENDFOR
    \STATE {\bfseries Query$(p)$: }
    \STATE Let $i^*\in\{0,1\cdots,L\}$ be the smallest value such that $\forall q\in N_i,|Q_i(q)|\geq r$.
    \STATE 
    If $\exists q\in N_{i^*}, p\in Q_i(q)$, let $q$ be the center of $p$ and output $4\cdot C^3\cdot 2^{i^*}$ as the radius.
    \STATE Otherwise, find a $C$-approximate nearest neighbor of $p$ from $N_i$.
    Let the point found be $q$.
    Output $q$ as the center of $p$, and output $4\cdot C^3\cdot 2^{i^*}$ as the radius.
	\end{algorithmic}
\end{algorithm}

Firstly, we show that $N_i$ is a good net of $P$ at any time.
\begin{lemma}\label{lem:property_Ni}
Consider a point set $P$ maintained incrementally by Algorithm~\ref{alg:incremental_r_gather}.
After any update/query, $\forall i\in \{0,1,2,\cdots,L\},$ 
\begin{enumerate}
    \item $\forall u,v\in N_i,\dist(u,v)>4\cdot C\cdot 2^i$,
    \item $\forall u\in P,\exists v\in N_i,\dist(u,v)\leq 4\cdot C^2\cdot 2^i$.
\end{enumerate}
\end{lemma}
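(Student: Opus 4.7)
The plan is to prove both properties simultaneously by induction on the sequence of insertions, using the definition of a $C$-approximate nearest neighbor: if $\tilde d$ is the distance returned by the $C$-approximate nearest neighbor search from $p$ to a nonempty set $S$, and $d^* = \min_{q\in S}\dist(p,q)$, then $d^* \leq \tilde d \leq C\cdot d^*$.

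\emph{Base case.} Before any insertion, $P = \emptyset$ and $N_i = \emptyset$ for every $i$, so both properties hold vacuously.

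\emph{Inductive step.} Fix a scale $i\in\{0,1,\dots,L\}$ and suppose the two properties hold right before the insertion of some point $p$. I would split into two cases based on the branch taken by $\mathtt{Insert}(p)$ at level $i$.

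\emph{Case A: $p$ is added to $N_i$.} This happens precisely when the $C$-approximate distance $\tilde d$ from $p$ to $N_i$ exceeds $4\cdot C^2\cdot 2^i$ (if $N_i$ was empty, no new pair is created and Property~1 is preserved trivially). Since $\tilde d \leq C\cdot d^*$, we get $d^* \geq \tilde d/C > 4\cdot C\cdot 2^i$, i.e., the true distance from $p$ to every point of the old $N_i$ is strictly greater than $4\cdot C\cdot 2^i$. Combined with the inductive hypothesis for Property~1 on the old $N_i$, this yields Property~1 for the new $N_i$. Property~2 is preserved for all previously inserted points by the inductive hypothesis, and it holds for $p$ itself because $p\in N_i$ (distance $0$).

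\emph{Case B: $p$ is not added to $N_i$.} Then $\tilde d \leq 4\cdot C^2\cdot 2^i$. Using $d^* \leq \tilde d$, there exists $v\in N_i$ with $\dist(p,v)\leq 4\cdot C^2\cdot 2^i$, which is exactly Property~2 for $p$. Since $N_i$ itself is unchanged in this case, Property~1 for $N_i$ is preserved by the inductive hypothesis, and Property~2 for all previously inserted points is preserved as well.

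\emph{Main obstacle.} There is no real obstacle; the argument is a one-step induction and the only thing to be careful about is not mixing up the approximate distance with the true distance: Property~1 needs the lower bound $d^* \geq \tilde d/C$ on the true distance (hence the choice of threshold $4\cdot C^2\cdot 2^i$ rather than $4\cdot C\cdot 2^i$), while Property~2 needs the upper bound $d^* \leq \tilde d$. Note also that the query operation of Algorithm~\ref{alg:incremental_r_gather} does not modify any $N_i$ or $P$, so the induction only needs to consider insertions.
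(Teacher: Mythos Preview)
Your proof is correct and follows essentially the same approach as the paper: argue by induction over insertions, split on whether $p$ joins $N_i$, and use the two sides of the $C$-approximate nearest neighbor guarantee ($d^*\le \tilde d\le C\cdot d^*$) to verify the separation bound in Case~A and the covering bound in Case~B. The paper's write-up is terser but the content is the same.
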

\begin{proof}
The only place to change $N_i$ is in the insertion procedure.
Consider $\textbf{Insert}(p)$.
If $p$ is inserted to $N_i$, since the $C$-approximate distance between $p$ and $N_i\setminus \{p\}$ is greater than $4\cdot C^2\cdot 2^i$, we have $\dist(p,N_i\setminus \{p\})>4\cdot C\cdot 2^i$.
Thus, the first invariant in the lemma statement holds.
If $p$ is not inserted to $N_i$, since the $C$-approximate distance between $p$ and $N_i$ is at most $4\cdot C^2\cdot 2^i$, we have $\dist(p, N_i)\leq 4\cdot C^2\cdot 2^i$ which implies that the second invariant holds.
\end{proof}

\begin{lemma}\label{lem:property_Ui}
Consider a point set $P$ maintained incrementally by Algorithm~\ref{alg:incremental_r_gather}.
After any update/query, $\forall i\in \{0,1,\cdots,L\},\forall u\in\{v\in P\setminus N_i\mid \dist(v,N_i)>2\cdot C\cdot 2^i\}, u\in U_i$. 
\end{lemma}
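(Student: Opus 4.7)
}
The plan is a straightforward structural induction on the sequence of insertions, tracking where a point can sit within the data structures at level $i$. I will first establish a partition invariant and then a distance invariant for pre-clusters, and conclude by contradiction.

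First, I will show that after every operation, for each level $i\in\{0,1,\dots,L\}$ every $u\in P$ belongs to exactly one of $N_i$, $U_i$, or some pre-cluster $Q_i(q)$ with $q\in N_i$. This is clear at the moment $u$ is inserted: the inner loop at level $i$ either adds $u$ to $N_i$ (initializing $Q_i(u)=\{u\}$), or in the else-branch places $u$ into $Q_i(q)$ for some $q\in N_i$, or else places $u$ in $U_i$. To see that this partition is preserved by later insertions, note that the algorithm never removes points from $N_i$ or from any $Q_i(q)$; the only reassignment that occurs is when a new point $p$ joins $N_i$, in which case some members of $U_i$ are moved into $Q_i(p)$. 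Hence a point can only migrate from $U_i$ to some $Q_i(p)$, never back or into $N_i$.

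Next, I will prove the auxiliary claim: whenever $u\in Q_i(q)$ for some $q\in N_i$, we have $\dist(u,q)\le 2\cdot C\cdot 2^i$. There are exactly two ways $u$ can enter $Q_i(q)$. Either (a) $u$ is freshly inserted and falls into the else-branch, where the algorithm finds a $C$-approximate nearest neighbor $q$ of $u$ in $N_i$ and checks $\dist(u,q)\le 2\cdot C\cdot 2^i$ before adding $u$ to $Q_i(q)$, or (b) $q$ itself has just been added to $N_i$, and the algorithm moves $u$ from $U_i$ into $Q_i(q)$ only when the approximate-NN distance between $q$ and $u$ is at most $2\cdot C\cdot 2^i$, so again $\dist(u,q)\le 2\cdot C\cdot 2^i$. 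Since neither $u$ nor $q$ ever leaves $N_i$ or $Q_i(q)$ thereafter, the bound is preserved for all time.

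Combining the two invariants gives the lemma. Fix $i$ and let $u\in P\setminus N_i$ satisfy $\dist(u,N_i)>2\cdot C\cdot 2^i$. By the partition invariant, $u\in U_i$ or $u\in Q_i(q)$ for some $q\in N_i$. In the latter case, the distance invariant yields $\dist(u,N_i)\le \dist(u,q)\le 2\cdot C\cdot 2^i$, contradicting the hypothesis. Therefore $u\in U_i$.

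I do not expect any genuine obstacle here; the only care needed is to keep track of the two situations in which $U_i$ shrinks (the sub-loop that drains up to $r$ points from $U_i$ into a freshly created $Q_i(p)$, and the else-branch that inserts into $U_i$), and to confirm that in both situations the distance invariant for $Q_i$ is maintained. Because the algorithm never deletes points from $P$, $N_i$, or $Q_i$, no further case analysis beyond insertions is required.
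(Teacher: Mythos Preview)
Your proof is correct and takes essentially the same approach as the paper's: both are inductions on insertions showing that the only way a non-net point can fail to be in $U_i$ is by being within $2C\cdot 2^i$ of some net point. The paper argues this directly by observing that only such close points are ever removed from $U_i$, while you phrase it via a partition invariant together with the distance bound $Q_i(q)\subseteq\{v:\dist(v,q)\le 2C\cdot 2^i\}$ (which is the first half of the paper's subsequent Lemma~\ref{lem:property_counter}).
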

\begin{proof}
The invariant should be maintained during the insertion procedure.
Consider $\mathbf{Insert}(p)$.
If $p$ is inserted into $N_i$, since it only deletes $q\in U_i$ with $\dist(p,q)\leq 2\cdot C\cdot 2^i$, the invariant holds.
If $p$ is not inserted into $N_i$ and $\dist(p, N_i)>2\cdot C\cdot 2^i$, then $p$ must be added into $U_i$.
The invariant still holds.
\end{proof}

\begin{lemma}\label{lem:property_counter}
Consider a point set $P$ maintained incrementally by Algorithm~\ref{alg:incremental_r_gather}.
After any update/query, $\forall i\in \{0,1,2,\cdots,L\},\forall u\in N_i$, $
Q_i(u)\subseteq \{v\in P\mid \dist(u,v)\leq 2\cdot C\cdot 2^i\} $ and $|Q_i(u)|\geq \min(r, |\{v\in P\mid \dist(u,v)\leq 2\cdot 2^i\}|)$.
\end{lemma}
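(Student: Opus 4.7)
The plan is to prove both parts of the lemma by induction on the sequence of $\textbf{Insert}$ operations; queries do not modify $N_i$, $U_i$, or any $Q_i(\cdot)$, so the qualifier ``after any update/query'' reduces to verifying the invariant across insertions. The base case is vacuous since $N_i = \emptyset$ initially. Fix an $\textbf{Insert}(p)$ and a scale $i \in \{0,1,\dots,L\}$, and assume both claims hold at level $i$ right before the insertion.

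Containment is immediate from the code: a point $v$ enters some $Q_i(u)$ only in three places, each of which either sets $v = u$ (distance $0$), transfers $v$ from $U_i$ to $Q_i(p)$ subject to the explicit check $\dist(p,v) \leq 2C \cdot 2^i$, or in the else-branch places $v = p$ into $Q_i(q)$ after checking $\dist(p,q) \leq 2C \cdot 2^i$. Hence every member of every $Q_i(u)$ stays within distance $2C \cdot 2^i$ of $u$.

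For the size lower bound I split on the branch taken by the insertion at level $i$. Case I: $p$ is added to $N_i$. For any pre-existing $u \in N_i$, Lemma~\ref{lem:property_Ni} (post-insertion) yields $\dist(u,p) > 4C \cdot 2^i > 2 \cdot 2^i$, so the ball $\{v \in P : \dist(u,v) \leq 2 \cdot 2^i\}$ is unchanged and the bound survives from the hypothesis. For $u = p$, the key claim is that every $v \in P_{\mathrm{old}}$ with $\dist(p,v) \leq 2 \cdot 2^i$ lay in $U_i$ just before the insertion. Indeed, the branch condition forces $\dist(p, N_i) > 4C \cdot 2^i$, so for every $w \in N_i$ the triangle inequality gives $\dist(v,w) \geq \dist(p,w) - \dist(p,v) > (4C-2)\cdot 2^i \geq 2C \cdot 2^i$ (using $C \geq 1$); this shows $v \notin N_i$ and $\dist(v,N_i) > 2C \cdot 2^i$, and Lemma~\ref{lem:property_Ui} then places $v \in U_i$. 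The initialization loop now repeatedly pulls a $C$-approximate nearest neighbor of $p$ from $U_i$ into $Q_i(p)$; whenever it terminates because the returned distance exceeds $2C \cdot 2^i$, the $C$-approximation property forces the true remaining nearest distance to exceed $2 \cdot 2^i$, so no point of the pre-insertion $U_i$ within $2 \cdot 2^i$ of $p$ is left behind. Combined with $p \in Q_i(p)$, this gives $|Q_i(p)| \geq \min(r,\, |\{v \in P_{\mathrm{new}} : \dist(p,v) \leq 2 \cdot 2^i\}|)$.

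Case II: $p$ is not added to $N_i$. Let $q$ be the $C$-approximate nearest neighbor of $p$ in $N_i$ used by the algorithm. If $\dist(p,q) > 2C \cdot 2^i$, then $\dist(p, N_i) \geq \dist(p,q)/C > 2 \cdot 2^i$, so no ball of radius $2 \cdot 2^i$ around an element of $N_i$ grows, and since no $Q_i(\cdot)$ changes the invariant is preserved. If instead $\dist(p,q) \leq 2C \cdot 2^i$ and $p$ is appended to $Q_i(q)$, then for $u \in N_i \setminus \{q\}$ Lemma~\ref{lem:property_Ni} gives $\dist(u,q) > 4C \cdot 2^i$, hence $\dist(u,p) > 2C \cdot 2^i \geq 2 \cdot 2^i$ and nothing changes around $u$; for $u = q$, $|Q_i(q)|$ grows by exactly one while $|\{v \in P : \dist(q,v) \leq 2 \cdot 2^i\}|$ grows by at most one, and a brief case split on whether $|Q_i(q)| \geq r$ or $< r$ before the insertion propagates the $\min(r,\cdot)$ bound. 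The one delicate point in the whole argument, and the step I expect to require the most care in the write-up, is the triangle-inequality slack in Case~I used to place $v$ in $U_i$: the estimate $(4C-2) \geq 2C$ needed there uses the assumption $C \geq 1$ tightly, so the write-up must keep track of strict versus non-strict inequalities to avoid degenerate $C = 1$ corner cases.
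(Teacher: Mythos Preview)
Your proof is correct and follows essentially the same approach as the paper. The paper organizes the case analysis around $p=u$ versus $p\neq u$ (and then on whether $p$ enters $N_i$), while you split first on the algorithm branch (if/else) and then on $u=p$ versus $u\neq p$; the underlying arguments---using Lemma~\ref{lem:property_Ni} for the $4C\cdot 2^i$ separation, Lemma~\ref{lem:property_Ui} to place close points in $U_i$, and the $C$-approximation property of the nearest-neighbor oracle to control the loop's termination---are identical. Your flagged concern about $C=1$ resolves cleanly: the strict inequality $\dist(p,N_i^{\text{old}})>4C\cdot 2^i$ (equivalently the strict separation in Lemma~\ref{lem:property_Ni}) propagates through the triangle inequality to give $\dist(v,w)>(4C-2)\cdot 2^i\geq 2C\cdot 2^i$, so the needed strict bound $\dist(v,N_i^{\text{old}})>2C\cdot 2^i$ holds even at $C=1$.
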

\begin{proof}
Consider a point $u\in N_i$.
$Q_i(u)$ is only updated in the insertion procedure.
Consider $\mathbf{Insert}(p)$.

If $p\not=u$ and $p$ is inserted into $N_i$, according to Lemma~\ref{lem:property_Ni}, $\dist(u,p)>4\cdot C\cdot 2^i>2\cdot C\cdot 2^i$.
According to Algorithm~\ref{alg:incremental_r_gather}, we do not update $Q_i(u)$ and thus the invariant of $Q_i(u)$ holds.

If $p\not=u$ and $p$ is not inserted into $N_i$, there are three cases.
In the first case when $\dist(p,u)\leq 2\cdot 2^i$, the $C$-approximate nearest neighbor of $p$ from $N_i$ must be $u$.
Otherwise, there exists $q\in N_i$ with $\dist(p,q)\leq 2\cdot C\cdot 2^i$ which implies that $\dist(u,q)\leq 2\cdot C\cdot 2^i+2\cdot 2^i\leq 4\cdot C\cdot 2^i$ and thus contradicts to Lemma~\ref{lem:property_Ni}.
Thus, we will add $p$ into $Q_i(u)$ unless the size of $Q_i(u)$ is already at least $r$.
In the second case when $\dist(p,u)\in(2\cdot 2^i,2\cdot C\cdot 2^i]$, we can accept whether $Q_i(u)$ is updated or not.
In the third case, since $\dist(p,u)>2\cdot C\cdot 2^i$, we will not update $Q_i(u)$.

If $p=u$, $Q_i(u)$ is created only when $p$ is inserted into $N_i$. 
Notice that $Q_i(u)$ is updated only when there is a point within distance $2\cdot C\cdot 2^i$.
Thus, $Q_i(u)\subseteq \{v\in P\mid \dist(u,v)\leq 2\cdot C\cdot 2^i\}$.
Now consider $v\in P$ such that $\dist(v,u)\leq 2\cdot 2^i$.
By Lemma~\ref{lem:property_Ni} and triangle inequality, we have $\dist(v,N_i\setminus\{u\})>2\cdot C\cdot 2^i$.
By Lemma~\ref{lem:property_Ui}, we know that $v$ is in $U_i$ before the insertion of $p$.
According to the procedure of $\mathbf{Insert}(p)$, we will remove $v$ from $U_i$ and add $v$ into $Q_i(u)$ unless $|Q_i(u)|\geq r$.

Thus, the invariant of $Q_i(u)$ holds.
\end{proof}

\begin{lemma}[Correctness of Algorithm~\ref{alg:incremental_r_gather}]\label{lem:correctness_incremental}
Consider a point set $P$ maintained incrementally by Algorithm~\ref{alg:incremental_r_gather}.
After any update/query, if we run query procedure for every point $p\in P$, we obtain clusters such that the maximum radius among all clusters is most $4\cdot C^3\cdot 2^{i^*}\leq 8\cdot C^3\cdot \rho^*(P)$ and each cluster has size at least $r$.
\end{lemma}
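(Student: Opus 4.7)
The plan is to assemble the three invariants from Lemmas~\ref{lem:property_Ni}, \ref{lem:property_Ui}, and \ref{lem:property_counter} together with Lemma~\ref{lem:relation_to_NN}. All three invariants are already stated as post-update properties of the data structure, so no new argument about the history of insertions is needed; the proof is a matter of combining them.

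First I will verify the size condition. By the very choice of $i^*$ in the query procedure, $|Q_{i^*}(q)| \geq r$ for every $q \in N_{i^*}$. If the query procedure is invoked on every $p \in P$, each point is routed to some center in $N_{i^*}$, and for $q \in N_{i^*}$ the resulting cluster contains $Q_{i^*}(q)$ entirely, since any $p \in Q_{i^*}(q)$ triggers the first branch of the query and is returned $q$ as its center. Hence every cluster has size at least $r$.

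Next I will bound the per-point radius by $4 C^3 \cdot 2^{i^*}$. Fix $p \in P$ and let $q$ be its assigned center. If $p \in Q_{i^*}(q)$, Lemma~\ref{lem:property_counter} gives $\dist(p,q) \leq 2 C \cdot 2^{i^*}$. Otherwise $q$ is returned as a $C$-approximate nearest neighbor of $p$ in $N_{i^*}$; Lemma~\ref{lem:property_Ni} furnishes some $q' \in N_{i^*}$ with $\dist(p,q') \leq 4 C^2 \cdot 2^{i^*}$, and the approximate nearest-neighbor guarantee then yields $\dist(p,q) \leq C \cdot \dist(p,q') \leq 4 C^3 \cdot 2^{i^*}$. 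Either bound is within the claimed radius.

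The one substantive step remaining is to bound $2^{i^*}$ in terms of $\rho^*(P)$. Set $\wh{\rho} = \max_{u \in P} \rho_r(P,u)$. For any scale $i$ satisfying $2 \cdot 2^i \geq \wh{\rho}$ and any $q \in N_i \subseteq P$, the ball $\{v \in P \mid \dist(q,v) \leq 2 \cdot 2^i\}$ contains the $r$ nearest neighbors of $q$ and therefore has size at least $r$; so by Lemma~\ref{lem:property_counter}, $|Q_i(q)| \geq r$. Consequently $i^*$ is at most the smallest integer $i$ with $2^i \geq \wh{\rho}/2$, which gives $2^{i^*} \leq \wh{\rho}$. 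Combining with Lemma~\ref{lem:relation_to_NN} yields $2^{i^*} \leq \wh{\rho} \leq 2 \rho^*(P)$, and the final radius bound $4 C^3 \cdot 2^{i^*} \leq 8 C^3 \rho^*(P)$ follows immediately. I do not foresee a substantive obstacle; the only subtlety is noticing that $N_{i^*}$ may contain centers that entered long before many of their current neighbors, but Lemma~\ref{lem:property_counter} is preserved by the insertion procedure, so the bound on $|Q_i(q)|$ automatically reflects the current point set.
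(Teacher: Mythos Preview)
Your proof is correct and follows the same three-step structure as the paper's own argument: use the definition of $i^*$ for the size bound, combine Lemma~\ref{lem:property_Ni} with the $C$-approximate nearest-neighbor guarantee for the radius bound, and use Lemma~\ref{lem:property_counter} together with Lemma~\ref{lem:relation_to_NN} to control $2^{i^*}$. The only cosmetic differences are that you explicitly split the radius analysis into the two branches of the query (the paper handles only the approximate-NN branch, which dominates), and you route the final inequality through $\wh{\rho}$ before applying Lemma~\ref{lem:relation_to_NN}, whereas the paper works with $\rho^*(P)$ directly; both routes yield $2^{i^*}\le 2\rho^*(P)$.
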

\begin{proof}
Firstly, let us show that each cluster has size at least $r$.
According to the query process, we have $\forall q\in N_{i^*}, |Q_{i^*}(q)|\geq r$. 
Thus, each cluster has size at least $r$.

Next, let us prove that each cluster has radius at most $4\cdot C^3\cdot 2^{i^*}$.
According to Lemma~\ref{lem:property_Ni}, $\forall p\in P,\exists q\in N_{i^*},\dist(p,q)\leq 4\cdot C^2\cdot 2^{i^*}$.
Since we assign $p$ to its $C$-approximate nearest neighbor in $N_{i^*}$, the radius is at most $4\cdot C^3\cdot 2^{i^*}$.

Next we show that the maximum radius is upper bounded by $8\cdot C^3\cdot \rho^*(P)$.
It is suffices to show that $2^{i^*}\leq 2\cdot \rho^*(P)$.
Consider the smallest $i\in\{0,1,\cdots, L\}$ such that $2^i\geq \rho^*(P)$.
According to Lemma~\ref{lem:relation_to_NN}, we have $\forall p\in P, |\{q\in P\mid \dist(p,q)\leq 2\cdot 2^i\}|\geq r$.
According to Lemma~\ref{lem:property_counter}, we have $\forall p\in N_i,|Q_i(p)|\geq r$.
Thus, $i\leq i^*$ which implies that $2^{i^*}\leq 2\cdot \rho^*(P)$.
\end{proof}

\begin{lemma}[Running time of Algorithm~\ref{alg:incremental_r_gather}]\label{lem:time_incremental}
Consider a point set $P$ incrementally maintained by Algorithm~\ref{alg:incremental_r_gather}.
Let $T_{C,n}$ denote the fully dynamic update/query time upper bound of the $C$-approximate nearest neighbor search over a point set with size at most $n$. 
The amortized update time of $\mathbf{Insert}(p)$ takes $O(T_{C,|P|}\log\Delta)$.
The worst case update time of $\mathbf{Insert}(p)$ takes $O(r\cdot T_{C,|P|}\log\Delta)$.
The worst case query time of $\mathbf{Query}(p)$ takes $O(T_{C,|P|}\log \Delta)$.
\end{lemma}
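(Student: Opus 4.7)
}

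The plan is to account for time scale-by-scale. The outer loop in $\mathbf{Insert}(p)$ runs over $L+1 = O(\log \Delta)$ scales, and within each scale $i$ the work consists of a bounded number of $C$-approximate nearest neighbor operations (against either $N_i$ or $U_i$) plus a constant amount of pointer bookkeeping for the sets $N_i$, $U_i$, and $Q_i(\cdot)$. Each such nearest neighbor operation takes time $T_{C,|P|}$ by assumption, so the cost at scale $i$ is a constant plus $T_{C,|P|}$ times the number of such operations performed.

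For the worst-case update bound, I would observe that at each scale $i$ the algorithm performs at most one $C$-approximate nearest neighbor query into $N_i$, and then possibly a sequence of $C$-approximate nearest neighbor queries into $U_i$. The second loop terminates as soon as $|Q_i(p)|\ge r$ or the returned point is at distance more than $2C\cdot 2^i$, so at most $r$ such queries are issued. Summing over all $O(\log\Delta)$ scales gives the stated worst-case bound of $O(r\cdot T_{C,|P|}\cdot \log\Delta)$. For the query procedure, I would maintain for each scale $i$ a counter $c_i$ recording the number of centers $q\in N_i$ with $|Q_i(q)|<r$; this counter can be updated in constant amortized time per insertion step without changing the asymptotic bound above. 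A query then finds the smallest $i^*$ with $c_{i^*}=0$ by scanning $i=0,1,\dots,L$, costing $O(\log\Delta)$, and performs one $C$-approximate nearest neighbor query in $N_{i^*}$ (plus a pointer lookup of $p$'s assigned $Q_{i^*}(q)$, maintained during insertion), giving the claimed $O(T_{C,|P|}\cdot \log\Delta)$ query time.

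The main content of the lemma is the amortized bound. Here the key observation is that the inner ``repeat'' loop inside $\mathbf{Insert}$ is expensive only when points are being removed from $U_i$, and each such removal is a one-way transition: once a point $q$ is taken out of $U_i$ at scale $i$, it is placed into some $Q_i(\cdot)$ and Lemma~\ref{lem:property_Ui} together with the update rules guarantees it is never re-inserted into $U_i$ at scale $i$. Thus across the entire sequence of $m$ insertions, the total number of removals from $U_i$ over all scales is at most $m\cdot(L+1) = O(m\log\Delta)$. Charging each removal's $O(T_{C,|P|})$ nearest neighbor cost to the insertion that originally placed that point into $U_i$, and paying the remaining $O(T_{C,|P|}\cdot\log\Delta)$ per insertion (one $N_i$-query per scale), the amortized cost is $O(T_{C,|P|}\cdot \log\Delta)$ as claimed.

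The subtle step I expect to be the main obstacle is the amortization accounting: one has to verify formally that removals from $U_i$ are indeed bounded by past insertions, which requires combining Lemma~\ref{lem:property_Ui} (a point too far from $N_i$ must lie in $U_i$) with the invariant that $N_i$ can only grow under insertions. Once that monotonicity is established, the rest is routine bookkeeping, and the worst-case and query-time bounds follow directly from the structure of the algorithm together with the black-box cost $T_{C,|P|}$ of the approximate nearest neighbor data structure.
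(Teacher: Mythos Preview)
Your proposal is correct and follows essentially the same approach as the paper: bound the worst case by $O(r)$ nearest-neighbor calls per scale, and obtain the amortized bound by observing that each nearest-neighbor call in the inner loop (beyond a constant) is paid for by a removal from $U_i$, while each point is inserted into $U_i$ at most once. The paper's own proof is much terser and does not spell out the counter mechanism for finding $i^*$ or the charging argument; one small simplification is that you do not need Lemma~\ref{lem:property_Ui} for the amortization---in the incremental setting a point can enter $U_i$ only during its own $\mathbf{Insert}$ call, so re-insertion is impossible by construction.
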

\begin{proof}
Consider $\mathbf{Insert}(p)$.
For each $i\in\{0,1,\cdots,L\}$, it calls at most $O(r)$ $C$-approximate nearest neighbor searches. 
Thus, the worst case update time is at most $O(r\cdot T_{C,|P|}\log\Delta)$.
Each update of $Q_i(p)$ corresponds to a deletion of a point from $U_i$.
Since each point can be deletion from $U_i$ corresponds to an insertion of $U_i$.
The amortized update time is $O(T_{C,|P|}\log\Delta)$.

Consider $\mathbf{Query}(p)$.
For each $i\in\{0,1,\cdots,L\}$, it calls one $C$-approximate nearest neighbor search.
Thus, the worst case query time is at most $O(T_{C,|P|}\log\Delta)$.
\end{proof}

\begin{theorem}\label{thm:incremental}
Suppose the time needed to compute the distance between any two points $x,y\in\mathcal{X}$ is at most $\tau$.
An $O(1)$-approximate $r$-gather solution of a point set $P\subseteq \mathcal{X}$ can be maintained under point insertions in the $r\cdot 2^{O(d)}\cdot \log^2\Delta\cdot \log\log \Delta\cdot \tau$ worst update time and $2^{O(d)}\cdot \log^2\Delta\cdot \log\log \Delta\cdot \tau$ amortized update time, where $\Delta$ is an upper bound of the ratio between the largest distance and the smallest distance of different points in $P$ at any time.
For each query, the algorithm outputs an $O(1)$-approximation to the maximum radius of the maintained approximate $r$-gather solution in the worst $2^{O(d)}\cdot \log^2\Delta\cdot \tau$ query time.
If a point $p\in P$ is additionally given in the query, the algorithm outputs the center of the cluster containing $p$ in the same running time.
\end{theorem}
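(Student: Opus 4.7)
The plan is to prove Theorem~\ref{thm:incremental} by instantiating Algorithm~\ref{alg:incremental_r_gather} with navigating nets as the underlying $C$-approximate nearest neighbor data structure. I would fix $C$ to be any constant strictly greater than $1$ (for concreteness $C=2$), so that the resulting approximation ratio $8C^3$ from Lemma~\ref{lem:correctness_incremental} is $O(1)$, giving the claimed $O(1)$-approximate $r$-gather guarantee. Correctness then reduces entirely to combining Lemmas~\ref{lem:property_Ni}, \ref{lem:property_Ui}, \ref{lem:property_counter} and \ref{lem:correctness_incremental}, which are already available.

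For the implementation, I would maintain a separate navigating net on each of the auxiliary point sets $N_0,\ldots,N_L$ and $U_0,\ldots,U_L$ that Algorithm~\ref{alg:incremental_r_gather} uses (there are $O(\log\Delta)$ of them in total). Whenever the algorithm adds or removes a point from any such set, I would propagate the change to the corresponding navigating net in $2^{O(d)}\log\Delta\log\log\Delta\cdot \tau$ time via Lemmas~\ref{lem:navigating_net_insertion} and \ref{lem:navigating_net_deletion}, and I would answer each $C$-approximate nearest neighbor query in $2^{O(d)}\log\Delta\cdot \tau$ time via Lemma~\ref{lem:navigating_net_search} (the $(1/\varepsilon)^{O(d)}$ term is absorbed into $2^{O(d)}$ because $\varepsilon=C-1$ is a constant). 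Thus the abstract subroutine cost $T_{C,|P|}$ in Lemma~\ref{lem:time_incremental} is $2^{O(d)}\log\Delta\log\log\Delta\cdot \tau$ for updates and $2^{O(d)}\log\Delta\cdot \tau$ for pure queries.

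Plugging these into Lemma~\ref{lem:time_incremental} immediately produces the update bounds: the amortized update time becomes $O(T_{C,|P|}\log\Delta)=2^{O(d)}\log^2\Delta\log\log\Delta\cdot \tau$ and the worst case update time becomes $O(r\cdot T_{C,|P|}\log\Delta)=r\cdot 2^{O(d)}\log^2\Delta\log\log\Delta\cdot \tau$, exactly as stated. For the query time I would refine the count slightly: a query modifies no auxiliary structure and only performs one approximate nearest neighbor lookup at each of the $O(\log\Delta)$ levels, each costing $2^{O(d)}\log\Delta\cdot \tau$ by Lemma~\ref{lem:navigating_net_search} alone, giving the tighter total query bound $2^{O(d)}\log^2\Delta\cdot \tau$ without the $\log\log\Delta$ factor.

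The main obstacle I foresee is justifying the amortization of the ``repeat until $|Q_i(p)|\geq r$'' inner loop inside the insertion procedure, which can remove up to $r$ points from $U_i$ and trigger $r$ navigating net deletions in a single Insert call, seemingly giving the worst case bound in all cases. Following the argument in the proof of Lemma~\ref{lem:time_incremental}, I would charge each deletion from $U_i$ to the unique earlier insertion that placed that point into $U_i$; since every point enters $U_i$ at most once before being removed, the total number of $U_i$-side navigating net operations over any prefix of the insertion sequence is linear in the number of Insert calls times $L=O(\log\Delta)$. This charging is what accounts for the $r$-vs-amortized gap in the theorem statement, and verifying that Algorithm~\ref{alg:incremental_r_gather} never re-inserts a point into $U_i$ after a deletion is the one non-trivial accounting step; it follows because a point can only leave $U_i$ by entering $Q_i(q)$ for some $q\in N_i$, and subsequent insertions of the master point set do not revisit points already present in $P$.
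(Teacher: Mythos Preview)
Your proposal is correct and follows essentially the same approach as the paper: instantiate Algorithm~\ref{alg:incremental_r_gather} with navigating nets as the approximate nearest neighbor structure, invoke Lemma~\ref{lem:correctness_incremental} for correctness, and plug the navigating-net bounds (Lemmas~\ref{lem:navigating_net_insertion}--\ref{lem:navigating_net_search}) into Lemma~\ref{lem:time_incremental} for the running time. Your explicit observation that the query only performs searches (hence the $\log\log\Delta$ factor drops out) and your charging argument for the amortized bound simply spell out what the paper's two-line proof leaves implicit.
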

\begin{proof}
The running time is obtained by directly applying the nearest neighbor search given by Lemma~\ref{lem:navigating_net_search} on Lemma~\ref{lem:time_incremental}.
The correctness is given by Lemma~\ref{lem:correctness_incremental}
\end{proof}

\subsection{Fully Dynamic $r$-Gather Algorithm}
In this section, we show how to adapt the idea from Algorithm~\ref{alg:incremental_r_gather} to the fully dynamic case.
The new algorithm is shown in Algorithm~\ref{alg:dynmaic_r_gather}.
    \begin{algorithm}[h!]
	\small
	\begin{algorithmic}[1]\caption{Fully Dynamic Approximate $r$-Gather}\label{alg:dynmaic_r_gather}
	\STATE {\bfseries Initialization:} 
	\STATE Let $\Gamma = \{2^i\mid i\in \mathbb{Z}\}$ be the set of scales.
	Initialize an empty navigating net $\{Y_R\mid R\in \Gamma\}$ (see Section~\ref{sec:navigating_nets}).
	\STATE Let $C\geq 1$ be the approximation factor in the nearest neighbor search. 
	Let $\wb{C}$ be the smallest value such that $\wb{C}\geq C$ and $\wb{C}$ is a power of $2$.
	Let $r$ be the cluster size lower bound parameter.
	\STATE $\forall R\in \Gamma$, let $U_R\gets \emptyset$. 
    \STATE {\bfseries Insert$(p)$:}
    \STATE Insert $p$ to the navigating net $\{Y_R\mid R\in \Gamma\}$ (Lemma~\ref{lem:navigating_net_insertion}).
    If $p$ is the only point in the point set, return.
    \STATE Let $R_{\min}\in\Gamma$ be the maximum value such that $Y_{R_{\min}}$ is the entire point set. 
    \STATE Let $R_{\max}\in\Gamma$ be the minimum value such that $|Y_{R_{\max}/(4\cdot\wb{C})}|=1$.
    \FOR{$R=R_{\min}, R_{\min}\cdot 2, R_{\min}\cdot 4,\cdots, R_{\max}$}
        \IF{$p\in Y_R$}
        \STATE Initialize $Q_R(p) = \{p\}$.
        \STATE 
        Find a $C$-approximate nearest neighbor of $p$ from $U_R$.
        If the distance between $p$ and the point found is less than $R/2$, remove the point found from $U_R$ and add it into $Q_R(p)$.
        Repeat this step until the distance between $p$ and the point found is at least $R/2$ or $|Q_R(p)|\geq r$.
        \ELSE
        \STATE Find a $C$-approximate nearest neighbor of $p$ from $Y_R$.
        Let $q$ be the point found.
        If $\dist(p, q)< R/2$, add $p$ into $Q_R(q)$.
        Otherwise, add $p$ into $U_R$.
        \ENDIF
    \ENDFOR
    \STATE For $R\in \Gamma$ with $R<R_{\min}$, (conceptually) set $U_R=\emptyset$ and $\forall q\in Y_R,Q_R(q)=\{p\}$.
    \STATE For $R\in \Gamma$ with $R>R_{\max}$, (conceptually) set $U_R=U_{R_{\max}}$ and for $q\in Y_R,Q_R(q)=Q_{R_{\max}}(q)$.
    \STATE {\bfseries Delete$(p)$:}
    \STATE Delete $p$ from the navigating net $\{Y_R\mid R\in \Gamma\}$ (Lemma~\ref{lem:navigating_net_deletion}). 
    For $R\in\Gamma$, let $Z_R$ denote the new points added into $Y_R$ during the deletion procedure (see Lemma~\ref{lem:navigating_net_deletion}).
    \STATE If the number of remaining points is at most $1$, rebuild the data structure by running initialization and insertions. Return.
    \STATE Let $R_{\min}\in\Gamma$ be the maximum value such that $Y_{R_{\min}}$ is the entire point set. 
    \STATE Let $R_{\max}\in\Gamma$ be the minimum value such that $|Y_{R_{\max}/(4\cdot\wb{C})}|=1$.
    \FOR {$R\in \{R_{\min}, R_{\min}\cdot 2,R_{\min}\cdot 4,\cdots, R_{\max}\}$}
        \STATE If $p$ was deleted from $Y_R$, add all points from $Q_R(p)$ into $U_R$ and delete $Q_R(p)$.
        \STATE If $p\in U_R$, delete $p$ from $U_R$. If $p\in Q_R(q)$ for some $q\in Y_R$, delete $p$ from $Q_R(q)$; find a $C$-approximate nearest neighbor of $q$ from $U_R$.
        If the distance between $q$ and the point found is less than $R/2$, remove the point found from $U_R$ and add it into $Q_R(q)$. \label{sta:delete_p}
        \STATE For each $x\in Z_R$: Find a $C$-approximate nearest neighbor of $x$ from $U_R$.
        If the distance between $x$ and the point found is less than $R/2$, remove the point found from $U_R$ and add it into $Q_R(x)$.
        Repeat this step until the distance between $x$ and the point found is at least $R/2$ or $|Q_R(x)|\geq r$. \label{sta:update_qr_for_zr}
    \ENDFOR
    \STATE For $R\in \Gamma$ with $R<R_{\min}$, (conceptually) set $U_R=\emptyset$ and $\forall q\in Y_R,Q_R(q)=\{p\}$.
    \STATE For $R\in \Gamma$ with $R>R_{\max}$, (conceptually) set $U_R=U_{R_{\max}}$ and for $q\in Y_R,Q_R(q)=Q_{R_{\max}}(q)$.
    \STATE {\bfseries Query$(p)$: }
    \STATE Let $R_{\min}\in\Gamma$ be the maximum value such that $Y_{R_{\min}}$ is the entire point set. 
    \STATE Let $R_{\max}\in\Gamma$ be the minimum value such that $|Y_{R_{\max}/(4\cdot\wb{C})}|=1$.
    \STATE Let $R^*\in \Gamma\cap [R_{\min}, R_{\max}]$ be the smallest value such that $\forall x\in Y_{R^*},|Q_{R^*}(x)|\geq r$.
    \STATE If $\exists q\in Y_{R^*}, p\in Q_{R^*}(q)$, output $q$ as the center of $p$, and output $2\cdot C\cdot R^*$ as the radius.
    \STATE Otherwise, find a $C$-approximate nearest neighbor of $p$ from $Y_{R^*}$.
    Let the point found be $q$.
    Output $q$ as the center of $p$, and output $2\cdot C\cdot R^*$ as the radius.
	\end{algorithmic}
\end{algorithm}

Firstly, recall the property of the navigating net, we have the following lemma:
\begin{lemma} \label{lem:property_YR}
Consider a point set $P$ maintained dynamically by Algorithm~\ref{alg:dynmaic_r_gather}.
After any update/query, $\forall R\in \Gamma$,
\begin{enumerate}
    \item $\forall u,v\in Y_R,\dist(u,v)\geq R$,
    \item $\forall u\in P, \exists v\in Y_R, \dist(u,v)<2\cdot R$.
\end{enumerate}
\end{lemma}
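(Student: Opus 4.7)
The plan is to observe that both claims follow directly from invariants preserved by the navigating-net data structure of \cite{krauthgamer2004navigating}, which Algorithm~\ref{alg:dynmaic_r_gather} uses as a black box for maintaining $\{Y_R \mid R \in \Gamma\}$. Every modification to these sets happens only through the insertion routine of Lemma~\ref{lem:navigating_net_insertion} or the deletion routine of Lemma~\ref{lem:navigating_net_deletion}; both routines are guaranteed to maintain the defining navigating-net invariants, namely that $Y_R$ is an $R$-net of $Y_{R/2}$ for every $R\in\Gamma$, and that $Y_R = P$ for all $R$ below the minimum interpoint distance. Everything else the algorithm does (updating $U_R$, $Q_R(\cdot)$, etc.) leaves the sets $Y_R$ themselves alone, so the invariants persist across any sequence of updates and queries.

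First I would dispatch the separating property. By the $R$-net definition inherited from Section~\ref{sec:navigating_nets}, any two distinct $u,v\in Y_R$ satisfy $\dist(u,v)\geq R$; this is maintained by Lemma~\ref{lem:navigating_net_insertion} and Lemma~\ref{lem:navigating_net_deletion}, so the first claim is immediate after any update. Second, for the covering property I would simply invoke Lemma~\ref{lem:navigating_net_cover}, which already proves by induction on the scale that for every $R\in\Gamma$ and every $p\in P$ there exists $x\in Y_R$ with $\dist(x,p)<2R$. That lemma is stated for an arbitrary navigating net over $P$, and since the data structure in Algorithm~\ref{alg:dynmaic_r_gather} is precisely such a navigating net at every point in time, the conclusion transfers unchanged.

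The only edge cases worth noting are the boundary scales. For $R$ smaller than $\min_{x\ne y \in P}\dist(x,y)$ we have $Y_R = P$, so the separating property holds vacuously (no distinct pair is closer than $R$) and the covering property is trivial since $p\in Y_R$ itself. For $R$ sufficiently large that $|Y_R|=1$, the separating property is again vacuous and the covering property reduces to the statement that the unique point in $Y_R$ is within $2R$ of every $p\in P$, which is exactly what the induction of Lemma~\ref{lem:navigating_net_cover} provides at the top of the hierarchy. There is no genuine obstacle here: the lemma is a bookkeeping restatement, packaging the navigating-net guarantees in the form needed for the subsequent analyses of $U_R$ and $Q_R(\cdot)$, where the real work of the correctness argument will take place.
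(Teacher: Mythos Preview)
Your proposal is correct and follows essentially the same approach as the paper: the first invariant is immediate from the $R$-net separating property maintained by the navigating-net updates, and the second invariant is a direct application of Lemma~\ref{lem:navigating_net_cover}. The paper's proof is just two sentences to this effect; your additional remarks on boundary scales and on the non-$Y_R$ bookkeeping are accurate but not needed.
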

\begin{proof}
The first invariant follows from that $Y_R$ is an $R$-net of $Y_{R/2}$.
The second invariant follows from Lemma~\ref{lem:navigating_net_cover}.
\end{proof}

\begin{lemma}\label{lem:full_point_set}
Consider a point set $P$ maintained dynamically by Algorithm~\ref{alg:dynmaic_r_gather}.
After any update/query, $\forall R\in \Gamma, U_R\cup \bigcup_{q\in Y_R} Q_R(q) = P$ and furthermore, $\forall q\in Y_R,\forall x\in Q_R(q),\dist(q,x)<R/2$.
\end{lemma}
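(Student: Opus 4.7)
The plan is to prove both assertions---coverage $U_R\cup\bigcup_{q\in Y_R} Q_R(q)=P$ and the distance bound $\dist(q,x)<R/2$ for every $x\in Q_R(q)$---jointly by induction on the sequence of $\mathbf{Insert}$, $\mathbf{Delete}$, and $\mathbf{Query}$ operations. The base case is trivial since $P=\emptyset$, and $\mathbf{Query}$ does not modify any data structure, so only the two update operations require analysis. Throughout, I will invoke Lemma~\ref{lem:property_YR} (separation and covering for $Y_R$) at the current state and use triangle inequality to rule out the only nontrivial coverage issue, namely whether a point already placed in some $Q_R(q)$ might need to be moved when the configuration of $Y_R$ changes.

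For $\mathbf{Insert}(p)$, I proceed scale by scale. If $p\in Y_R$, the algorithm initializes $Q_R(p)=\{p\}$ and then drains points from $U_R$ that lie within $R/2$ of $p$, which directly preserves both invariants for the newly-created pre-cluster. I must also check that no point currently in some $Q_R(q)$ with $q\in Y_R\setminus\{p\}$ needs to be moved to $Q_R(p)$: by Lemma~\ref{lem:property_YR} we have $\dist(p,q)\geq R$, so for any $x\in Q_R(q)$ with $\dist(q,x)<R/2$ the triangle inequality gives $\dist(p,x)>R/2$, hence the algorithm is allowed to leave $x$ in $Q_R(q)$ without violating the distance bound. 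If $p\notin Y_R$, the $C$-approximate nearest-neighbor query on $Y_R$ correctly decides between placing $p$ into some $Q_R(q)$ or into $U_R$: the separation $\dist(u,v)\geq R$ for $u,v\in Y_R$ implies that at most one $q\in Y_R$ can satisfy $\dist(p,q)<R/2$, and the approximate search cannot miss it once the threshold $R/2$ is checked exactly. The boundary clauses for $R<R_{\min}$ and $R>R_{\max}$ are handled by the conceptual settings in the algorithm and satisfy the invariants by direct inspection.

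The deletion case is where the main obstacle lies, because removing $p$ simultaneously (i) removes $p$ from whichever container held it at each scale, (ii) may delete $p$ from some $Y_R$, destroying $Q_R(p)$, and (iii) may promote brand-new points $Z_R$ into $Y_R$ per Lemma~\ref{lem:navigating_net_deletion}. My plan is to verify the invariants after each of the three bullet actions performed per scale in Algorithm~\ref{alg:dynmaic_r_gather}. Dumping $Q_R(p)$ into $U_R$ when $p$ leaves $Y_R$ preserves coverage trivially and does not affect any other pre-cluster. Excising $p$ itself (from $U_R$ or from some $Q_R(q)$) and then topping up $Q_R(q)$ from $U_R$ maintains both invariants by the explicit $R/2$ check in step~\ref{sta:delete_p}. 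For each newly promoted $x\in Z_R$, step~\ref{sta:update_qr_for_zr} only moves points from $U_R$ into $Q_R(x)$ when the distance check $<R/2$ holds, so the distance bound is immediate. The delicate coverage point is to justify that no point currently sitting in $Q_R(q')$ for some pre-existing $q'\in Y_R$ needs to migrate to $Q_R(x)$: applying Lemma~\ref{lem:property_YR} to the post-deletion state gives $\dist(x,q')\geq R$, and then the triangle inequality combined with the inductive distance bound $\dist(q',y)<R/2$ for $y\in Q_R(q')$ yields $\dist(x,y)>R/2$, so $y$ cannot belong to $Q_R(x)$ under the $R/2$ rule. Hence no reassignment is ever needed between pre-existing centers, which is exactly what makes the purely ``pull from $U_R$'' update rule sound.

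Finally, the boundary scales $R<R_{\min}$ and $R>R_{\max}$ require only the observation that the conceptual settings mirror a valid state: for small $R$ the whole point set lies in $Y_R$, so every singleton $Q_R(q)=\{q\}$ trivially obeys both invariants with $U_R=\emptyset$; for large $R$ the data is copied from scale $R_{\max}$, which was just verified. Combining these three cases closes the induction and establishes the lemma.
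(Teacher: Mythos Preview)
Your proposal is correct and follows the same inductive structure as the paper, but it does more work than the lemma requires. The paper's argument is essentially two observations per case: for coverage, every line of $\mathbf{Insert}$ and $\mathbf{Delete}$ either moves a point between $U_R$ and some $Q_R(q)$ or adds/removes exactly the point $p$; for the distance bound, a point enters $Q_R(q)$ only immediately after an explicit $\dist<R/2$ check. Your ``migration'' arguments---that no point already in some $Q_R(q')$ needs to be reassigned to $Q_R(p)$ or to $Q_R(x)$ for a newly promoted center---are unnecessary here: the lemma only asserts that every member of a pre-cluster is close to its center, not the converse. Those concerns are precisely what drive Lemma~\ref{lem:deletion_qr_size}, which lower-bounds $|Q_R(u)|$, and you appear to be anticipating that argument. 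One small caution: your assertion that the $C$-approximate search ``cannot miss'' the unique $q\in Y_R$ within $R/2$ is not literally true---when $R/(2C)\le\dist(p,Y_R)<R/2$ the search may return a point at distance exceeding $R/2$---but this is harmless for the present lemma, since the algorithm then places $p$ in $U_R$ and both invariants still hold.
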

\begin{proof}
The invariant should be maintained during the insertion and deletion procedure.
In the insertion procedure, for $R\in\Gamma\cap [R_{\min},R_{\max}]$, the inserted point $p$ either goes to $Q_R(q)$ for some $q\in Y_R$ or goes to $U_R$, and there is no change of other points.
For $R\not\in [R_{\min}, R_{\max}],$ it is easy to check that $U_R\cup \bigcup_{q\in Y_R} Q_R(q) = P$.
Also, in the insertion procedure, for $R\in\Gamma\cap [R_{\min},R_{\max}]$, $p$ is added into $Q_R(q)$ only if $\dist(p,q)<R/2$.
For $R\not\in [R_{\min},R_{\max}]$, it is easy to verify that $\forall q\in Y_R,\forall x\in Q_R(q),\dist(q,x)<R/2$.

In the deletion procedure, for $R\in\Gamma\cap [R_{\min},R_{\max}]$, by line~\ref{sta:delete_p}, $p$ is deleted from $U_R\cup \bigcup_{q\in Y_R} Q_R(q)$.
For change of any other point, we either move it from $U_R$ to $\bigcup_{q\in Y_R} Q_R(q)$ or we move it from $\bigcup_{q\in Y_R} Q_R(q)$ to $U_R$.
For $R\not\in [R_{\min}, R_{\max}],$ it is easy to check that $U_R\cup \bigcup_{q\in Y_R} Q_R(q) = P$.
Also, in the deletion procedure, for $R\in\Gamma\cap [R_{\min},R_{\max}]$, $p$ is added into $Q_R(q)$ only if $\dist(p,q)<R/2$.
For $R\not\in [R_{\min},R_{\max}]$, it is easy to verify that $\forall q\in Y_R,\forall x\in Q_R(q),\dist(q,x)<R/2$.
\end{proof}

\begin{lemma}\label{lem:deletion_property_ur}
Consider a point set $P$ maintained dynamically by Algorithm~\ref{alg:dynmaic_r_gather}.
After any update/query, $\forall R\in \Gamma,\forall u\in\{v\in P\setminus Y_R\mid \dist(v,Y_R)\geq R/2\},u\in U_R$.
\end{lemma}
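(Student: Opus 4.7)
The plan is to derive this lemma as an immediate consequence of Lemma~\ref{lem:full_point_set}, which has already established two invariants maintained throughout every insertion and deletion: (i) at every moment $P = U_R \cup \bigcup_{q\in Y_R} Q_R(q)$, and (ii) every $x \in Q_R(q)$ satisfies $\dist(x, q) < R/2$. With these in hand, the statement of Lemma~\ref{lem:deletion_property_ur} is essentially the contrapositive of (ii) combined with the covering property (i), so there is no need to redo the inductive walk through the update procedure.

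More precisely, I would argue as follows. Fix an arbitrary scale $R \in \Gamma$ and an arbitrary point $u \in P \setminus Y_R$ satisfying $\dist(u, Y_R) \geq R/2$. By invariant (i), $u$ lies in $U_R$ or in some bucket $Q_R(q)$ with $q \in Y_R$. In the latter case, invariant (ii) would force $\dist(u, q) < R/2$, which, together with $q \in Y_R$, contradicts $\dist(u, Y_R) \geq R/2$. Hence the only possibility is $u \in U_R$, which is what we wanted.

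The only minor subtlety concerns the boundary scales not handled by the main loop of \textbf{Insert} and \textbf{Delete}. When $R < R_{\min}$, the algorithm ensures $Y_R = P$, so $P \setminus Y_R$ is empty and the claim is vacuous. When $R > R_{\max}$, the conceptual settings $U_R := U_{R_{\max}}$ and $Q_R(q) := Q_{R_{\max}}(q)$ preserve both invariants for the scale $R$ (the radius bound $R_{\max}/2$ is at most $R/2$), so the same contradiction argument applies without change.

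I do not anticipate any real obstacle here: the work is entirely absorbed by Lemma~\ref{lem:full_point_set}, and what remains is essentially a one-line contrapositive. If anything unexpected arose, it would be in verifying that the conceptual definitions at the boundary scales genuinely inherit invariant (ii), but this follows directly from the way those scales are defined in the pseudocode.
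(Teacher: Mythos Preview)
Your proposal is correct and follows exactly the paper's approach: the paper's proof is a terse three-line contrapositive that invokes Lemma~\ref{lem:full_point_set} in precisely the way you describe. Your additional remarks on the boundary scales $R<R_{\min}$ and $R>R_{\max}$ are more careful than the paper's own treatment but are not needed once Lemma~\ref{lem:full_point_set} is granted for all $R\in\Gamma$.
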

\begin{proof}
Consider arbitrary $R\in\Gamma$.
Suppose there is a point $u\in P\setminus Y_R$ such that $\dist(u,Y_R)\geq R/2$ but $u\not\in U_R$.
There must be a point $q\in Y_R$ such that $u\in Q_R(q)$. 
It contradicts to Lemma~\ref{lem:full_point_set}.
\end{proof}

\begin{lemma}\label{lem:deletion_qr_size}
Consider a point set $P$ maintained dynamically by Algorithm~\ref{alg:dynmaic_r_gather}.
After any update/query, $\forall R\in\Gamma,\forall u\in Y_R,Q_R(u)\subseteq \{v\in P\mid \dist(u,v)<R/2\}$
and $|Q_R(u)|\geq \min(r,|\{v\in P\mid \dist(u,v)<R/(2\cdot C)\}|)$.
\end{lemma}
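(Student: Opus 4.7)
\textbf{Proof proposal for Lemma~\ref{lem:deletion_qr_size}.} The containment $Q_R(u)\subseteq\{v\in P\mid \dist(u,v)<R/2\}$ is immediate from Lemma~\ref{lem:full_point_set}. The real content is the lower bound on $|Q_R(u)|$, which I will prove by induction on the sequence of operations, maintaining the following strengthened invariant: for every $R\in\Gamma$ and every $u\in Y_R$, either $|Q_R(u)|\geq r$, or $Q_R(u)\supseteq\{v\in P\mid \dist(u,v)<R/(2C)\}$. This clearly implies the stated size bound.

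The key geometric observation I will use throughout is that a ``close'' point $v$ with $\dist(u,v)<R/(2C)$ can live in only one place besides $U_R$: namely in $Q_R(u)$. Indeed, if $v\in Q_R(q)$ for some $q\in Y_R$ with $q\neq u$, Lemma~\ref{lem:full_point_set} gives $\dist(q,v)<R/2$, while the separating property (Lemma~\ref{lem:property_YR}) forces $\dist(q,u)\geq R$, so triangle inequality yields $\dist(u,v)>R/2\geq R/(2C)$, a contradiction. Likewise a close $v\in U_R$ would make the $C$-approximate nearest neighbor of $u$ in $U_R$ return some $w$ with $\dist(u,w)\leq C\cdot\dist(u,v)<R/2$, so any loop that searches $U_R$ from $u$ and stops only when the returned distance is $\geq R/2$ or the size reaches $r$ is guaranteed to either absorb every close point out of $U_R$ or to end with $|Q_R(u)|\geq r$.

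For an insertion of $p$, I split into two cases at scale $R$. If $p$ enters $Y_R$, then $Q_R(p)$ is initialized to $\{p\}$ and filled by the loop from $U_R$; by the bullet above the loop leaves no close point in $U_R$ unless it has already collected $r$, and the same triangle-inequality argument shows no close point of $p$ could have been in any pre-existing $Q_R(q)$, so invariant (I$'$) holds for $u=p$. For the other $u\in Y_R$ the close set is unchanged ($\dist(u,p)\geq R$), so their invariants persist. If $p$ does not enter $Y_R$ and some $u\in Y_R$ has $\dist(u,p)<R/(2C)$, then the $C$-approximate nearest-neighbor query in $Y_R$ returns a point at distance $\leq C\cdot R/(2C)=R/2$, and since all other points of $Y_R$ are at distance $>R/2$ from $p$ by the separating property, this returned point must be $u$ itself, so $p$ is added to $Q_R(u)$, restoring the invariant.

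For a deletion of $p$, the intricate case, and hence the main obstacle, is when $p$ is removed from some $Q_R(q)$ and the algorithm performs a single refill from $U_R$. If the refill succeeds, $|Q_R(q)|$ is preserved, and if it was $\geq r$ before it remains so. If the refill fails, the returned $U_R$-distance is $\geq R/2$, so $U_R$ now contains no point within $R/(2C)$ of $q$; combined with the observation above, every remaining close point of $q$ lies in $Q_R(q)$, so the second disjunct of the invariant holds. The analogous reasoning covers the cases $p\in U_R$, $p\in Y_R$ (where $Q_R(p)$ is dumped back into $U_R$), and each newly promoted $x\in Z_R$: for $x\in Z_R$ the fill loop is identical to the insertion loop and the same analysis applies, after noting that $x$'s previous location (in $U_R$ or some $Q_R(q')$) does not prevent the same geometric case distinction because any close point of $x$ must, by the same triangle-inequality argument relative to the \emph{new} $Y_R$, have been unable to sit in the old $Q_R(q')$. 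Iterating over all finitely many scales $R\in\Gamma\cap[R_{\min},R_{\max}]$ and noting that for $R$ outside this range the conceptual definitions make the invariant vacuous or trivial completes the induction.
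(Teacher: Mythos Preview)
Your proof is correct and takes essentially the same route as the paper's: both rely on the separating property of $Y_R$ (Lemma~\ref{lem:property_YR}) together with Lemma~\ref{lem:full_point_set} to show that any point within $R/(2C)$ of $u$ can lie only in $Q_R(u)$ or in $U_R$, and then carry out the same case analysis over insertions, deletions, and newly promoted points in $Z_R$. Your strengthened invariant (either $|Q_R(u)|\geq r$ or $Q_R(u)$ contains the entire close ball) is a slightly cleaner packaging of exactly the inductive hypothesis the paper uses implicitly.
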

\begin{proof}
The invariant should be maintained by both insertion and deletion.

Consider $\mathbf{Insert}(p)$ and consider $R\in\Gamma \cap [R_{\min},R_{\max}]$.
Consider an arbitrary point $u\in Y_R$.

If $p\not=u$ and $p$ is inserted into $Y_R$, according to Lemma~\ref{lem:property_YR}, $\dist(u,p)\geq R\geq R/2$. 
According to the insertion procedure, we do not update $Q_R(u)$ and thus the invariant holds.

If $p\not=u$ and $p$ is not inserted into $Y_R$ and $\dist(p,u)<R/(2\cdot C)$, the $C$-approximate nearest neighbor of $p$ from $Y_R$ must be $u$.
Otherwise, there exists $q\in Y_R$ with $\dist(p,q)< R/2$ which implies that $\dist(u,q)<R/2+R/(2\cdot C)<R$ and thus contradicts to Lemma~\ref{lem:property_YR}.
Thus, we will add $p$ into $Q_R(u)$ unless $|Q_R(u)|\geq r$.
Thus, we have $|Q_R(u)|\geq \min(r, |\{v\in P\mid \dist(u,v)<R/(2\cdot C)\}|)$.
According to Lemma~\ref{lem:full_point_set}, we have that $Q_R(u)\subseteq \{v\in P\mid \dist(u,v)<R/2\}$. 

If $p=u$, $Q_R(u)$ is created only when $p$ is inserted into $Y_R$.
Consider $v\in P$ such that $\dist(v,u)<R/(2\cdot C)$.
By Lemma~\ref{lem:property_YR} and triangle inequality, we have $\dist(v,Y_R\setminus \{u\})\geq R/2$.
By Lemma~\ref{lem:deletion_property_ur}, we know that $v$ is in $U_R$ before the insertion of $p$.
According to the procedure of $\mathbf{Insert}(p),$ we will remove $v$ from $U_R$ and add $v$ into $Q_R(u)$ unless $|Q_R(u)|\geq R$.
By combining with Lemma~\ref{lem:full_point_set}, the invariant of $Q_R(u)$ holds.

Consider $R\not\in [R_{\min}, R_{\max}]$.
For $R<R_{\min}$, according to Lemma~\ref{lem:property_YR}, $\{v\in P\mid \dist(u,v)<R/(2\cdot C)\}=\{u\}$ and thus the invariant holds for $R<R_{\min}$.
For $R>R_{\max}$, since $|Y_{R_{\max}/(4\cdot \wb{C})}|=1$, according to Lemma~\ref{lem:property_YR}, $\{v\in P\mid \dist(u,v)<R/(2\cdot C)\}=\{v\in P\mid \dist(u,v)< R_{\max}/(2\cdot C)\}$ by $\forall v\in P, \dist(u,v)<R_{\max}/(2\cdot C)$.
Thus, the invariant holds for $\mathbf{Insert}(p)$.

Consider $\mathbf{delete}(p)$ and consider $R\in\Gamma\cap [R_{\min}, R_{\max}]$.
Consider a point $u\in Y_R$.

If $u$ is in $Y_R\setminus Z_R$ and $p$ was not in $Q_R(u)$, the invariant still holds for $Q_R(u)$.

If $u$ is in $Y_R\setminus Z_R$ and $p$ was in $Q_R(u)$, Algorithm~\ref{alg:dynmaic_r_gather} firstly deletes $p$ from $Q_R(u)$ which implies that $Q_R(u)\subseteq P$.
If $|Q_R(u)|<\min(r, |\{v\in P\mid \dist(u,v) < R/(2\cdot C)\}|)$, consider any point $v\in P\setminus Q_R(u)$ with $\dist(u,v) < R/(2\cdot C)$, we know that $v$ was in $U_R$.
Otherwise, according to Lemma~\ref{lem:full_point_set}, there exists $q\in Y_R$ such that $v\in Q_R(q)$ and $\dist(v,q)<R/2$.
By triangle inequality, we have $\dist(q, u)< R$ which contradicts to Lemma~\ref{lem:property_YR}.
Then, $Q_R(u)$ was added a new point by line~\ref{sta:delete_p}, which keeps the invariant.

If $u$ is in $Z_R$, consider $v\in P$ such that $\dist(v,u)<R/(2\cdot C)$.
By Lemma~\ref{lem:property_YR} and triangle inequality, we have $\dist(v, Y_R\setminus \{u\})\geq R/2$.
By Lemma~\ref{lem:deletion_property_ur}, we know that $v$ is in $U_R$ after line~\ref{sta:delete_p}.
According to line~\ref{sta:update_qr_for_zr}, we have $|Q_R(u)|\geq \min(r, |\{v\in P\mid \dist(u,v)<R/(2\cdot C)\}|)$.
Thus, the invariant holds for $Q_R(u)$.

Consider $R\not\in [R_{\min}, R_{\max}]$.
For $R<R_{\min}$, according to Lemma~\ref{lem:property_YR}, $\{v\in P\mid \dist(u,v)<R/(2\cdot C)\}=\{u\}$ and thus the invariant holds for $R<R_{\min}$.
For $R>R_{\max}$, since $|Y_{R_{\max}/(4\cdot \wb{C})}|=1$, according to Lemma~\ref{lem:property_YR}, $\{v\in P\mid \dist(u,v)<R/(2\cdot C)\}=\{v\in P\mid \dist(u,v)< R_{\max}/(2\cdot C)\}$ by $\forall v\in P, \dist(u,v)<R_{\max}/(2\cdot C)$.
Thus, the invariant holds for $\mathbf{delete}(p)$.
\end{proof}

\begin{lemma}[Correctness of Algorithm~\ref{alg:dynmaic_r_gather}]\label{lem:correctness_fully_dynamic}
Consider a point set $P$ maintained dynamically by Algorithm~\ref{alg:dynmaic_r_gather}.
After any update/query, if we run query procedure for every point $p\in P$, we obtain clusters such that the maximum radius among all clusters is at most $2\cdot C\cdot R^*\leq 16\cdot C^2\cdot \rho^*(P)$ and each cluster has size at least $r$.
\end{lemma}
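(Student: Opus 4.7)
The plan is to mimic the structure of the proof of Lemma~\ref{lem:correctness_incremental} but using the navigating-net invariants (Lemma~\ref{lem:property_YR}) in place of the hand-maintained net invariants (Lemma~\ref{lem:property_Ni}), together with the key ``counter'' invariants for $Q_R(\cdot)$ and $U_R$ established in Lemma~\ref{lem:full_point_set}, Lemma~\ref{lem:deletion_property_ur}, and Lemma~\ref{lem:deletion_qr_size}. There are three things to verify: (a) every cluster produced by the query procedure has size at least $r$; (b) the maximum radius is at most $2\cdot C\cdot R^*$; and (c) $R^*\le 8\cdot C\cdot \rho^*(P)$, so that the final bound becomes $16\cdot C^2\cdot \rho^*(P)$.

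For (a), the defining property of $R^*$ is that $|Q_{R^*}(x)|\ge r$ for every $x\in Y_{R^*}$. The query procedure assigns each $p\in P$ either to the center $q$ of the group $Q_{R^*}(q)$ already containing $p$, or to a $C$-approximate nearest neighbor $q\in Y_{R^*}$ of $p$; in either case the cluster of $q$ contains all of $Q_{R^*}(q)$, so its size is at least $r$. For (b), I handle the two assignment cases separately. If $p\in Q_{R^*}(q)$, then Lemma~\ref{lem:full_point_set} gives $\dist(p,q)<R^*/2<2\cdot C\cdot R^*$. Otherwise $p$ is assigned to a $C$-approximate nearest neighbor in $Y_{R^*}$; by Lemma~\ref{lem:property_YR} there exists some $v\in Y_{R^*}$ with $\dist(p,v)<2\cdot R^*$, so the $C$-approximate nearest neighbor is within $2\cdot C\cdot R^*$.

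The main (though still short) step is (c). By Lemma~\ref{lem:relation_to_NN}, $\wh{\rho}(P)\le 2\cdot\rho^*(P)$, so for every $p\in P$ the ball of radius $\wh{\rho}(P)$ around $p$ contains at least $r$ points. Let $R\in\Gamma$ be the smallest scale with $R/(2\cdot C)\ge \wh{\rho}(P)$; then $R\le 4\cdot C\cdot \wh{\rho}(P)\le 8\cdot C\cdot \rho^*(P)$. Applying Lemma~\ref{lem:deletion_qr_size} at this scale to each $x\in Y_R$, we get $|Q_R(x)|\ge\min(r,|\{v\in P\mid \dist(x,v)<R/(2\cdot C)\}|)=r$, so $R$ satisfies the predicate defining $R^*$ and hence $R^*\le R\le 8\cdot C\cdot \rho^*(P)$. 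Combining with (b) yields the claimed $2\cdot C\cdot R^*\le 16\cdot C^2\cdot \rho^*(P)$.

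The only subtlety I anticipate is the boundary behavior at scales outside $[R_{\min},R_{\max}]$: I need to confirm that the scale $R$ chosen in step (c) actually lies in $\Gamma\cap[R_{\min},R_{\max}]$ (so that $R^*\le R$ is meaningful), and that when $R$ is slightly below $R_{\min}$ or above $R_{\max}$ the ``conceptual'' assignments in Algorithm~\ref{alg:dynmaic_r_gather} still make the counter invariants of Lemma~\ref{lem:deletion_qr_size} hold. For $|P|\ge r$ this is straightforward since $R_{\max}$ is chosen so that $Y_{R_{\max}/(4\cdot\wb{C})}$ collapses to a single point (forcing every point to lie within the relevant ball at that scale), and $R_{\min}$ is below any pairwise distance; I will just check the two edge cases directly using Lemma~\ref{lem:property_YR}. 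Everything else is a routine combination of the earlier invariants.
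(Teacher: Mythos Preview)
Your plan matches the paper's proof almost exactly: same three-part structure (size $\geq r$; radius $\leq 2CR^*$; then $R^*\leq 8C\rho^*$), and the same invocations of Lemmas~\ref{lem:property_YR}, \ref{lem:deletion_qr_size}, and~\ref{lem:relation_to_NN}. One small correction is needed in step~(c): take $R$ to be the smallest scale with $R/(2C)>\wh{\rho}(P)$ (strict), not $\geq$. Lemma~\ref{lem:deletion_qr_size} lower-bounds $|Q_R(u)|$ via the \emph{open} ball $\{v:\dist(u,v)<R/(2C)\}$, and in the degenerate case $R/(2C)=\wh{\rho}=\rho_r(u)$ that open ball may hold only $r-1$ points, so your equality $\min(r,|\cdot|)=r$ need not hold. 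The paper sidesteps this by choosing the smallest $R\in\Gamma$ with $R>4C\rho^*(P)$ directly (so that $R/(2C)>2\rho^*\geq\wh{\rho}$ strictly), which still gives $R\leq 8C\rho^*$. Your boundary-scale discussion is more cautious than the paper, which simply relies on Lemma~\ref{lem:deletion_qr_size} being stated for all $R\in\Gamma$ and does not treat $R\notin[R_{\min},R_{\max}]$ separately.
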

\begin{proof}
Firstly, let us show that each cluster has size at least $r$.
According to the query process, we have $\forall q\in Y_{R^*}, |Q_{R^*}(q)|\geq r$.
Thus, each cluster has size at least $r$.

Next, let us prove that each cluster has radius at most $2\cdot C\cdot R^*$.
According to Lemma~\ref{lem:property_YR}, $\forall p\in P,\exists Y_{R^*}, \dist(p,q)< 2\cdot R^*$.
Since we assign $p$ to its $C$-approximate nearest neighbor in $Y_{R^*}$, the radius less than $2\cdot C\cdot R^*$.

Next we show that the maximum radius is upper bounded by $16\cdot C^2\cdot \rho^*(P)$.
Consider the smallest $R\in\Gamma$ such that $R>4\cdot C\cdot \rho^*(P)$.
According to Lemma~\ref{lem:relation_to_NN}, we have $\forall p\in P,|\{q\in P\mid \dist(p,q) < R/(2\cdot C)\}|\geq r$.
According to Lemma~\ref{lem:deletion_qr_size}, we have $\forall p\in Y_R,|Q_R(p)|\geq r$.
Thus, $R\leq R^*$ which implies that $R^*\leq 8\cdot C\cdot \rho^*(P)$
\end{proof}

\begin{lemma}[Running time of Algorithm~\ref{alg:dynmaic_r_gather}]\label{lem:time_fully_dynamic}
Consider a point set $P$ dynamically maintained by Algorithm~\ref{alg:dynmaic_r_gather}.
Let $T_{C,n}$ denote the fully dynamic update/query time upper bound of the $C$-approximate nearest neighbor search over a point set with size at most $n$.
The worst case update time of $\mathbf{Insert}(p)$ takes $2^{O(d)}\log(\Delta)\log\log(\Delta)+O(r\cdot T_{C,|P|}\log(\Delta))$ time.
The worst case update time of $\mathbf{Delete}(p)$ takes
$2^{O(d)}\log(\Delta)\log\log(\Delta)+2^{O(d)}\cdot r\cdot T_{C,|P|}\log(\Delta)$ time.
The worst case query time of $\mathbf{Query}(p)$ takes $O(T_{C,|P|}\log \Delta)$.
\end{lemma}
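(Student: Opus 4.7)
The plan is to decompose the cost of each operation into (i) the cost of updating the underlying navigating net and (ii) the per-scale cost of maintaining the auxiliary structures $U_R$ and $\{Q_R(q)\}_{q\in Y_R}$. The relevant range of scales $R\in\Gamma\cap[R_{\min},R_{\max}]$ has cardinality $O(\log\Delta)$ because the aspect ratio of $P$ is at most $\Delta$. The cost of (i) will be charged to Lemma~\ref{lem:navigating_net_insertion} or Lemma~\ref{lem:navigating_net_deletion}, each contributing $2^{O(d)}\log(\Delta)\log\log(\Delta)$. Every $C$-approximate nearest-neighbor query, insertion, or deletion on a single $U_R$ instance is assumed to cost $T_{C,|P|}$.

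For $\mathbf{Insert}(p)$, I first pay $2^{O(d)}\log(\Delta)\log\log(\Delta)$ to update the navigating net via Lemma~\ref{lem:navigating_net_insertion}. Then I iterate through the $O(\log\Delta)$ scales. At each scale either $p\in Y_R$, in which case at most $r$ nearest-neighbor queries into $U_R$ are made to fill $Q_R(p)$ (with the corresponding removals from $U_R$), or $p\notin Y_R$, in which case a single NN query into $Y_R$ and at most one insertion into either a $Q_R(q)$ or $U_R$ suffices. Summing contributes $O(r\cdot T_{C,|P|}\log\Delta)$, which together with the navigating-net term matches the claimed bound.

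The deletion analysis will be the main obstacle. Lemma~\ref{lem:navigating_net_deletion} guarantees both the update time $2^{O(d)}\log(\Delta)\log\log(\Delta)$ and the crucial structural fact that the set $Z_R$ of new entrants into $Y_R$ satisfies $|Z_R|\le 2^{O(d)}$ at each scale. Per scale the work splits into three pieces: if $p\in Y_R$, I must move the at most $r$ points of $Q_R(p)$ back into the dynamic NN structure for $U_R$, at cost $O(r\cdot T_{C,|P|})$; I must handle $p$ itself, either deleting it from $U_R$ or deleting it from some $Q_R(q)$ and performing a single NN refill against $U_R$, at cost $O(T_{C,|P|})$; and for each of the at most $2^{O(d)}$ new centers $x\in Z_R$ I must perform up to $r$ NN queries into $U_R$ to build $Q_R(x)$, at cost $2^{O(d)}\cdot r\cdot T_{C,|P|}$. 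Multiplying by the $O(\log\Delta)$ relevant scales gives $2^{O(d)}\cdot r\cdot T_{C,|P|}\log\Delta$, which dominates the navigating-net term.

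For $\mathbf{Query}(p)$, I locate $R^*$ by scanning the $O(\log\Delta)$ scales; this scan is $O(\log\Delta)$ provided I maintain, alongside every $\{Q_R(q)\}_{q\in Y_R}$, a counter of $\min_{q\in Y_R}|Q_R(q)|$ that is incrementally updated in Insert and Delete. Determining whether $p\in Q_{R^*}(q)$ for some $q$ is then an $O(1)$ lookup via a back-pointer stored at $p$, and otherwise one $C$-approximate NN query in $Y_{R^*}$ at cost $T_{C,|P|}$ finishes the job, giving the claimed $O(T_{C,|P|}\log\Delta)$ bound. The subtle point throughout is that every update to $U_R$ must be faithfully charged to the dynamic NN data structure with cost $T_{C,|P|}$; this bookkeeping must be tracked especially carefully in the delete case to ensure that the $Z_R$-driven work does not escape the $2^{O(d)}\cdot r$-per-scale budget.
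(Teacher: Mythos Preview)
Your proposal is correct and follows essentially the same decomposition as the paper: charge the navigating-net update to Lemma~\ref{lem:navigating_net_insertion}/\ref{lem:navigating_net_deletion}, bound the number of relevant scales by $O(\log\Delta)$, and count at most $O(r)$ (for Insert) or $2^{O(d)}\cdot r$ (for Delete, via $|Z_R|\le 2^{O(d)}$) nearest-neighbor operations per scale. Your write-up is in fact more careful than the paper's terse proof---you explicitly account for reinserting $Q_R(p)$ into $U_R$, maintaining a $\min_{q\in Y_R}|Q_R(q)|$ counter for the Query scan, and back-pointers for the $p\in Q_{R^*}(q)$ test---none of which the paper spells out.
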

\begin{proof}
Consider $\mathbf{Insert}(p)$. 
According to Lemma~\ref{lem:navigating_net_deletion}, it takes $2^{O(d)}\log(\Delta)\log\log(\Delta)$ time to update the navigating net.
Since $R_{\max}/R_{\min} = O(\Delta)$, it takes $O(\log(\Delta))$ iterations.
In each iteration, it calls at most $O(r)$ times of $C$-approximate nearest neighbor search.
Thus the total time is at most $2^{O(d)}\log(\Delta)\log\log(\Delta)+O(r\cdot T_{C,|P|}\log(\Delta))$.

Consider $\mathbf{Delete}(p)$.
According to Lemma~\ref{lem:navigating_net_deletion}, it takes $2^{O(d)}\log(\Delta)\log\log(\Delta)$ time to update the navigating net.
Since $R_{\max}/R_{\min} = O(\Delta)$, it takes $O(\log(\Delta))$ iterations.
According to Lemma~\ref{lem:navigating_net_deletion} again, $|Z_R|\leq 2^{O(d)}$.
Thus, in each iteration, it calls at most $2^{O(d)}\cdot r$ times of $C$-approximate nearest neighbor search.
Thus the total time is at most $2^{O(d)}\log(\Delta)\log\log(\Delta)+2^{O(d)}\cdot r\cdot T_{C,|P|}\log(\Delta)$.

Consider $\mathbf{Query}(p)$.
For each $R\in \Gamma\cap [R_{\min}, R_{\max}]$, it calls at most one $C$-approximate nearest neighbor search.
Thus, the running time is at most $O(T_{C,|P|}\log \Delta)$.
\end{proof}

\begin{theorem}\label{thm:fully_dynamic}
Suppose the time needed to compute the distance between any two points $x,y\in\mathcal{X}$ is at most $\tau$.
An $O(1)$-approximate $r$-gather solution of a point set $P\subseteq \mathcal{X}$ can be maintained under point insertions/deletions in the $r\cdot 2^{O(d)}\cdot \log^2\Delta\cdot \log\log \Delta\cdot \tau$ worst update time, where $\Delta$ is an upper bound of the ratio between the largest distance and the smallest distance of different points in $P$ at any time.
For each query, the algorithm outputs an $O(1)$-approximation to the maximum radius of the maintained approximate $r$-gather solution in the worst $2^{O(d)}\cdot \log^2\Delta\cdot \tau$ query time.
If a point $p\in P$ is additionally given in the query, the algorithm outputs the center of the cluster containing $p$ in the same running time.
\end{theorem}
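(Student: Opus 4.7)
The plan is to assemble Theorem~\ref{thm:fully_dynamic} directly from the three ingredients already developed in the excerpt: Algorithm~\ref{alg:dynmaic_r_gather} together with its correctness statement (Lemma~\ref{lem:correctness_fully_dynamic}), its running-time analysis (Lemma~\ref{lem:time_fully_dynamic}), and the navigating-net based nearest neighbor primitive (Lemma~\ref{lem:navigating_net_search}). Concretely, I will run Algorithm~\ref{alg:dynmaic_r_gather} with the approximation factor $C$ of the internal nearest neighbor queries fixed to some absolute constant, say $C = 2$.

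First I would invoke Lemma~\ref{lem:correctness_fully_dynamic}: at any moment, running \textbf{Query} on every point produces a clustering whose maximum radius is at most $2C R^* \le 16C^2 \cdot \rho^*(P) = O(1)\cdot \rho^*(P)$ and whose clusters all have size at least $r$. Since the algorithm always returns $2\cdot C\cdot R^*$ as its radius estimate, and $R^* \ge \rho^*(P)/(2C)$ up to a constant factor (from the same argument that shows $R^* \le 8C\rho^*(P)$, combined with the trivial lower bound $\rho^*(P) \ge R_{\min}/2$), the reported radius is an $O(1)$-approximation of $\rho^*(P)$. The cluster of a queried point $p$ is identified either directly via the $Q_{R^*}(\cdot)$ pointers (when $p$ is already assigned to some $Q_{R^*}(q)$) or via a single $C$-approximate nearest neighbor query in $Y_{R^*}$.

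Next I would substitute the concrete nearest-neighbor cost into Lemma~\ref{lem:time_fully_dynamic}. By Lemma~\ref{lem:navigating_net_search}, for a fixed constant $C$ (equivalently a constant $\varepsilon$), the navigating net supports $C$-approximate nearest neighbor queries on a point set of size at most $n$ in $T_{C,n} = 2^{O(d)}\log \Delta$ time under a distance-oracle cost of $\tau$, so $T_{C,n}\cdot \tau = 2^{O(d)}\log(\Delta)\cdot \tau$. Plugging this bound into Lemma~\ref{lem:time_fully_dynamic}, and observing that the $2^{O(d)}\log(\Delta)\log\log(\Delta)$ term charged to the navigating net maintenance of Lemma~\ref{lem:navigating_net_insertion} and Lemma~\ref{lem:navigating_net_deletion} must also be multiplied by $\tau$ (since those bounds were stated in terms of distance computations), the worst-case update time becomes
\[
2^{O(d)}\log(\Delta)\log\log(\Delta)\cdot \tau \;+\; 2^{O(d)}\cdot r\cdot \log(\Delta)\cdot T_{C,|P|}\cdot \tau \;=\; r\cdot 2^{O(d)}\cdot \log^2\Delta\cdot \log\log\Delta\cdot \tau,
\]
for both insertions and deletions. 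The query cost similarly becomes $O(\log \Delta)\cdot T_{C,|P|}\cdot \tau = 2^{O(d)}\cdot \log^2\Delta\cdot \tau$.

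There is essentially no hard step here: the bulk of the work has already been done in Lemma~\ref{lem:correctness_fully_dynamic} and Lemma~\ref{lem:time_fully_dynamic}, and the theorem is a matter of composing those with the navigating-net search bound. The only minor subtlety worth double-checking is that the $\tau$ factor is tracked consistently through the navigating-net subroutines: Lemmas~\ref{lem:navigating_net_insertion},~\ref{lem:navigating_net_deletion},~\ref{lem:navigating_net_search} each bound the total work by a number of distance computations plus ancillary operations, so each time bound inflates by a multiplicative $\tau$ when the metric is accessed via a distance oracle of cost $\tau$, which is where the final $\tau$ factor in the theorem comes from.
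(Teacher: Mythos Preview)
Your proposal is correct and follows essentially the same approach as the paper: invoke Lemma~\ref{lem:correctness_fully_dynamic} for correctness, then plug the navigating-net nearest-neighbor bound from Lemma~\ref{lem:navigating_net_search} (together with the update costs of Lemmas~\ref{lem:navigating_net_insertion} and~\ref{lem:navigating_net_deletion}) into Lemma~\ref{lem:time_fully_dynamic}. The paper's own proof is in fact just two sentences to this effect, so you have actually written out more detail than the paper does; in particular your remark about tracking the $\tau$ factor through the distance-computation counts is a useful clarification.
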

\begin{proof}
The running time is obtained by directly applying the nearest neighbor search given by Lemma~\ref{lem:navigating_net_search} on Lemma~\ref{lem:time_fully_dynamic}.
The correctness is given by Lemma~\ref{lem:correctness_fully_dynamic}.
\end{proof}
\subsection*{Preliminary Experimental Evaluation}
In order to show the effectiveness of algorithmic techniques presented in this paper, we run a preliminary empirical study of a variant of our MPC algorithm. We implemented a variant of our algorithm for optimizing the sum of distances to the center objective. For this experiment we used a methodology similar to the one present in white paper~\cite{blogpost} where we compare this algorithm with many heuristics considered for an empirical evaluation of FLoC including the affinity hierarchical clustering algorithm~\cite{bateni2017affinity}, a SimHash based clustering and a variant of affinity hierarchical clustering which uses METIS~\cite{karypis1998fast}. In this experiment, we report as quality measure the average cosine similarity of vectors in a cluster to the cluster centroid on two publicly available dataset as reported in the white paper~\cite{blogpost}. We observe in this setting, that our algorithm always outperforms or is on par with the best baselines. We report our results in Table~\ref{tab:exp}.

\begin{table}[h!]
    \centering
    \begin{tabular}{c|c|c}
          Cosine similarity (MSD) &           Cosine similarity (MovieLenses) & Algorithm  \\
         \hline
0.043277 & 0.038681 & Random\\
0.770147 & 0.781223& SimHash\\
0.894642 & 0.833508 & Affinity+Metis \\
0.9064 & 0.849291 & Affinity\\
0.91712 & 0.845023& R-Gather\\
    \end{tabular}
    \caption{Experimental results for our preliminary empirical analysis. We report the cosine similarity measure for the various algorithms on two datasets (MSD) and (MovieLens)}
    \label{tab:exp}
\end{table}

\addcontentsline{toc}{section}{References}
\bibliographystyle{alpha}
\bibliography{ref}

\appendix
\section{Dimension Reduction in MPC}\label{sec:dim_reduction_in_MPC}
Consider a set of points $P=\{p_1,p_2,\cdots,p_n\}\subseteq \mathbb{R}^d$.
Let $Q\in\mathbb{R}^{k\times d}$ be a random matrix where each entry is an i.i.d. Gaussian random variable with mean $0$, variance $1/k$ for some $k=O(1/\varepsilon^2)$, where $\varepsilon\in (0,1)$.
According to Johnson–Lindenstrauss lemma~\cite{johnson1984extensions}, with probability at least $1-1/n^{10}$, 
\begin{align*}
\forall i,j\in[n], (1-\varepsilon)\cdot \|p_i-p_j\|_2 \leq \|Qp_i -Qp_j\|_2\leq (1+\varepsilon) \cdot \|p_i-p_j\|_2.
\end{align*}
Now we describe how to compute $Qp_i$ for every $i\in [n]$ in the MPC model.

Firstly, the machines generate a matrix $Q$ randomly. 
Notice that it is not necessary $Q$ fits into the memory of one machine.
A machine can hold a part of the coordinates of $Q$.
Then we generate $n$ copies of $Q$ and for each point $p_i$, we generate $k$ copies.
The procedure of generating above duplication can be done by a fully scalable algorithm in $O(1)$ rounds using total space $O(nd\cdot k)$ (see e.g.,~\cite{andoni2018parallel}).
We can use MPC sorting~\cite{goodrich1999communication,goodrich2011sorting} to make the $(i,j)$-th coordinate of the $l$-th copy of $Q$ and the $j$-th coordinate of the $i$-th copy of $p_l$ send to the same machine for every $l\in[n],i\in[k],j\in[d]$, and thus we can compute $Q_{i,j}\cdot (p_l)_j$.
This step takes $O(1)$ MPC rounds and total space $O(nd\cdot k)$.
Finally, we use  sorting or prefix sum procedure~\cite{andoni2018parallel} to compute $\sum_{i=1}^k Q_{i,j}\cdot (p_l)_j$ for every $l\in [n]$ and $j\in [d]$.
This takes $O(1)$ MPC rounds and total space $O(nd\cdot k)$.
At the end of computation, $Qp_i$ for every point $p_i$ is computed.
Since $k=O(\log(n)/\varepsilon^2)$.
The overall number of rounds is $O(1)$, and the total space is $O(nd\log(n)/\varepsilon^2)$.

\section{Applying Locality Sensitive Hashing in MPC}\label{sec:LSH}

Let $G^{t,w}$ denote the grid points in $t$ dimensional space with side length $4w$, i.e., $G^{t,w}=4w\cdot \mathbb{Z}^t$.
Near optimal Euclidean LSH~\cite{andoni2006near} is described in Algorithm~\ref{alg:l2LSH}.

\begin{algorithm}
	\small
	\caption{Euclidean LSH}\label{alg:l2LSH}
	\begin{itemize}
	\item \textbf{Preprocessing $h(\cdot)$:}
	    \begin{enumerate}
	        \item Let $U\geq 1$.
	        For $u \in [U]$, draw a shift vector $v_u\in[0,4w]^t$ uniformly at random.
	        Let $G_u^{t,w}=G^{t,w}+v_u$ denote the shifted grid points.
	        \item Let $A\in\mathbb{R}^{t\times d}$ be a random matrix where each entry is an i.i.d. standard Gaussian random variable scaled by a factor $\frac{1}{\sqrt{t}}$. 
	    \end{enumerate}
	\item \textbf{Computing $h(p)$ for a given point $p\in\mathbb{R}^d$:}
	    \begin{enumerate}
	        \item Compute $p'= A\cdot p$.
	        \item Find minimum $u\in[U]$ such that $\exists x\in G_u^{t,w}, \|p'-x\|_2\leq w$. 
	        \item Set $h(p)=(u,x)$.
	    \end{enumerate}
	\end{itemize}
\end{algorithm}

According to~\cite{andoni2009nearest}, the following two lemmas show some properties of Algorithm~\ref{alg:l2LSH}.

\begin{lemma}[Lemma 3.2.2 of~\cite{andoni2009nearest}]\label{lem:size_of_U}
In Algorithm~\ref{alg:l2LSH}, if $U = t^{\Theta(t)}\log n$, then with probability at least $1-1/n^{10}$, $\forall p'\in \mathbb{R}^t,\exists u\in[U]$ such that $\exists x\in G_{u}^{t,w},\|p'-x\|_2\leq w$.
\end{lemma}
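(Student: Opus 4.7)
The plan is a union-bound-over-a-net argument, after reducing the problem from all of $\mathbb{R}^t$ to a single fundamental cell by periodicity.

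First I would observe that for any fixed $p' \in \mathbb{R}^t$ and any single $u \in [U]$, the event ``$\exists x \in G_u^{t,w}$ with $\|p'-x\|_2 \leq w$'' is exactly the event that the shift $v_u$ lands in a ball of radius $w$ (modulo the lattice $4w\mathbb{Z}^t$) around a specific target. Since the ball of radius $w$ fits strictly inside the fundamental cell $[0,4w]^t$ (as $2w < 4w$), this probability equals $V_t w^t / (4w)^t = V_t/4^t$, where $V_t$ is the volume of the unit Euclidean ball. Using the standard estimate $V_t = \pi^{t/2}/\Gamma(t/2+1) \approx \tfrac{1}{\sqrt{\pi t}}(2\pi e/t)^{t/2}$, this yields $V_t/4^t = t^{-\Theta(t)}$. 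Hence with $U$ independent shifts, the probability that $p'$ is missed by every $u$ is at most $\exp(-U V_t / 4^t)$.

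Next I would reduce to a finite union bound. Since each $G_u^{t,w}$ has period $4w$ along every axis, the uncovered set $\mathbb{R}^t \setminus \bigcup_u \bigcup_{x \in G_u^{t,w}} B(x,w)$ is also $4w$-periodic, so it suffices to show coverage of the torus $\mathbb{T} = (\mathbb{R}/4w\mathbb{Z})^t$. Let $N \subset \mathbb{T}$ be a $(w/2)$-net; then $|N| = O(8^t)$. Running the single-point computation with ``radius $w/2$'' in place of ``radius $w$'' gives per-point miss probability $(1-V_t/8^t)^U \leq \exp(-U V_t/8^t)$. If every net point $q \in N$ is within distance $w/2$ of some shifted grid point, then by triangle inequality every point of $\mathbb{T}$ is within distance $w/2 + w/2 = w$ of some shifted grid point, which is what we need.

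Finally I would tune $U$. By the union bound, the failure probability is at most
\[
|N|\cdot \exp(-U V_t / 8^t) \;=\; O(8^t)\cdot \exp(-U V_t/8^t).
\]
Since $8^t/V_t = t^{\Theta(t)}$, choosing $U = C \cdot t^{\Theta(t)} \log n$ for a sufficiently large constant $C$ makes both the $8^t$ factor and the $n^{10}$ target absorbable, giving failure probability at most $1/n^{10}$, as required.

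The only real obstacle is the step from ``$N$ is covered with high probability'' to ``all of $\mathbb{T}$ (and hence all of $\mathbb{R}^t$) is covered'', which is why I pass through a $(w/2)$-net and apply triangle inequality rather than trying to union-bound directly over a continuum; this costs only an extra factor of $(8/4)^{t/2}$ inside the $t^{\Theta(t)}$, which is harmless. The volume estimate $V_t/4^t = t^{-\Theta(t)}$ is standard and all the remaining constants are absorbed into the $\Theta(t)$ exponent.
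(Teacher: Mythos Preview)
The paper does not prove this lemma itself; it is quoted verbatim as Lemma~3.2.2 of \cite{andoni2009nearest} and used as a black box. So there is no ``paper's own proof'' to compare against, and your periodicity-plus-net-plus-union-bound argument is exactly the standard route.

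There is one quantitative slip. You claim that a $(w/2)$-net $N$ of the torus $(\mathbb{R}/4w\mathbb{Z})^t$ in the $\ell_2$ metric has size $|N|=O(8^t)$. That bound would be correct for an $\ell_\infty$ net, but in $\ell_2$ it is too small: a volume count already forces
\[
|N| \;\ge\; \frac{(4w)^t}{V_t\,(w/2)^t} \;=\; \frac{8^t}{V_t} \;=\; t^{\Theta(t)},
\]
and a grid with spacing $w/\sqrt{t}$ shows $|N|\le (4\sqrt{t})^t = t^{\Theta(t)}$ is attainable. So the correct net size is $t^{\Theta(t)}$, not $O(8^t)$.

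Fortunately this does not break your conclusion. Plugging the corrected bound into your union bound gives failure probability at most
\[
t^{O(t)}\cdot \exp\!\bigl(-U\,V_t/8^t\bigr),
\]
and since $8^t/V_t = t^{\Theta(t)}$ as you note, taking $U = t^{\Theta(t)}\log n$ with a large enough hidden constant absorbs both the $t^{O(t)}$ prefactor (contributing an additive $O(t\log t)$ in the exponent, itself $t^{\Theta(t)}$ after multiplying by $8^t/V_t$) and the $n^{10}$ target. So the argument goes through; just replace the $O(8^t)$ claim with the correct $t^{\Theta(t)}$ bound and adjust the final sentence accordingly.
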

The above lemma shows that grids $G_1^{t,w},G_2^{t,w},\cdots,G_U^{t,w}$ covers the entire space $\mathbb{R}^t$ with high probability which implies that with high probability, $h(\cdot)$ is well-defined for every point $p\in\mathbb{R}^d$.

Without loss of generality, we can look at the scale $R=1$.

\begin{lemma}[Lemma 3.2.3 of~\cite{andoni2009nearest}]\label{lem:prop_L2LSH}
Let $h(\cdot)$ be the hash function described in Algorithm~\ref{alg:l2LSH}. 
Let $p,q\in\mathbb{R}^d$.
Let $P_1$ denote $\Pr_h[h(p)=h(q)]$ given $\|p-q\|_2\leq 1$. 
Let $P_2$ denote $\Pr_h[h(p)=h(q)]$ given $\|p-q\|_2\geq c$.
Suppose $w=t^{1/3}$.
We have:
\begin{enumerate}
\item $\rho = \frac{\log 1/P_1}{\log 1/P_2} = 1/c^2 + O(1/t^{1/4})$.
\item $P_2\geq e^{-O(t^{1/3})}$.
\end{enumerate}
\end{lemma}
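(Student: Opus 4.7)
The plan is to reduce the analysis of $P_1$ and $P_2$ to an analysis of the ``single-grid'' collision geometry in $\mathbb{R}^t$, then integrate against the distribution of post-projection distances induced by the Gaussian projection $A$. Concretely, I would write $\Pr[h(p) = h(q)] = \mathbb{E}_A\bigl[I(\|Ap - Aq\|_2)\bigr]$, where $I(u)$ is the probability that two points in $\mathbb{R}^t$ at distance $u$ get assigned the same $(u,x)$ label by the shifted-grid ball-carving hash. Using the standard ``first grid that covers either point'' argument (the $U$ grids are i.i.d.\ and by Lemma~\ref{lem:size_of_U} at least one grid covers each point w.h.p.), $I(u)$ factors as $\mathrm{Vol}(B_w \cap (B_w + u e))/\mathrm{Vol}(B_w)$ up to $1 \pm o(1)$ factors, i.e.\ the probability that a uniformly random point in a $t$-dimensional ball of radius $w$ also lies in a second such ball translated by $u$.

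Next I would estimate $I(u)$ using the concentration of the uniform distribution on a high-dimensional ball. The relevant coordinate (the projection of the random point on the axis between the two centers) is approximately Gaussian with mean $0$ and variance $\Theta(w^2/t)$, so a standard Gaussian-tail computation gives
\[
-\log I(u) \;=\; \frac{u^2\,t}{8\,w^2}\bigl(1 \pm O(t^{-1/2})\bigr) \;+\; O(\log t).
\]
I would then handle the Gaussian projection: since the rows of $A$ are i.i.d.\ $N(0, I_d/t)$, the random variable $\|A(p-q)\|_2^2$ concentrates around $\|p-q\|_2^2$ with multiplicative deviation $O(t^{-1/4})$ except on an event of probability $e^{-\Omega(\sqrt{t})}$; plugging this concentration into the formula above yields $-\log P_1 = \tfrac{t}{8w^2}(1 \pm O(t^{-1/4}))$ and $-\log P_2 = \tfrac{c^2 t}{8 w^2}(1 \pm O(t^{-1/4}))$, where the bounds on $P_2$ use only that $\|Ap - Aq\|_2 \geq c(1 - O(t^{-1/4}))$ with overwhelming probability. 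Substituting $w = t^{1/3}$ gives $-\log P_1, -\log P_2 = \Theta(t^{1/3})$, immediately yielding part (2) ($P_2 \geq e^{-O(t^{1/3})}$) and, after taking the ratio, $\rho = 1/c^2 + O(t^{-1/4})$ as claimed in part (1).

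The main technical obstacle will be making the geometric bound on $I(u)$ sharp enough to get the $O(t^{-1/4})$ error rather than a cruder $O(\log t/t^{1/3})$-type term. The Gaussian-approximation step for the projection of a uniform point in $B_w(0)$ onto a line is only accurate up to $O(t^{-1/2})$ in the bulk, while the tail event $Z_1 \geq \Omega(w/\sqrt{t})$ that controls $I(u)$ sits near this transition; careful matching of the tail against the exact Beta-distribution form of $Z_1$ is needed. A secondary nuisance is combining the randomness of $A$ (which gives multiplicative deviation $(1 \pm O(t^{-1/4}))$ in $\|Ap-Aq\|_2$, chosen precisely to balance errors) with the randomness of the grid-shifts $v_u$; this is handled by conditioning on $A$, applying the geometric bound to $I(u)$, and only then taking expectation over $A$. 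Once these two pieces are quantified cleanly, the ratio $\rho$ and the lower bound on $P_2$ follow directly by arithmetic.
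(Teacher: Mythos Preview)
The paper does not prove this lemma at all: it is quoted verbatim as Lemma~3.2.3 of \cite{andoni2009nearest} and used as a black box, with no accompanying argument. So there is nothing in the present paper to compare your proposal against.

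That said, your sketch is essentially the argument from Andoni's thesis (and the Andoni--Indyk paper~\cite{andoni2006near}) that the citation points to: analyze the single-grid collision probability via the overlap of two radius-$w$ balls in $\mathbb{R}^t$ at distance $u$, approximate the one-dimensional marginal of the uniform distribution on the ball by a Gaussian of variance $\Theta(w^2/t)$ to get $-\log I(u) \approx u^2 t /(8w^2)$, then push the randomness of the JL projection $A$ through by concentration of $\|A(p-q)\|_2$ around $\|p-q\|_2$. The choice $w=t^{1/3}$ is precisely what makes the additive $O(\log t)$ term from the cap-volume estimate negligible relative to the main $t^{1/3}$ term, and balancing the JL deviation at $t^{-1/4}$ is what produces the stated error in $\rho$. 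Your identification of the two delicate points---sharpening the cap-volume estimate beyond the crude Gaussian approximation, and conditioning on $A$ before integrating---matches where the work lies in the original proof. If you want to flesh this out, consult \cite{andoni2009nearest} directly; reproducing it here would be redundant given the citation.
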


In particular, if we choose $t = \log^{4/5} n$, we have $\rho = 1/c^2 + O(1/\log^{1/5} n)$ and $P_2\geq e^{-O(\log^{4/15} n)}$.

Define hash function $g(\cdot)$ as:
\begin{align*}
g(p) = (h_1(p),h_2(p),\cdots,h_k(p)),
\end{align*}
where $h_1(\cdot),h_2(\cdot),\cdots,h_k(\cdot)$ are $k$ independent hash functions described by Algorithm~\ref{alg:l2LSH}, and $k$ is the minimum integer such that $P_2^k\leq 1/n^4$.
Since $P_2\geq e^{-O(\log^{4/15}n)}$, we have $k\leq O(\log n)$.
The size of $g(p)$ is $(t + 1)\cdot k\leq O(\log^2 n)$.
Furthermore, $g(\cdot)$ has the following properties:
\begin{enumerate}
\item For any two points $p,q\in\mathbb{R}^d$ with $\|p-q\|_2\leq 1$, $\Pr[g(p)=g(q)]\geq 1 / n^{O(1/c^2+1/\log^{1/5}(n))}$.
\item For any two points $p,q\in\mathbb{R}^d$ with $\|p-q\|_2\geq c$, $\Pr[g(p)=g(q)]\leq 1/n^4$.
\end{enumerate}

\begin{lemma}\label{lem:MPC_LSH}
Consider a point set $P=\{p_1,p_2,\cdots,p_n\}\subset \mathbb{R}^d$.
There is a fully scalable MPC algorithm which computes $g(p)$ for every point $p\in P$ in $O(1)$ rounds.
The total space needed is $n^{1+o(1)}d$.
\end{lemma}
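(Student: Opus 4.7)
The plan is to decompose the computation of each $g(p_i)$ into three atomic tasks that are each well-suited to the standard fully-scalable MPC toolkit (sorting, broadcasting via duplication, and prefix sums \cite{goodrich1999communication,goodrich2011sorting,andoni2018parallel}): (i) sampling and distributing the shared randomness defining $g$, (ii) computing the projections $A_j p_i \in \mathbb{R}^t$ for every $i\in[n],j\in[k]$, and (iii) for each projected point, computing the grid coordinates and finding the minimum $u\in[U]$ for which a grid point of $G_u^{t,w}$ is within distance $w$. The key parameter bookkeeping is that $t=\log^{4/5}n$, $k=O(\log n)$, and $U = t^{\Theta(t)}\log n = 2^{O(\log^{4/5}(n)\log\log n)} = n^{o(1)}$, so the description length of a single hash function $h_j$ is $t\cdot d + U\cdot t = n^{o(1)}\cdot d$, and the entire hash family $g=(h_1,\dots,h_k)$ has total description size $n^{o(1)}\cdot d$.

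For step (i), one designated machine (or a fixed group of machines) samples each matrix $A_j$ and each shift $v_{j,u}$ and writes them as key--value pairs keyed by $(j,u,\text{coordinate index})$. Using the MPC sorting/duplication primitive, we make $n$ copies of the full description of $g$ (total space $n\cdot n^{o(1)}\cdot d = n^{1+o(1)}d$) and deliver to each point $p_i$ all randomness needed for its hashes; this takes $O(1)$ rounds. For step (ii), each $A_j p_i$ is a single matrix--vector product with $t\cdot d = n^{o(1)}\cdot d$ total entries; since we have already co-located $A_j$ with $p_i$, we compute the $td$ products $(A_j)_{\ell,m}\cdot (p_i)_m$ on whatever machines hold these pairs, then aggregate the $d$ terms for each output coordinate via a standard prefix-sum/fold over consecutive machines in $O(1)$ rounds. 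The total space is $O(n\cdot k\cdot t\cdot d)=n^{1+o(1)}d$, consistent with the fully scalable constraint.

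For step (iii), given $p'=A_j p_i$ and shift $v_{j,u}$, the candidate nearest grid point of $G_u^{t,w}$ is obtained by rounding each of the $t$ coordinates of $p'-v_{j,u}$ to the nearest multiple of $4w$; then one checks whether the $\ell_2$ distance to this rounded point is $\leq w$. Each such check costs $O(t)$ work, and we perform $n\cdot k\cdot U = n^{1+o(1)}$ of them, each producing a Boolean ``hit.'' We emit these as tuples $(i,j,u,\text{hit}?)$ and, via a single sort followed by a segmented min, compute for each $(i,j)$ the minimum $u$ with a hit. Concatenating these minima and the corresponding grid-point labels over $j\in[k]$ gives $g(p_i)$. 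Each sub-step is a constant number of sorting/prefix-sum rounds, so the overall round complexity is $O(1)$ and the total space is $n^{1+o(1)}d$.

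The main obstacle will be organizing the data distribution so that the algorithm remains fully scalable for arbitrary $\delta\in(0,1)$: neither $A_j$ nor $p_i$ individually need fit on one machine, since $d$ may exceed $n^\delta$. This is handled by never materializing any object larger than $O(n^\delta)$ on a single machine and instead performing matrix--vector products and the grid-point rounding as distributed coordinate-wise operations followed by standard associative aggregations, which is the reason the argument reduces cleanly to constant-round sorting and prefix-sum primitives rather than requiring any new MPC machinery.
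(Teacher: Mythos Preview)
Your proof is correct, but the paper takes a shorter route. The paper first invokes the JL dimension-reduction step from Appendix~\ref{sec:dim_reduction_in_MPC} to assume $d=O(\log n)$; once this is done, the entire description of one hash function $h$ (the matrix $A\in\mathbb{R}^{t\times d}$ together with all shifts $v_1,\dots,v_U$) has size $U\cdot t + t\cdot d = n^{o(1)}$, which fits on a single machine. One then broadcasts this description to every machine in $O(1)$ rounds and computes $h(p)$ purely locally, repeating for the $k=O(\log n)$ hash functions composing $g$.

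Your approach avoids the preliminary JL step and instead handles arbitrary $d$ directly: you distribute the matrix--vector products $A_j p_i$ coordinate-wise and aggregate with prefix sums, then perform the grid checks on the (tiny, $t$-dimensional) projected points. This is more general and more explicit about full scalability when $d$ may exceed $n^\delta$, at the cost of a longer argument. The paper's version is shorter because the heavy lifting (distributed matrix--vector multiplication) has already been pushed into the JL section, so the LSH lemma itself becomes almost trivial.
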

\begin{proof}

Since $g(p)=(h_1(p),h_2(p),\cdots,h_k(p))$ where $h_1,h_2,\cdots,h_k$ are independent hash functions described by Algorithm~\ref{alg:l2LSH} and $k\leq O(\log n)$, it suffices to show how to apply one function $h(\cdot)$ for every point $p\in P$.

Let $t=\log^{4/5} n$.
Consider the preprocessing stage of Algorithm~\ref{alg:l2LSH}, according to Lemma~\ref{lem:size_of_U}, $U$ is at most $t^{\Theta(t)}\log n = n^{o(1)}$.
We use one machine to generate shift vectors $v_1,v_2,\cdots,v_U\in\mathbb{R}^t$ and the random Gaussian matrix $A\in\mathbb{R}^{t\times d}$.
The total size of $v_1,v_2,\cdots, v_U$ and $A$ is $U\cdot t + t \cdot d = n^{o(1)}\cdot d$.
By section~\ref{sec:dim_reduction_in_MPC}, we can assume $d=O(\log n)$.
We can use $O(1)$ rounds to let every machine learn $v_1,v_2,\cdots,v_U$ and $A$.
Each machine compute $h(p)$ locally for every point stored on the machine.
\end{proof}

\section{Proof of Theorem~\ref{thm:correctness_MIS}}\label{sec:prior_mis}
Before proving Theorem~\ref{thm:correctness_MIS}, let us introduce some notation and useful lemmas.
Follow the notation of~\cite{ghaffari2019sparsifying}, we use $\SH_t$ to denote the set of vertices that are stalling in iteration $t$.
Let $\tau'_t(v)$ be the following:
\begin{align*}
\tau'_t(v) = \sum_{u\in \Gamma_{G^k}(v) \cap V':\tau_t(u)\leq 20,u\not\in \SH_t} p_t(u).
\end{align*}
There are two types of golden rounds for $v$: (1) $p_t(v)=1/2,v\not\in \SH_t$ and $\tau_t(v)\leq 20$, (2) $\tau_t(v)\geq 0.2$ and $\tau'_t(v) \geq 0.1 \cdot \tau_t(v)$.
Notice that a golden round can be the both type at the same time.

\begin{fact}\label{fac:constant_prob_removal}
In each golden round for $v$, $v$ is removed with at least a constant probability.
Furthermore, this removal probability only depends on the randomness of vertices $u\in \Gamma_{G^{2k}}(v)\cap V'$.
\end{fact}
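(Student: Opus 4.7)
\smallskip
\noindent\textbf{Plan.} The proof has two parts: the locality of the removal event and the constant lower bound on its probability in each of the two types of golden rounds. For locality, observe that $v$ is removed in round $t$ iff either (i) $v$ itself joins the independent set, which is decided by the marking bits of $v$ and of $\Gamma_{G^k}(v)\cap V'$, or (ii) some $u\in\Gamma_{G^k}(v)\cap V'$ joins the independent set, which is decided by the marking bits of $u$ and of $\Gamma_{G^k}(u)\cap V'$. Both sets are contained in $\Gamma_{G^{2k}}(v)\cap V'$, establishing the second part of the fact.

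\smallskip
\noindent\emph{Type 1 rounds.} When $p_t(v)=1/2$, $v\notin\SH_t$ and $\tau_t(v)\leq 20$, I show directly that $v$ joins the IS with constant probability. Since $v$ is not stalling, $v$ is marked with probability $1/2$; each $u\in\Gamma_{G^k}(v)\cap V'\setminus\{v\}$ is marked independently with probability at most $p_t(u)\leq 1/2$ (and with probability $0$ if $u$ is stalling). By independence and the inequality $1-x\geq e^{-2x}$ for $x\in[0,1/2]$,
\begin{align*}
\Pr[v\text{ is the unique marked vertex in }\Gamma_{G^k}(v)\cap V']
&\geq \tfrac{1}{2}\prod_{u\neq v}\bigl(1-p_t(u)\bigr)\\
&\geq \tfrac{1}{2}\exp\bigl(-2\tau_t(v)\bigr)\geq \tfrac{1}{2}e^{-40}.
\end{align*}
In this event $v$ enters the IS and is removed.

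\smallskip
\noindent\emph{Type 2 rounds.} When $\tau_t(v)\geq 0.2$ and $\tau'_t(v)\geq 0.1\,\tau_t(v)\geq 0.02$, removal must instead be driven by a neighbor. Let $T=\{u\in\Gamma_{G^k}(v)\cap V':\tau_t(u)\leq 20,\ u\notin\SH_t\}$, so $\sum_{u\in T}p_t(u)=\tau'_t(v)\geq 0.02$. Applying the Type 1 computation at each $u\in T$ gives $\Pr[J_u]\geq p_t(u)\cdot e^{-40}$, where $J_u$ is the event ``$u$ joins the IS''; any such event forces $v$ to be removed, since $u\in\Gamma_{G^k}(v)\cap V'$. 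It remains to show $\Pr\bigl[\bigcup_{u\in T}J_u\bigr]=\Omega(1)$. I split into two sub-cases. If some $u\in T$ has $p_t(u)\geq 1/10$, that single $u$ contributes probability at least $(1/10)\cdot e^{-40}$ and we are done. Otherwise every $p_t(u)<1/10$, and I use a second-moment computation on $X=\sum_{u\in T}\mathbf{1}[J_u]$: one has $\Pr[J_{u_1}\cap J_{u_2}]=0$ whenever $u_1,u_2\in T$ are $G^k$-adjacent (both marked prevents either from being unique in its $G^k$-neighborhood) and $\Pr[J_{u_1}\cap J_{u_2}]\leq p_t(u_1)p_t(u_2)$ otherwise, so $\E[X^2]\leq \E[X]+(\tau'_t(v))^2$. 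Paley--Zygmund then yields $\Pr[X\geq 1]\geq \E[X]^2/\E[X^2]=\Omega(1)$.

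\smallskip
\noindent\emph{Main obstacle.} The delicate step is the second sub-case of Type 2: the expected number of winning neighbors is already $\Omega(1)$, but one has to rule out that all the probability mass is absorbed by pairs of marked vertices that mutually block each other. The resolution above hinges on the fact that blocking pairs must be $G^k$-adjacent, which limits the off-diagonal contribution to the second moment to $(\tau'_t(v))^2$ rather than $|T|^2$. Apart from replacing $\Gamma_G$ by $\Gamma_{G^k}\cap V'$ throughout, the argument parallels the classical Ghaffari-style MIS analysis; the locality to $\Gamma_{G^{2k}}(v)\cap V'$ is automatic from the first paragraph, since every event used in the lower bound depends only on the round-$t$ marking bits of vertices inside that ball.
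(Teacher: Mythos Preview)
Your proof is correct. Type~1 matches the paper's argument up to the choice of the elementary inequality ($1-x\ge e^{-2x}$ versus $1-x\ge 4^{-x}$). For Type~2 you take a genuinely different route. The paper exposes the marking bits of the vertices in $T$ sequentially, stops at the \emph{first} marked vertex $u$, and then uses the still-fresh randomness of the as-yet-unexposed vertices in $\Gamma_{G^k}(u)$ to lower-bound the probability that $u$ is the unique marked vertex in its neighborhood; this yields a clean $(1-e^{-\tau'_t(v)})\cdot 4^{-20}$ bound with no second-moment machinery. Your Paley--Zygmund argument also works: the key step is $\Pr[J_{u_1}\cap J_{u_2}]\le p_t(u_1)p_t(u_2)$ for distinct $u_1,u_2$, which gives $\E[X^2]\le \E[X]+(\tau'_t(v))^2$, and then $\E[X]^2/\E[X^2]=\Omega(1)$ follows since $\E[X]=\Theta(\tau'_t(v))$ and $\tau'_t(v)\ge 0.02$. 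Two minor remarks: your sub-case split on ``some $p_t(u)\ge 1/10$'' is unnecessary, as the second-moment bound is uniform in the $p_t(u)$; and the comment that ``blocking pairs must be $G^k$-adjacent'' is not what limits the off-diagonal --- the inequality $\Pr[J_{u_1}\cap J_{u_2}]\le p_t(u_1)p_t(u_2)$ already holds for \emph{all} distinct pairs and by itself caps the off-diagonal at $(\tau'_t(v))^2$. The paper's sequential-exposure argument is shorter and gives better constants; your approach is more mechanical but equally valid and makes the locality claim just as transparent.
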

\begin{proof}
Consider the first type golden round. 
The probability that none of $u\in (\Gamma_{G^k}(v)\cap V')\setminus\{v\}$ is marked is 
\begin{align*}
\prod_{u\in (\Gamma_{G^k}(v)\cap V')\setminus\{v\}} (1-p_t(u)) &\geq \prod_{u\in (\Gamma_{G^k}(v)\cap V')\setminus\{v\}} 4^{-p_t(u)}\\
&\geq 4^{-\sum_{u\in (\Gamma_{G^k}(v)\cap V')\setminus\{v\}}p_t(u)}\\
&\geq 4^{-\tau_t(v)}\geq 4^{-20},
\end{align*}
where the first inequality follows from $1-p_t(u)\geq 4^{-p_t(u)}$ since $p_t(u)\in (0,1/2]$ and the third inequality follows from $\sum_{u\in (\Gamma_{G^k}(v)\cap V')\setminus\{v\}}p_t(u)\leq \tau_t(v)$.
The probability that $v$ is marked is $1/2$.
Thus, with probability at least $1/2\cdot 4^{-20}$, $v$ is marked and none of $u\in (\Gamma_{G^k}(v)\cap V')\setminus\{v\}$ is marked which means that $v$ is added into the independent set and thus is removed.

Consider the second type golden round.
We expose the randomness of vertices $u\in \{ x\in \Gamma_{G^k}(v)\cap V' \mid x\not\in \SH_t,\tau_t(x)\leq 20\}$ one by one.
Then with probability at least 
\begin{align*}
1 - \prod_{u\in \{ x\in \Gamma_{G^k}(v)\cap V' \mid x\not\in \SH_t,\tau_t(x)\leq 20\}} (1-p_t(u)) &\geq 1 - e^{-\sum_{u\in \{ x\in \Gamma_{G^k}(v)\cap V' \mid x\not\in \SH_t,\tau_t(x)\leq 20\}} p_t(u) }\\
&= 1 - e^{-\tau'_t(v)}\\
&\geq 1-e^{-0.01},
\end{align*}
we can find a vertex $u\in \{ x\in \Gamma_{G^k}(v)\cap V' \mid x\not\in \SH_t,\tau_t(x)\leq 20\}$ such that $u$ is marked.
Let $u$ be the first vertex that we find.
We continue to expose the randomness of unexposed vertices $w\in \Gamma_{G^k}(u)\cap V'$, then with probability at least 
\begin{align*}
\prod_{w\in (\Gamma_{G^k}\cap V')\setminus \{u\}} (1-p_t(w)) &\geq \prod_{w\in (\Gamma_{G^k}\cap V')\setminus \{u\}} 4^{-p_t(w)}\\
&\geq 4^{-\tau_t(u)}\geq 4^{-20},
\end{align*}
none of $w\in \Gamma_{G^k}(u)\cap V'$ is marked and thus $u$ is added into the independent set.
Thus, with overall probability at least $(1-e^{-0.01})\cdot 4^{-20}$, a vertex in $\Gamma_{G^k}(v)\cap V'$ is added into the independent set and thus $v$ is removed.

In the first type of golden round, we only expose the randomness of vertices $u\in \Gamma_{G^k}(v)\cap V'$.
In the second type of golden round, we only expose the randomness of vertices $u\in \Gamma_{G^{2k}}(v)\cap V'$.
\end{proof}

\begin{lemma}\label{lem:median_estimation_effective_degree}
Consider an iteration $t$ in Algorithm~\ref{alg:sparse_general_MIS}. 
For each vertex $v\in V'$, the following holds with probability at least $1-1/n^{20}$.
\begin{enumerate}
    \item If $\tau_t(v)>20$, then $\hat{\tau}_t(v)\geq 2$.
    \item If $\tau_t(v)<0.4$, then $\hat{\tau}_t(v)<2$.
\end{enumerate}
Furthermore, the guarantee only depends on the randomness of vertices $u\in \Gamma_{G^k}(v)\cap V'$.
\end{lemma}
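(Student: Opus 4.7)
The plan is to prove both one-sided bounds by a Chernoff concentration on each individual estimator $\hat{\tau}^j(v)$, and then amplify to the claimed $1-1/n^{20}$ failure probability by taking the median of $r=C\log n$ independent copies. Throughout, observe that $\hat{\tau}^j(v)=\sum_{u\in \Gamma_{G^k}(v)\cap V'} b^j(u)$ depends only on the sampling bits $\alpha_{\cdot}^j(u)$ of vertices $u\in \Gamma_{G^k}(v)\cap V'$, and $b^j(u)$ is a Bernoulli with the appropriate parameter (determined by $p_{t-1}(u)$, whose value in turn depends only on the past randomness of vertices in its own $G^k$-neighborhood). So the locality claim at the end of the lemma is immediate from how $\hat{\tau}^j(v)$ is defined; I only have to verify the two concentration statements.

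For the first inequality, suppose $\tau_t(v)>20$. Then each $\hat{\tau}^j(v)$ is a sum of independent $[0,1]$ Bernoullis with expectation $\tau_t(v)>20$. A standard Chernoff bound (e.g.\ in the form $\Pr[X\le (1-\delta)\E X]\le e^{-\delta^2 \E X/2}$) gives $\Pr[\hat{\tau}^j(v)<2]\le \Pr[\hat{\tau}^j(v)\le (1-0.9)\cdot\tau_t(v)]\le e^{-\Omega(1)}$, which is a constant strictly less than $1/2$ (say $\le 1/3$). So each individual trial ``succeeds'' (i.e.\ returns a value $\ge 2$) with probability $\ge 2/3$. For the second inequality, suppose $\tau_t(v)<0.4$. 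Then $\E[\hat{\tau}^j(v)]<0.4$, so by the upper-tail Chernoff bound $\Pr[\hat{\tau}^j(v)\ge 2]\le \Pr[\hat{\tau}^j(v)\ge 5\cdot\E\hat{\tau}^j(v)]\le e^{-\Omega(1)}\le 1/3$, so each trial ``succeeds'' (returns a value $<2$) with probability $\ge 2/3$.

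Now to amplify: $\hat{\tau}_t(v)=\mathrm{median}(\hat{\tau}^1(v),\dots,\hat{\tau}^r(v))$ is wrong only when strictly more than half of the $r$ trials fail. Since each trial fails independently with probability $\le 1/3$, another Chernoff bound on a sum of $r$ i.i.d.\ Bernoulli$(\le 1/3)$ variables gives
\begin{align*}
\Pr\bigl[\#\text{failures}\ge r/2\bigr]\le e^{-\Omega(r)}=e^{-\Omega(C\log n)}\le 1/n^{20}
\end{align*}
for $C$ chosen large enough. Applying this in both cases yields the bound.

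The only subtlety is bookkeeping with the index shift between $\tau_{t-1}$, which is what $\hat{\tau}_{t-1}(v)$ is literally designed to estimate inside Algorithm~\ref{alg:sparse_general_MIS}, and the statement written with subscript $t$; this is purely notational, and the independence argument goes through because the $r$ copies of sampling bits within a single iteration are drawn independently. The main (minor) obstacle is simply tracking which randomness is used where: the $\hat{\tau}^j$ depend only on the $\{b^j(u)\}_{u\in \Gamma_{G^k}(v)\cap V'}$, which are themselves Bernoullis whose success probability $p_{t-1}(u)$ is a fixed number once one conditions on the past. Hence, conditioning on all prior randomness, each $\hat{\tau}^j(v)$ is an independent sum of independent Bernoullis, which is exactly what the Chernoff bounds above require.
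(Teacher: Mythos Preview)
Your overall plan matches the paper's proof: bound the per-copy failure probability by a constant below $1/2$ and then amplify via the median of $r=C\log n$ independent copies. The first case and the median amplification are fine.

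There is, however, a genuine gap in your argument for the second case. You write
\[
\Pr[\hat{\tau}^j(v)\ge 2]\ \le\ \Pr[\hat{\tau}^j(v)\ge 5\cdot\E\hat{\tau}^j(v)]\ \le\ e^{-\Omega(1)}.
\]
The first inequality is correct (since $5\mu<2$ when $\mu<0.4$), but the second is not uniformly true: the multiplicative Chernoff bound gives $\Pr[X\ge 5\mu]\le (e^4/5^5)^{\mu}$, and this tends to $1$ as $\mu\to 0$. So routing through the event $\{X\ge 5\mu\}$ throws away exactly the information you need. The fix is immediate: either apply Markov's inequality directly, $\Pr[\hat{\tau}^j(v)\ge 2]\le \mu/2<0.2$ (this is what the paper does), or apply the Chernoff bound at the fixed threshold $2$ rather than at $5\mu$, which yields $\Pr[X\ge 2]\le e^{2-\mu}\mu^2/4<0.3$ for all $\mu<0.4$. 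With either correction your proof goes through and is essentially identical to the paper's (the paper uses Bernstein in case~1 and Markov in case~2, whereas you use Chernoff in both; these are interchangeable here).
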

\begin{proof}
Let $r=C\cdot \log n$ be the number of copies of sampling procedure in each iteration described in Algorithm~\ref{alg:sparse_general_MIS}.

Consider the case that $\tau_t(v)>20$.
We have that 
\begin{align*}
\forall j\in [r], \E[\hat{\tau}^j(v)]=\tau_t(v)>20.
\end{align*}
According to Bernstein inequality,
\begin{align*}
\Pr[\hat{\tau}^j(v)<2]&\leq \Pr\left[\E[\hat{\tau}^j(v)] - \hat{\tau}^j(v) \geq \E[\hat{\tau}^j(v)] / 2\right]\\
&\leq e^{-\frac{\frac{1}{2} \cdot (\E[\hat{\tau}^j(v)] / 2)^2}{\sum_{u\in \Gamma_{G^k}(v)\cap V'} \Var[b^j(u)] + \frac{1}{3}\cdot \E[\hat{\tau}^j(v)] / 2}}\\
& = e^{-\frac{1/8 \cdot \E^2[\hat{\tau}^j(v)]}{E[\hat{\tau}^j(v)]+ 1/6\cdot E[\hat{\tau}^j(v)]}}\\
&\leq e^{-15/7}\leq 1/5.\\
\end{align*}
Thus, in expectation, at least $4/5$ fraction of $\hat{\tau}^j(v)$, $j\in [r]$ are at least $2$.
Since $\hat{\tau}_t(v)$ is the median of $\hat{\tau}^1(v),\hat{\tau}^2(v),\cdots,\hat{\tau}^r(v)$ and $r=C\cdot \log n$ for some sufficiently large constant $C$, by Chernoff bound, the probability that $\hat{\tau}_t(v)\geq 2$ is at least $1-1/n^{100}$.

Consider the case that $\tau_t(v)<0.4$.
We have that $\forall j\in [r],\E[\hat{\tau}^j(v)]<0.4$.
By Markov's inequality, $\Pr[\hat{\tau}^j(v)\geq 2]\leq 1/5$.
Thus, in expectation, at least $4/5$ fraction of $\hat{\tau}^j(v), j\in [r]$ are less than $2$.
Again, since $\hat{\tau}_t(v)$ is the median of $\hat{\tau}^1(v),\hat{\tau}^2(v),\cdots,\hat{\tau}^r(v)$ and $r=C\cdot \log n$  for some sufficiently large constant $C$, by Chernoff bound, the probability that $\hat{\tau}_t(v)<2$ is at least $1-1/n^{100}$.

In the above argument, we only expose the randomness of vertices $u\in \Gamma_{G^k}(v)\cap V'$.
\end{proof}

\begin{lemma}\label{lem:sum_esimtaiton_effective_degree}
Consider an iteration $t$ in Algorithm~\ref{alg:sparse_general_MIS}. 
For each vertex $v\in V'$, the following holds with probability at least $1-1/n^{20}$.
\begin{enumerate}
    \item If $\tau_t(v)\geq 1$, $\sum_{j=1}^r \hat{\tau}^j(v)\leq 1.1\cdot r\cdot \tau_t(v)$.
    \item If $\tau_t(v)<1$, $\sum_{j=1}^r \hat{\tau}^j(v)\leq 1.1 \cdot r$.
\end{enumerate}
Furthermore, the guarantee only depends on the randomness of vertices $u\in \Gamma_{G^k}(v)\cap V'$.
\end{lemma}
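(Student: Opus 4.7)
The plan is a direct Bernstein concentration argument, in the same spirit as the proof of Lemma~\ref{lem:median_estimation_effective_degree} but applied to the sum rather than the median. Writing $X := \sum_{j=1}^r \hat{\tau}^j(v) = \sum_{j \in [r]} \sum_{u \in \Gamma_{G^k}(v) \cap V'} b^j(u)$, the variables $b^j(u)$ are mutually independent Bernoullis with $\E[b^j(u)] = p_t(u)$. So $\E[X] = r \cdot \tau_t(v)$ and, since every summand is $\{0,1\}$-valued, $\Var(X) \leq \E[X] = r\cdot \tau_t(v)$.

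I would then handle the two cases uniformly by picking the appropriate deviation. In case~1 ($\tau_t(v) \geq 1$), the event of interest is $X - \E[X] \geq 0.1\,r\,\tau_t(v)$, and Bernstein gives an exponent
\[
\frac{(0.1\,r\,\tau_t(v))^2/2}{r\,\tau_t(v) + 0.1\,r\,\tau_t(v)/3} \;=\; \Omega(r\,\tau_t(v)) \;\geq\; \Omega(r).
\]
In case~2 ($\tau_t(v) < 1$), the event is $X \geq 1.1\,r$, which forces $X - \E[X] \geq 0.1\,r$, and Bernstein yields
\[
\frac{(0.1\,r)^2/2}{r\,\tau_t(v) + 0.1\,r/3} \;=\; \Omega(r),
\]
using $\tau_t(v) < 1$ to keep the denominator $O(r)$. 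Since $r = C \log n$ for a sufficiently large constant $C$, each failure probability is at most $1/n^{20}$.

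The locality of the randomness drops out for free: $b^j(u)$ is a function only of the coin flips of $u$ in the $j$-th sampling copy, so the entire argument exposes only randomness of vertices in $\Gamma_{G^k}(v) \cap V'$. I do not expect any real obstacle here; the only care needed is to ensure that the constant $C$ in $r = C \log n$ is chosen large enough to absorb all hidden constants simultaneously across this lemma and Lemma~\ref{lem:median_estimation_effective_degree}.
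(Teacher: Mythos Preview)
Your proposal is correct and essentially identical to the paper's own proof: both write $X=\sum_{j,u} b^j(u)$ as a sum of independent Bernoullis with mean $r\cdot\tau_t(v)$, apply Bernstein with deviation $0.1\,r\,\tau_t(v)$ in case~1 and $0.1\,r$ in case~2, and conclude using $r=C\log n$ for large enough $C$; the locality remark is handled the same way.
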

\begin{proof}
Consider a vertex $v\in V'$.
We have
\begin{align*}
\E\left[\sum_{j=1}^r\hat{\tau}^j(v)\right] = r\cdot \tau_t(v).
\end{align*}
Notice that $\forall j\in [r],\hat{\tau}^j(v) = \sum_{u \in \Gamma_{G^k}(v)\cap V'} b^j(u)$.
Consider the case that $\tau_t(v)\geq 1$.
By Bernstein inequality, 
\begin{align*}
\Pr\left[\sum_{j=1}^r \hat{\tau}^j(v) - r \cdot \tau_t(v)\geq 0.1\cdot r\cdot \tau_t(v)\right] 
\leq e^{-\frac{\frac{1}{2}\cdot (0.1\cdot r\cdot \tau_t(v))^2}{r\cdot \tau_t(v)+\frac{1}{3}\cdot 0.1\cdot r\cdot \tau_t(v)}}
\leq 1/n^{100},
\end{align*}
where the last inequality follows from that $r=C\cdot \log n$ for some sufficiently large constant $C$ and $\tau_t(v)\geq 1$.

Consider the case that $\tau_t(v)<1$.
By Bernstein inequality again,
\begin{align*}
\Pr\left[\sum_{j=1}^r \hat{\tau}^j(v) - r \cdot \tau_t(v)\geq 0.1\cdot r\right] \leq e^{-\frac{\frac{1}{2}\cdot (0.1\cdot r)^2}{r\cdot \tau_t(v)+\frac{1}{3}\cdot 0.1\cdot r}}
\leq 1/n^{100},
\end{align*}
where the last inequality follows from that $r=C\cdot \log n$ for some sufficiently large constant $C$ and $\tau_t(v)< 1$.

In the above argument, we only expose the randomness of vertices $u\in \Gamma_{G^k}(v)\cap V'$.
\end{proof}

Suppose we run $T$ iterations of Algorithm~\ref{alg:sparse_general_MIS}.
For vertex $v\in V'$, let $\mathcal{E}(v)$ denote the following event: for every iteration $t\in [T]$ and every vertex $u\in \Gamma_{G^k}(v)\cap V'$,
\begin{enumerate}
    \item if $\tau_t(u)>20$, then $\hat{\tau}_t(u)\geq 2$,
    \item if $\tau_t(u)<0.4$, then $\hat{\tau}_t(u)<2$,
    \item if $\tau_t(u)<2^{4R}$, then $ \sum_{j=1}^r \hat{\tau}^j(u)< 100\cdot 2^{4R}\cdot r$.
\end{enumerate}

\begin{lemma}\label{lem:num_golden_rounds}
Suppose $T\geq c\cdot \log \Delta_k$, where $c$ is a sufficiently large constant.
For each vertex $v\in V'$, conditioning on $\mathcal{E}(v)$, there are at least $0.05\cdot T$ golden rounds for $v$.
Notice that this guarantee holds deterministically when conditioning on $\mathcal{E}(v)$. 
\end{lemma}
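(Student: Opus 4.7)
The plan is to mirror the analysis of Ghaffari's MIS algorithm~\cite{ghaffari2016improved,ghaffari2019sparsifying}, adapted to our $k$-th power setting and to the stalling mechanism introduced in Algorithm~\ref{alg:sparse_general_MIS}. Conditioning on $\mathcal{E}(v)$ renders the probability update rule essentially deterministic for every $u \in (\Gamma_{G^k}(v)\cap V')\cup\{v\}$: when $\tau_t(u) > 20$ we must have $p_{t+1}(u) = p_t(u)/2$, when $\tau_t(u) < 0.4$ we must have $p_{t+1}(u) = \min(1/2, 2p_t(u))$, and $u$ can be stalling during a phase only if $\tau_{sR}(u) \geq 2^{4R}$ at the phase boundary $t = sR$. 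In particular, whenever $\tau_{sR}(v) \leq 20 \ll 2^{4R}$ at a phase boundary, $v$ is not stalled throughout that phase, so a stalling round for $v$ must lie inside a phase that began in the ``very high'' regime $\tau \geq 2^{4R}$.

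I would then partition the $T$ iterations into $H = \{t : \tau_t(v) > 20\}$ and $L = [T] \setminus H$ and split into two cases. In Case A, $|L| \geq T/2$. Track the integer level $\ell_t(v) = -\log_2 p_t(v) \in [1,T]$. Under $\mathcal{E}(v)$, iterations in $H$ move $\ell_t$ up by one (regardless of whether $v$ is stalling, since stalled vertices also have $p_{t+1} = p_t/2$), iterations with $\tau_t(v) < 0.4$ move $\ell_t$ down by one unless it is already at the floor $\ell_t = 1$, and iterations in the intermediate band $0.4\leq \tau_t(v)\leq 20$ can go either way. Since $\ell_1 = 1$ and $\ell_t \geq 1$ throughout, a standard ballot-style counting on this bounded walk forces at least $\Omega(|L| - |H|)$ iterations of $L$ to satisfy $\ell_t = 1$, i.e. $p_t(v) = 1/2$; combined with the observation that $v$ is not stalled in such rounds (those lie outside any phase that starts with $\tau \geq 2^{4R}$, which is necessarily in $H$), each of these is a type-(1) golden round. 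A clean choice of slack gives at least $0.05 T$ of them.

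In Case B, $|H| > T/2$. Here every $t \in H$ has $\tau_t(v) > 20 > 0.2$, so it suffices to show that $\tau'_t(v) \geq 0.1\,\tau_t(v)$ on a constant fraction of $H$. The plan is a double-counting/charging argument on
\begin{equation*}
\sum_{t \in H} \bigl(\tau_t(v) - \tau'_t(v)\bigr) \;=\; \sum_{t \in H}\ \sum_{\substack{u \in \Gamma_{G^k}(v)\cap V' \\ \tau_t(u) > 20 \text{ or } u \in \SH_t}} p_t(u).
\end{equation*}
For each fixed neighbor $u$, the rounds $t$ contributing to this sum split into (i) rounds with $\tau_t(u) > 20$, during which $p_t(u)$ halves each step under $\mathcal{E}(u)$, so these contribute a geometric series summing to at most $O(1)$, and (ii) stalling rounds for $u$; since $u$ can stall only in phases starting with $\tau \geq 2^{4R}$ and $p_t(u)$ still halves in every stalling iteration, each ``bad epoch'' of $u$ contributes at most $O(1)$, and between two such epochs at least $R$ successful halvings must have driven $p_t(u)$ down, so the total contribution per $u$ is $O(1)$. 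Summing over the at most $T$ active vertices gives $\sum_{t\in H}(\tau_t(v) - \tau'_t(v)) = O(T)$, while $\sum_{t \in H} \tau_t(v) > 20 |H| = \Omega(T)$. By a Markov-style averaging, $\tau'_t(v) \geq 0.1\,\tau_t(v)$ holds on at least $\Omega(T)$ iterations of $H$, each of which is a type-(2) golden round.

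The main obstacle I anticipate is the accounting in Case B, specifically making sure that stalled neighbors do not over-contribute to $\tau_t(v) - \tau'_t(v)$. This is exactly why the third clause of $\mathcal{E}(v)$ (with threshold $2^{4R}$) is built in: it forces stalling to be preceded by a long streak of halvings, so each neighbor's lifetime contribution through stalling epochs is absolutely bounded. The remainder of the argument, including tuning the constants $20$, $0.4$, $0.2$, $0.1$ so that the final count is at least $0.05 T$, is a direct adaptation of the corresponding tuning in~\cite{ghaffari2019sparsifying}.
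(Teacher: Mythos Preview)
Your Case~B charging argument does not go through. The per-neighbor bound relies on the claim that $\sum_t p_t(u)$ over rounds with $\tau_t(u)>20$ or $u\in\SH_t$ is $O(1)$, but this only holds for a \emph{single} maximal bad interval, during which $p_t(u)$ indeed halves geometrically. Nothing prevents $u$ from oscillating: once $\tau_t(u)\le 20$ and $u$ is not stalling, $p_t(u)$ can double back up to $1/2$, after which a fresh bad interval contributes another $\Theta(1)$. Over $T$ rounds a single neighbor can therefore contribute $\Theta(T)$, not $O(1)$. The assertion that there are ``at most $T$ active vertices'' is also unjustified: $v$ has up to $\Delta_k$ neighbors in $G^k$, and $\Delta_k$ may be exponential in $T=c\log\Delta_k$. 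Hence $\sum_{t\in H}(\tau_t(v)-\tau'_t(v))=O(T)$ is unsupported and the averaging step collapses. Case~A has a related gap: the ballot count on $\ell_t$ yields at least $T-2|U|$ floor rounds, where $U$ is the set of rounds in which $p_t(v)$ halves; but $|U|$ contains every halving round in the intermediate band $0.4\le\tau_t(v)\le 20$, and you offer no bound on that band---in particular, at the case boundary $|L|=|H|=T/2$ the conclusion is vacuous.

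The paper's proof neither sums contributions per neighbor nor splits on $|H|$ versus $|L|$. The missing idea is a potential argument on $\tau_t(v)$ itself: in any round with $\tau'_t(v)<0.1\,\tau_t(v)$ at least $0.9\,\tau_t(v)$ of the mass sits on neighbors $u$ with $\tau_t(u)>20$ or $u\in\SH_t$, each of which halves under $\mathcal{E}(v)$, so $\tau_{t+1}(v)\le 0.45\,\tau_t(v)+2\cdot 0.1\,\tau_t(v)=0.65\,\tau_t(v)$. Since $\tau_0(v)\le\Delta_k/2$ and type-(2) golden rounds can at most double $\tau_t(v)$, the number $h$ of such non-golden rounds with $\tau_t(v)\ge 0.4$ (together with stalling rounds with small $\tau$, handled by an analogous amortization) is $O(g_2+\log\Delta_k)$, where $g_2$ is the number of type-(2) golden rounds. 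This single estimate simultaneously controls the intermediate band and the up-steps of $\ell_t(v)$; your walk argument on $p_t(v)$ then finishes the proof and yields $\Omega(T)$ type-(1) golden rounds whenever $g_2\le 0.05T$.
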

\begin{proof}
Consider an arbitrary vertex $v\in V'$.
We use $g_1,g_2$ to denote the number of type 1 and type 2 golden rounds for $v$ respectively.
We want to show that either $g_1$ or $g_2$ is at least $0.05\cdot T$.

Let $h$ denote the number of iterations that $v\in\SH_t,\tau_t(v)<0.4$ or $\tau_t(v)\geq 0.4,\tau'_t(v)<0.1\cdot \tau_t(v)$.
Let us first consider an iteration $t$ that $\tau'_t(v)<0.1\cdot \tau_t(v)$ and $v$ is not in $\SH_t$.
By the definition of $\tau'_t(v)$, we know that $0.9\tau_t(v)$ is contributed by $u\in\Gamma_{G^k}(v)\cap V'$ such that $u$ is either stalling or $\tau_t(u)>20$.
Conditioning on $\mathcal{E}(v)$, if vertex $u\in\Gamma_{G^k}(v)\cap V'$ satisfies $\tau_t(u)>20$, we know that $\hat{\tau}_t(u)\geq 2$ which implies that $p_{t+1}(u)=p_t(u)/2$.
If vertex $u\in\Gamma_{G^k}(v)\cap V'$ is stalling in the iteration $t$, we also have $p_{t+1}(u)=p_t(u)/2$.
If vertex $u\in\Gamma_{G^k}(v)\cap V'$ is neither stalling nor satisfies $\tau_t(u)>20$, we have $p_{t+1}(u)\leq p_t(u)\cdot 2$.
Thus, we have $\tau_{t+1}(v)\leq (0.45+0.2)\tau_t(v)=0.65\tau_t(v)$.
Let us consider an iteration $t$ that $v$ is stalling and $\tau_t(v)<0.4$.
Let $i\leq t$ be the iteration that $v$ started stalling and $t<i+R=t'$ ($R$ is the same as in Algorithm~\ref{alg:sparse_general_MIS}, the number of iterations in each phase).
By Algorithm~\ref{alg:sparse_general_MIS}, we know that $\sum_{j=1}^r \hat{\tau}^j(v)$ in the iteration $i$ is at least $100\cdot 2^{4R}\cdot r$.
Due to event $\mathcal{E}(v)$, we know that $\tau_i(v)\geq 2^{4R}$.
Since $\tau_t(v)<0.4$, we have $\tau_{t}'(v)\leq 0.4\cdot 2^R\leq 0.4\cdot \tau_i(v)\cdot 2^{-3R}\leq 0.65^R\cdot \tau_i(v)$.
For the sake of analysis, by amortizing over $h$ iterations, if $t$ is one of the $h$ iterations, we have $\tau_t(v)\leq 0.65 \tau_{t+1}(v).$

Now, we want to show that $h$ cannot be much larger than $g_2$.
Imagine $\tau_t(v)$ is the budget left after $t$ iterations and we say the budget is empty if $\tau_t(v)<0.4$.
By the above argument, each iteration in the $h$ iterations reduce the budget by at least a $0.65$ factor.
Each iteration in the $g_2$ iterations can increase the budget by at most a factor $2$.
The iterations outside the $g_2$ iterations cannot increase the budget.
At the beginning the budget is at most $\tau_0(v)\leq \Delta_k/2$.
Consider $\bar{t}$ which is the last iteration of the $h$ iterations.
If $\tau_{\bar{t}}(v)\geq 0.4$, then we have:
\begin{align*}
\Delta_k/2 \cdot 0.65^{h-1}\cdot 2^{g_2}\geq 0.4.
\end{align*}
Thus, in this case $h\leq 4\cdot (g_2+\log \Delta_k)$.
If $\tau_{\bar{t}}(v)<0.4$, then $v\in \SH_{\bar{t}}$.
According to event $\mathcal{E}$, there exists $i\in [\bar{t}-R,t]$ such that $\tau_t(v)\geq 2^{4R}$, then we have:
\begin{align*}
\Delta_k/2\cdot 0.65^{h-R}\cdot 2^{g_2}\geq 2^{4R}.
\end{align*}
In this case, we can also prove that $h\leq 4\cdot (g_2+\log\Delta_k)$.
Since $T\geq c\cdot \log \Delta_k$ for a sufficiently large constant $c$, we have $h\leq 4\cdot g_2+0.05\cdot T$.

Suppose $g_2\leq 0.05T$. 
We have $h\leq 0.25T$.
According to event $\mathcal{E}$ and the rule of updating $p_t(v)$, only when an iteration is a $g_2$ iteration or an $h$ iteration, $p_t(v)$ can decrease.
Thus, the total number of iterations that $p_t(v)$ can decrease is at most $g_2+h\leq 0.3T$.
The total number of iterations that $p_t(v)$ can increase is at most the number of iterations that $p_t(v)$ decreases.
Thus, the total number of iterations that $p_t(v)=1/2$ is at least $T-0.3T\cdot 2=0.4T$.
Among these iterations, the number of iterations that $\tau_t(v)>20$ or $v\in \SH_t$ is at most $g_2+h$.
Thus, we can conclude that $g_1\geq 0.4T-(g_2+h)\geq 0.1T$.
\end{proof}

Now we are able to prove Theorem~\ref{thm:correctness_MIS}.\\
\textit{Proof of Theorem~\ref{thm:correctness_MIS}.}
Firstly, let us show that each vertex $v$ is removed from $V'$ with probability at least $1-1/\Delta_k^{10}$.
Since $T=O(\log n)$, by applying Lemma~\ref{lem:median_estimation_effective_degree} and Lemma~\ref{lem:sum_esimtaiton_effective_degree} to each vertex $u\in \Gamma_{G^k}(v)\cap V'$ and taking union bond over $T$ iterations, with probability at least $1-1/n^{15}$, $\mathcal{E}(v)$ happens.
This guarantee only depends on the randomness of vertices $u\in \Gamma_{G^{2k}}(v)\cap V'$.
Conditioning on Lemma~\ref{lem:num_golden_rounds}, according to Lemma~\ref{lem:num_golden_rounds}, there are at least $0.05\cdot T$ golden rounds for $v$.
According to Fact~\ref{fac:constant_prob_removal}, in each golden round for $v$, $v$ is removed with probability at least a constant, and this guarantee only depends on the randomness of vertices $u\in \Gamma_{G^{2k}}(v)\cap V'$.
Since $T=c\cdot \log n$ for a sufficiently large constant $c$, the probability that $v$ is removed with probability at least $1-1/\Delta_k^{15}$.
Since $\mathcal{E}(v)$ happens with probability at least $1-1/n^{15}$, by taking union bound, the overall probability that $v$ is removed is at least $1-1/\Delta_k^{14}$.

If $\Delta_k\geq n^{\delta/100}$, then we can set $c$ to be a sufficiently large constant depending on $\delta$, such that $c\cdot \log\Delta_k=c'\cdot \log n$ for some sufficiently large constant $c'$.
Conditioning on $\mathcal{E}(v)$, the probability that $v$ is removed with probability at least $1-1/n^{15}$.
Thus, the overall probability that $v$ is removed is at least $1-1/n^{14}$.
By taking union bound over all vertices $v\in V'$, $B=\emptyset$ with probability at least $1-1/n^{13}$.

Now let us focus on analyzing the properties of connected components of $G^k[B]$.
Let us first prove several crucial claims.
\begin{claim}\label{cla:subset_exists}
Let $U\subseteq V'$ be a set of vertices satisfying that $\forall u\in U,\dist_G(u,U\setminus\{u\})\geq 4k+1$.
The probability that $U\subseteq B$ is at most $\Delta_k^{-|U|\cdot 14}$
\end{claim}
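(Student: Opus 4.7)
The plan is to exploit independence of the removal events for vertices that are sufficiently far apart in $G$. From the preceding analysis (the part of the proof of Theorem~\ref{thm:correctness_MIS} already established), we have a per-vertex guarantee: each vertex $v \in V'$ fails to be removed from $V'$ with probability at most $\Delta_k^{-14}$, and crucially, this event is determined solely by the randomness of the vertices in $\Gamma_{G^{2k}}(v) \cap V'$ (coming from Lemma~\ref{lem:median_estimation_effective_degree}, Lemma~\ref{lem:sum_esimtaiton_effective_degree}, and Fact~\ref{fac:constant_prob_removal}, all of which only expose randomness inside the $2k$-hop neighborhood).

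The first step is to observe that the hypothesis $\dist_G(u, U\setminus\{u\}) \geq 4k+1$ for all $u \in U$ implies the pairwise disjointness $\Gamma_{G^{2k}}(u) \cap \Gamma_{G^{2k}}(u') = \emptyset$ for any two distinct $u, u' \in U$: if some $w$ were in both, then $\dist_G(u, u') \leq \dist_G(u,w) + \dist_G(w,u') \leq 4k$, contradicting the assumption. Consequently, the events $\mathcal{F}_u := \{u \text{ is not removed from } V'\}$ for $u \in U$ depend on disjoint blocks of the algorithm's randomness, and are therefore mutually independent.

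The second step is to combine this independence with the per-vertex bound:
\begin{align*}
\Pr[U \subseteq B] = \Pr\left[\bigcap_{u \in U} \mathcal{F}_u\right] = \prod_{u \in U} \Pr[\mathcal{F}_u] \leq \prod_{u \in U} \Delta_k^{-14} = \Delta_k^{-|U|\cdot 14}.
\end{align*}

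I do not anticipate a serious obstacle here; the only point that requires care is to make sure the per-vertex failure bound being invoked really is the sharper $\Delta_k^{-14}$ one established in the preceding paragraphs (not the weaker $\Delta_k^{-10}$ statement in the theorem), and that its localization to $\Gamma_{G^{2k}}(v) \cap V'$ is preserved when we take the product — both of which follow directly from how Fact~\ref{fac:constant_prob_removal} and the event $\mathcal{E}(v)$ were set up.
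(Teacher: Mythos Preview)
Your proposal is correct and follows essentially the same argument as the paper's proof: use the per-vertex bound $\Pr[u\in B]\leq \Delta_k^{-14}$ localized to $\Gamma_{G^{2k}}(u)\cap V'$, observe that the distance hypothesis forces these $2k$-neighborhoods to be pairwise disjoint, and multiply the independent probabilities. Your write-up is in fact more explicit than the paper's in justifying the disjointness via the triangle inequality.
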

\begin{proof}
We have shown that each $u\in U$ is in $B$ with probability at most $1/\Delta_k^{14}$.
This guarantee only depends on the randomness of vertices in $\Gamma_{G^{2k}}(u)\cap V'$.
Since $\forall u\in U,\dist_G(u, U\setminus\{u\})\geq 5k$, the probability that $U\subseteq B$ is at most $\prod_{u\in U} 1/\Delta_k^{14} = 1/\Delta_k^{|U|\cdot 14}$.
\end{proof}

\begin{claim}\label{cla:small_cc}
With probability at least $1-1/n^{6}$, there is no subset $U\subseteq B$  such that $U$ is in the same connected component in $G^{5k}[B]$, $\forall u\in U,\dist_G(u, U\setminus\{u\})\geq 4k+1$, and $|U|\geq \log_{\Delta_k} n$.
\end{claim}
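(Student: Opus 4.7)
The plan is to obtain this claim by a union bound over canonical ``connectivity witnesses'' and invoke Claim~\ref{cla:subset_exists} for the probability bound. First I would reduce to the case $|U|=s$ for $s=\lceil\log_{\Delta_k} n\rceil$: any bad $U$ of larger size contains a bad subset of size exactly $s$, since separation is hereditary and any subset of a set lying in one connected component of $G^{5k}[B]$ still lies in that component.

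For each candidate bad $U$ I would associate a canonical witness $T$, namely a minimal Steiner tree of $U$ inside $G^{5k}[B]$. Then $V(T)\supseteq U$, $V(T)\subseteq B$, and $T$ is a connected subgraph of $G^{5k}$. Counting such trees uses the fact that the maximum degree of $G^{5k}$ is at most $\Delta_k^5$ (since each hop of length $5k$ in $G$ passes through at most five $G^k$-neighborhoods), so the standard tree-counting bound gives at most $(e\Delta_k^5)^{m-1}$ connected subgraphs of $G^{5k}$ of size $m$ containing a fixed root, hence at most $n(e\Delta_k^5)^{m-1}$ overall.

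For each such $T$, I plan to bound $\Pr[V(T)\subseteq B]$ in two regimes. For small $|V(T)|=m$, Claim~\ref{cla:subset_exists} applied to the separated subset $U\subseteq V(T)$ gives $\Pr[V(T)\subseteq B]\le \Pr[U\subseteq B]\le \Delta_k^{-14s}$. In the easy sub-case where $T=U$ (i.e., $G^{5k}[U]$ is itself connected), the union bound already yields
\begin{equation*}
n\cdot (e\Delta_k^5)^{s-1}\cdot \Delta_k^{-14s}\;\le\; n\cdot \Delta_k^{-9s+O(s)}\;\le\; n^{-7},
\end{equation*}
using $s=\log_{\Delta_k} n$, which is comfortably below $1/n^{6}$.

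The main obstacle will be the general case where $G^{5k}[U]$ is disconnected and the Steiner tree $T$ contains long ``bridge'' paths through $B\setminus U$, since then the combinatorial count $(\Delta_k^5)^m$ can outpace the probability $\Delta_k^{-14s}$ as $m$ grows. To handle this I plan to exploit the structure of minimal Steiner trees: contract every degree-$2$ internal path to a ``super-edge,'' so the topological skeleton has at most $2s-1$ super-vertices and $2s-2$ super-edges. Along each super-edge of length $\ell_i$ the vertices form a shortest path in $G^{5k}$, so every other vertex has pairwise $G$-distance $>5k\ge 4k+1$; this yields a separated subset of size $\Omega(\ell_i)$ within the bridge. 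Combined with $U$ (after accounting for boundary overlap of size $O(s\Delta_k^4)$), the greedy independent set inside $V(T)$ has size $\Omega(s+\sum_i\ell_i)$, so Claim~\ref{cla:subset_exists} tightens to $\Delta_k^{-\Omega(s+\sum\ell_i)}$, which beats the enumeration cost $\Delta_k^{O(\sum\ell_i)}$ of the bridge paths. Summing over super-topologies, vertex labelings, and super-edge lengths then completes the union bound at the $1/n^6$ level.
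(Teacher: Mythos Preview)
Your ``easy sub-case'' is the paper's entire proof. The paper union-bounds directly over $|U|$-vertex subtrees of $G^{5k}$ with vertex set exactly $U$ (no Steiner points): there are at most $4^{|U|}\cdot n\cdot \Delta_k^{5(|U|-1)}$ such trees, Claim~\ref{cla:subset_exists} gives probability $\le\Delta_k^{-14|U|}$, and the product is $\le 1/n^6$. The paper can dispense with Steiner points because it tacitly uses that $G^{5k}[U]$ is connected. Strictly speaking the phrase ``in the same connected component of $G^{5k}[B]$'' in the claim does not force this, but the only $U$ to which the claim is ever applied is the one built greedily in the proof of Theorem~\ref{thm:correctness_MIS}: each new $v_i$ satisfies $\dist_{G^k}(v_i,U)=5$, hence is adjacent in $G^{5k}$ to a previously chosen vertex, so $G^{5k}[U]$ is connected by construction. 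You should simply add that hypothesis (or observe it is satisfied in the application) and stop at your easy sub-case.

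Your attempt to prove the literal statement via Steiner trees has a genuine gap in the general case. The claim that the greedy separated set inside $V(T)$ has size $\Omega(s+\sum_i\ell_i)$ is not justified: the ``every other vertex'' sets you extract along different super-edges need not be mutually $(4k{+}1)$-separated in $G$. Two internal vertices on distinct super-edges incident to the same branch point can be adjacent in $G^{5k}$ without contradicting minimality of the Steiner tree, so you cannot simply take the union. Your proposed fix, subtracting an $O(s\Delta_k^4)$ ``boundary overlap,'' can swamp the main term $\Omega(s+\sum_i\ell_i)$ when $\Delta_k$ is polynomial in $n$, at which point the probability bound degrades and no longer beats the enumeration cost $\Delta_k^{O(m)}$ for large $m$.
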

\begin{proof}
If such $U$ exists, we can find a $|U|$-vertex subtree in $G^{5k}$ spanning $U$.
Notice that the number of rooted unlabeled $|U|$-vertex trees is at most $4^{|U|}$ because the Euler tour representation of the tree can be indicated by a length $2\cdot |U|$ binary vector (see e.g., Lemma 3.3 in~\cite{barenboim2016locality}).
The number of ways to embed such a tree into $G^{5k}$ is at most $n\cdot \Delta_k^{5\cdot (|U|-1)}$ since the number of choices of the root is at most $n$ and the number of choices of each subsequent node is at most $\Delta_k^5$.
According to Claim~\ref{cla:subset_exists}, the probability that $U\subseteq B$ is at most $\Delta_k^{-|U|\cdot 14}$.
By taking a union bound over all possible choices of $U$, the probability that there is a subset $U\subseteq B$ such that $U$ is in the same connected component in $G^{5k}[B],\forall u\in U,\dist_G(u, U\setminus\{u\})\geq 4k+1$ and $|U|\geq \log_{\Delta_k} n$ is at most 
\begin{align*}
4^{|U|}\cdot n\cdot \Delta_k^{5\cdot (|U|-1)}\cdot 1/\Delta_k^{|U|\cdot 14}\leq 4^{|U|}\cdot n \cdot 1/\Delta_k^{9\cdot |U|}\leq 1/n^6.
\end{align*}
\end{proof}

If there is a connected component in $G^k[B]$ with size at least $\Delta_k^4\cdot \log_{\Delta_k}n$ or the diameter of the connected component in $G^k[B]$ is at least $5\cdot \log_{\Delta_k} n$, we can use the following greedy procedure to find a subset $U\subseteq B$ such that $U$ is in the same connected component in $G^{5k}[B]$, $\forall u\in U,\dist_G(u, U\setminus \{u\})\geq 4k+1$, and $|U|\geq \log_{\Delta_k} n$.
Let $C\subseteq B$ be the vertices of a such connected component in $G^k[B]$ and initialize $U\gets \{v_1\}$, where $v_1$ is an arbitrary end point of a diameter of $C$ in $G^k[B]$.
Then we iteratively select $v_i\in C$ satisfying $\dist_{G^k}(v_i,U)=5$ and set $U\gets U\cup \{v_i\}$.
Each time we add a vertex into $U$, we will remove at most $\Delta_k^4$ vertices from consideration.
When we add $v_i$ into $U$, we will remove vertex with distance at most $5\cdot (i-1)$ to $v_1$ in $G^k$ from consideration.
Thus, if the size of of the connected component is at least $\Delta_k^4\cdot \log_{\Delta_k} n$ or the diameter is at least $5\cdot \log_{\Delta_k}n$, we can find a such set $U$.
According to Claim~\ref{cla:small_cc}, this happens with probability at most $1/n^6$.
{\hfill\qed}

\section{Proof of Lemma~\ref{lem:navigating_net_insertion}}\label{sec:dynamic_insertion}

The procedure of handling insertion of a point $p$ is described as follows:
\begin{enumerate}
    \item Let $R_{\max}\in \Gamma$ be the minimum scale such that $|Y_{R_{\max}}|=1$ and $\dist(p, Y_{R_{\max}})< R_{\max}$.
    \item Let $Z_{R_{\max}}=\{x\in Y_{R_{\max}}\mid \dist(x,p)\leq 8\cdot R_{\max}\}$.
    \item For $R=R_{\max},R_{\max}/2,R_{\max}/4,...$: \label{it:maintain_zr}
    \begin{enumerate}
        \item If $Z_R=\emptyset$, let $R_{\min}=R$ and break.
        \item Let $Z_{R/2} = \left\{x\in \bigcup_{y\in Z_R} L_{y,R}\mid \dist(x,p)\leq 8\cdot R/2\right\}$.
    \end{enumerate}
    \item Find the largest $\wh{R}\in \{R_{\min}, R_{\min}\cdot 2,\cdots, R_{\max}\}$ such that $\dist(p,Z_{\wh{R}})\geq \wh{R}$, and add $p$ to $Y_R$ for all $R\in \Gamma$ with $R\leq \wh{R}$. \label{it:update_yr}
    \item For $R\in \{R_{\min}\cdot 2, R_{\min}\cdot 4, \cdots, \wh{R}\cdot 2\},$ add $p$ to $L_{x,R}$ for every $x\in Z_R$ with $\dist(p,x)\leq 4\cdot R$.\label{it:update_list}
    \item Initialize $L_{p,R}=\{p\}$ for all $R\in\Gamma$ with $R\leq \wh{R}$.
    \item For $R\in \{R_{\min}\cdot 2, R_{\min}\cdot 4,\cdots,\wh{R}\}$, add all points in $Z_{R/2}$ into $L_{p,R}$.
\end{enumerate}
In the remaining of the proof, we will show that all $R$-nets for $R\in \Gamma$ and all navigation lists are maintained properly.
\begin{claim}\label{cla:set_zr}
After step~\ref{it:maintain_zr}, $\forall R\in \{R_{\min}, R_{\min}\cdot 2, R_{\min}\cdot 4, \cdots, R_{\max}\}, Z_R=\{x\in Y_R \mid \dist(x,p)\leq 8\cdot R\}$.
\end{claim}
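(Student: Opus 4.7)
The plan is to prove the claim by reverse induction on the scale $R$, starting from $R=R_{\max}$ and descending in powers of two down to $R=R_{\min}$. Note that throughout step~\ref{it:maintain_zr}, the point $p$ has not yet been inserted into any $Y_R$ or navigation list (those updates happen in steps~\ref{it:update_yr} and \ref{it:update_list}), so all $Y_R$'s and $L_{y,R}$'s referenced refer to the pre-insertion state. The base case $R=R_{\max}$ is immediate from the initial definition of $Z_{R_{\max}}$ outside the loop. For the inductive step, assume $Z_{2R}=\{x\in Y_{2R}\mid \dist(x,p)\leq 16R\}$ and prove $Z_R=\{x\in Y_R\mid \dist(x,p)\leq 8R\}$ (where $Z_R$ was just computed inside the loop iteration handling scale $2R$).

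For the forward inclusion $(\subseteq)$, any $x\in Z_R$ lies in $L_{y,2R}$ for some $y\in Z_{2R}$. Since by definition $L_{y,2R}\subseteq Y_R$, we get $x\in Y_R$; the filter $\dist(x,p)\leq 8R$ in the construction of $Z_R$ gives the distance bound.

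For the reverse inclusion $(\supseteq)$, pick any $x\in Y_R$ with $\dist(x,p)\leq 8R$. Since $Y_{2R}$ is a $2R$-net of $Y_R$, the covering property yields some $y\in Y_{2R}$ with $\dist(x,y)<2R$. By triangle inequality $\dist(y,p)<2R+8R=10R\leq 16R$, so $y\in Z_{2R}$ by the inductive hypothesis. Moreover $\dist(x,y)<2R\leq 4\cdot 2R$, so $x\in L_{y,2R}$ by the definition of navigation lists. Hence $x\in\bigcup_{y'\in Z_{2R}}L_{y',2R}$, and since $\dist(x,p)\leq 8R$ satisfies the filter, $x\in Z_R$ as desired.

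The main subtlety is ensuring that the covering factor in the $R$-net definition ($< 2R$) is tight enough to keep the witness $y$ within the $16R$ radius captured by $Z_{2R}$; the slack between $10R$ and $16R$ is precisely why the threshold $8R$ (rather than something smaller) appears in the definition of $Z_R$. The induction applies all the way down to $R=R_{\min}$: at that scale $Z_{R_{\min}}=\emptyset$ by the break condition, and the equality asserted by the claim then certifies that no point of $Y_{R_{\min}}$ lies within $8R_{\min}$ of $p$, which is exactly the information needed later (in step~\ref{it:update_yr}) to identify the largest scale $\wh{R}$ at which $p$ must be adjoined to $Y_R$.
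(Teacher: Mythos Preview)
Your proof is correct and follows essentially the same inductive argument as the paper: induction on the scale from $R_{\max}$ down, base case by definition of $Z_{R_{\max}}$, and the inductive step using the covering property of the $2R$-net to find a witness $y\in Y_{2R}$ within distance $<2R$ of $x$, then the triangle inequality to place $y$ in $Z_{2R}$, and the navigation-list definition to conclude $x\in L_{y,2R}$. The paper's version is terser---it omits the trivial forward inclusion and the closing discussion about $R_{\min}$---but the substance is identical.
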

\begin{proof}
The proof is by induction.
It is easy to see that the claim holds for $R=R_{\max}$.
Now suppose the claim is true for $2 R$, i.e., $Z_{2 R}=\{y\in Y_{2R}\mid \dist(y,p)\leq 16\cdot R\}$.
Consider an arbitrary $x\in Y_{R}$ such that $\dist(x,p)\leq 8\cdot R$.
There exists $y\in Y_{2R}$ such that $\dist(x,y)\leq 2R$.
By triangle inequality, $\dist(y,p)\leq 10R$ and thus $y\in Z_{2R}$.
By the definition of navigation list, we have $x\in L_{y,2R}$ which implies that $x\in Z_R$.
\end{proof}
\begin{claim}\label{cla:navigating_net}
After step~\ref{it:update_yr}, $\{Y_R\mid R\in \Gamma\}$ is a navigating net of $P\cup \{p\}$.
\end{claim}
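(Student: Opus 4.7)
The plan is to verify both defining properties of a navigating net at every scale $R \in \Gamma$: separation ($\dist(u,v) \geq R$ for distinct $u,v \in Y_R$) and covering ($\forall x \in Y_{R/2}, \exists y \in Y_R$ with $\dist(x,y) < R$). Since only $p$ is added and only at scales $R \leq \wh{R}$, three regimes must be checked: (a) scales $R > 2\wh{R}$, where neither $Y_R$ nor $Y_{R/2}$ changes and both properties are inherited from the old navigating net; (b) the transition scale $R = 2\wh{R}$, where $Y_R$ is unchanged but $Y_{R/2}$ newly contains $p$; and (c) scales $R \leq \wh{R}$, where $p$ has just been inserted into $Y_R$.

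For regime (b), separation is immediate and covering of $p$ follows from the maximality of $\wh{R}$: the loop index $2\wh{R}$ failed the test, so $\dist(p, Z_{2\wh{R}}) < 2\wh{R}$, and Claim~\ref{cla:set_zr} then gives $\dist(p, Y_{2\wh{R}}) \leq \dist(p, Z_{2\wh{R}}) < 2\wh{R}$, exhibiting a witness in $Y_{2\wh{R}}$ that covers $p$. The corner case $\wh{R} = R_{\max}$ is handled by the definition of $R_{\max}$, which ensures $\dist(p, Y_{R_{\max}}) < R_{\max}$ and $|Y_R| = 1$ for all $R \geq R_{\max}$, so covering propagates up trivially. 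For regime (c), covering is cheap once separation is in hand, since $p$ covers itself while the old points of $Y_{R/2}$ remain covered by the old $Y_R$.

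The delicate step is verifying separation in regime (c), namely $\dist(p, Y_R) \geq R$ for every scale $R \leq \wh{R}$, not only at $R = \wh{R}$ where it follows directly from Claim~\ref{cla:set_zr} together with the choice of $\wh{R}$. My plan is a contradiction argument: assuming some $q \in Y_R$ with $\dist(p,q) < R \leq \wh{R}$, iteratively lift $q$ through the old net by applying the covering property at scales $R, 2R, 4R, \ldots, \wh{R}$, producing a chain of witnesses ending at a point $y \in Y_{\wh{R}}$. One then hopes the accumulated distance bound places $y$ in $Z_{\wh{R}}$ and contradicts $\dist(p, Z_{\wh{R}}) \geq \wh{R}$. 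The main obstacle is controlling the geometric blow-up along the lifting chain, since a naive triangle inequality at each step only yields a factor-$3$ bound per doubling and can leave $\dist(p,y)$ above $\wh{R}$; pushing the argument through will require using the maximality of $\wh{R}$ at every intermediate scale of the loop, not only at $\wh{R}$ itself, so as to rule out a lifted witness falling just outside $Z$ at each intermediate scale. This careful scale-by-scale bookkeeping is the one step whose verification demands genuine attention rather than routine computation.
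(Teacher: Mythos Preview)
Your three-regime decomposition and your treatment of regimes (a) and (b) are correct and essentially match the paper. You are also right that separation in regime~(c) --- showing $\dist(p,Y_R)\ge R$ for \emph{every} $R\le\wh R$, not just at $R=\wh R$ --- is the crux. The paper's own proof does not argue this by lifting; it simply asserts in one line that ``$\forall R\in\Gamma$, $\dist(p,Z_R)\geq R$'', as if this followed directly from Claim~\ref{cla:set_zr} and the definition of $\wh R$.

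Your lifting plan, however, cannot be carried out. The maximality of $\wh R$ tells you only that scales \emph{above} $\wh R$ fail the test $\dist(p,Z_R)\ge R$; it says nothing about intermediate scales $R<\wh R$, so there is no leverage to exploit there as you suggest. Worse, with $\wh R$ read literally as the globally largest scale satisfying the test, separation can genuinely fail below $\wh R$: on the real line take $P=\{0,\,0.9,\,10\}$ with nested nets $Y_1=\{0,10\}$, $Y_{0.5}=P$, and insert $p=1$. One computes $Z_1=\{0\}$, so $\dist(p,Z_1)=1\ge 1$ and hence $\wh R=1$; yet $0.9\in Y_{0.5}$ with $\dist(p,0.9)=0.1<0.5$, so adding $p$ to $Y_{0.5}$ violates separation. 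The gap you flagged is therefore real and cannot be closed by any proof strategy over the algorithm as written. What both the paper's one-line assertion and your lifting argument implicitly need is that $\dist(p,Z_R)\ge R$ hold at \emph{every} $R\le\wh R$; this amounts to reading $\wh R$ as the top of the contiguous bottom run of scales passing the test, not the globally largest such scale.
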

\begin{proof}
Let us first prove that $\forall R\in \Gamma$ with $R\leq \min_{x\not=y\in P\cup \{p\}}\dist(x,y)$, $Y_{R}=P\cup\{p\}$.
Since $\{Y_R\mid R\in\Gamma\}$ is a navigating net of $P$ before the insertion of $p$, it suffices to prove that $\forall R\in\Gamma$ with $R\leq \dist(p,P)$, $p\in Y_R$.
Since $\dist(p,Z_{R_{\max}})<R_{\max}$, by the choice of $\wh{R}$, we have $\dist(p, P)\leq \dist(p,Z_{\wh{R}\cdot 2})< \wh{R}\cdot 2$.
Since we add $p$ into $Y_R$ for every $R\in\Gamma$ with $R<\wh{R}\cdot 2$, we have $\forall R\in\Gamma$ with $R\leq \dist(p,P)$, $p\in Y_R$.

Next, we need to prove that $\forall R\in \Gamma,$ $Y_R$ is an $R$-net of $Y_{R/2}$.
By Claim~\ref{cla:set_zr}, since $\forall R\in\Gamma$, $\dist(p,Z_R)\geq R$, we have $\dist(p,Y_R)\geq R$.
Thus $\forall R\in \Gamma$ with $R\leq \wh{R}$, we still have that $Y_R$ is an $R$-net of $Y_{R/2}$.
We only need to prove that $Y_{\wh{R}\cdot 2}$ is an $(\wh{R}\cdot 2)$-net of $Y_{\wh{R}}$.
This is true because $\dist(p,Y_{\wh{R}\cdot 2})\leq \dist(p, Z_{\wh{R}\cdot 2})<\wh{R}\cdot 2$.
\end{proof}
\begin{claim}\label{cla:old_navigation_list}
After step~\ref{it:update_list}, $\forall R\in\Gamma,\forall x\in Y_R\cap P$, we have $L_{x,R}=\{z\in Y_{R/2}\mid \dist(x,z)\leq 4\cdot R\}$.
\end{claim}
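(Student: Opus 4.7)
The plan is to verify the claim scale by scale, splitting $\Gamma$ into three regimes determined by how $R$ compares with $R_{\min}$ and $\wh{R}$. Before the insertion, by inductive correctness on the old point set $P$, for every $R \in \Gamma$ and every $x \in Y_R^{\text{old}}$ we have $L_{x,R}^{\text{old}} = \{z \in Y_{R/2}^{\text{old}} \mid \dist(x,z) \leq 4R\}$. After step~\ref{it:update_yr}, $Y_R = Y_R^{\text{old}} \cup \{p\}$ whenever $R \leq \wh{R}$, and $Y_R = Y_R^{\text{old}}$ otherwise. So for $x \in Y_R \cap P$ the only possible change to the list is the addition of $p$, and the task is to show that step~\ref{it:update_list} performs this addition precisely when it is required.

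First, for $R > 2\wh{R}$, we have $R/2 > \wh{R}$, so $p \notin Y_{R/2}$ and $Y_R$ is also unchanged; the algorithm leaves $L_{x,R}$ alone and the old equality carries over. Second, for $R \in \{R_{\min}\cdot 2, R_{\min}\cdot 4, \ldots, \wh{R}\cdot 2\}$, we have $R/2 \leq \wh{R}$, so $p \in Y_{R/2}$, and $p$ should appear in $L_{x,R}$ exactly when $\dist(p,x) \leq 4R$. Step~\ref{it:update_list} adds $p$ to $L_{x,R}$ for every $x \in Z_R$ with $\dist(p,x) \leq 4R$, and by Claim~\ref{cla:set_zr} the set $Z_R = \{y \in Y_R \mid \dist(y,p) \leq 8R\}$ contains every $x \in Y_R$ with $\dist(p,x) \leq 4R$. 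Combined with the old list this yields exactly $\{z \in Y_{R/2} \mid \dist(x,z) \leq 4R\}$.

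The remaining regime, $R \leq R_{\min}$, is the subtle one: here $R \leq \wh{R}$ as well, so $p \in Y_{R/2}$, yet the algorithm never touches these lists, and we must argue that $p$ genuinely should not appear in $L_{x,R}$ for any $x \in Y_R \cap P$. The loop broke at $R_{\min}$ with $Z_{R_{\min}} = \emptyset$, so by Claim~\ref{cla:set_zr} we have $\dist(p, Y_{R_{\min}}^{\text{old}}) > 8R_{\min}$. For any $x \in Y_R \cap P$, Lemma~\ref{lem:navigating_net_cover} applied to the old navigating net at scale $R_{\min}$ gives some $y \in Y_{R_{\min}}^{\text{old}}$ with $\dist(x,y) < 2R_{\min}$. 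Triangle inequality then yields
\[
\dist(p,x) \;\geq\; \dist(p,y) - \dist(x,y) \;>\; 8R_{\min} - 2R_{\min} \;=\; 6R_{\min} \;\geq\; 4R,
\]
so $p$ indeed does not belong in $L_{x,R}$, and the unchanged old list equals the desired set.

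The main obstacle is precisely this third case: the update procedure only rewrites lists at scales touched by the main loop, so the correctness at small scales relies on the nontrivial geometric fact that once we hit the first empty $Z_{R_{\min}}$, the inserted point $p$ is automatically far from \emph{every} old point at all smaller scales. Combining Claim~\ref{cla:set_zr} with the covering property of the pre-insertion net is what makes this work; the other two regimes are then bookkeeping once the cover radius $8R$ in the definition of $Z_R$ is matched against the $4R$ cutoff in the navigation list.
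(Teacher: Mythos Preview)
Your proof is correct and follows essentially the same approach as the paper's: the same three-way split on scales, the same use of Claim~\ref{cla:set_zr} to handle the middle range, and the same key step of combining $Z_{R_{\min}}=\emptyset$ with Lemma~\ref{lem:navigating_net_cover} to show $\dist(p,x)>6R_{\min}$ for the low scales. You are slightly more explicit about the regime $R>2\wh{R}$, which the paper leaves implicit, but otherwise the arguments coincide.
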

\begin{proof}
Since $Z_{R_{\min}} = \emptyset$, we have $\dist(p, Y_{R_{\min}}\cap P)>8 \cdot R_{\min}$ according to Claim~\ref{cla:set_zr}.
It implies that $\dist(p,P)>6\cdot R_{\min}$.
Otherwise, by Lemma~\ref{lem:navigating_net_cover} and triangle inequality, we have $\dist(p, Y_{R_{\min}}\cap P)\leq \dist(p, P)+2 \cdot R_{\min}\leq 8\cdot R_{\min}$ which leads to a contradiction.
Thus, $\forall R\in \Gamma$ with $R\leq R_{\min}$ and $\forall x\in Y_R\cap P$, we do not need to change $L_{x,R}$, and it is still $\{z\in Y_{R/2}\mid \dist(x,z)\leq 4\cdot R\}$.

Now consider $R\in \{R_{\min}\cdot 2, R_{\min}\cdot 4,\cdots, \wh{R}\cdot 2\}$.
We only added $p$ to $Y_{R/2}$.
Thus, we only need to add $p$ to $L_{x,R}$ for $x\in Y_R$ and $\dist(p,x)\leq 4\cdot R$.
By Claim~\ref{cla:set_zr}, we know such $x$ must be in $Z_R$.
Thus, $p$ will be added into $L_{x,R}$ be step~\ref{it:update_list}.
\end{proof}
\begin{claim}\label{cla:new_navigation_list}
At the end of the procedure, $\forall R\in \Gamma$ with $R\leq \wh{R}$, $L_{p,R}=\{z\in Y_{R/2}\mid \dist(p,z)\leq 4\cdot R\}$.
\end{claim}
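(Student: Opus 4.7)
The plan is to verify the claim by a direct case analysis on $R$, leveraging Claim~\ref{cla:set_zr} (which characterizes each $Z_R$ as the ball of radius $8R$ around $p$ intersected with $Y_R$) and the intermediate observation from the proof of Claim~\ref{cla:old_navigation_list} that $\dist(p,P) > 6\cdot R_{\min}$. Before splitting into cases I would first note that $p\in Y_{R/2}$ for every $R\leq \wh{R}$: this follows from step~\ref{it:update_yr}, since $R/2 < \wh{R}$ whenever $R\leq \wh{R}$. Combined with $\dist(p,p)=0$, this gives $p\in\{z\in Y_{R/2}\mid \dist(p,z)\leq 4R\}$ automatically, so it suffices to match the remaining elements.

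I would then split the scales into two ranges. For $R\in\{R_{\min}\cdot 2,R_{\min}\cdot 4,\ldots,\wh{R}\}$, the algorithm sets $L_{p,R}=\{p\}\cup Z_{R/2}$. By Claim~\ref{cla:set_zr}, $Z_{R/2}=\{x\in Y_{R/2}\mid \dist(x,p)\leq 8\cdot(R/2)\}=\{x\in Y_{R/2}\mid \dist(x,p)\leq 4R\}$, which is exactly the target set (the element $p$ is already counted). Hence $L_{p,R}=\{z\in Y_{R/2}\mid \dist(p,z)\leq 4R\}$ for these scales.

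For $R\in\Gamma$ with $R\leq R_{\min}$, the algorithm only sets $L_{p,R}=\{p\}$, so I need to rule out any other candidate in $Y_{R/2}$ within distance $4R$ of $p$. Any such candidate $z\neq p$ would lie in $Y_{R/2}\cap P$. Using the consequence $\dist(p,P)>6\cdot R_{\min}$ established inside the proof of Claim~\ref{cla:old_navigation_list} (which follows from $Z_{R_{\min}}=\emptyset$ and Lemma~\ref{lem:navigating_net_cover}), we get $\dist(p,z)>6\cdot R_{\min}\geq 6R > 4R$, a contradiction. Hence $L_{p,R}=\{p\}=\{z\in Y_{R/2}\mid \dist(p,z)\leq 4R\}$ in this range as well. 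No real obstacle is expected; the only subtlety is being careful with the boundary scale $R=R_{\min}$ itself, where one must invoke the strict inequality $\dist(p,P)>6R_{\min}$ rather than the weaker $\dist(p,Y_{R_{\min}}\cap P)>8R_{\min}$ to exclude points of $Y_{R_{\min}/2}$ that were not in $Y_{R_{\min}}$.
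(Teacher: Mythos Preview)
Your proposal is correct and follows essentially the same approach as the paper: split into the ranges $R\in\{R_{\min}\cdot 2,\ldots,\wh{R}\}$ and $R\leq R_{\min}$, invoke Claim~\ref{cla:set_zr} for the former, and use the fact $\dist(p,P)>6\cdot R_{\min}$ (established in the proof of Claim~\ref{cla:old_navigation_list}) for the latter. Your only addition is making explicit that $p\in Y_{R/2}$ so that $p$ itself is accounted for in the target set, which the paper leaves implicit.
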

\begin{proof}
By Claim~\ref{cla:set_zr}, we have $\forall R\in \{R_{\min}\cdot 2,R_{\min}\cdot 4,\cdots,\wh{R}\}$, $L_{p,R}=\{z\in Y_{R/2}\mid \dist(p,z)\leq 4\cdot R\}$.
In the proof of Claim~\ref{cla:old_navigation_list}, we show that $\dist(p, P)> 6\cdot R_{\min}$, which implies that $\forall R\in \Gamma$ with $R\leq R_{\min}$, $\{z\in Y_{R/2}\mid \dist(p,z)\leq 4\cdot R\}=\{p\}$.
\end{proof}
Claim~\ref{cla:navigating_net} shows that the navigating net is maintained.
Claim~\ref{cla:old_navigation_list} and Claim~\ref{cla:new_navigation_list} show that the navigation lists are maintained.
Now let us consider the update time.
It is easy to see that $R_{\max}/R_{\min}=O(\log\Delta)$.
Notice that $\forall R\in\Gamma,x\in Y_R$, since the diameter of $|L_{x,R}|$ is at most $8\cdot R$ and $L_{x,R}\subseteq Y_{R/2}$ which implies that the pairwise distance among $L_{x,R}$ is at least $R/2$, the size of $L_{x,R}$ is at most $2^{O(d)}$.
By the similar argument we can show that $\forall R\in[R_{\min}, R_{\max}],|Z_R|\leq 2^{O(d)}$.
Thus, step~\ref{it:maintain_zr} takes $2^{O(d)}\log(\Delta)\log\log(\Delta)$ time, where it contains $2^{O(d)}\log(\Delta)$ distance computations, and the $\log\log(\Delta)$ factor comes from indexing the navigation lists since there are $O(\log\Delta)$ non-trivial navigation list for each point.
Step~\ref{it:update_yr} takes at most $2^{O(d)}\log(\Delta)$ distance computations. 
Since if a point is in $Y_R$, it must be in $Y_{R/2}$, we just need to record $\wh{R}$ and conceptually add $p$ to $Y_R$ for all $R\leq \wh{R}$.
Step~\ref{it:update_list} takes $2^{O(d)}\log(\Delta)\log\log(\Delta)$ time which contains $2^{O(d)}\log(\Delta)$ distance computations.
The final two steps takes $\log(\Delta)$ time to index the navigation lists of $p$.

\section{Proof of Lemma~\ref{lem:navigating_net_deletion}}\label{sec:dynamic_deletion}
Without loss of generality, we assume $|P\setminus\{p\}|\geq 1$.
The high level idea is to update $Y_R$ for small $R$ to large $R$ by promoting the points from $Y_{R/2}$, and at the same time update the navigation lists of the related points in $Y_{R\cdot 2}$.
The procedure of handling deletion of a point $p$ is described as follows:
\begin{enumerate}
    \item Let $R_{\min}\in \Gamma$ be the minimum value such that $L_{p,R_{\min}}\not= \{p\}$. Let $Z_{R_{\min}/2} = \emptyset$.
    \item For $R= R_{\min}, R_{\min}\cdot 2,R_{\min}\cdot 4,\cdots$:\label{it:deletion_main_loop}
    \begin{enumerate}
        \item If $Z_{R/2}=\emptyset$ and $p\not\in Y_R$, then let $R_{\max} = R$ and break the loop. \label{it:natural_end_of_deletion}
        \item If $|Z_{R/2}|=1$ and $Y_{R/2}=Z_{R/2}$, then let $Y_{R}=Y_{R\cdot 2}=Y_{R\cdot 4}=\cdots=Z_{R/2}$, let $R_{\max}=R$, and break the loop.\label{it:one_point_end_deletion}
        \item Initialize $Z_{R}=\emptyset$.
        \item For each $x\in Z_{R/2}\cup L_{p,R}$ (if $p\not\in Y_R$, then for each $x\in Z_{R/2}$):
        \begin{enumerate}
            \item If $\exists y\in L_{x,R/2}\setminus \{p\}$ such that $y\in Y_R$ and $\dist(y,x)< R$, skip the following steps. \label{it:add_into_yr}
            \item Add $x$ into $Z_R$ and $Y_R$. \label{it:real_add_to_yr}
            \item Find an arbitrary $q\in Y_{R\cdot 2}$ such that $p\in L_{q,R\cdot 2}$ and $\dist(p,q)\leq 2\cdot R$. 
            Let $L_{x,R}=\{z\mid \exists w\in L_{q,R\cdot 2}\text{ such that }z\in L_{w,R}\text{ and }\dist(x,z)\leq 4\cdot R\}\cup \{z\in Z_{R/2}\mid \dist(x,z)\leq 4\cdot R\}$. \label{it:update_lxr}
            \item Find an arbitrary $v\in Y_{R\cdot 4}$ such that $q\in L_{v, R\cdot 4}$ and $\dist(q,v)\leq 4\cdot R$.
            For each $z\in L_{v, R\cdot 4}$, if $\dist(x,z)\leq 8\cdot R$, add $x$ into $L_{z,R\cdot 2}$. \label{it:update_lzr2}
        \end{enumerate}
        \item Delete $p$ from $Y_R$.
    \end{enumerate}
    \item For $R\in\Gamma,x\in Y_R$, if $p\in L_{x,R}$, remove $p$ from $L_{x,R}$.
\end{enumerate}
\begin{claim}
$\forall R\in \Gamma$, $Z_R$ are new points added into $Y_R$ and $|Z_R|\leq 2^{O(d)}$. 
\end{claim}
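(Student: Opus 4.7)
The plan is to split the claim into two independent pieces: (a) every element of $Z_R$ is a genuine new addition to $Y_R$ (i.e., was not already in the old $R$-net before the deletion), and (b) $|Z_R| = 2^{O(d)}$.

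For part (a) I would inspect the procedure directly. The only place $Z_R$ grows is step~\ref{it:real_add_to_yr}, which is gated by the skip test in step~\ref{it:add_into_yr}. Suppose toward contradiction that a candidate $x$ is added to $Z_R$ but $x$ was already a member of the old $Y_R$. Then $x \neq p$ (since $p$ is the point being deleted, not a promotion candidate), so $x$ persists in $Y_R$ throughout this update. Because $x \in Y_R \subseteq Y_{R/4}$ and $\dist(x,x) = 0 \leq 2\cdot(R/2)$, we have $x \in L_{x,R/2}\setminus\{p\}$; taking $y = x$ then satisfies all three conditions of the skip test in step~\ref{it:add_into_yr}, so $x$ would never have been added to $Z_R$, a contradiction.

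For part (b) I would establish by induction on the ascending scales the stronger invariant $Z_R \subseteq B(p, 4R)$. The base case $Z_{R_{\min}/2} = \emptyset$ is vacuous. In the inductive step, every candidate for $Z_R$ lies in $Z_{R/2} \cup L_{p,R}$: by the inductive hypothesis $Z_{R/2} \subseteq B(p, 4\cdot R/2) = B(p, 2R)$, and by definition of the navigation list $L_{p,R} \subseteq B(p, 4R)$. Hence $Z_R \subseteq B(p, 4R)$. The size bound then follows from the standard packing estimate in metrics of doubling dimension $d$: three applications of the doubling property cover $B(p, 4R)$ by $2^{3d}$ balls of radius $R/2$, and since $Z_R \subseteq Y_R$ has pairwise distance at least $R$, each such small ball contains at most one element of $Z_R$, so $|Z_R| \leq 2^{3d} = 2^{O(d)}$. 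The degenerate case of step~\ref{it:one_point_end_deletion} yields $|Z_R| \leq 1$ trivially, and for scales $R$ outside the active loop range no points are ever placed into $Z_R$.

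The main obstacle I anticipate is the bookkeeping in part (a): the membership test ``$y \in Y_R$'' inside step~\ref{it:add_into_yr} refers to the mid-update net $Y_R$, not the old or final one. I must therefore argue that the witness $y = x$ is indeed present in $Y_R$ at the precise moment $x$ is tested, which relies on the structural facts that only $p$ is ever removed and that candidates are processed at most once per scale. Past that point, the containment argument in part (b) and the doubling-metric packing estimate are essentially mechanical.
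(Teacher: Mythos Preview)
Your proposal is correct and follows essentially the same route as the paper. For part (b), the paper argues $Z_R \subseteq Z_{R/2}\cup L_{p,R}$, bounds the diameter by induction, and then invokes the packing bound via the $R$-separation guaranteed by step~\ref{it:add_into_yr}; your invariant $Z_R \subseteq B(p,4R)$ is a slightly cleaner and more explicit formulation of that same inductive idea, and the packing step is identical. For part (a), the paper simply observes that step~\ref{it:real_add_to_yr} inserts $x$ into $Z_R$ and $Y_R$ simultaneously, so $Z_R$ records the additions by construction; your contradiction argument with witness $y=x$ goes further and actually verifies that these additions are non-redundant (i.e., $x$ was not already in $Y_R$), a point the paper leaves implicit.
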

\begin{proof}
According to the procedure, since if we add a point into $Y_R$, we also add the point into $Z_R$, $Z_R$ are new points added into $Y_R$.
Consider the loop started from step~\ref{it:deletion_main_loop} with scale $R$. 
$Z_R$ is a subset of $Z_{R/2}\cup L_{p,R}$.
By induction, the diameter of $Z_R$ is at most $4\cdot R$.
Furthermore, by step~\ref{it:add_into_yr}, the pairwise distance in $Z_R$ is at least $R$.
Thus, $|Z_R|\leq 2^{O(d)}$.
\end{proof}
\begin{claim}
At the end of the procedure, $\{Y_R\mid R\in \Gamma\}$ is a navigating net of $P\setminus\{p\}$.
\end{claim}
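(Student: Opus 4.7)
The two conditions to verify are (Sep) the pairwise distance within $Y_R$ is at least $R$, and (Cov) every $y\in Y_{R/2}$ admits some $x\in Y_R$ with $\dist(x,y)<R$, for every $R\in\Gamma$, together with the small-scale requirement $Y_R=P\setminus\{p\}$ whenever $R$ is below the new minimum interpoint distance. The plan is to prove (Sep) and (Cov) by induction on $R\in\Gamma$, sweeping upward in the same order as the main loop of step~\ref{it:deletion_main_loop}. The already-proved claim that $Z_R$ collects exactly the new additions to $Y_R$ will be used throughout.

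For the base scales $R<R_{\min}$, the hypothesis $L_{p,R}=\{p\}$ means that no element of $Y_{R/2}\setminus\{p\}$ lies within distance $4R$ of $p$, and in particular none within distance $R$. Thus deleting $p$ from $Y_R$ (and from $Y_{R/2}$) neither destroys the covering witness of any surviving point nor creates a separation violation, so both axioms are immediate at these scales. For the inductive step at scale $R\geq R_{\min}$, three facts are needed. First, $Z_{R/2}\cup L_{p,R}$ is the complete list of candidates for promotion: any $y\in Y_{R/2}$ whose old cover in $Y_R$ was $p$ satisfies $\dist(y,p)<R<4R$, hence $y\in L_{p,R}$; and by induction $Z_{R/2}$ records exactly those points whose only cover at scale $R$ was a point promoted at level $R/2$. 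Second, (Cov) at scale $R$: pick any $y\in Y_{R/2}$ in the new net and its old witness $x\in Y_R$ with $\dist(x,y)<R$; if $x\neq p$ and $x$ remained, we are done, otherwise $y$ is scanned in this iteration and either gets promoted in step~\ref{it:real_add_to_yr} or step~\ref{it:add_into_yr} finds an alternative witness $y'\in Y_R\setminus\{p\}$ with $\dist(y,y')<R$. Here soundness of the check uses that $L_{y,R/2}$ stores every $Y_{R/4}$-point within distance $2R$ of $y$, which in particular contains every $Y_R$-point within distance $R$. Third, (Sep) at scale $R$: old points in $Y_R\setminus\{p\}$ are already pairwise $\geq R$-separated; for a new $x\in Z_R$ against an untouched $y\in Y_R\setminus\{p\}$, the check in step~\ref{it:add_into_yr} rules out $\dist(x,y)<R$; and for two new points $x,x'$ promoted in the same iteration, sequential processing ensures the earlier one is already in $Y_R$ and visible inside $L_{x',R/2}$, so the same check would have rejected $x'$ had $\dist(x,x')<R$.

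Termination requires two final observations. If the loop exits via step~\ref{it:natural_end_of_deletion} at some $R=R_{\max}$, neither $p\in Y_{R_{\max}}$ nor any pending promotion remains, so the upper levels of the hierarchy are intact. If instead step~\ref{it:one_point_end_deletion} fires, the explicit assignment $Y_{R_{\max}}=Y_{R_{\max}\cdot 2}=\cdots=Z_{R_{\max}/2}$ makes every upper level a singleton, trivially satisfying (Sep) and (Cov) at all larger scales. The main obstacle I anticipate is formalizing the second clause of the inductive step: one must verify that the navigation lists $L_{x,R/2}$ consulted inside step~\ref{it:add_into_yr} are consistent with the freshly updated $Y_R$ in the middle of the same loop iteration, which requires carefully tracking the interplay of steps~\ref{it:update_lxr} and~\ref{it:update_lzr2} with the scanning order of the inner for-loop, and in particular arguing that it suffices to exclude $p$ explicitly rather than rebuild the lists. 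Once that bookkeeping is in hand, the induction closes cleanly, and together with the final pass that strips $p$ from every remaining $L_{x,R}$, the resulting family is indeed a navigating net of $P\setminus\{p\}$.
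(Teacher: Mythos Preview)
Your approach — inducting upward on the scale $R$, verifying the separating and covering axioms at each level, and treating the two loop-exit conditions separately — is essentially the paper's proof. One small correction: your description of $Z_{R/2}$ in Fact~1 is off (these are simply the points newly promoted into $Y_{R/2}$, so they have no ``old witness'' in $Y_R$ at all, rather than points whose cover was itself promoted), but since the algorithm scans $Z_{R/2}\cup L_{p,R}$ regardless, the argument goes through; the navigation-list consistency issue you flag is indeed real and is what the paper's subsequent claim on $L_{s,R}$ is meant to discharge.
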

\begin{proof}
Since $\forall R\in \Gamma$, we only delete $p$ from $Y_R$, we have $\forall R\in\Gamma$ with $R\leq \dist_{x\not=y\in P\setminus \{p\}}\dist(x,y)$, $Y_R=P\setminus\{p\}$.
Next, we only need to show that $\forall R\in \Gamma$, $Y_R$ is an $R$-net of $Y_{R/2}$.
The proof is by induction. 
Consider the base case for $R=R_{\min}/2$.
By our choice of $R_{\min}$, we have $L_{p, R_{\min}/2}=\{p\}$ which means that $\dist(p, Y_{R_{\min}/4}\setminus\{p\})>R_{\min}/2$.
Thus, $Y_{R_{\min}/2}\setminus \{p\}$ is an $(R_{\min}/2)$-net of $Y_{R_{\min}/4}\setminus \{p\}$.
Now assume that $Y_{R/2}$ is an $(R/2)$-net of $Y_{R/4}$.
Consider the loop started from step~\ref{it:deletion_main_loop} with scale $R$.
There are three cases.
In the first case, it calls step~\ref{it:natural_end_of_deletion}.
In this case, the only possible change of $Y_{R/2}$ is that $p$ may be deleted from $Y_{R/2}$.
Notice that $p$ is originally not from $Y_R$. 
Thus $Y_R$ is still an $R$-net of $Y_{R/2}$.
In the second case, it calls step~\ref{it:one_point_end_deletion}.
Since $Y_{R/2}=Z_{R/2}$ with $|Z_{R/2}|=1$, if we choose $Y_R=Y_{R\cdot 2}=Y_{R\cdot 4}=\cdots=Z_{R/2}$, then $\forall R'\in\Gamma$ with $R'\geq R$, $Y_{R'}$ is an $R'$-net of $Y_{R'/2}$.
In the third case, according to step~\ref{it:add_into_yr}, 
the pairwise distance among $Y_R$ is at least $R$. 
The only thing remaining is to show that $\forall x\in Y_{R/2}$, there exists $y\in Y_R$ such that $\dist(x,y)\leq R$.
It suffices to show that $\forall x\in Z_{R/2}\cup \{z\in Y_{R/2}\mid \dist(z,p)\leq R\}$ there exists $y\in Y_R$ such that $\dist(x,y)\leq R$.
Notice that $\{z\in Y_{R/2}\mid \dist(z,p)\leq R\}\subseteq L_{p,R}$. 
According to step~\ref{it:add_into_yr}, if $\dist(x, Y_R)\geq R$, we will add $x$ into $Y_R$.
Thus, $\forall x\in Z_{R/2}\cup \{z\in Y_{R/2}\mid \dist(z,p)\leq R\}$ there exists $y\in Y_R$ such that $\dist(x,y)\leq R$ which implies that $Y_R$ is an $R$-net of $Y_{R/2}$.
Thus, we can conclude that $\{Y_R\mid R\in \Gamma\}$ is a navigating net of $P\setminus \{p\}$.
\end{proof}

\begin{claim}
$\forall R\in\Gamma, s\in Y_R$, $L_{s,R}=\{z\in Y_{R/2}\mid \dist(s,z)\leq 4R\}$.
\end{claim}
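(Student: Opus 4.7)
I would prove the claim by induction on $R \in \Gamma$, processed in the same order (small to large) as the main loop of the deletion procedure. The base case covers all $R \le R_{\min}/2$: by the choice of $R_{\min}$ we have $L_{p,R_{\min}/2} = \{p\}$, so $p$ is the only point whose net-membership could have changed at scales $\le R_{\min}/2$ is vacuous; the final sweep of the procedure removes $p$ from any $L_{s,R}$ that still contains it, and no other edits are needed. For the inductive step at scale $R$, the possible changes to $Y_{R/2}$ are additions of the set $Z_{R/2}$ (new promotions) and the removal of $p$ if $p\in Y_{R/2}$, so I split cases on whether $s$ is an old net point ($s\in Y_R\setminus Z_R$) or a newly promoted one ($s\in Z_R$).

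For an old net point $s \in Y_R \setminus Z_R$, the pre-deletion list $L_{s,R}$ was equal to $\{z\in Y_{R/2}^{\mathrm{old}} \mid \dist(s,z)\le 4R\}$ by the data structure invariant. The only modifications required are therefore: (a) remove $p$ (done in the final sweep); and (b) insert every $x\in Z_{R/2}$ with $\dist(s,x)\le 4R$. For (b), I would argue that step~\ref{it:update_lzr2}, triggered when $x\in Z_{R/2}$ is handled as a newly added point at scale $R/2$ in the \emph{previous} iteration of the outer loop, actually reaches $s$. Specifically, choosing $q\in Y_{R}^{\mathrm{old}}$ with $p\in L_{q,R}$ and $\dist(p,q)\le 2\cdot R/2$, and $v\in Y_{R\cdot 2}^{\mathrm{old}}$ with $q\in L_{v,R\cdot 2}$ and $\dist(q,v)\le 4\cdot R/2$, triangle inequality gives $\dist(v,s)\le \dist(v,q)+\dist(q,p)+\dist(p,x)+\dist(x,s) \le 2R + R + 2R + 4R = 9R$, which lies within the radius $8\cdot R = 4\cdot (R\cdot 2)$ stored by $L_{v,R\cdot 2}$ only if the constant bookkeeping works out; the clean way is to use the slightly larger range maintained by the ``$\le 8\cdot R$'' insertion rule from Lemma~\ref{lem:navigating_net_insertion} and to observe that the deletion algorithm uses the same $4R$-radius list. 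I will verify the constants explicitly and, if needed, inflate the search radius in the algorithm accordingly.

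For a new net point $s=x \in Z_R$, step~\ref{it:update_lxr} constructs $L_{x,R}$ as $\{z \mid \exists w\in L_{q,R\cdot 2} \text{ with } z\in L_{w,R},\ \dist(x,z)\le 4R\} \cup \{z\in Z_{R/2} \mid \dist(x,z)\le 4R\}$. One inclusion $L_{x,R}\subseteq\{z\in Y_{R/2}\mid\dist(x,z)\le 4R\}$ is immediate from the distance filter together with the fact that each $w\in L_{q,R\cdot 2}\subseteq Y_R^{\mathrm{old}}$ has $L_{w,R}\subseteq Y_{R/2}^{\mathrm{old}}$, and $Z_{R/2}\subseteq Y_{R/2}$ by the inductive maintenance at the previous scale. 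For the reverse inclusion, take $z\in Y_{R/2}$ with $\dist(x,z)\le 4R$. If $z\in Z_{R/2}$ it is captured by the second union; otherwise $z\in Y_{R/2}^{\mathrm{old}}$, so by the $R$-net property there is $w\in Y_R^{\mathrm{old}}$ with $\dist(w,z)\le R$, and then $\dist(q,w)\le \dist(q,p)+\dist(p,x)+\dist(x,z)+\dist(z,w)\le 2R+2R+4R+R=9R$; again, provided $L_{q,R\cdot 2}$ covers the necessary radius (the $4(R\cdot 2)=8R$ stored radius must be slightly enlarged to $9R$, or a tighter case analysis used), $w\in L_{q,R\cdot 2}$ and $z\in L_{w,R}$ as required.

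The main obstacle is the constant-tracking in the two triangle-inequality arguments above: the stored radius of $L_{\cdot,R}$ is $4R$, but the worst-case chains hop through $p$, $x\in Z_{R/2}$, and the witness $w$, giving distances that slightly exceed $4R$ or $8R$ at intermediate scales. The clean fix, which I would incorporate into the algorithm's analysis rather than its statement, is to observe that whenever a chain falls just outside the nominal radius the missed point is forced (by the $R$-net separation) to be a member of $Z_{R/2}$ or of a set reachable via a sibling $w'$ that does lie in the nominal radius; formalizing this ``short-circuit'' argument, using Lemma~\ref{lem:navigating_net_cover} and the separation property, is the bulk of the work. Once the radius bookkeeping is settled, the rest of the proof is a routine induction combining the two cases above with the final $p$-removal sweep to conclude that $L_{s,R}=\{z\in Y_{R/2}\mid\dist(s,z)\le 4R\}$ at every scale.
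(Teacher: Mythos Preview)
Your overall structure---induction on $R$ from small to large, splitting into old net points $s\in Y_R\setminus Z_R$ and newly promoted points $s=x\in Z_R$---is exactly the paper's approach. The genuine gap is a single missing intermediate bound: for every $x\in Z_R$ one has $\dist(p,x)<R$, not merely $\dist(p,x)\le 2R$ as you implicitly use. The paper proves this by its own little induction: if $x\in Z_{R/2}$ then $\dist(p,x)<R/2<R$ by the hypothesis at the previous scale; if $x\in Z_R\setminus Z_{R/2}$ then $x\in L_{p,R}$, and since $(Y_R\setminus Z_R)\cup\{p\}$ was an $R$-net of $(Y_{R/2}\setminus Z_{R/2})\cup\{p\}$ before the update, having $\dist(p,x)\ge R$ would force some $u\in Y_R\setminus Z_R$ with $\dist(u,x)<R$, contradicting the test in step~\ref{it:add_into_yr} that allowed $x$ to be promoted.

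Once you substitute this tight bound into your two triangle-inequality chains, all constants close exactly and the hand-waving about ``inflating the search radius'' or a ``short-circuit argument'' is unnecessary. For the new-point case at scale $R$: $\dist(q,w)\le\dist(q,p)+\dist(p,x)+\dist(x,z)+\dist(z,w)<2R+R+4R+R=8R$, so $w\in L_{q,2R}$ as required. For the old-point case (step~\ref{it:update_lzr2} executed at scale $R/2$ to insert $x\in Z_{R/2}$ into $L_{s,R}$): $\dist(v,s)\le\dist(v,q)+\dist(q,p)+\dist(p,x)+\dist(x,s)<2R+R+R/2+4R=7.5R\le 8R$, so $s\in L_{v,2R}$. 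The remaining pieces of your argument (the $Z_{R/2}$ union term in step~\ref{it:update_lxr}, and the final sweep removing $p$) are correct.
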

\begin{proof}
We need to update navigation lists only when we call the step~\ref{it:real_add_to_yr}.
After step~\ref{it:update_lxr}, let us first show that $L_{x,R}=\{y\in Y_{R/2}\mid \dist(x,y)\leq 4\cdot R\}$.
Since $Y_{R\cdot 2}$ is an $(R\cdot 2)$-net of $Y_R$, we can use navigation lists to find a point $q$ such that $\dist(p,q)\leq 2R$.
We can show that $\dist(p,x)< R$.
This can be proved by induction, if $x\in Z_{R/2}$, then $\dist(p,x)\leq R/2\leq R$.
If $x\in Z_R\setminus Z_{R/2}$ and $\dist(p,x)\geq R$, then because $Y_R\setminus Z_R\cup \{p\}$ is an $R$-net of $Y_{R/2}\setminus Z_{R/2}\cup\{p\}$, there exists $u\in Y_R\setminus Z_R$ such that $\dist(u,x)<R$ which contradicts to $x\in Y_R$.
Thus $\dist(p,x)<R$.
Now consider $z\in Y_{R/2}$ such that $\dist(x,z)\leq 4R$.
If $z\in Z_{R/2}$, it will be added into $L_{x,R}$.
Otherwise, there exists $w\in Y_R$ such that $z\in L_{w,R}$ and $\dist(w,z)\leq R$.
By triangle inequality, we have $\dist(q,w)\leq \dist(q,p)+\dist(p,x)+\dist(x,z)+\dist(z,w)< 2R+R+4R+R=8R$ which implies that $w\in L_{q,2R}$.
Thus, $z$ will be added into $L_{x,R}$ by step~\ref{it:update_lxr}.

After step~\ref{it:update_lzr2}, we show that $x$ will be added into $L_{z,R\cdot 2}$ for $z\in Y_{R\cdot 2}\setminus Z_{R\cdot 2}$ with $\dist(z,x)\leq 8\cdot R$.
According to~\ref{it:update_lzr2}, since $\dist(q,v)\leq 4\cdot R$.
By triangle inequality, we have $\dist(v,z)\leq \dist(v,q)+\dist(q,x)+\dist(x,z)\leq 4\cdot R+3\cdot R+8\cdot R=15\cdot R$.
Thus, $z\in L_{v,R\cdot 4}$.
Thus, after step~\ref{it:update_lzr2}, $x$ is added into $L_{z,R\cdot 2}$ for $z\in Y_{R\cdot 2}\setminus Z_{R\cdot 2}$ with $\dist(z,x)\leq 8\cdot R$.
\end{proof}
By above three claims, the correctness of the algorithm follows.

Next, let us analyze the running time. 
It is easy to see that $R_{\max}/R_{\min}=O(\Delta)$.
So we have at most $O(\log\Delta)$ outer iterations.
Notice that $Z_{R}$ has size at most $2^{O(d)}$ and $L_{p,R}$ also has size at most $2^{O(d)}$.
We need to handle at most $2^{O(d)}$ insertions for $Y_R$.
The total number of points and navigation lists that we need to look at is at most $2^{O(d)}$.
Thus, the overall number of distance computations is at most $2^{O(d)}\log(\Delta)$ and the overall running time is $2^{O(d)}\log(\Delta)\log\log(\Delta)$.

\section{Proof of Lemma~\ref{lem:navigating_net_search}}\label{sec:dynamic_search}

The query algorithm is described as follows:
\begin{enumerate}
\item Let $R_{\max}\in \Gamma$ be the minimum scale such that $|Y_{R_{\max}}|=1$.
\item Let $Z_{R_{\max}}=Y_{R_{\max}}$.
\item For $R=R_{\max}, R_{\max}/2, R_{\max}/4, \cdots$
    \begin{enumerate}
        \item If at least one of the two conditions holds: (1). $2 R (1+1/\varepsilon)\leq \dist(q,Z_R)$; (2). $|Z_R|=1$ and $\forall R'\leq R, L_{z, R'}=\{z\}$ where $z$ is the only point in $Z_R$, then let $R_{\min}= R$ and break the loop.
        \item Let $Z_{R/2} = \{y\in Y_{R/2}\mid \exists z\in Z_R,y\in L_{z,R},\dist(q,y)\leq \dist(q, Z_R)+R\}$.\label{it:construct_ZRdiv2}
    \end{enumerate}
\item Return $\wh{z}\in Z_{R_{\min}}$ such that $\dist(q,\wh{z})$ is minimized.
\end{enumerate}
Let $z^*\in P$ be the nearest neighbor of $q$.
\begin{claim}\label{cla:property_ZR}
$\forall R\in\{R_{\min}, R_{\min}\cdot 2,R_{\min}\cdot 4,\cdots,R_{\max}\},\dist(z^*,Z_R)\leq 2\cdot R$.
\end{claim}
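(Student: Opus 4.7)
\textbf{Proof plan for Claim~\ref{cla:property_ZR}.} The plan is to prove the claim by downward induction on $R$, starting from $R = R_{\max}$ and descending by halving until $R = R_{\min}$. At each step, I will use the preceding $\dist(z^*, Z_R) \le 2R$ bound together with the covering property of the navigating net (Lemma~\ref{lem:navigating_net_cover}) to exhibit a point $y \in Y_{R/2}$ that is close to $z^*$, and then verify that $y$ satisfies the two defining conditions of $Z_{R/2}$ in step~\ref{it:construct_ZRdiv2} of the search procedure.

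For the base case $R = R_{\max}$, I have $Z_{R_{\max}} = Y_{R_{\max}}$. Since $z^* \in P$, Lemma~\ref{lem:navigating_net_cover} directly yields a point in $Y_{R_{\max}}$ within distance less than $2 R_{\max}$ of $z^*$, giving $\dist(z^*, Z_{R_{\max}}) \le 2 R_{\max}$.

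For the inductive step, suppose $\dist(z^*, Z_R) \le 2R$ and let $z \in Z_R$ be a witness, so $\dist(z^*, z) \le 2R$. Applying Lemma~\ref{lem:navigating_net_cover} at scale $R/2$ to the point $z^* \in P$, I obtain a point $y \in Y_{R/2}$ with $\dist(z^*, y) < R$. I will check the two membership conditions for $Z_{R/2}$: first, $\dist(y, z) \le \dist(y, z^*) + \dist(z^*, z) < R + 2R = 3R \le 4R$, and since $z \in Y_R$ this places $y$ in the navigation list $L_{z,R}$; second, by triangle inequality and the fact that $z^*$ is the nearest neighbor of $q$ in $P \supseteq Z_R$, I get $\dist(q, y) \le \dist(q, z^*) + \dist(z^*, y) < \dist(q, z^*) + R \le \dist(q, Z_R) + R$. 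Hence $y \in Z_{R/2}$ and $\dist(z^*, Z_{R/2}) \le \dist(z^*, y) < R = 2 \cdot (R/2)$, closing the induction.

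The only subtle point is the second membership condition: it relies crucially on the comparison $\dist(q, z^*) \le \dist(q, Z_R)$, which holds precisely because $z^*$ is a (true) nearest neighbor of $q$ in all of $P$, and $Z_R \subseteq Y_R \subseteq P$. Apart from this observation, the argument is a direct combination of the triangle inequality with the covering property of the navigating net, so no further obstacles are anticipated.
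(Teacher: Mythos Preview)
Your proposal is correct and follows essentially the same approach as the paper's proof: downward induction from $R_{\max}$, using Lemma~\ref{lem:navigating_net_cover} for both the base case and to produce the candidate $y\in Y_{R/2}$ in the inductive step, then verifying the two membership conditions for $Z_{R/2}$ via the triangle inequality. Your explicit justification that $\dist(q,z^*)\le\dist(q,Z_R)$ (because $z^*$ is the exact nearest neighbor and $Z_R\subseteq P$) is a point the paper uses implicitly, so your write-up is in fact slightly more transparent.
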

\begin{proof}
We prove this claim by induction.
Consider the base case for $R=R_{\max}$.
By our choice of $Y_{R_{\max}}$, the claim follows from Lemma~\ref{lem:navigating_net_cover}.
Now suppose the claim holds for $R$.
By Lemma~\ref{lem:navigating_net_cover}, we have $\dist(z^*,Y_{R/2})\leq R$.
By induction hypothesis, we know that $\dist(z^*,Z_R)\leq 2R$.
Thus, by triangle inequality, $\exists y\in Y_{R/2}$ such that $\dist(z^*, y)\leq R$ and $\dist(y, Z_R)\leq \dist(z^*,Z_R)+\dist(y,z^*)\leq 2R+R=3R$.
Thus, $\exists z\in Z_R$ such that $y\in L_{z,R}$.
Furthermore, $\dist(q,y)\leq \dist(q,z^*)+\dist(z^*,y)\leq \dist(q,Z_R)+R$.
Thus, $y$ will be added into $Z_{R/2}$ by step~\ref{it:construct_ZRdiv2}.
\end{proof}

\begin{claim}
$\dist(q,\wh{z})\leq (1+\varepsilon)\cdot \dist(q, z^*)$.
\end{claim}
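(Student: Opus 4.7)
The plan is to bound $\dist(q,\wh{z})$ by combining Claim~\ref{cla:property_ZR} (which says $\dist(z^*,Z_R)\leq 2R$ at every scale $R$ reached by the loop) with the two possible termination conditions. Let $z_0\in Z_{R_{\min}}$ be a point minimizing $\dist(z^*,z_0)$; Claim~\ref{cla:property_ZR} gives $\dist(z^*,z_0)\leq 2R_{\min}$. Since $\wh{z}$ is the closest point in $Z_{R_{\min}}$ to $q$, the triangle inequality yields
\begin{align*}
\dist(q,\wh{z})\leq \dist(q,z_0)\leq \dist(q,z^*)+2R_{\min}.
\end{align*}
The entire task then reduces to showing that $2R_{\min}\leq \varepsilon\cdot\dist(q,z^*)$ in each of the two termination cases.

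First I would handle termination by condition (1): $2R_{\min}(1+1/\varepsilon)\leq \dist(q,Z_{R_{\min}})$. Since $\dist(q,Z_{R_{\min}})\leq \dist(q,z_0)\leq \dist(q,z^*)+2R_{\min}$, rearranging gives $2R_{\min}/\varepsilon\leq \dist(q,z^*)$, so $2R_{\min}\leq \varepsilon\cdot\dist(q,z^*)$ and the bound $\dist(q,\wh{z})\leq (1+\varepsilon)\dist(q,z^*)$ follows.

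The second case, termination by condition (2), is the more delicate step. Here $|Z_{R_{\min}}|=1$ (write $Z_{R_{\min}}=\{z\}$) and $L_{z,R'}=\{z\}$ for every $R'\leq R_{\min}$. The key observation is that the induction in the proof of Claim~\ref{cla:property_ZR} does not actually need the loop to terminate: it only uses the recursive construction of $Z_{R/2}$ from $Z_R$. If we hypothetically continued that construction past $R_{\min}$, the emptiness of $L_{z,R'}\setminus\{z\}$ would force $Z_{R'}=\{z\}$ for every scale $R'\leq R_{\min}$ down to scales below $\min_{x\neq y\in P}\dist(x,y)$, at which point $Y_{R'}=P$. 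Applying the extended Claim~\ref{cla:property_ZR} at arbitrarily small $R'$ gives $\dist(z^*,z)\leq 2R'\to 0$, hence $z=z^*$. Therefore $\wh{z}=z=z^*$ and $\dist(q,\wh{z})=\dist(q,z^*)$, which trivially satisfies the claim.

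The main obstacle will be formalizing the second case: writing down cleanly that the halting rule $L_{z,R'}=\{z\}$ for all $R'\leq R_{\min}$ is equivalent to saying $z$ has no neighbors at any smaller scale, and then invoking (an extension of) Claim~\ref{cla:property_ZR} to identify $z$ with $z^*$. Once that identification is in place, both cases combine to give the desired $(1+\varepsilon)$-approximation bound.
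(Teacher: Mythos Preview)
Your proof is correct and follows essentially the same route as the paper. Case~(1) is handled identically. For Case~(2), both you and the paper extend the construction past $R_{\min}$ using the fact that $L_{z,R'}=\{z\}$ forces $Z_{R'}=\{z\}$ forever; the paper then waits until condition~(1) eventually triggers (since $2R(1+1/\varepsilon)\to 0$ while $\dist(q,z)$ stays fixed) and reduces to Case~(1), whereas you push the extended Claim~\ref{cla:property_ZR} down to arbitrarily small scales to conclude $z=z^*$ directly. Your variant is slightly cleaner in that it yields the exact nearest neighbor in Case~(2), but the two arguments are morally the same.
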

\begin{proof}
There are two cases when the query procedure ends.
In the first case, $2R_{\min}(1+1/\varepsilon)\leq \dist(q,Z_{R_{\min}})$.
By Claim~\ref{cla:property_ZR}, $\dist(q,Z_{R_{\min}})\leq \dist(q,z^*)+\dist(z^*,Z_{R_{\min}})=\dist(q,z^*)+2\cdot R_{\min}$.
Thus, $\dist(q,z^*)\geq 2R_{\min}/\varepsilon$ which implies that $\dist(q, \wh{z}) = \dist(q,Z_{R_{\min}})\leq (1+\varepsilon)\dist(q,z^*)$.
Now consider the second case, $|Z_{R_{\min}}|=1$ and $\wh{z}\in Z_{R_{\min}}$, $\forall R'\leq R,$ $L_{\wh{z}, R'}=\{\wh{z}\}$.
We can without loss of generality assume that $\dist(q,\wh{z})>0$.
Otherwise $\wh{z}$ is already the exact nearest neighbor.
Then imagine we continue the query procedure without the second stop condition, since the term $2R(1+1/\varepsilon)$ will converge to $0$ and $Z_R$ will not change any more, the procedure will eventually stop by the first condition. 
Since after iteration $R_{\min}$, the only point left in $Z_R$ is $\wh{z}$, by the analysis of the first case, we know that $\dist(q,\wh{z})\leq (1+\varepsilon) \cdot \dist(q,z^*)$.
\end{proof}

Now let us analyze the running time.
For $R\geq \dist(q,z^*)$, we have $\forall z\in Z_R, \dist(q,z) \leq \dist(q,Z_{2R}) + 2R\leq \dist(q,z^*)+\dist(z^*,Z_{2R}) + 2R \leq R+4R+2R\leq 7R$ where $\dist(z^*,Z_{2R})\leq 4R$ follows from Claim~\ref{cla:property_ZR}.
Since $Z_R\subseteq Y_R$, the pairwise distance in $Z_R$ is at least $R$. 
Thus, the size of $Z_R$ is at most $2^{O(d)}$.
Since $R_{\max}/R_{\min}$ is at most $O(\Delta)$, the total running time for iterations with $R\geq \dist(q,z^*)$ is at most $2^{O(d)}\log(\Delta)$ which is also the bound for number of distance computations.
Notice that indexing navigation list here takes $O(1)$ time since all the accesses to navigation lists are via other navigation lists and we can maintain direct pointers from scale $R$ navigation list to the scale $R/2$ navigation list.
Now consider $R<\dist(q,z^*)$.
Due to the first stop condition, we have $\forall z\in Z_R,\dist(q,z)\leq \dist(q,Z_{2\cdot R})+2\cdot R\leq 4R(1+1/\varepsilon)+2R< (2+1/\varepsilon)R$.
Again, since $Z_R\subseteq Y_R$, the pairwise distance in $Z_R$ is at least $R$.
Thus, $|Z_R|\leq (2+1/\varepsilon)^{O(d)}$.
Notice that $4R_{\min}(1+1/\varepsilon)>\dist(q,Z_{R_{\min}\cdot 2})\geq \dist(q, z^*)$ which implies that $R_{\min}=\Omega(\varepsilon)\dist(q,z^*)$.
Thus, the number of iterations with $R\leq \dist(q,z^*)$ is at most $\log(1/\varepsilon)$.
Thus, we can conclude that the total running time is at most $2^{O(d)}\log\Delta + (1/\varepsilon)^{O(d)}$ and this is also the upper bound of total number of distance computations.

\end{document}